\newcommand\What[1]{%
\savestack{\tmpbox}{\stretchto{%
  \scaleto{%
    \scalerel*[\widthof{\ensuremath{#1}}]{\kern-.6pt\bigwedge\kern-.6pt}%
    {\rule[-\textheight/2]{1ex}{\textheight}}
  }{\textheight}%
}{0.5ex}}%
\stackon[1pt]{#1}{\tmpbox}%
}
\newcounter{question}
\def\sym#1{\ifmmode^{#1}\else\(^{#1}\)\fi}
\newcommand{\eps}{\varepsilon}
\newcommand{\mA}{\mathcal{A}}
\newcommand{\mB}{\mathcal{B}}
\newcommand{\mC}{\mathcal{C}}
\newcommand{\mF}{\mathcal{F}}
\newcommand{\mG}{\mathcal{G}}
\newcommand{\mI}{\mathcal{I}}
\newcommand{\mJ}{\mathcal{J}}
\newcommand{\mK}{\mathcal{K}}
\newcommand{\mL}{\mathcal{L}}
\newcommand{\mN}{\mathcal{N}}
\newcommand{\mP}{\mathcal{P}}
\newcommand{\mU}{\mathcal{U}}
\newcommand{\mV}{\mathcal{V}}
\newcommand{\mW}{\mathcal{W}}
\newcommand{\mX}{\mathcal{X}}
\newcommand{\bE}{\mathbb{E}}
\newcommand{\bR}{\mathbb{R}}
\newtheorem{theorem}{Theorem}
\newtheorem{assumption}{Assumption}
\newtheorem{proposition}{Proposition}
\newtheorem{lemma}{Lemma}
\newtheorem{definition}{Definition}
\theoremstyle{definition}
\newtheorem{example}{Example}
\newlist{assumptionitems}{enumerate}{1} 
\setlist[assumptionitems]{label=(\roman*), ref=\theassumption(\roman*), align=left, leftmargin=*}
\Crefname{assumption}{Assumption}{Assumptions}
\crefname{assumptionitemsi}{assumption}{assumptions}
\newlist{lemmaitems}{enumerate}{1} 
\setlist[lemmaitems]{label=(\roman*), ref=\thelemma(\roman*), align=left, leftmargin=*}
\Crefname{lemma}{Lemma}{Lemmas}
\crefname{lemmaitemsi}{lemma}{lemmas}
\newlist{propositionitems}{enumerate}{1} 
\setlist[propositionitems]{label=(\roman*), ref=\theproposition(\roman*), align=left, leftmargin=*}
\Crefname{proposition}{Proposition}{Propositions}
\crefname{propositionitemsi}{proposition}{propositions}
\newlist{theoremitems}{enumerate}{1} 
\setlist[theoremitems]{label=(\roman*), ref=\thetheorem(\roman*), align=left, leftmargin=*}
\Crefname{theorem}{Theorem}{Theorems}
\crefname{theoremitemsi}{theorem}{theorems}
\newlist{definitionitems}{enumerate}{1} 
\setlist[definitionitems]{label=(\roman*), ref=\thedefinition(\roman*), align=left, leftmargin=*}
\Crefname{definition}{Definition}{Definitions}
\crefname{definitionitemsi}{definition}{definitions}
\newtheorem{myalgo}{Algorithm}
\theoremstyle{definition}
\newtheorem{remark}{Remark}
\newcommand{\blue}[1]{{\color{blue} #1}}
\newcolumntype{L}[1]{>{\raggedright\let\newline\\arraybackslash\hspace{0pt}}m{#1}}
\newcolumntype{C}[1]{>{\centering\let\newline\\arraybackslash\hspace{0pt}}m{#1}}
\newcolumntype{R}[1]{>{\raggedleft\let\newline\\arraybackslash\hspace{0pt}}m{#1}}
\newcommand{\continuation}{??}
\newcommand{\blueref}[1]{\hyperref[#1]{\textcolor{blue}{\ref*{#1}}}}
\begin{document}

\begin{titlepage}
\title{Robust Structural Estimation under Misspecified Latent-State Dynamics\thanks{I would like to thank Lars Nesheim, Dennis Kristensen, and Aureo de Paula for their continuous guidance and support. I am grateful to Victor Aguirregabiria, Dante Amengual, Debopam Bhattacharya, Stephane Bonhomme, Christian Bontemps, Cuicui Chen, Tim Christensen, Kevin Dano, Ben Deaner, Ivan Fernandez-Val, Hugo Freeman, Alfred Galichon, Raffaella Giacomini, Gautam Gowrisankaran, Joao Granja, Jiaying Gu, Sukjin Han, Stefan Hubner, Koen Jochmans, Hiroaki Kaido, Hiro Kasahara, Yuichi Kitamura, Adam Lee, Arthur Lewbel, Lorenzo Magnolfi, Chuck Manski, Matt Masten, Robert McCann, Adam McCloskey, Nour Meddahi, Bob Miller, Marcel Nutz, Ariel Pakes, Silvana Pesenti, Lei Qiao, Mathias Reynaert, Eduardo Souza-Rodrigues, Pietro Emilio Spini, Jörg Stoye, Elie Tamer, Xun Tang, Weining Wang, Martin Weidner, Leonard Wong, Andrei Zeleneev for helpful discussions and comments. I also thank seminar participants at UCL, TSE, and University of Toronto, as well as conference audiences at 2025 Bristol Econometric Study Group, 2025 Midwest Econometrics Group Conference, and IAAE 2025.}}
\author{Ertian Chen\thanks{Department of Economics, University College London and CeMMAP. Email: \href{mailto:ertian.chen.19@ucl.ac.uk}{ertian.chen.19@ucl.ac.uk}.}}
\date{November 13, 2025}
\maketitle

\vspace{-1cm}
\begin{center}
\href{https://ertianchen.github.io/pdf/Distributional_Robustness.pdf}{Click here for the latest version}
\end{center}

\begin{abstract}   
\noindent Estimation and counterfactual analysis in dynamic structural models rely on assumptions about the dynamic process of latent variables, which may be misspecified. We propose a framework to quantify the sensitivity of scalar parameters of interest (e.g., welfare, elasticity) to such assumptions. We derive bounds on the scalar parameter by perturbing a reference dynamic process, while imposing a stationarity condition for time-homogeneous models or a Markovian condition for time-inhomogeneous models. The bounds are the solutions to optimization problems, for which we derive a computationally tractable dual formulation. We establish consistency, convergence rate, and asymptotic distribution for the estimator of the bounds. We demonstrate the approach with two applications: an infinite-horizon dynamic demand model for new cars in the United Kingdom, Germany, and France, and a finite-horizon dynamic labor supply model for taxi drivers in New York City. In the car application, perturbed price elasticities deviate by at most 15.24\% from the reference elasticities, while perturbed estimates of consumer surplus from an additional \$3,000 electric vehicle subsidy vary by up to 102.75\%. In the labor supply application, the perturbed Frisch labor supply elasticity deviates by at most 76.83\% for weekday drivers and 42.84\% for weekend drivers.

\bigskip

\noindent \textbf{Keywords:} Dynamic structural models, Sensitivity analysis, Misspecified dynamics

\bigskip
\noindent \textbf{JEL Codes:} C14, C18, C51, C61

\bigskip
\end{abstract}
\setcounter{page}{0}
\thispagestyle{empty}
\end{titlepage}
\pagebreak \newpage

\pagenumbering{arabic}
\setcounter{page}{1}

\onehalfspacing
\section{Introduction}

\noindent Dynamic structural models are useful tools for counterfactual policy analysis in various fields of economics. The dynamic process of latent variables is a key feature of these models, as it captures the persistence of unobserved factors that affect agents' decisions over time. Examples of potentially serially correlated latent variables include product characteristics in demand estimation (\cite{nair2007intertemporal,schiraldi2011automobile,gowrisankaran2012dynamics}), search costs in consumer search (\cite{koulayev2014search}), firm productivity in trade (\cite{piveteau2021empirical}), patent profitability in optimal stopping (\cite{pakes1984patents}), quality in technology adoption (\cite{de2019subsidies}), health shocks in insurance (\cite{fang2021life}), and beliefs about ability in labor economics (\cite{miller1984job,arcidiacono2025college}).

Assumptions about the dynamic process governing the serial dependence of latent variables are central to the estimation and counterfactual analysis in dynamic structural models. These assumptions capture agents' uncertainty about the future, such as a consumer's uncertainty about future product characteristics in demand estimation. The misspecification of these assumptions can lead to biased estimates of future continuation value, which in turn bias counterfactual predictions (e.g., welfare and elasticity). However, the direction and magnitude of this bias are unclear, because the models are dynamic and nonlinear. This raises the need for sensitivity analysis of empirical results to these distributional assumptions.

In this paper, we propose a framework to quantify this sensitivity by perturbing a reference dynamic process to compute bounds on a scalar parameter of interest. The scalar parameter (e.g., welfare and elasticity) is a function of model primitives, such as model parameters, the distribution of latent variables, and the value function in dynamic structural models. In our proposed framework, the bounds on this scalar parameter are the solutions to constrained optimization problems whose feasible region (identified set) is defined by moment conditions for estimation, structural constraints (e.g., the Bellman equation), and the perturbation set around the reference transition distribution. The distribution that achieves the bound is called the worst-case distribution.

A central challenge is to define the perturbation set in a way that simplifies computation while maintaining key structural features of the distribution of latent variables. For time-homogeneous models, the structural feature is the stationarity condition, i.e., the perturbed dynamic process must be stationary. For finite-horizon, time-inhomogeneous models, the perturbed trajectory of latent variables must be Markovian. In addition, the terminal distribution is fixed because it can be nonparametrically point identified (\cite{lewbel2000semiparametric,matzkin2007nonparametric}). Because the stationarity and Markov conditions are functional constraints imposed directly on the distribution being optimized, they are computationally difficult to impose. We contribute to the sensitivity analysis literature (e.g., \cite{christensen2023counterfactual}) by providing a computationally tractable framework to deal with these constraints.

There are further practical challenges. First, the properties (e.g., closed-form, smoothness) of the worst-case distribution are typically unclear, complicating the choice of approximation methods. Second, the expectations that define the scalar parameter, the model-implied moments, and the structural constraints are all calculated with respect to the perturbed distribution. Because this distribution is itself an optimization variable that changes during the optimization, the numerical integration can be difficult to implement. Third, the value function in dynamic structural models is infinite-dimensional due to the serial dependence of latent variables, which complicates the optimization problem further.

To address these challenges, we define the perturbation set as a Kullback–Leibler (KL) divergence ball around the reference distribution. For the time-homogeneous case, the reference distribution is a joint distribution of current and future latent variables. For the time-inhomogeneous case, it is the joint distribution of the entire trajectory of latent variables. The KL radius controls the size of the perturbation set. We then employ the Optimal Transport (OT) framework to impose structural constraints on the distribution of latent variables. In the time-homogeneous case, the stationarity condition requires that the marginal distributions of current and future latent variables of the perturbed joint distribution coincide, which can be imposed using OT. This marginal constraint, together with the KL divergence penalty from the problem’s Lagrangian, yields a computationally tractable Entropic Optimal Transport (EOT) problem. In the time-inhomogeneous case, the perturbed trajectory must be Markovian, and the terminal distribution is fixed. We show how to formulate this as an EOT problem.

The EOT problem can be solved using its dual formulation, which has three key advantages over the primal formulation. First, it provides a closed-form expression for the worst-case distribution and characterizes its smoothness. During our proposed optimization algorithm, this closed-form expression is used to update the value function in dynamic structural models. Second, the Sinkhorn algorithm (\cite{sinkhorn1967concerning,cuturi2013sinkhorn}) allows us to solve the EOT problem and compute the worst-case distribution efficiently. Finally, because the expectations in the dual problem are taken with respect to the reference distribution, an appropriate numerical integration method can be chosen in advance.

Then we consider three complementary sensitivity measures to interpret the results. First, the global sensitivity approach computes the largest deviation from the reference value. It progressively increases the KL radius until the bounds flatten. We show that it provides a tractable approximation to the nonparametric bounds on the scalar parameter when the KL divergence constraint is removed. Moreover, we derive an explicit upper bound on the approximation error, which enables us to control the error within a desired level. Second, the local sensitivity approach analyzes the effect of small perturbations. It computes the right derivatives of the bounds with respect to the KL radius at zero, which serves as our local sensitivity measure. Finally, the robustness metric approach, inspired by \cite{spini2021robustness}, computes the smallest deviation from the reference distribution required to produce sensitive results. It is the smallest KL divergence from the reference distribution required for the scalar parameter's value to fall below a user-specified threshold (e.g., 5\% below the reference value). In addition to our three sensitivity measures, we can also estimate an alternative model and set the radius as the KL divergence between the alternative and reference models.

For large sample properties, we propose an estimator for the bound, establishing its consistency and convergence rate. To this end, we first establish the consistency and convergence rate of the estimator of the identified set, following \cite{chernozhukov2007estimation}. We then derive the asymptotic distribution of the plug-in estimator by proving the Hadamard directional differentiability of the bound.

We apply our framework to an infinite-horizon dynamic demand model for new cars in the UK, Germany, and France. We consider the sensitivity of the price elasticity and consumer surplus from an additional \$3,000 electric vehicle subsidy. In the model, the indirect utility of purchasing is a latent variable due to unobserved product characteristics, and its transition is typically modeled as an AR(1) process (e.g., \cite{schiraldi2011automobile,gowrisankaran2012dynamics}). For the price elasticity, we find that the French market is the least sensitive to the distributional assumption (at most 6.20\% deviation from the reference elasticity), while the German market is the most sensitive (at most 15.24\% deviation). We also find that this sensitivity is relatively stable over time for all three markets. For the consumer surplus, the German market is also the most sensitive (at most 102.75\% deviation from the reference consumer surplus), while the UK and French markets are less sensitive (at most 25.17\% and 24.73\% deviation, respectively). Importantly, the results remain economically meaningful even under the worst-case distribution, with the consumer surplus remaining at least \$309 million for the German market, \$1,243 million for the French market, and \$2,584 million for the UK market.

We also apply our framework to a finite-horizon dynamic labor supply model for taxi drivers in New York City. We consider the sensitivity of the elasticity of stopping work and the Frisch elasticity of labor supply. In the model, the market-level supply shock is a latent variable, and its transition is also modeled as an AR(1) process. For the elasticity of stopping work, we find that weekday drivers' elasticity is more sensitive in the morning, while weekend drivers' elasticity is more sensitive in the afternoon. For the Frisch elasticity, both weekday and weekend drivers' elasticities are sensitive to the distributional assumption, with at most 76.83\% and 42.84\% deviation from the reference elasticity, respectively.

\subsection{Related Literature}

\noindent \textit{Identification and estimation.} Many papers have focused on identification in a range of dynamic structural models including finite mixture models (\cite{kasahara2009nonparametric,luo2022identification,higgins2023identification,higgins2025learning}), unobservable Markov processes (\cite{hu2012nonparametric}), and counterfactual conditional choice probabilities in dynamic binary choice models (\cite{norets2014semiparametric}). \cite{berry2023instrumental} uses the generalized instrumental variable approach for unobserved state variables in dynamic discrete choice (DDC) models. In fixed effects DDC models, \cite{aguirregabiria2021sufficient} considers identification of structural parameters using sufficient statistics, while \cite{aguirregabiria2024identification} studies identification of average marginal effects. \cite{hwang2024identification} employs proxy variables for the latent variables. \cite{kalouptsidi2021linear} proposes the Euler Equations in Conditional Choice Probabilities (ECCP) estimator. By leveraging finite-dependence properties and cross-sectional data, it identifies structural parameters in the presence of serially correlated market-level unobserved variables without distributional assumptions. \cite{arcidiacono2011conditional} adapts the Expectation-Maximization algorithm to estimate DDC models with discrete latent types. \cite{norets2009inference} extends the Bayesian estimation of DDC models by \cite{imai2009bayesian} to allow for serially correlated latent variables, while \cite{blevins2016sequential} proposes a sequential Monte Carlo method. \cite{chiong2016duality} shows identification in DDC models is an optimal transport problem under serial independence of utility shocks. In addition, \cite{chen2011sensitivity}, \cite{schennach2014entropic}, and \cite{fan2023partial,fan2025partial} consider inference of finite-dimensional parameters in the presence of an infinite-dimensional parameter, namely the distribution of latent variables.

Our framework contributes to this literature in three ways. First, we focus directly on a scalar parameter (e.g., welfare, elasticity) rather than on the identified set of model primitives. Second, to analyze sensitivity, we consider a perturbation set around the reference distribution rather than the set of all distributions. Third, we complement identification strategies that do not rely on distributional assumptions. In the labor supply application, we use the ECCP estimator of \cite{kalouptsidi2021linear} to point identify the utility parameters and then conduct sensitivity analysis of the labor supply elasticity with respect to assumptions about the dynamic process of the market-level supply shock.

\textit{Sensitivity analysis and robustness.} In DDC models, \cite{kalouptsidi2021counterfactual,kalouptsidi2021identification} relax common normalizations on the utility function. \cite{bugni2019inference} considers the local misspecification of the transition density of observable variables, assuming the transition density is correctly specified in the limit (as the sample size goes to infinity) in DDC models. \cite{andrews2017measuring} considers a setting in which the moments are locally misspecified under the reference distribution. Subsequent work \cite{armstrong2021sensitivity} constructs near-optimal confidence intervals in such models. \cite{kitamura2013robustness} considers the robust estimation under moment restrictions. \cite{bonhomme2022minimizing} perturbs the reference model and assumes the size of the perturbation shrinks to zero as the sample size goes to infinity. \cite{chen2024robust} relaxes the rational expectation assumption. \cite{gu2023dual} considers the identification of scalar counterfactual parameters using optimal transport. \cite{spini2021robustness} studies the robustness of policy effects to changes in the distribution of covariates. \cite{armstrong2025misspecification} provides a selective review of misspecification in econometrics. Most closely related is \cite{christensen2023counterfactual}, who conducts sensitivity analysis with respect to parametric assumptions about the distribution of latent variables in structural models. However, they focus on relaxing the marginal distribution assumption while maintaining serial independence. 

We contribute to this literature in three ways. First, our focus on misspecified dynamic processes complements previous analysis of misspecification in dynamic structural estimation (e.g., \cite{bugni2019inference,kalouptsidi2021counterfactual,kalouptsidi2021identification,christensen2023counterfactual}). Second, the size of our perturbation set is fixed and does not shrink with the sample size. Third, we provide a computationally tractable dual formulation for the optimization problem. While prior work (\cite{schennach2014entropic,gu2023dual,christensen2023counterfactual}) uses duality to convert the infinite-dimensional problem to a finite-dimensional one, our dual problem is still infinite-dimensional. This is because the stationarity condition for time-homogeneous models and fixed terminal distribution for time-inhomogeneous models are infinite-dimensional, and serial dependence leads to an infinite-dimensional value function in dynamic structural models. However, by leveraging EOT duality, we provide a tractable implementation and demonstrate our approach through two empirical applications.

\textit{Distributionally robust optimization (DRO).} The literature on DRO (\cite{kuhn2019wasserstein,rahimian2019distributionally,blanchet2022confidence,gao2023distributionally,wang2021sinkhorn}) usually studies the uncertainty due to limited observability of data, noisy measurements, or estimation errors. 

Our work is distinct in that we study the misspecification due to assumptions about the serial dependence of latent variables—a problem of model specification rather than data limitation. Moreover, because our framework can be treated as a moment-constrained DRO problem, we employ the minimax theorem (see \cite{fan1953minimax} and \cite{ricceri1998minimax}, Theorem 1.3) to exchange the order of the supremum over Lagrangian multipliers (for the moment conditions and structural constraints) and the infimum over distributions in the perturbation set. This exchange allows a direct application of duality results from the DRO literature, which simplifies our proof of the dual formulation significantly.

\textit{Applied work.} Building on the seminal work on the estimation of DDC models (\cite{rust1987optimal,hotz1993conditional,aguirregabiria2002swapping,aguirregabiria2007sequential,pesendorfer2008asymptotic,arcidiacono2011conditional}), most applied work assumes the serial independence of utility shocks. In the presence of serially correlated latent variables, parametric models are often used, as seen in the works of \cite{schiraldi2011automobile,gowrisankaran2012dynamics,blevins2018firm,piveteau2021empirical} and others mentioned in the introduction. 

We contribute to this literature by developing a computationally tractable framework for sensitivity analysis of scalar parameters of interest to these distributional assumptions. We also demonstrate our framework through two empirical applications: an infinite-horizon dynamic demand model, and a finite-horizon dynamic labor supply model.

\textbf{Outline:} Sections \ref{sec: Methodology} and \ref{sec:extension nonstationary} present our framework for time-homogeneous and time-inhomogeneous models. \Cref{sec: Large Sample Properties and Inference} establishes the large sample properties of our estimator. \Cref{sec: Derivative with respect to delta} introduces three sensitivity measures. \Cref{sec: Practical Implementation} discusses practical implementation. Sections \ref{sec: Application} and \ref{sec: Application finite horizon} present two empirical applications. \Cref{sec: Conclusion} concludes. Appendix \ref{sec: Additional Examples} presents additional examples. All proofs are in Appendix \ref{sec:proofs}.

\textbf{Notation:} Let $U \in \mU \subseteq \bR^{d}$ be a vector of latent variables with support $\mU$ where $\mU$ is assumed to be Polish. Let $\mP(\mU)$ be the space of Borel probability measures on $\mU$ and $\mB(\mU)$ be the Borel $\sigma$-algebra on $\mU$. In this paper, all measures are assumed to be absolutely continuous with respect to the Lebesgue measure. Denote by $\bE_{F}\left[\cdot\right]$, $\bE_{x}\left[\cdot\right]$ the expectations with respect to $F \in \mP(\mU)$ and the probability distribution of the random variable $x$, respectively. For variables in a stationary dynamic context, a prime (e.g., $x'$) denotes the variable's value in the next period. Let $F_{1}, F_{2} \in \mP(\mU)$, we write $F_{1} \ll F_{2}$ if $F_{1}$ is absolutely continuous with respect to $F_{2}$, and $F_{1} \otimes F_{2}$ as the product measure. For a finite-dimensional vector, denote by $\|\cdot\|_{p}$ the $p$-norm. Denote by $L^{p}(F)$ the space of functions for which $\int |f|^{p} dF < \infty$. For a set $\mA$, let $\text{int}(\mA)$ denote its interior. Denote by $\bR_{+}$ the set of non-negative real numbers, and $\mathbb{N}$ the set of natural numbers. Finally, let $\mJ := \left\{1,\cdots,J\right\}$.

\section{Methodology for Time-Homogeneous Models} \label{sec: Methodology}

\noindent This section presents our methodology for time-homogeneous models. \Cref{sec: Definition of Perturbation} defines the perturbation set. \Cref{sec: Examples} gives two examples. \Cref{sec: Framework} presents the general framework and derives duality results. \Cref{sec: perturbing stationary distribution} discusses how to perturb the stationary distribution.

\subsection{Definition of Perturbation Set} \label{sec: Definition of Perturbation}

\noindent We partition a vector of latent variables $U \in \mU \subseteq \bR^{d}$ into $2 \leq k \leq d $ subvectors, i.e., $U = (U_{1},\cdots,U_{k})$.\footnote{The vector $U$ can also contain observable variables.} Each subvector $U_{i} \in \mU_{i} \subseteq \bR^{d_{i}}$ has a marginal distribution $\nu_{i} \in \mP(\mU_{i})$ for $i = 1,\cdots,k$. Let $F_{0}$ denote the reference distribution for $U$. The perturbation set around this reference distribution is defined as:
\begin{equation*}
    \mF := \left\{ F \in \mP(\mU) \mid F \in \Pi(\nu_{1},\cdots,\nu_{k}), D_{KL}(F \| F_{0}) \leq \delta \right\}
\end{equation*}
where $\Pi(\nu_{1},\cdots,\nu_{k})$ is the set of joint distributions on $\mU$ with marginals $\{\nu_{i}\}_{i=1}^{k}$, and $\delta \geq 0$ measures the “size” of the perturbation set, defined by the Kullback-Leibler (KL) divergence:
\begin{equation*}
    D_{KL}(F \| F_{0}) :=
    \begin{cases}
    \int \log \left(\frac{dF(U)}{dF_{0}(U)} \right) dF(U) & \text{if } F \ll F_{0} \\
    +\infty & \text{otherwise}
    \end{cases}
\end{equation*}

In time-homogeneous models, the marginal distribution constraints are used to impose the stationarity. Consider an unobserved stationary first-order Markov process $\{\xi_t\}_{t \in \mathbb{Z}}$ with state space $\Xi \subseteq \mathbb{R}^{d_\xi}$. The process is stationary with respect to $\nu_{0} \in \mP(\Xi)$ if $\int F_{0}(d\xi'|\xi) \nu_{0}(d\xi) = \nu_{0}(d\xi')$ for all $\xi' \in \Xi$, where $F_{0}(d\xi'|\xi)$ is the reference transition kernel (e.g., conditional Gaussian distribution for an AR(1) process). This is equivalent to requiring that the marginal distributions of the joint distribution $dF_{0}(\xi,\xi') := F_{0}(d\xi'|\xi)\nu_{0}(d\xi)$ are both $\nu_{0}$, i.e., $F_{0} \in \Pi(\nu_{0},\nu_{0})$. Moreover, for any $F \in \Pi(\nu_{0},\nu_{0})$, its conditional density $F(d\xi'|\xi)$ preserves $\nu_{0}$ as a stationary distribution. For this example, let $U := (\xi,\xi')$. Then, the perturbation set is:
\begin{equation*}
    \mF := \{ F \in \mP(\mU) \mid F \in \underbrace{\Pi(\nu_{0},\nu_{0})}_{\text{Stationarity}}, \underbrace{D_{KL}(F \| F_{0}) \leq \delta}_{\text{Perturbation}} \}
\end{equation*}
This definition allows us to perturb the transition kernel of the Markov process while keeping its stationary distribution unchanged. It can introduce non-linear dynamics into the latent variable process. For example, if the reference model is an AR(1) process, the perturbation set includes any nonlinear first-order Markov process with the same stationary distribution as the AR(1) process. \Cref{sec: perturbing stationary distribution} discusses how to perturb the stationary distribution. In this case, we replace the condition $F \in \Pi(\nu_{0},\nu_{0})$ with $F \in \Pi(\nu,\nu)$, where $\nu$ is the perturbed stationary distribution that is in a neighborhood of $\nu_{0}$.

The marginal constraints with the KL constraint form a computationally tractable EOT problem whose implementation depends on its dual formulation (see Section \ref{sec: Framework}).\footnote{For duality results of general divergence constrained OT problem, see \cite{bayraktar2025stability}.}

\begin{remark}
    \begin{enumerate}[label=(\roman*)]
        \item We can further partition $(\xi,\xi')$ into some subvectors to analyze sensitivity to distributional assumptions about cross-sectional dependence.
        \item We can also consider higher-order Markov processes by expanding the state space. For example, if the perturbed process is a second-order Markov process, then perturbation set can be defined on the joint distribution of $(\tilde{\xi}_{t},\tilde{\xi}_{t+1})$ where $\tilde{\xi}_{t} = (\xi_{t},\xi_{t+1})$.
    \end{enumerate}
\end{remark}

\subsection{Examples} \label{sec: Examples}

\noindent Our framework applies to a variety of latent variables, including utility shocks, productivity characteristics, labor supply shocks, etc. This section focuses on a parametric model for latent variables beyond utility shocks in infinite and finite-horizon DDC models. Appendix \ref{sec: Additional Examples} considers: (i) serial independence of utility shocks in DDC models, and (ii) serial independence of consumption shocks in dynamic discrete-continuous choice models. The bounds on a scalar parameter are solutions to constrained optimization problems over the perturbation set subject to structural constraints (e.g., Bellman equation) and moment conditions.

\begin{example}[Infinite Horizon Dynamic Discrete Choice Models with Serially Correlated Latent Variables] \label{ex: Dynamic Discrete Choice Models with Serially Correlated Unobserved State Variables}
This example considers a parametric model for serially correlated latent variables in a single-agent DDC model as in \cite{rust1994structural}. Let $\xi \in \Xi$ be the exogenously evolving latent variable (e.g., unobserved productivity characteristics). Agents solve the smoothed\footnote{We assume that the utility shock is additively separable in the period utility function and follows an i.i.d. Extreme Value Type I distribution, leading to the log-sum-exp form of the value function \cite{rust1987optimal}.} Bellman equation for the conditional value function $v \in \mV$ where $\mV$ is a function class (e.g., square integrable functions, the H\"older class, etc.): for $\forall \ (x,\xi,j) \in \mX \times \Xi \times \mJ$,
\begin{equation} \label{eq: softmax Bellman equation}
    v_{j}(x,\xi) = u_{j}(x,\xi;\theta) + \beta \bE_{\xi'|\xi}\bE_{x'|x,j}\left[\log\left(\sum_{j' \in \mJ} \exp(v_{j'}(x',\xi'))\right)\right] + \beta \gamma
\end{equation}
where $x \in \mX$ is the observable state variable, $\beta \in (0,1)$ is the discount factor, $\gamma$ is the Euler constant, and $u_{j}(x,\xi;\theta)$ is the period utility of choosing action $j\in \mJ$ parameterized by $\theta \in \Theta$. The model-implied Conditional Choice Probability (CCP) is $p(j|x,\xi) = \frac{\exp\left(v_{j}(x,\xi)\right)}{\sum_{j' \in \mJ} \exp\left(v_{j'}(x,\xi)\right)}$.

Let $U:=(\xi,\xi')$ be a vector of current and future latent variables. An AR(1) process is often used to model the transition of $\xi$. Therefore, the reference distribution is $dF_{\theta_{f}}(U) := F_{\theta_{f}}(d\xi'|\xi)\nu_{\theta_{f}}(d\xi)$ where $F_{\theta_{f}}(d\xi'|\xi)$ is the conditional distribution parameterized by $\theta_{f} \in \Theta_{f}$ (e.g., the parameters of the AR(1) process), and $\nu_{\theta_{f}}$ is its stationary distribution. The perturbation set for a given $\theta_{f}$ is defined as:
\begin{equation*}
    \mF_{\theta_{f}} := \left\{ F \in \mP(\mU) \mid F \in \Pi(\nu_{\theta_{f}},\nu_{\theta_{f}}), D_{KL}(F\|F_{\theta_{f}}) \leq \delta \right\}
\end{equation*}

Suppose the scalar parameter of interest is the average elasticity of action $j$ with respect to variable $x_{l}$, defined as:
\begin{equation*}
    \bE_{\nu_{\theta_{f}}}\bE_{x}\left[\frac{\partial p(j|x,\xi)}{\partial x_{l}} \frac{x_{l}}{P_{0}(j|x)}\right]
\end{equation*}
where $P_{0}(j|x)$ is the population CCP, and the expectation is taken with respect to the joint distribution $F \in \mF_{\theta_{f}}$ and the distribution of $x$.

We convert the smoothed Bellman equation \eqref{eq: softmax Bellman equation} into an unconditional moment restriction that depends on the joint distribution $F$. We assume\footnote{The Lagrange multiplier function converts the continuum of conditional moment restrictions into a single unconditional moment restriction (see for example \cite{andrews2013inference} and \cite{schennach2014entropic}).} there exists a class of Lagrange multiplier functions $\mG$\footnote{For example, if $\mV$ is the class of square integrable functions, $\mG$ is the class of square integrable functions.} such that $v$ solves the Bellman equation \eqref{eq: softmax Bellman equation} if and only if:
\begin{equation*}
    \sup_{g \in \mG} \bE_{F} \bE_{x,j,x'} \left[g_{j}(x,\xi)\left(v_{j}(x,\xi) - u_{j}(x,\xi;\theta) - \beta \log\left(\sum_{j' \in \mJ} \exp(v_{j'}(x',\xi'))\right) - \beta \gamma\right)\right] = 0
\end{equation*}
where the inner expectation is taken with respect to the stationary distribution of $x$, the population CCPs, and the conditional distribution of $x'$ given $(x,j)$. Let $g:=(g_{j})_{j \in \mJ}$. Then, we rewrite the structural constraints as:
\begin{equation*}
    \sup_{g \in \mG} \bE_{F} \left[\psi(U;\theta,v,g)\right] = 0
\end{equation*}

We consider the following moment conditions for estimation:
\begin{equation*}
    \bE_{\nu_{\theta_{f}}} \left[p(j|x,\xi) \right] = P_{0}(j|x) \quad \forall \ (j,x) \in \mJ \times \mX
\end{equation*}
We assume $\mX$ has discrete support, and rewrite the moment conditions as:
\begin{equation*}
    \bE_{F} \left[m(U;v)\right] = P_{0}
\end{equation*}
where $m(U;v)$ stacks the model-implied CCPs, and $P_{0}$ stacks the population CCPs.

Then, the lower bound on the elasticity is given by:
\begin{align*}
    \inf_{\theta_{f} \in \Theta_{f}}\inf_{(\theta,v,F) \in \Theta \times \mV \times \mF_{\theta_{f}}} \
    & \bE_{\nu_{\theta_{f}}} \bE_{x} \left[\frac{\partial p(j|x,\xi)}{\partial x_{l}} \frac{x_{l}}{P_{0}(j|x)}\right] \\
    \text{s.t.} \quad
    & \bE_{F} \left[m(U;v)\right] = P_{0} \\
    & \sup_{g \in \mG} \bE_{F} \left[\psi(U;\theta,v,g)\right] = 0
\end{align*}

In the next section, we will discuss the implementation for a fixed $\theta_{f}$. The overall lower bound requires an additional optimization over $\theta_{f} \in \Theta_{f}$. In practice, we can discretize the estimated AR(1) process, and scale the grid points according to the candidate $\theta_{f}$ during the optimization. Then, the optimization over $\theta_{f}$ can be implemented using the algorithm proposed in \Cref{sec: Practical Implementation of the Proposed Framework}.

\end{example}

\begin{example}[Finite Horizon Dynamic Discrete Choice Models] \label{ex: Finite Horizon Dynamic Discrete Choice Models}

This example considers a finite-horizon DDC model where the latent variable $\xi \in \Xi$ (e.g., labor supply shocks) follows a first-order stationary Markov process. The conditional value function $v_{t} \in \mV$ at time period $t \leq T < + \infty$ solves the smoothed Bellman equation: for $\forall \ (x_{t},\xi_{t},j) \in \mX \times \Xi \times \mJ$,
\begin{equation} \label{eq: finite Bellman equation}
    v_{jt}(x_{t},\xi_{t}) = u_{j}(x_{t},\xi_{t};\theta) + \beta \bE_{\xi_{t+1}|\xi_{t}}\bE_{x_{t+1}|x_{t},j}\left[\log\left(\sum_{j' \in \mJ} \exp(v_{jt+1}(x_{t+1},\xi_{t+1}))\right)\right] + \beta \gamma
\end{equation}
where $x \in \mX$ is the observable state variable, $\beta \in (0,1)$ is the discount factor, $\gamma$ is the Euler constant, $u_{j}(x,\xi;\theta)$ is the period utility parameterized by $\theta \in \Theta$, and $v_{jT}(x_{T},\xi_{T}) = u_{j}(x_{T},\xi_{T};\theta)$. The model-implied CCP is $p_{t}(j|x,\xi) = \frac{\exp\left(v_{jt}(x,\xi)\right)}{\sum_{j' \in \mJ} \exp\left(v_{j't}(x,\xi)\right)}$.

Let $U:=(\xi,\xi')$ be a vector of current and future latent variables. In practice, we may set the reference distribution as the estimated distribution from a parametric model, such as an AR(1) process. The reference distribution $F_{0}$ is the product of the conditional distribution and its stationary distribution, $\nu_{0}$. The perturbation set is defined as:
\begin{equation*}
    \mF := \left\{ F \in \mP(\mU) \mid F \in \Pi(\nu_{0},\nu_{0}), D_{KL}(F\|F_{0}) \leq \delta \right\}
\end{equation*}
Suppose the scalar parameter of interest is the consumer surplus derived from the choice set $\mJ$ at period $t$:
\begin{equation*}
    \bE_{\nu_{0}} \bE_{x_{t}} \left[\frac{1}{\alpha} \log\left(\sum_{j \in \mJ} \exp(v_{jt}(x_{t},\xi_{t}))\right)\right]
\end{equation*}
where we assume $u_{j}(x_{t},\xi_{t};\theta)$ is linear in price and $\alpha$ is the price coefficient.

We convert the smoothed Bellman equation \eqref{eq: finite Bellman equation} into restrictions that depend on the joint distribution $F \in \mF$. We assume there exists a class of Lagrange multiplier functions $\mG$ such that for each $t\leq T-1$, $v_{t}$ solves \eqref{eq: finite Bellman equation} if and only if:
\begin{equation*}
    \begin{aligned}
        &\sup_{g_{t} \in \mG} \bE_{F} \bE_{x_{t},j_{t},x_{t+1}} \Big[g_{jt}(x_{t},\xi_{t}) \\
        &\qquad \times \Big(v_{jt}(x_{t},\xi_{t}) - u_{j_{t}}(x_{t},\xi_{t};\theta) - \beta \log\left(\sum_{j' \in \mJ} \exp(v_{j't+1}(x_{t+1},\xi_{t+1}))\right) - \beta \gamma\Big)\Big] = 0
    \end{aligned}
\end{equation*}
where $(\xi_{t},\xi_{t+1}) \sim F$, $(x_{t},j_{t})$ is distributed according to the observed data at time $t$, and $x_{t+1}$ follows the conditional distribution given $(x_{t},j_{t})$. Let $g:=(g_{jt})_{j \in \mJ, t \leq T-1}$ and $v:=(v_{jt})_{j \in \mJ, t \leq T-1}$. Then, we rewrite the structural constraints as:
\begin{equation*}
    \sup_{g \in \mG} \bE_{F} \left[\psi(U;\theta,v,g)\right] = 0
\end{equation*}
where $\psi$ is the sum of the objective functions in the above equation for each $t \leq T-1$.

We consider the following moment conditions for estimation: for each $t \leq T$,
\begin{equation*}
    \bE_{\nu_{0}} \left[p_{t}(j|x_{t},\xi)\right] = P_{0t}(j|x_{t}) \quad \forall \ (j,x_{t}) \in \mJ \times \mX
\end{equation*}
where $P_{0t}(j|x_{t})$ is the population CCP at period $t$. We assume $\mX$ has discrete support, and rewrite the moment conditions as:
\begin{equation*}
    \bE_{F} \left[m(U;v)\right] = P_{0}
\end{equation*}
where $m(U;v)$ and $P_{0}$ stack the model-implied and population CCPs for all $t \leq T$.

Then, the lower bound on consumer surplus at period $t$ is given by:
\begin{align*}
    \inf_{(\theta,v,F) \in \Theta \times \mV \times \mF} 
    & \bE_{\nu_{0}} \bE_{x_{t}} \left[\frac{1}{\alpha} \log\left(\sum_{j \in \mJ} \exp(v_{jt}(x_{t},\xi_{t}))\right)\right] \\
    & \text{s.t.} \quad
    \bE_{F} \left[m(U;v)\right] = P_{0} \\
    & \phantom{\text{s.t.} \quad} \sup_{g \in \mG} \bE_{F} \left[\psi(U;\theta,v,g)\right] = 0 
\end{align*}

\end{example}

\subsection{Framework and Duality} \label{sec: Framework}

\noindent We now present a general framework that nests the above examples. In general, the model is not point-identified when $\mF$ is not a singleton.\footnote{For example, see \cite{schennach2014entropic,molinari2020microeconometrics}.} Therefore, we propose to compute upper and lower bounds on the outcome of interest. Let the scalar parameter of interest, $s : \mU \times \Theta \times \mV \rightarrow \bR$, be a function of the latent variable $U$, and the model primitives $(\theta,v) \in \Theta \times \mV$. The lower bound is the solution to the following optimization problem:
\begin{align}
    \kappa(\delta,P) :=
    & \inf_{(\theta,v,F) \in \Theta \times \mV \times \mF} \bE_{F} \left[s(U;\theta,v)\right] \nonumber \\
    & \text{s.t.} \quad 
    \bE_{F} \left[m(U;\theta,v)\right] = P \tag{\blue{Primal}} \label{Primal}\\
    & \phantom{\text{s.t.} \quad} \sup_{g \in \mG} \bE_{F} \left[\psi(U;\theta,v,g)\right] = 0 \nonumber
\end{align}
where $ \mF := \{ F \in \mP(\mU) \mid F \in \Pi(\nu_{1},\cdots,\nu_{k}), D_{KL}(F\|F_{0}) \leq \delta \}$.

The first constraint is a moment condition where the moment function $m : \mU \times \Theta \times \mV \rightarrow \bR^{d_{P}}$ is finite-dimensional as we assume the observable variable $X \in \mX$ has discrete support and stack the moment functions for each $x \in \mX$. The second constraint is a structural constraint defined by $\psi : \mU \times \Theta \times \mV \times \mG \rightarrow \bR$ that is linear in the Lagrange multiplier function $g \in \mG$ for a given $(\theta,v)$. Finally, $v \in \mV$ is the solution to the structural constraint.

\begin{remark}
    \begin{enumerate}[label=(\roman*)]
        \item The upper bound can be obtained by replacing $s(U;\theta,v)$ with $-s(U;\theta,v)$.
        \item The moment condition can contain restrictions linear in $F$, e.g., covariance restrictions.
        \item If some model primitives (e.g., a subvector of $\theta$) are point-identified, they are treated as fixed values rather than optimized over.
        \item The framework can also be applied to models without structural constraints, such as static and panel discrete choice models.
    \end{enumerate}
\end{remark}

The \blueref{Primal} problem can be intractable due to the optimization over $\mF$. First, the properties (e.g., closed-form and smoothness) of the optimal $F^{*}$ are typically unknown, complicating the choice of approximation methods. Second, the marginal distribution conditions are functional constraints imposed directly on the distribution being optimized, which are computationally difficult to impose. Third, expectations are taken with respect to the perturbed distribution, making numerical integration difficult.

To overcome these issues, we derive the \blueref{Dual problem} problem corresponding to the \blueref{Primal}. The \blueref{Dual problem} provides the closed-form of the optimal $F^{*}$, and characterizes its smoothness. Moreover, in the dual, the expectation is taken with respect to the reference distribution $F_{0}$. \Cref{sec: Practical Implementation of the Proposed Framework} proposes a computationally tractable algorithm that utilizes the optimal $F^{*}$. To motivate the duality, consider the Lagrangian of the \blueref{Primal}:
\begin{equation}
    \kappa(\delta,P) = \inf_{\substack{(\theta,v) \in \Theta \times \mathcal{V} \\ F \in \Pi(\nu_{1},\ldots,\nu_{k})}} \sup_{\substack{\lambda \in \bR^{d_{P}} \\ \lambda_{KL} \geq 0, g \in \mG}} \bE_{F} \left[c(U;\theta,v,g,\lambda)\right] + \lambda_{KL}(D_{KL}(F\|F_{0}) - \delta) - \lambda^{T}P \label{eq: Lagrangian}
\end{equation}
where $c(U;\theta,v,g,\lambda) := s(U;\theta,v) + \lambda^{T}m(U;\theta,v) + \psi(U;\theta,v,g)$, $\lambda \in \bR^{d_{P}}$ is the Lagrange multiplier for the moment condition and $\lambda_{KL}$ is the Lagrange multiplier for the KL divergence constraint. For given $(\theta,v)$, under regularity conditions, we can swap the order of the infimum over $F$ and the supremum over $(\lambda,\lambda_{KL},g)$. Then, we can rewrite \eqref{eq: Lagrangian} as:
\begin{equation*}
    \inf_{(\theta,v) \in \Theta \times \mV} \sup_{\substack{\lambda \in \bR^{d_{P}} \\ \lambda_{KL} \geq 0, g \in \mG}} \inf_{F \in \Pi(\nu_{1},\ldots,\nu_{k})} \bE_{F} \left[c(U;\theta,v,g,\lambda)\right] + \lambda_{KL}D_{KL}(F\|F_{0}) - \lambda_{KL} \delta - \lambda^{T}P
\end{equation*}
The inner infimum is the Entropic Optimal Transport (EOT) problem (for $\lambda_{KL} > 0$)\footnote{See \cite{nutz2021introduction} for a comprehensive introduction to the EOT problem. The EOT problem has close connection to the static Schr{\"o}dinger Bridge problem. It is called the Optimal Transport (OT) problem when $\lambda_{KL} = 0$ (see \cite{villani2009optimal}). The optimal value is called the optimal OT value.}:
\begin{equation*}
    \mC(\theta,v,g,\lambda,\lambda_{KL}) := \inf_{\substack{F \in \Pi(\nu_{1},\cdots,\nu_{k})}} \bE_{F} \left[c(U;\theta,v,g,\lambda)\right] + \lambda_{KL} D_{KL}(F\|F_{0})
\end{equation*}
where $c(U;\theta,v,g,\lambda)$ is the cost function, and $\mC(\theta,v,g,\lambda,\lambda_{KL})$ is called the optimal EOT value. The EOT problem is computationally fast to solve using the Sinkhorn algorithm, which relies on the duality of the EOT problem. Moreover, the closed-form and smoothness of the unique solution $F^{*}$ to the EOT problem can be derived from its duality (see \Cref{thm: Strong Duality} and \Cref{thm: Properties of the optimal test function}). \Cref{sec: EOT and Sinkhorn Algorithm} reviews the EOT duality and the Sinkhorn algorithm. Next, we impose assumptions for the minimax theorem to swap the order of infimum and supremum, and the EOT duality to hold:
\begin{assumption} \label{assumption: Strong Duality}
    We assume:
    \begin{assumptionitems}
        \item The marginals $\{\nu_{i}\}_{i=1}^{k}$ have finite $p$-th moment for some integer $p \geq 1$. \label{assumption: finite moment}
        \item $F_{0} \ll F_{\otimes} := \otimes_{i=1}^{k} \nu_{i}$, and let $\rho(U) := \log \frac{dF_{\otimes}(U)}{dF_{0}(U)}$. \label{assumption: reference}
        \item $\mG$ is convex and symmetric, i.e., for $g_{1},g_{2} \in \mG$, $\eta g_{1} + (1-\eta)g_{2} \in \mG$ for $\forall \ \eta \in [0,1]$, and $-g \in \mG$ if $g \in \mG$. Moreover, if $g \in \mG$, then $\eta g \in \mG$ for $\forall \ \eta \geq 0$. \label{assumption: symmetric and convex}
        \item For $\forall \ (\theta,v,g) \in \Theta \times \mV \times \mG$, it holds that $\rho(U),s(U;\theta,v),m(U;\theta,v), \psi(U;\theta,v,g)$ are lower semicontinuous in $U$. \label{assumption: lower semi-continuous}
        \item For $\forall \ (\theta,v,g) \in \Theta \times \mV \times \mG$, there exist a finite positive constant $C_{\theta,v,g}$ and $\hat{U} \in \mU$ such that for $\forall \ U \in \mU$, it holds that $|\rho(U)|+|s(U;\theta,v)|+\|m(U;\theta,v)\|_{1} + |\psi(U;\theta,v,g)| \leq C_{\theta,v,g}(1 + d(U,\hat{U}))$ where $d(U,\hat{U}) := \sum_{i=1}^{k} d_{i}(U_{i}, \hat{U}_{i})^{p}$ and $d_{i}$ is a metric on $\mU_{i}$. \label{assumption: growth rate}
    \end{assumptionitems}
\end{assumption}
Assumptions \ref{assumption: finite moment}-\ref{assumption: lower semi-continuous} are mild. \Cref{assumption: lower semi-continuous} holds for indicator functions. There is no particular necessity to write $\rho(U)$ in log-density form in \Cref{assumption: reference}. Our notation is chosen to simplify the expression in Assumptions \ref{assumption: lower semi-continuous} and \ref{assumption: growth rate}. \Cref{assumption: growth rate} imposes the growth rate condition. It is satisfied for all Examples in \Cref{sec: Examples} if we assume $u$, $g$, and $v$ satisfy the growth rate condition. It ensures that $c(U;\theta,v,g,\lambda) \in L^{1}(F)$ for $\forall \ F \in \mF$, and can also be used to show the convergence of the Sinkhorn algorithm and the convergence of optimal EOT value to optimal OT value as $\lambda_{KL} \downarrow 0$ (\cite{eckstein2022quantitative,eckstein2024convergence}).\footnote{For the convergence of optimal EOT value to optimal OT value, lower semicontinuity alone is not sufficient (\cite{nutz2021introduction} Example 5.1). A sufficient condition is the continuity condition on the cost function.}
\begin{theorem} \label{thm: Strong Duality}
    Let $c(U;\theta,v,g,\lambda) := s(U;\theta,v) + \lambda^{T}m(U;\theta,v) + \psi(U;\theta,v,g)$ where $\lambda \in \bR^{d_{P}}$. Under \Cref{assumption: Strong Duality}, the following holds:
    \begin{theoremitems}
        \item (Minimax Duality) \label{thm: time homogenous minimax duality}
        \begin{equation}
            \kappa(\delta,P) = \inf_{(\theta,v) \in \Theta \times \mV} \sup_{\lambda \in \bR^{d_{P}}, \lambda_{KL} \geq 0, g \in \mG} \mC(\theta,v,g,\lambda,\lambda_{KL}) - \lambda_{KL} \delta - \lambda^{T}P \tag{\blue{Dual}} \label{Dual problem}
        \end{equation}
        where $\mC(\theta,v,g,\lambda,\lambda_{KL})$ is the EOT problem with regularization parameter $\lambda_{KL}$:
        \begin{equation*}
            \mC(\theta,v,g,\lambda,\lambda_{KL}) := \inf_{F \in \Pi(\nu_{1},\cdots,\nu_{k})} \bE_{F} \left[c(U;\theta,v,g,\lambda)\right] + \lambda_{KL} D_{KL}(F\|F_{0})
        \end{equation*}
        \item (Entropic Optimal Transport Duality)
        For $\lambda_{KL} > 0$, we have:\footnote{See \cite{villani2021topics} for optimal transport duality ($\lambda_{KL} = 0$).}
        \begin{equation*}
            \mC(\theta,v,g,\lambda,\lambda_{KL}) = \sup_{\left\{\phi_{i} \in L^{1}(\nu_{i})\right\}_{i=1}^{k}}\sum_{i=1}^{k} \bE_{\nu_{i}} \phi_{i}(U_{i}) - \lambda_{KL} \bE_{F_{0}} \exp\left(\frac{\sum_{i=1}^{k} \phi_{i}(U_{i}) - c(U;\theta,v,g,\lambda)}{\lambda_{KL}}\right) + \lambda_{KL}
        \end{equation*}
        Moreover, there are unique maximizers $\left\{\phi_{i}^{*}\right\}_{i=1}^{k}$ up to additive constants $F_{0}$-almost surely, and the unique worst-case distribution $F^{*}$ has the density of the form:
        \begin{equation*}
            \frac{dF^{*}(U)}{dF_{0}(U)} = \exp(\frac{\sum_{i=1}^{k} \phi_{i}^{*}(U_{i}) - c(U;\theta,v,g,\lambda)}{\lambda_{KL}}) \quad F_{0}\text{-a.s.}
        \end{equation*}
        Furthermore, we have $\mC(\theta,v,g,\lambda,\lambda_{KL}) = \sum_{i=1}^{k} \bE_{\nu_{i}} \phi_{i}^{*}(U_{i})$.
        \item If the lower semicontinuity in \Cref{assumption: lower semi-continuous} is strengthened to continuity, then optimizing over $\lambda_{KL} > 0$ is equivalent to optimizing over $\lambda_{KL} \geq 0$.
    \end{theoremitems}
\end{theorem}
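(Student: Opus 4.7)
The plan is to prove (i) via a minimax swap in the Lagrangian formulation in display \eqref{eq: Lagrangian}, prove (ii) by invoking standard Entropic Optimal Transport (EOT) duality, and prove (iii) by passing to the limit $\lambda_{KL} \downarrow 0$.

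For (i), I first check that \eqref{eq: Lagrangian} legitimately rewrites the primal. By the symmetry and positive homogeneity of $\mG$ in Assumption 1(iii), the constraint $\sup_{g \in \mG} \bE_F[\psi(U;\theta,v,g)] = 0$ is equivalent to $\bE_F[\psi(U;\theta,v,g)] = 0$ for every $g \in \mG$, so $g$ itself plays the role of a Lagrange multiplier; for feasible $F$ the inner sup is $0$, and for infeasible $F$ it is $+\infty$. I would then swap $\inf_F$ and $\sup_{\lambda,\lambda_{KL},g}$ using a minimax theorem. The Lagrangian is linear, hence concave and continuous, in $(\lambda,\lambda_{KL},g)$ for fixed $F$; and for fixed multipliers it is convex in $F$ with weakly lower semi-continuous integrand (by Assumptions 1(iv)--(v)) plus the weakly lower semi-continuous KL term. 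The feasible set $\Pi(\nu_1,\ldots,\nu_k)$ is convex, and by tightness from the finite $p$-th moments of $\{\nu_i\}$ (Assumption 1(i)) combined with Prokhorov's theorem, it is weakly compact. This places us within the scope of Ricceri's minimax theorem (Ricceri 1998, Theorem 1.3), which accommodates a non-compact multiplier side. After the swap, the inner $\inf_F$ is precisely $\mC(\theta,v,g,\lambda,\lambda_{KL})$, and pulling out the additive $-\lambda_{KL}\delta - \lambda^T P$ gives the claimed \eqref{Dual problem}.

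For (ii), I would appeal to the standard EOT duality (e.g., Nutz 2021, Chapter 4, and Csisz\'ar's classical $I$-projection results). The hypotheses are verified by Assumptions 1(ii) and 1(v): $F_0 \ll F_\otimes$ together with the growth rate give $c,\rho \in L^1(F_0)$, so the Gibbs density $dF^*/dF_0 = \exp((\sum_i \phi_i^* - c)/\lambda_{KL})$ is well defined and integrable for optimal potentials $\{\phi_i^*\} \subset L^1(\nu_i)$ that are unique up to additive constants $F_0$-a.s. The identity $\mC = \sum_{i=1}^k \bE_{\nu_i}\phi_i^*(U_i)$ follows by inserting the optimal $F^*$ back into the dual objective: the marginal normalization $\int (dF^*/dF_0)\,dF_0 = 1$ forces the exponential expectation to equal exactly $1$, which cancels the additive $+\lambda_{KL}$ in the dual functional. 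For (iii), strengthening Assumption 1(iv) to continuity, together with the growth rate in Assumption 1(v), places us in the Eckstein--Nutz (2022, 2024) regime, under which $\mC(\theta,v,g,\lambda,\lambda_{KL}) \to \mC(\theta,v,g,\lambda,0)$ as $\lambda_{KL} \downarrow 0$, where $\mC(\theta,v,g,\lambda,0)$ is the classical OT value. Since the remaining terms $-\lambda_{KL}\delta$ are continuous at $\lambda_{KL} = 0$, the supremum over $\lambda_{KL} > 0$ agrees with the supremum over $\lambda_{KL} \geq 0$.

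I expect the main obstacle to be verifying the minimax hypothesis on the multiplier side $(\lambda,\lambda_{KL},g)$, which is non-compact and potentially infinite-dimensional through $g \in \mG$. Classical Fan-type theorems demand compactness on one side; Ricceri's theorem relaxes this but requires a delicate topological check, which may force me to choose a convenient topology on $\mG$ (e.g., weak-$*$ in an $L^p$ duality) and verify the required semi-continuity there. A clean way around this would be a two-stage swap --- first between $F$ and $(\lambda,\lambda_{KL})$ with $g$ fixed (standard, since $(\lambda,\lambda_{KL})$ is finite-dimensional), then between the resulting value function and $g$, invoking symmetry of $\mG$ and convex-concave structure at each stage. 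A secondary technical point is establishing that the dual potentials $\phi_i^*$ are genuinely in $L^1(\nu_i)$ rather than merely measurable, which I would handle through the growth rate of Assumption 1(v) together with the Gibbs density's integrability.
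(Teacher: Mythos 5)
Your roadmap---minimax swap via a Fan/Ricceri-type theorem, EOT duality from Nutz, and a $\lambda_{KL}\downarrow 0$ limit from Eckstein--Nutz-type stability---is essentially the paper's own proof. Two corrections to your self-identified concerns, though. First, the ``main obstacle'' you flag is not an obstacle at all: in the version of the minimax theorem the paper uses (Lemma \ref{lemma: Minimax Theory}, Ricceri's Theorem 1.3), the side $Y$ carrying $(\lambda,\lambda_{KL},g)$ is ``a nonempty set (not necessarily topologized)''---the only hypothesis on that side is concavelikeness, which is immediate from linearity. No topology on $\mG$, no weak-$*$ duality, and no two-stage swap are needed; compactness, Hausdorff, and lower-semicontinuity are required only on the $F$ side, and the paper verifies them there (Lemmas \ref{lemma: compact convex space Pi}, \ref{lemma: Hausdorff space Pi}, plus the growth-rate bound and \cite{villani2009optimal} Lemma 4.3). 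Second, weak compactness of $\Pi(\nu_1,\ldots,\nu_k)$ does not rest on the finite $p$-th moment condition; it is the classical fact that couplings of tight marginals are uniformly tight, and Borel probability measures on Polish spaces are automatically tight (cf.\ the proof of \cite{villani2009optimal} Theorem 4.1). The moment assumption enters instead through \Cref{assumption: growth rate} to guarantee $c\in L^1(F)$, which is what makes $\bE_F[c]$ lower semicontinuous in $F$ and the EOT duality applicable. You also implicitly skip the step of converting the reference from $F_0$ to the product measure $F_\otimes$ by absorbing $\rho=\log(dF_\otimes/dF_0)$ into the cost (the paper's \Cref{lemma: convert arbitrary reference to product measure}), which is needed to match the exact form of the Nutz duality; it is bookkeeping but worth making explicit.
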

\Cref{thm: Strong Duality} establishes the duality and provides the closed-form of $F^{*}$. $\phi_{i}$ is the test function for the marginal distribution condition $\nu_{i}$ for $i = 1,\cdots,k$. The optimal $\left\{\phi_{i}^{*}\right\}_{i=1}^{k}$ are the optimal EOT potentials, which can be efficiently obtained using the Sinkhorn algorithm.

The \blueref{Dual problem} is computationally tractable. First, it provides the closed-form of the worst-case distribution $F^{*}$, which can be used in the \blueref{Primal} problem even if we want to solve it directly. In \Cref{sec: Practical Implementation of the Proposed Framework}, we propose an iterative algorithm that alternates between solving the EOT to obtain the worst-case distribution and updating $v$ by solving the structural constraint with the worst-case distribution. Second, for given $(\theta,v,g,\lambda,\lambda_{KL})$, the optimal $\left\{\phi_{i}^{*}\right\}_{i=1}^{k}$ (and thus $F^{*}$) can be efficiently computed using the Sinkhorn algorithm, which is computationally very fast. Third, the expectations in the \blueref{Dual problem} are taken with respect to the marginal distributions and the reference distribution. Therefore, numerical integration methods can be determined in advance. Finally, we have:
\begin{proposition} \label{thm: Properties of the optimal test function}
    If $c(U;\theta,v,g,\lambda)$ is $k$-times continuously differentiable in $U$, and $\lambda_{KL} > 0$, then $\left\{\phi_{i}^{*}\right\}_{i=1}^{k}$ are $k$-times continuously differentiable in $U_{i}$. Therefore, $\frac{dF^{*}(U)}{dF_{0}(U)}$ is also $k$-times continuously differentiable in $U$.
\end{proposition}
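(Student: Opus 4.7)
The plan is to characterize each $\phi_i^*$ via the first-order (Schrödinger) conditions of the EOT dual in \Cref{thm: Strong Duality}(ii), and then transfer the smoothness of $c$ onto each $\phi_i^*$ through differentiation under the integral sign.

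First, I would derive the Schrödinger system. The density formula $\frac{dF^*}{dF_0}(U) = \exp\big((\sum_j \phi_j^*(U_j) - c(U;\theta,v,g,\lambda))/\lambda_{KL}\big)$ together with the marginal constraint $F^* \in \Pi(\nu_1,\dots,\nu_k)$ implies, writing $f_0$ and $n_i$ for the Lebesgue densities of $F_0$ and $\nu_i$ (which we take to be $C^k$, consistent with the paper's absolute continuity framework), that for each $i$ and a.e.\ $U_i$,
\begin{equation*}
    \phi_i^*(U_i) = \lambda_{KL}\log n_i(U_i) - \lambda_{KL}\log I_i(U_i), \qquad I_i(U_i) := \int \exp\!\left(\frac{\sum_{j\neq i}\phi_j^*(U_j) - c(U;\theta,v,g,\lambda)}{\lambda_{KL}}\right) f_0(U)\, dU_{-i}.
\end{equation*}

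The key observation is that for $j \neq i$, the potential $\phi_j^*(U_j)$ does not depend on $U_i$, so when I differentiate $I_i(U_i)$ in $U_i$ up to order $k$, only partial derivatives of $c$ and $f_0$ with respect to $U_i$ enter the integrand. I would then justify interchanging derivative and integral by dominated convergence: in a neighborhood of $U_i$, the $\ell$-th partial derivative of the integrand (a polynomial in $\partial_{U_i}^{\leq \ell} c$ and $\partial_{U_i}^{\leq \ell} f_0$ times the original exponential) is bounded by an $L^1(dU_{-i})$ function. An adequate dominator is built from the growth condition in \Cref{assumption: growth rate} applied to $c$, together with the identity $I_i(U_i) = n_i(U_i)\exp(-\phi_i^*(U_i)/\lambda_{KL})$, which gives the required integrability of the exponential factor uniformly in a neighborhood of $U_i$. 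Iterating this argument for $\ell = 1,\dots,k$ yields $I_i \in C^k$.

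Finally, the marginal equality forces $I_i(U_i) > 0$ on the support of $\nu_i$, so the logarithm is well-defined and $C^k$, giving $\phi_i^* \in C^k$. The density ratio $\exp\!\big((\sum_j \phi_j^*(U_j) - c(U))/\lambda_{KL}\big)$ is then a composition of $C^k$ maps and therefore $C^k$ in $U$. I expect the main obstacle to be the rigorous construction of the dominating function permitting differentiation under the integral sign uniformly up to order $k$: one needs to combine the growth bound in \Cref{assumption: growth rate} with integrability of $\exp(\phi_j^*/\lambda_{KL})$ obtained from the marginal constraint itself, and ensure these bounds survive in a small neighborhood of each $U_i$ rather than only pointwise, which requires a local uniform version of the growth condition on $c$ and its derivatives.
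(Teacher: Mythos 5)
Your approach is essentially the paper's: both derive the Schr\"odinger fixed-point equations and use them to transfer the $C^k$ smoothness of $c$ onto each potential $\phi_i^*$, then onto the density ratio. The paper's proof is a one-liner citing \cite{nutz2021introduction}, eq.~4.11, and writes the Schr\"odinger equation as $\phi_j^*(U_j) = -\lambda_{KL}\log \bE_{F_{\otimes,-j}}\exp\bigl((\sum_{i\neq j}\phi_i^* - c)/\lambda_{KL}\bigr)$, asserting the right-hand side is $C^k$ in $U_j$ because $c$ is.

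Two of your refinements are genuine and worth flagging. First, you write the Schr\"odinger equation in Lebesgue-density form, $\phi_i^*(U_i) = \lambda_{KL}\log n_i(U_i) - \lambda_{KL}\log I_i(U_i)$ with $I_i(U_i) = \int \exp\bigl((\sum_{j\neq i}\phi_j^* - c)/\lambda_{KL}\bigr)f_0\,dU_{-i}$, which is the exact consequence of marginalizing $dF^* = \exp((\sum\phi_j^* - c)/\lambda_{KL})\,dF_0$ to match $\nu_i$. The paper's version with $F_{\otimes,-j}$ and cost $c$ is only correct when $F_0 = F_\otimes$ (i.e.\ $\rho\equiv 0$); in general it should either use the augmented cost $c + \lambda_{KL}\rho$ or, equivalently, your density form. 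Your version makes explicit that $\phi_i^*$ depends on $n_i$ and $f_0$, so the proposition's conclusion implicitly requires $f_0$ and each $n_i$ to be $C^k$ as well — an assumption the statement does not make and the paper's proof does not mention. Second, you correctly identify that the step ``the right-hand side is $C^k$'' requires differentiating under the integral sign up to order $k$, which needs a locally uniform dominating function; the paper's growth condition (\Cref{assumption: growth rate}) bounds $|c|$ only pointwise-polynomially, not its derivatives, and provides nothing about derivatives of $f_0$, so this step is a real gap that your sketch names but does not yet close. In short: same route as the paper, but you have accurately located two places where the paper's one-line argument is incomplete.
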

\Cref{thm: Properties of the optimal test function} shows the smoothness of $\left\{\phi_{i}^{*}\right\}_{i=1}^{k}$ for $\lambda_{KL} > 0$ (the EOT case). For $\lambda_{KL} = 0$ (the OT case), it is not straightforward to obtain the smoothness of the worst-case distribution (see \cite{villani2009optimal} Chapter 12).

\subsection{Perturbation of Stationary Distribution} \label{sec: perturbing stationary distribution}

\noindent This section discusses how to perturb the stationary distribution in the time-homogeneous setting. We consider the serial dependence of the latent variables, as detailed in the example in Section \ref{sec: Definition of Perturbation}. Let $U:=(\xi,\xi')$ be the vector of current and future latent variables. We define the perturbation set for the stationary distribution as $\mN = \{\nu \in \mP(\Xi) \mid D_{KL}(\nu\|\nu_{0}) \leq \delta_{1}\}$ where $\delta_{1} \geq 0$. The perturbation set for the joint distribution is:
\begin{equation*}
    \mF := \left\{ F \in \mP(\Xi^{2}) \mid F \in \Pi(\nu,\nu), \nu \in \mN, D_{KL}(F \| F_{0}) \leq \delta \right\}
\end{equation*}
Let $\kappa_{stationary}(\delta_{1},\delta,P)$ denote the lower bound on the scalar parameter of interest under this perturbation set. Under regularity conditions, we can swap the order of infimum over $F$ and the supremum over $(\lambda,\lambda_{KL},g)$:
\begin{equation*}
    \kappa_{stationary}(\delta_{1},\delta,P) = \inf_{(\theta,v) \in \Theta \times \mV} \sup_{\lambda \in \bR^{d_{P}}, \lambda_{KL} \geq 0, g \in \mG} \inf_{\nu \in \mN} \ \mC(\theta,v,g,\lambda,\lambda_{KL},\nu) - \lambda_{KL} \delta - \lambda^{T}P
\end{equation*}
where $\mC(\theta,v,g,\lambda,\lambda_{KL},\nu)$ is the EOT problem with respect to the perturbed stationary distribution $\nu$: $\mC(\theta,v,g,\lambda,\lambda_{KL},\nu) := \inf_{F \in \Pi(\nu,\nu)} \bE_{F} \left[c(\xi,\xi';\theta,v,g,\lambda)\right] + \lambda_{KL} D_{KL}(F\|F_{0})$. Its dual formulation is:
\begin{equation*}
    \mC(\theta,v,g,\lambda,\lambda_{KL},\nu) = \sup_{\phi_{1},\phi_{2} \in L^{1}(\nu)} \bE_{\nu} \left[\phi_{1}(\xi) + \phi_{2}(\xi')\right] - \lambda_{KL} \bE_{F_{0}} \exp\left(\frac{\phi_{1}(\xi) + \phi_{2}(\xi') - c(\xi,\xi';\theta,v,g,\lambda)}{\lambda_{KL}}\right) + \lambda_{KL}
\end{equation*}

Under regularity conditions, we can swap the order of infimum over $\nu$ and the supremum over $(\phi_{1},\phi_{2})$. Then, the inner infimum over $\nu$ is:
\begin{equation*}
    \inf_{\nu \in \mN} \bE_{\nu} \left[\phi_{1}(\xi) + \phi_{2}(\xi')\right]
\end{equation*}
which is a KL-divergence distributionally robust optimization problem (see \cite{hu2013kullback,rahimian2019distributionally}) whose dual formulation is:
\begin{equation*}
    \sup_{\eta \geq 0 } -\eta \log \bE_{\nu_{0}} \exp\left(-\frac{\phi_{1}(\xi) + \phi_{2}(\xi')}{\eta}\right) - \eta \delta_{1}
\end{equation*}
where $\eta$ is the Lagrange multiplier for the KL divergence constraint for $\nu$. To summarize:
\begin{theorem} \label{thm: stationary perturbation duality}
    Suppose \Cref{assumption: Strong Duality} holds for any $\nu \in \mN$, and $\Xi$ is compact. Then, we have:
    \begin{align*}
        \kappa_{stationary}(\delta_{1},\delta, P) 
        & = \inf_{(\theta,v) \in \Theta \times \mV} \sup_{\substack{\lambda \in \bR^{d_{P}}, \eta \geq 0 \\ \lambda_{KL} \geq 0, g \in \mG \\ \phi_{1},\phi_{2} \in L^{\infty}(\Xi) \\ \phi_1, \phi_2 \text{ l.s.c.}}} -\eta \log \bE_{\nu_{0}} \exp\left(-\frac{\phi_{1}(\xi) + \phi_{2}(\xi')}{\eta}\right) - \eta \delta_{1} + \lambda_{KL} - \lambda_{KL} \delta - \lambda^{T}P \\
        & \hspace{4cm} - \lambda_{KL} \bE_{F_{0}} \exp\left(\frac{\phi_{1}(\xi) + \phi_{2}(\xi') - c(\xi,\xi';\theta,v,g,\lambda)}{\lambda_{KL}}\right)
    \end{align*}
\end{theorem}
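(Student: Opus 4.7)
The plan is to layer three duality arguments on top of the framework already established in Theorem 1. First, for any fixed candidate stationary distribution $\nu \in \mN$, \Cref{assumption: Strong Duality} holds by hypothesis, so \Cref{thm: time homogenous minimax duality} lets us exchange $\inf_{F}$ with $\sup_{\lambda,\lambda_{KL},g}$, representing the inner problem via the EOT value $\mC(\theta,v,g,\lambda,\lambda_{KL},\nu)$. Then \Cref{thm: Strong Duality}(ii) gives the EOT dual in terms of potentials $\phi_{1},\phi_{2}$. Finally, for the remaining infimum over $\nu \in \mN$, I will use the Donsker--Varadhan representation to obtain the KL-DRO dual with multiplier $\eta$.

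More concretely, first I would write
\begin{equation*}
    \kappa_{stationary}(\delta_{1},\delta,P) = \inf_{\theta,v} \inf_{\nu \in \mN} \sup_{\lambda,\lambda_{KL},g} \mC(\theta,v,g,\lambda,\lambda_{KL},\nu) - \lambda_{KL}\delta - \lambda^{T}P,
\end{equation*}
which already follows from the Lagrangian argument preceding \Cref{thm: Strong Duality} applied for each $\nu$. Substituting the EOT dual from \Cref{thm: Strong Duality}(ii) turns $\mC(\theta,v,g,\lambda,\lambda_{KL},\nu)$ into $\sup_{\phi_{1},\phi_{2}}\bE_{\nu}[\phi_{1}+\phi_{2}] - \lambda_{KL}\bE_{F_{0}}\exp((\phi_{1}+\phi_{2}-c)/\lambda_{KL}) + \lambda_{KL}$. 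The second term and the constant do not depend on $\nu$, so only the linear-in-$\nu$ term $\bE_{\nu}[\phi_{1}+\phi_{2}]$ is affected by the inner infimum.

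The crucial step is to swap the order of $\inf_{\nu \in \mN}$ with $\sup_{\phi_{1},\phi_{2}}$. The objective $(\nu,\phi_{1},\phi_{2}) \mapsto \bE_{\nu}[\phi_{1}+\phi_{2}]$ is bilinear, $\mN$ is convex, and $\mN$ is weakly compact when $\Xi$ is compact (which is precisely why this assumption appears in the hypothesis). Restricting $\phi_{1},\phi_{2}$ to $L^{\infty}(\Xi) \cap \{\text{l.s.c.}\}$ gives weak lower semicontinuity in $\nu$, so Sion's or Fan's minimax theorem applies. One needs to check that the restriction to bounded l.s.c.\ potentials does not shrink the EOT dual value; this is standard in EOT theory because the Sinkhorn potentials on a compact space are bounded and continuous under \Cref{assumption: Strong Duality}, and any l.s.c.\ optimizer can be approximated without loss.

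After the exchange, the inner problem becomes $\inf_{\nu \in \mN}\bE_{\nu}[\phi_{1}(\xi)+\phi_{2}(\xi')]$ subject to $D_{KL}(\nu\|\nu_{0}) \leq \delta_{1}$. This is a classical KL-DRO problem whose Lagrangian dual (e.g., \cite{hu2013kullback}) is
\begin{equation*}
    \sup_{\eta \geq 0}\ -\eta \log \bE_{\nu_{0}}\exp\!\left(-\frac{\phi_{1}+\phi_{2}}{\eta}\right) - \eta\delta_{1},
\end{equation*}
where strong duality holds because $\phi_{1}+\phi_{2}$ is bounded (Slater's condition is trivially satisfied by $\nu_{0}$ itself when $\delta_{1} > 0$, and the $\delta_{1}=0$ case reduces to $\nu = \nu_{0}$ and is recovered by sending $\eta \to \infty$). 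Assembling the pieces yields the stated formula. The main obstacle is rigorously justifying the second minimax exchange and the restriction of potentials to $L^{\infty} \cap \{\text{l.s.c.}\}$; compactness of $\Xi$ and boundedness of Sinkhorn potentials on compact domains are what make this step work.
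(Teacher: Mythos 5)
You write
\begin{equation*}
    \kappa_{stationary}(\delta_{1},\delta,P) = \inf_{\theta,v}\ \inf_{\nu \in \mN}\ \sup_{\lambda,\lambda_{KL},g}\ \mC(\theta,v,g,\lambda,\lambda_{KL},\nu) - \lambda_{KL}\delta - \lambda^{T}P,
\end{equation*}
obtained by applying the one-$\nu$-at-a-time version of \Cref{thm: time homogenous minimax duality}. Note the order: $\inf_{\nu}$ sits \emph{outside} $\sup_{\lambda,\lambda_{KL},g}$. After you substitute the EOT dual, the structure is $\inf_{\nu}\sup_{\lambda,\lambda_{KL},g}\sup_{\phi_{1},\phi_{2}}[\,\cdot\,]$, yet the only exchange you justify is between $\inf_{\nu}$ and $\sup_{\phi_{1},\phi_{2}}$. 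That exchange, by itself, does not get you to the theorem, because $\sup_{\lambda,\lambda_{KL},g}$ still stands between them. What you actually need is to move $\inf_{\nu}$ past the \emph{entire} dual block $(\lambda,\lambda_{KL},g,\phi_{1},\phi_{2})$, and your "bilinear in $(\nu,\phi_{1},\phi_{2})$" observation is silent on the coupling of $(\phi_{1},\phi_{2})$ with $(\lambda,\lambda_{KL},g)$ inside the exponential term $\lambda_{KL}\,\bE_{F_{0}}\exp\bigl((\phi_{1}+\phi_{2}-c)/\lambda_{KL}\bigr)$. To invoke Fan's theorem on the full block you would have to verify concavelikeness of the EOT dual objective jointly in $(\lambda_{KL},\lambda,g,\phi_{1},\phi_{2})$ — this does hold, essentially because $(\lambda_{KL},x)\mapsto\lambda_{KL}\exp(x/\lambda_{KL})$ is a jointly convex perspective function and $c$ is affine in $(\lambda,g)$ — but that verification is missing from your argument.

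The paper avoids this difficulty by a different decomposition: its first minimax is taken over the combined set $\{F : F \in \Pi(\nu,\nu),\ \nu \in \mN\}$ (whose convexity and compactness are established in \Cref{lemma: stationary perturbation set}), which produces $\inf_{\theta,v}\sup_{\lambda,\lambda_{KL},g}\inf_{\nu}$ in one step. The residual exchange is then only $\inf_{\nu}$ versus $\sup_{\phi_{1},\phi_{2}}$ for \emph{fixed} $(\lambda,\lambda_{KL},g)$, for which your bilinearity plus compactness argument is exactly right. So your final stage (restriction to bounded l.s.c.\ potentials on compact $\Xi$, boundedness of Sinkhorn potentials, Donsker--Varadhan/KL-DRO dual with multiplier $\eta$ via \cite{hu2013kullback}) is fine; the gap is the unjustified — though repairable — minimax exchange one level up. Either verify joint concavity of the EOT dual over all dual variables, or apply the first minimax over the combined $(F,\nu)$ set as the paper does.
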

\Cref{thm: stationary perturbation duality} shows the dual formulation when the stationary distribution is perturbed. Although the Sinkhorn algorithm does not apply directly, we can sequentially update $(\phi_{1},\phi_{2})$ using the first-order optimality conditions like the Sinkhorn algorithm (see \Cref{sec: EOT and Sinkhorn Algorithm} for a review of the Sinkhorn algorithm).

\section{Methodology for Time-Inhomogeneous Models} \label{sec:extension nonstationary}

\noindent This section extends our framework to the time-inhomogeneous setting. We begin by defining the perturbation set for these models in \Cref{sec: perturbation set nonstationary}. \Cref{sec: finite horizon dynamic discrete choice models} provides an example of a finite-horizon dynamic discrete choice model. \Cref{sec: nonstationary duality} presents the duality result. Finally, \Cref{sec: perturbation of initial distribution} discusses how to perturb the initial distribution.

\subsection{Definition of Perturbation Set} \label{sec: perturbation set nonstationary}

\noindent Consider a sequence of latent variables over a finite horizon, $U:=(\xi_{1},\xi_{2},\cdots,\xi_{T})$, that follows a first-order Markov chain. The reference joint distribution is the product of an initial distribution and transition kernels:
\begin{equation*}
    dF_{0}(U) = \nu_{1}(d\xi_{1}) F_{1}(d\xi_{2}|\xi_{1}) \cdots F_{T-1}(d\xi_{T}|\xi_{T-1})
\end{equation*}
where $\nu_{1}(d\xi_{1})$ is the initial distribution, and $F_{t}(d\xi_{t+1}|\xi_{t})$ is the transition kernel from period $t$ to $t+1$. Let $\nu_{T}(d\xi_{T})$ be the terminal distribution implied by this process.

We consider perturbing the reference distribution while holding its initial and terminal distributions fixed, i.e.,
\begin{equation*}
    \mF_\text{Markov} := \left\{ F \in \mP(\mU) \mid F \in \Pi_{\text{Markov}}(\nu_{1},\nu_{T}), D_{KL}(F\|F_{0}) \leq \delta \right\}
\end{equation*}
where $\Pi_{\text{Markov}}(\nu_{1},\nu_{T})$ is the set of all joint distributions over $\mU$ that satisfy the first-order Markov property\footnote{That is, for any $t \leq T-1$, $F(d\xi_{t+1}|\xi_{1},\cdots,\xi_{t}) = F(d\xi_{t+1}|\xi_{t})$ almost surely under $F$.} and have $\nu_1$ and $\nu_T$ as their initial and terminal marginal distributions. Our perturbation set allows for any transition dynamics of the latent variables between the initial and terminal periods, as long as the overall process remains Markovian and the initial and terminal distributions are fixed.

We will discuss how to perturb the initial distribution in \Cref{sec: perturbation of initial distribution}. The terminal distribution can often be nonparametrically identified; therefore, we fix it. For instance, in a finite horizon DDC model, the terminal period is a static discrete choice problem where the distribution of the latent variable can be identified (\cite{lewbel2000semiparametric,matzkin2007nonparametric}). Moreover, if $\xi_{t}$ is the market-level latent variable, and the model has the finite dependence property (see \cite{arcidiacono2011conditional})\footnote{For example, a model with a terminating action has the finite dependence property.}, then the utility parameters can be identified without a distributional assumption for the latent variables (see \cite{kalouptsidi2021linear}), which in turn identifies the terminal distribution.

\subsection{Example} \label{sec: finite horizon dynamic discrete choice models}

\begin{example}[Finite Horizon DDC with Time-Inhomogeneous Transition of Latent Variables] \label{ex: Finite Horizon Dynamic Discrete Choice Models with Time-Inhomogeneous Transition of Latent Variables}
    This example considers a time-inhomogeneous transition for the latent variables, $U:=(\xi_{1},\xi_{2},\cdots,\xi_{T})$, whose perturbation set is $\mF_\text{Markov}$. The model is similar to \Cref{ex: Finite Horizon Dynamic Discrete Choice Models} but with a time-inhomogeneous transition.
    
    We convert the smoothed Bellman equation \eqref{eq: finite Bellman equation} into restrictions that depend on the sequence of two-period marginal distributions $\{F_{t,t+1}\}_{t=1}^{T-1}$, where:   
    \begin{equation*}
        dF_{t,t+1}(\xi_{t},\xi_{t+1}) = \int_{\xi_{1},\dots,\xi_{t-1},\xi_{t+2},\dots,\xi_{T}} dF(\xi_{1},\dots,\xi_{T}) 
    \end{equation*}
    We assume there exists a class of Lagrange multiplier functions $\mG$ such that for each $t\leq T-1$, $v_{jt}$ solves \eqref{eq: finite Bellman equation} if and only if:
    \begin{equation*}
        \begin{aligned}
            &\sup_{g_{t} \in \mG} \bE_{F_{t,t+1}} \bE_{x_{t},j_{t},x_{t+1}} \Big[g_{jt}(x_{t},\xi_{t}) \\
            &\qquad \times \Big(v_{jt}(x_{t},\xi_{t}) - u_{j_{t}}(x_{t},\xi_{t};\theta) - \beta \log\left(\sum_{j' \in \mJ} \exp(v_{j't+1}(x_{t+1},\xi_{t+1}))\right) - \beta \gamma\Big)\Big] = 0
        \end{aligned}
    \end{equation*}
    where $(\xi_{t},\xi_{t+1}) \sim F_{t,t+1}$, and $(x_{t},j_{t})$ is distributed according to the observed data at time $t$, and $x_{t+1}$ follows the conditional distribution given $(x_t, j_t)$. Let $g := (g_{jt})_{j \in \mJ, t \leq T-1}$ and $v:= (v_{jt})_{j \in \mJ, t\leq T-1}$. We then rewrite the structural constraints as:
    \begin{equation*}
        \sup_{g \in \mG} \bE_{F} \left[\psi(U;\theta,v,g)\right] = 0
    \end{equation*}
    where $\psi$ is the sum over $t \leq T-1$ of terms inside the expectation of the previous equation.
    
    Then, the lower bound on consumer surplus at period $t$ is given by:
    \begin{align*}
        \inf_{(\theta,v,F) \in \Theta \times \mV \times \mF_{\text{Markov}}} 
        & \bE_{\nu_{t}} \bE_{x_{t}} \left[\frac{1}{\alpha} \log\left(\sum_{j \in \mJ} \exp(v_{jt}(x_{t},\xi_{t}))\right)\right] \\
        & \text{s.t.} \quad
        \bE_{F} \left[m(U;\theta,v)\right] = P_{0} \\
        & \phantom{\text{s.t.} \quad} \sup_{g \in \mG} \bE_{F} \left[\psi(U;\theta,v,g)\right] = 0 
    \end{align*}
    where $\nu_{t}$ is the marginal distribution of $\xi_{t}$ implied by $F$.
\end{example}

\subsection{Framework and Duality} \label{sec: nonstationary duality}

\noindent The bound for the time-inhomogeneous case, $\kappa_{\text{TI}}(\delta, P)$, is defined similarly to the time-homogeneous case, but with the optimization performed over $\mF_{\text{Markov}}$:
\begin{align}
    \kappa_{\text{TI}}(\delta, P) := 
    & \inf_{(\theta,v,F) \in \Theta \times \mV \times \mF_{\text{Markov}}} \bE_{F} \left[s(U;\theta,v)\right] \nonumber \\
    & \text{s.t.} \quad
    \bE_{F} \left[m(U;\theta,v)\right] = P \tag{\blue{TI}} \label{eq: nonstationary} \\
    & \phantom{\text{s.t.} \quad} \sup_{g \in \mG} \bE_{F} \left[\psi(U;\theta,v,g)\right] = 0 \nonumber
\end{align}
where “TI" stands for time-inhomogeneous. To solve \blueref{eq: nonstationary}, we follow the procedure in \Cref{thm: Strong Duality}. However, the set $\mF_{\text{Markov}}$ is not necessarily convex, which prevents the proof strategy of the minimax duality in \Cref{thm: Strong Duality} from being applied directly. Therefore, we propose to solve a relaxed problem where the Markov property condition is removed. We then show that, under certain reasonable conditions, the solution to the relaxed problem is Markovian, thereby also solving the original problem. The perturbation set for the relaxed problem is defined as:
\begin{equation*}
    \mF_{\text{relaxed}} := \left\{ F \in \mP(\mU) \mid F \in \Pi(\nu_{1},\nu_{T}), D_{KL}(F\|F_{0}) \leq \delta \right\}
\end{equation*}
where $\Pi(\nu_{1},\nu_{T})$ is the set of joint distributions with initial distribution $\nu_{1}$ and terminal distribution $\nu_{T}$. The relaxed problem is given by:
\begin{align}
    \tilde{\kappa}_{\text{TI}}(\delta, P) := 
    & \inf_{(\theta,v,F) \in \Theta \times \mV \times \mF_{\text{relaxed}}} \bE_{F} \left[s(U;\theta,v)\right] \nonumber \\
    & \text{s.t.} \quad
    \bE_{F} \left[m(U;\theta,v)\right] = P \tag{\blue{Relaxed}} \label{eq: relaxed nonstationary} \\
    & \phantom{\text{s.t.} \quad} \sup_{g \in \mG} \bE_{F} \left[\psi(U;\theta,v,g)\right] = 0 \nonumber
\end{align}
whose Lagrangian is:
\begin{equation}
    \tilde{\kappa}_{\text{TI}}(\delta,P) = \inf_{\substack{(\theta,v) \in \Theta \times \mathcal{V} \\ F \in \Pi(\nu_{1},\nu_{T})}} \sup_{\substack{\lambda \in \bR^{d_{P}} \\ \lambda_{KL} \geq 0, g \in \mG}} \bE_{F} \left[c(U;\theta,v,g,\lambda)\right] + \lambda_{KL}(D_{KL}(F\|F_{0}) - \delta) - \lambda^{T}P \label{eq: Lagrangian1}
\end{equation}
where $c(U;\theta,v,g,\lambda) := s(U;\theta,v) + \lambda^{T}m(U;\theta,v) + \psi(U;\theta,v,g)$, $\lambda \in \bR^{d_{P}}$ is the Lagrange multiplier for the moment condition and $\lambda_{KL}$ is the Lagrange multiplier for the KL divergence constraint. For given $(\theta,v)$, under regularity conditions, we can swap the order of the infimum over $F$ and the supremum over $(\lambda,\lambda_{KL},g)$. Then, we can rewrite \eqref{eq: Lagrangian1} as:
\begin{equation*}
    \inf_{(\theta,v) \in \Theta \times \mV} \sup_{\substack{\lambda \in \bR^{d_{P}} \\ \lambda_{KL} \geq 0, g \in \mG}} \inf_{F \in \Pi(\nu_{1},\nu_{T})} \bE_{F} \left[c(U;\theta,v,g,\lambda)\right] + \lambda_{KL}D_{KL}(F\|F_{0}) - \lambda_{KL} \delta - \lambda^{T}P
\end{equation*}
The inner infimum is:
\begin{equation*}
    \mC_{\text{TI}}(\theta,v,g,\lambda,\lambda_{KL}) := \inf_{\substack{F \in \Pi(\nu_{1},\nu_{T})}} \bE_{F} \left[c(U;\theta,v,g,\lambda)\right] + \lambda_{KL} D_{KL}(F\|F_{0})
\end{equation*}
which can be rewritten as the discrete-time dynamic Schr{\"o}dinger Bridge (SB) problem (see \cite{leonard2013survey,de2021diffusion}). Because it only restricts the initial and terminal distributions, we can decompose it into two parts: the two-period marginal distribution part (the first and last period) and the conditional distribution part (the intermediate variables conditional on the first and last period). The second part is unconstrained, thereby having a closed-form solution. The first part is the static SB (or EOT) problem whose duality is similar to \Cref{thm: Strong Duality}. We impose the following assumptions for the minimax duality, decomposition, static SB duality, and the Markov property:
\begin{assumption} \label{assumption: Strong Duality1}
    Let $dF^{1,T}_{0}(\xi_{1},\xi_{T}) := \int_{\xi_{2},\dots,\xi_{T-1}} dF_{0}(\xi_{1},\xi_{2},\dots,\xi_{T})$ be the two-period marginal of $F_{0}$ at  periods $1$ and $T$. We assume:
    \begin{assumptionitems}
        \item $\mU$ is compact. \label{assumption: compact Support}
        \item $F^{1,T}_{0} \sim \nu_{1} \otimes \nu_{T}$, i.e., $F^{1,T}_{0}$ and $\nu_{1} \otimes \nu_{T}$ are mutually absolutely continuous. Moreover, $\log \frac{d(\nu_{1} \otimes \nu_{T})}{dF^{1,T}_{0}}\in L^{1}(\nu_{1} \otimes \nu_{T})$. \label{assumption: reference1}
        \item \label{assumption: bounded1} For $\forall \ (\theta,v,g) \in \Theta \times \mV \times \mG$, it holds that $|s(U;\theta,v)|+\|m(U;\theta,v)\|_{1} + |\psi(U;\theta,v,g)| < \infty$.
        \item \label{assumption: nonstationary Markov} The functionals $s(U;\theta,v)$, $m(U;\theta,v)$, and $\psi(U;\theta,v,g)$ are pairwise additive, i.e., $s(U;\theta,v) = \sum_{t=1}^{T-1} s_{t}(\xi_{t},\xi_{t+1};\theta,v_{t},v_{t+1})$, $m(U;\theta,v) = \sum_{t=1}^{T-1} m_{t}(\xi_{t},\xi_{t+1};\theta,v_{t},v_{t+1})$, and $\psi(U;\theta,v,g) = \sum_{t=1}^{T-1} \psi_{t}(\xi_{t},\xi_{t+1};\theta,v_{t},v_{t+1},g_{t},g_{t+1})$ for some functions $s_{t}$, $m_{t}$, and $\psi_{t}$.
    \end{assumptionitems}
\end{assumption}
The boundedness condition in \Cref{assumption: bounded1} is stronger than \Cref{assumption: growth rate}, which does not guarantee that $c(U;\theta,v,g,\lambda) \in L^{1}(F)$ for any $F \in \mF_{\text{relaxed}}$. \Cref{assumption: reference1} is a sufficient condition for the SB duality to hold. Finally, \Cref{assumption: nonstationary Markov} is the key to the Markov property of the solution to the \blueref{eq: relaxed nonstationary} problem, and is satisfied in \Cref{ex: Finite Horizon Dynamic Discrete Choice Models with Time-Inhomogeneous Transition of Latent Variables}. It does not hold if the moment function depends on the entire path of the latent variables.
\begin{theorem} \label{theorem: nonstationary duality}
    Let $c(U;\theta,v,g,\lambda) := s(U;\theta,v) + \lambda^{T}m(U;\theta,v) + \psi(U;\theta,v,g)$
    where $\lambda \in \bR^{d_{P}}$. Under Assumptions \ref{assumption: finite moment}, \ref{assumption: symmetric and convex}, \ref{assumption: lower semi-continuous}, and \ref{assumption: Strong Duality1}, the following holds:
    \begin{theoremitems}
        \item (Minimax Duality) \label{thm: nonstationary duality}
        \begin{equation*}
            \tilde{\kappa}_{\text{TI}}(\delta,P) = \inf_{(\theta,v) \in \Theta \times \mV} \sup_{\lambda \in \bR^{d_{P}}, \lambda_{KL} \geq 0, g \in \mG} \mC_{\text{TI}}(\theta,v,g,\lambda,\lambda_{KL}) - \lambda_{KL} \delta - \lambda^{T}P
        \end{equation*}
        where $\mC_{\text{TI}}(\theta,v,g,\lambda,\lambda_{KL})$ is defined as:
        \begin{equation}
            \mC_{\text{TI}}(\theta,v,g,\lambda,\lambda_{KL}) := \inf_{F \in \Pi(\nu_{1},\nu_{T})} \bE_{F} \left[c(U;\theta,v,g,\lambda)\right] + \lambda_{KL} D_{KL}(F\|F_{0}) \tag{\blue{$\text{S}_{\text{dyn}}$}} \label{eq: nonstationary EOT}
        \end{equation}
        \item \label{thm: nonstationary worst-case} For $\lambda_{KL} > 0$, the unique worst-case distribution to \blueref{eq: nonstationary EOT} has the density of the form:
        \begin{equation*}
           \frac{dF^{*}(U)}{dF_{0}(U)} = \exp\left(\frac{\phi_{1}^{*}(\xi_{1}) + \phi_{T}^{*}(\xi_{T}) - c(U;\theta,v,g,\lambda)}{\lambda_{KL}}\right)
        \end{equation*}
        where $\phi_{1}^{*}(\xi_{1})$ and $\phi_{T}^{*}(\xi_{T})$ are the unique maximizers (up to an additive constant) to:
        \begin{equation*}
            \sup_{\phi_{1} \in L^{1}(\nu_{1}),\phi_{T} \in L^{1}(\nu_{T})}\bE_{\nu_{1}} \phi_{1}(\xi_{1}) + \bE_{\nu_{T}} \phi_{T}(\xi_{T}) - \lambda_{KL} \bE_{R_{1,T}} \exp\left(\frac{\phi_{1}(\xi_{1})+\phi_{T}(\xi_{T})}{\lambda_{KL}}\right) + \lambda_{KL}
        \end{equation*}
        where the auxiliary reference measure $R_{1,T}$ is defined as:
        \begin{equation*}
            dR_{1,T}(\xi_{1},\xi_{T}) := \int_{\xi_2, \dots, \xi_{T-1}} \exp\left(\frac{-c(U;\theta,v,g,\lambda)}{\lambda_{KL}}\right) \, dF_0(\xi_1, \dots, \xi_T)
        \end{equation*}
        Furthermore, the solution $F^{*}$ has the Markov property, i.e., $F^{*} \in \Pi_{\text{Markov}}(\nu_{1},\nu_{T})$.
        \item (Equivalence) Suppose there exists an optimal $\lambda_{KL}^{*} > 0$, then: $\kappa_{\text{TI}}(\delta,P) = \tilde{\kappa}_{\text{TI}}(\delta,P)$.
    \end{theoremitems}
\end{theorem}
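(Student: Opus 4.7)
The plan is to prove the three parts in sequence, extending the strategy of \Cref{thm: Strong Duality} to handle the path structure of $U$ while keeping $(\theta,v)$ fixed inside the outer infimum.

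For Part~(i), I apply a minimax theorem (Fan's or the Ricceri (1998) version cited in the introduction) to the Lagrangian \eqref{eq: Lagrangian1}. The primal feasible set $\Pi(\nu_{1},\nu_{T})$ is convex and, because $\mU$ is compact by \Cref{assumption: compact Support}, weakly compact by Prokhorov. The dual variable $(\lambda,\lambda_{KL},g)$ ranges over a convex set ($\mG$ is convex by \Cref{assumption: symmetric and convex}). The integrand is linear in $(\lambda,\lambda_{KL},g)$ and convex (linear plus KL penalty) in $F$, lower semicontinuous in $F$ by \Cref{assumption: lower semi-continuous}, and finite-valued by the boundedness in \Cref{assumption: bounded1}. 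These hypotheses justify swapping the infimum over $F$ with the supremum over $(\lambda,\lambda_{KL},g)$, yielding the representation with $\mC_{\text{TI}}$ as the inner EOT.

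Part~(ii) is the main technical step and the principal obstacle. I decompose $F \in \Pi(\nu_{1},\nu_{T})$ into the two-period marginal $F^{1,T}$ on $(\xi_{1},\xi_{T})$ and the conditional $F^{\text{mid}|1,T}$ of $(\xi_{2},\ldots,\xi_{T-1})$ given $(\xi_{1},\xi_{T})$. The chain rule for KL divergence gives
\begin{equation*}
D_{KL}(F \| F_{0}) = D_{KL}(F^{1,T} \| F_{0}^{1,T}) + \bE_{F^{1,T}}\!\left[ D_{KL}(F^{\text{mid}|1,T} \| F_{0}^{\text{mid}|1,T}) \right],
\end{equation*}
and by the pairwise additivity of $c$ from \Cref{assumption: nonstationary Markov}, for each fixed $(\xi_{1},\xi_{T})$ the inner minimization over $F^{\text{mid}|1,T}$ is an unconstrained entropy-regularized problem. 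Its unique solution is the Gibbs tilting $dF^{*,\text{mid}|1,T}/dF_{0}^{\text{mid}|1,T} \propto \exp(-c/\lambda_{KL})$, and the minimum equals $-\lambda_{KL}\log h(\xi_{1},\xi_{T})$ where $h := dR_{1,T}/dF_{0}^{1,T}$ is precisely the density implicit in the definition of $R_{1,T}$. Substituting collapses the outer problem into $\lambda_{KL}\inf_{F^{1,T} \in \Pi(\nu_{1},\nu_{T})} D_{KL}(F^{1,T} \| R_{1,T})$, a static EOT with zero cost and reference $R_{1,T}$; the displayed dual representation and the closed form $dF^{*,1,T}/dR_{1,T} = \exp((\phi_{1}^{*} + \phi_{T}^{*})/\lambda_{KL})$ then follow from \Cref{thm: Strong Duality} (with \Cref{assumption: reference1} playing the role of \Cref{assumption: reference}). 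Multiplying the two tilted components yields $dF^{*}/dF_{0} = \exp((\phi_{1}^{*} + \phi_{T}^{*} - c)/\lambda_{KL})$. Finally, because this density is a product of single-variable factors ($\phi_{1}^{*}, \phi_{T}^{*}$) times adjacent-pair factors (each $c_{t}$) and $dF_{0}$ itself factorizes across adjacent pairs, $dF^{*}$ factorizes as $\prod_{t} f_{t}^{*}(\xi_{t},\xi_{t+1})$, giving $F^{*} \in \Pi_{\text{Markov}}(\nu_{1},\nu_{T})$. The hardest part here is the rigorous interchange underlying the decomposition (disintegration plus measurable selection of the conditional optimizers) and the verification that Gibbs tilting preserves Markovianity; both rely on \Cref{assumption: nonstationary Markov}, without which the tilting factor would fail to factor across consecutive pairs and the conclusion would generically break down.

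For Part~(iii), note that $\mF_{\text{Markov}} \subseteq \mF_{\text{relaxed}}$, so $\tilde{\kappa}_{\text{TI}}(\delta,P) \leq \kappa_{\text{TI}}(\delta,P)$. Conversely, if the dual supremum is attained at some $\lambda_{KL}^{*} > 0$, Part~(ii) produces a worst-case $F^{*}$ that is Markovian; by the saddle-point property of the Lagrangian, $F^{*}$ satisfies the moment, structural, and KL constraints (otherwise the sup in $\lambda$, $g$ or the KL slack would blow up), so $F^{*} \in \mF_{\text{Markov}}$ and achieves the relaxed optimum. Hence $\kappa_{\text{TI}} \leq \tilde{\kappa}_{\text{TI}}$, yielding the claimed equality.
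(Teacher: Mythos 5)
Your proof is correct and follows essentially the same route as the paper: the minimax swap for part (i), the decomposition of the path-space problem into a conditional Gibbs tilting plus a static Schr\"odinger bridge over $(\xi_1,\xi_T)$ for part (ii), the pairwise factorization of the density for the Markov property, and the inclusion-plus-Markov-minimizer argument for part (iii). The only difference is cosmetic organization of the part-(ii) algebra: the paper first defines the tilted path reference $R$ and cites L\'eonard's chain-rule decomposition of $D_{KL}(F\|R)$, whereas you apply the KL chain rule to $D_{KL}(F\|F_0)$ directly and solve the conditional Gibbs problem explicitly before recognizing the residual as $\lambda_{KL}\,D_{KL}(F^{1,T}\|R_{1,T})$ --- the same computation reorganized.
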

\Cref{theorem: nonstationary duality} shows the duality for the \blueref{eq: relaxed nonstationary} problem. The difference between \Cref{thm: nonstationary duality} and \Cref{thm: time homogenous minimax duality} is that \eqref{eq: nonstationary EOT} does not fix the intermediate marginal distributions. Therefore, we can decompose \eqref{eq: nonstationary EOT} into the sum of two-period marginal distribution ($F^{1,T}_{0}$) part, and the conditional distribution (the distribution of $(\xi_{2},\cdots,\xi_{T-1})$ given $\xi_{1}$ and $\xi_{T}$) part. The latter part is an unconstrained optimization problem, thereby having a closed-form solution. The first part is the static SB problem, whose duality is given by \Cref{thm: nonstationary worst-case}.

\Cref{assumption: nonstationary Markov} is crucial for the solution to have the Markov property. Under this assumption, the cost function $c(U;\theta,v,g,\lambda)$ is also pairwise additive. Therefore, the density ratio in \Cref{thm: nonstationary worst-case} has the Markov property.\footnote{It can be treated as the (unnormalized) pairwise Markov random field \cite{wainwright2008graphical}.} If there exists one optimal $\lambda_{KL}^{*} > 0$, then the \blueref{eq: nonstationary} problem is equivalent to the \blueref{eq: relaxed nonstationary} problem.

The \blueref{eq: relaxed nonstationary} problem can also be solved using the iterative algorithm proposed in \Cref{sec: Practical Implementation of the Proposed Framework}. There is an additional step to obtain the two-period auxiliary reference distribution $R_{1,T}$. The \blueref{eq: nonstationary EOT} can also be solved using the Sinkhorn algorithm.

\subsection{Perturbation of Initial Distribution} \label{sec: perturbation of initial distribution}

\noindent Let $\mN$ be a convex closed set around the initial distribution $\nu_{1}$, e.g., $\mN = \{\nu \in \mP(\Xi) \mid D_{KL}(\nu\|\nu_{1}) \leq \delta_{1}\}$ where $\delta_{1} \geq 0$. Then, the perturbation set is defined as:
\begin{equation*}
    \mF_{\mN,\text{Relaxed}} := \left\{ F \in \mP(\mU) \mid F \in \Pi(\nu,\nu_{T}), \nu \in \mN, D_{KL}(F \| F_{0}) \leq \delta \right\}
\end{equation*}
Let $\tilde{\kappa}_{\text{TI,Initial}}(\delta,P)$ be the lower bound on the scalar parameter for the \blueref{eq: relaxed nonstationary} problem with the perturbation set $\mF_{\mN,\text{Relaxed}}$. The following minimax duality similar to \Cref{theorem: nonstationary duality} holds:
\begin{theorem}[Minimax Duality with Perturbation of Initial Distribution] \label{thm: nonstationary duality with perturbation}
    Suppose $\mN$ is convex and closed, and that the assumptions in \Cref{theorem: nonstationary duality} hold for each $\nu \in \mN$. Then,
    \begin{equation*}
        \tilde{\kappa}_{\text{TI,Initial}}(\delta,P) = \inf_{(\theta,v) \in \Theta \times \mV} \sup_{\lambda \in \bR^{d_{P}}, \lambda_{KL} \geq 0, g \in \mG} \mC_{\text{TI}}(\theta,v,g,\lambda,\lambda_{KL}) - \lambda_{KL} \delta - \lambda^{T}P
    \end{equation*}
    where $\mC_{\text{TI}}(\theta,v,g,\lambda,\lambda_{KL})$ is defined as follows:
    \begin{equation}
        \mC_{\text{TI,Initial}}(\theta,v,g,\lambda,\lambda_{KL}) := \inf_{\nu \in \mN} \inf_{F \in \Pi(\nu,\nu_{T})} \bE_{F} \left[c(U;\theta,v,g,\lambda)\right] + \lambda_{KL} D_{KL}(F\|F_{0}) \label{eq: nonstationary EOT with perturbation}
    \end{equation}
\end{theorem}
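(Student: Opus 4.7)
The plan is to extend the proof of \Cref{theorem: nonstationary duality} by incorporating the additional perturbation variable $\nu \in \mN$ into the outer infimum, and then invoke a minimax theorem on the enlarged feasible set. I would begin from the Lagrangian representation
\begin{equation*}
    \tilde{\kappa}_{\text{TI,Initial}}(\delta, P) = \inf_{\substack{(\theta,v) \in \Theta \times \mV \\ (\nu, F) \in \mA}} \ \sup_{\substack{\lambda \in \bR^{d_{P}} \\ \lambda_{KL} \geq 0, g \in \mG}} \bE_{F}\left[c(U;\theta,v,g,\lambda)\right] + \lambda_{KL}(D_{KL}(F\|F_{0}) - \delta) - \lambda^{T} P,
\end{equation*}
where $\mA := \{(\nu, F) : \nu \in \mN,\ F \in \Pi(\nu, \nu_{T})\}$ and $c$ collects the objective, moment, and structural-constraint terms exactly as in \Cref{theorem: nonstationary duality}.

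The first substantive step is to verify the convexity of $\mA$: if $(\nu_{a}, F_{a}), (\nu_{b}, F_{b}) \in \mA$ and $\eta \in [0,1]$, then $\eta F_{a} + (1-\eta) F_{b}$ has initial marginal $\eta \nu_{a} + (1-\eta)\nu_{b} \in \mN$ (by convexity of $\mN$) and terminal marginal $\nu_{T}$, so it lies in $\Pi(\eta\nu_{a}+(1-\eta)\nu_{b}, \nu_{T})$. Combined with the linearity of $F \mapsto \bE_{F}[c]$, the convexity of $F \mapsto D_{KL}(F\|F_{0})$, and the convexity of $\mG$ from \Cref{assumption: symmetric and convex}, the Lagrangian is convex in $(\nu, F)$ and linear (hence concave) in $(\lambda, \lambda_{KL}, g)$. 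The next step is to invoke the same minimax theorem used in \Cref{theorem: nonstationary duality} to exchange the infimum over $(\nu, F)$ with the supremum over the multipliers. Once swapped, the inner infimum collapses by definition to $\mC_{\text{TI,Initial}}(\theta,v,g,\lambda,\lambda_{KL})$ in \eqref{eq: nonstationary EOT with perturbation}, and pulling out the constants $-\lambda_{KL}\delta - \lambda^{T}P$ and the outer infimum over $(\theta, v)$ yields the claimed duality.

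The main obstacle is verifying the topological hypotheses of the minimax theorem on the enlarged set $\mA$. In the proof of \Cref{theorem: nonstationary duality} only $F$ varies, whereas here $(\nu, F)$ vary jointly, so I need a topology on $\mP(\Xi) \times \mP(\mU)$ in which $\mA$ is compact (or at least satisfies an adequate tightness/coercivity condition). Using the weak topology together with compactness of $\mU$ (\Cref{assumption: compact Support}) sidesteps this: $\mN$ is weakly closed by hypothesis, the KL ball $\{F : D_{KL}(F\|F_{0}) \leq \delta\}$ is weakly closed by lower semicontinuity of relative entropy, the marginal projection $F \mapsto F^{1}$ is weakly continuous, and Prokhorov's theorem on the compact space $\mU$ delivers weak compactness of the ambient sets of measures. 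Hence $\mA$ is weakly compact, and $F \mapsto \bE_{F}[c] + \lambda_{KL} D_{KL}(F\|F_{0})$ is weakly lower semicontinuous by \Cref{assumption: lower semi-continuous} and lower semicontinuity of KL divergence. With these ingredients in place the minimax swap proceeds exactly as in the proof of \Cref{theorem: nonstationary duality}.
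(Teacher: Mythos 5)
Your proposal is correct and takes essentially the same route as the paper: apply Fan's minimax theorem to the enlarged feasible set, after checking that this set is compact, convex, and Hausdorff in the weak topology (via Prokhorov on the compact support, weak continuity of the marginal projection, closedness of $\mN$, and lower semicontinuity of relative entropy) and that the Lagrangian is lower semicontinuous and convexlike in the primal variable and concavelike in the multipliers. The paper packages these topological facts into Lemmas~\ref{lemma: closed space Pi3}--\ref{lemma: Hausdorff space Pi3} and simply cites them, and it optimizes directly over $F$ with $\nu$ identified as the initial marginal of $F$ rather than carrying a pair $(\nu,F)$, but that is only a cosmetic reformulation of what you wrote. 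The one small point worth being explicit about is that lower semicontinuity of $F \mapsto \bE_F[c]$ here relies on the boundedness of the cost in \Cref{assumption: bounded1} rather than the growth condition used in the time-homogeneous case; your appeal to \Cref{assumption: lower semi-continuous} alone glosses over that substitution.
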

To solve \eqref{eq: nonstationary EOT with perturbation}, as shown in the proof of \Cref{thm: nonstationary worst-case} and \Cref{sec: nonstationary duality}, we can decompose the problem into two parts: the two-period marginal distribution part, and the conditional distribution part. The first part requires solving:
\begin{equation*}
    \inf_{\nu \in \mN} \inf_{F_{1,T} \in \tilde{\Pi}(\nu,\nu_{T})} \bE_{F_{1,T}} \left[D_{KL}(F_{1,T}\|R_{1,T})\right]
\end{equation*}
where $\tilde{\Pi}(\nu,\nu_{T})$ is the set of distributions of $(\xi_{1},\xi_{T})$ whose marginal distributions are $\nu$ and $\nu_{T}$, respectively. It is equivalent to solving:
\begin{equation*}
    \inf_{\nu \in \mN} \inf_{F_{1,T} \in \tilde{\Pi}(\nu,\nu_{T})} \bE_{F_{1,T}} \left[\log\left(\frac{d(\nu \otimes \nu_{T})}{dR_{1,T}}\right)\right] + D_{KL}(F_{1,T}\| \nu \otimes \nu_{T})
\end{equation*}
The inner infimum is an EOT problem. Let $\text{EOT}(\nu,\nu_{T})$ be its optimal value.
\begin{lemma} \label{lemma: convexity of EOT}
    Under the assumptions in \Cref{thm: nonstationary duality with perturbation}, $\text{EOT}(\nu,\nu_{T})$ is convex in $\nu$. Its directional derivative with respect to $\nu$ in the direction $\nu'$ is given by:
    \begin{equation*}
        \lim_{\epsilon \downarrow 0} \frac{\text{EOT}(\nu + \epsilon(\nu' - \nu), \nu_{T}) - \text{EOT}(\nu, \nu_{T})}{\epsilon} = \int \phi^{*} d(\nu' - \nu)
    \end{equation*}
    where $\phi^{*}$ is the optimal EOT potential for $\nu$.
\end{lemma}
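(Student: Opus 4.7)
The plan is to reformulate $\text{EOT}(\nu,\nu_T)$ as a static Schr\"odinger bridge against the auxiliary reference $R_{1,T}$, apply the entropic-OT duality of \Cref{thm: Strong Duality}(ii) to write it as a supremum of functionals affine in $\nu$, and then read off convexity and the directional derivative via Danskin's envelope theorem.

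For the reformulation I would use the chain-rule identity $\log\tfrac{dF_{1,T}}{dR_{1,T}} = \log\tfrac{dF_{1,T}}{d(\nu\otimes\nu_T)} + \log\tfrac{d(\nu\otimes\nu_T)}{dR_{1,T}}$. Integrating against any $F_{1,T}\in\tilde{\Pi}(\nu,\nu_T)$ collapses the objective defining $\text{EOT}(\nu,\nu_T)$ to $D_{KL}(F_{1,T}\,\|\,R_{1,T})$, giving
\begin{equation*}
\text{EOT}(\nu,\nu_T) = \inf_{F_{1,T}\in\tilde{\Pi}(\nu,\nu_T)} D_{KL}(F_{1,T}\,\|\,R_{1,T}).
\end{equation*}
This matches the template of \Cref{thm: Strong Duality}(ii) with cost $c\equiv 0$, regularizer $\lambda_{KL}=1$, $k=2$ marginals $(\nu,\nu_T)$, and reference measure $R_{1,T}$, so its dual reads
\begin{equation*}
\text{EOT}(\nu,\nu_T) = \sup_{\phi_1,\phi_T}\bE_{\nu}[\phi_1] + \bE_{\nu_T}[\phi_T] - \bE_{R_{1,T}}\exp(\phi_1+\phi_T) + 1,
\end{equation*}
with optimizers $(\phi^*,\phi_T^*)$ unique up to an additive constant.

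Convexity is then immediate: for each fixed $(\phi_1,\phi_T)$ the functional inside the supremum is affine in $\nu$ (only the term $\bE_{\nu}[\phi_1]$ depends on $\nu$, and linearly so), hence the pointwise supremum over $(\phi_1,\phi_T)$ is convex on the convex set $\mN$. For the one-sided derivative, fix $\nu,\nu'\in\mN$ and set $\nu_\epsilon := \nu+\epsilon(\nu'-\nu)$. Plugging the optimizer $\phi^*$ at $\nu$ into the dual at $\nu_\epsilon$ gives $\text{EOT}(\nu_\epsilon,\nu_T) - \text{EOT}(\nu,\nu_T) \ge \epsilon\int\phi^*\,d(\nu'-\nu)$, while plugging the optimizer $\phi_\epsilon^*$ at $\nu_\epsilon$ into the dual at $\nu$ yields the reverse inequality with $\phi_\epsilon^*$ in place of $\phi^*$. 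Dividing by $\epsilon$ and letting $\epsilon\downarrow 0$ delivers the claimed formula via Danskin's theorem; the nonuniqueness of $\phi^*$ up to additive constants is harmless because $\nu'-\nu$ has zero total mass.

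The main obstacle is the envelope step itself---establishing $\int\phi_\epsilon^*\,d(\nu'-\nu) \to \int\phi^*\,d(\nu'-\nu)$ as $\epsilon\downarrow 0$. I would exploit compactness of $\mU$ (\Cref{assumption: compact Support}) together with standard stability estimates for Schr\"odinger potentials: after normalizing each $\phi_\epsilon^*$ at a fixed reference point, uniform bounds and equicontinuity of the family follow from the explicit Sinkhorn fixed-point equations, so every subsequential limit solves the Schr\"odinger system at $\nu$ and must coincide with $\phi^*$ by uniqueness. A dominated convergence argument then closes the proof.
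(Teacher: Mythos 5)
Your proof is correct, but it takes a genuinely different route from the paper's. For convexity the paper works on the primal side: it takes the optimal couplings $\pi_1,\pi_2$ for $\nu_1,\nu_2$, observes that $\lambda\pi_1+(1-\lambda)\pi_2$ is feasible for the mixture marginal $\lambda\nu_1+(1-\lambda)\nu_2$, and invokes joint convexity of the KL divergence to bound the objective; for the directional derivative it then simply cites \cite{goldfeld2024statistical} Lemma E.23. You instead pass to the dual: after reducing $\text{EOT}(\nu,\nu_T)$ to $\inf_{F_{1,T}\in\tilde\Pi(\nu,\nu_T)}D_{KL}(F_{1,T}\|R_{1,T})$, convexity falls out because the dual is a pointwise supremum of functionals affine in $\nu$, and the derivative follows from the standard Danskin sandwich $\epsilon\int\phi^*\,d(\nu'-\nu)\le \text{EOT}(\nu_\epsilon,\nu_T)-\text{EOT}(\nu,\nu_T)\le \epsilon\int\phi_\epsilon^*\,d(\nu'-\nu)$ together with stability of Schr\"odinger potentials. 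The trade-off: the paper's convexity proof is more elementary and avoids any appeal to the dual, but it offloads the envelope formula to a citation; your approach is more self-contained and unifies convexity and differentiability under a single variational framework, at the cost of needing the potential-stability step $\phi_\epsilon^*\to\phi^*$ (which you sketch via compactness, equicontinuity from the Sinkhorn fixed-point equations, and uniqueness up to constants; this is essentially what the cited Goldfeld et al.\ result establishes). Your observation that the additive-constant ambiguity is harmless because $\nu'-\nu$ has zero mass is correct and worth keeping explicit.
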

\Cref{lemma: convexity of EOT} shows that \eqref{eq: nonstationary EOT with perturbation} is a convex optimization problem with respect to $\nu$. Moreover, $\phi^{*}$ is a result of the Sinkhorn algorithm, which can be used to search the optimal $\nu$ efficiently.

\section{Large Sample Properties} \label{sec: Large Sample Properties and Inference}

\noindent This section establishes the large sample properties of the estimator for the bound. \Cref{sec: Consistency and Convergence Rate} proposes a consistent estimator and shows its convergence rate. \Cref{sec: Inference} establishes the asymptotic distribution of the plug-in estimator for the bound.

\subsection{Consistency and Convergence Rate} \label{sec: Consistency and Convergence Rate}

\noindent The bound on the scalar parameter is the projection of the identified set defined by the moment conditions and structural constraints onto the scalar parameter. We follow \cite{chernozhukov2007estimation} to propose an estimator for the identified set and show its consistency and convergence rate. Let $P_{n}$ be an estimator for $P_{0}$ where $n$ is the sample size. Denote by $\epsilon_{n} \in \bR_{+}$ the tolerance level for the moment conditions that goes to zero at a suitable rate as $n \to \infty$. Our estimators $\kappa(\delta,P_{n},\epsilon_{n})$, and $\tilde{\kappa}_{\text{TI}}(\delta,P_{n},\epsilon_{n})$ for the bounds replace the moment conditions by the approximate moment conditions.\footnote{To compute the estimator, the number of moment conditions is doubled due to the use of approximate moment conditions. The duality is similar to Theorems \ref{thm: Strong Duality} and \ref{theorem: nonstationary duality}, thus is omitted for brevity.}
\begin{assumption} \label{assumption: Asymptotic Properties1}
    Let $\mA$ be either $\Theta \times \mF$ or $\Theta \times \mF_{\text{relaxed}}$, and $\alpha := (\theta,F) \in \mA$. Assume:
    \begin{assumptionitems}
        \item $\Theta \subseteq \bR^{d_{\theta}}$ is convex and compact. \label{assumption: compact}
        \item If $\mA = \Theta \times \mF_{\text{relaxed}}$, then \Cref{assumption: compact Support} holds.
        \item For $\forall \ \alpha \in \mA$, the \textit{structural constraint} $F$-a.s. has a unique solution $v(\alpha) \in \mV$. \label{assumption: Asymptotic Property unique solution}
        \item The identified set $ \mA_{I} := \left\{ \alpha \in \mA \mid \bE_{F} \left[m(U;\theta,v(\alpha))\right] = P_{0} \right\}$ is nonempty. \label{assumption: Asymptotic Property nonempty}
        \item \label{assumption: Asymptotic Property continuous} $\bE_{F} \left[m(U;\theta,v(\alpha))\right]$ is continuous in $\alpha \in \mA$, i.e., the preimages of closed sets are closed.
    \end{assumptionitems}
\end{assumption}
\Cref{assumption: compact} is mild. \Cref{assumption: Asymptotic Property unique solution} holds in single-agent DDC models. It rules out dynamic games with multiple equilibria. \Cref{assumption: Asymptotic Property nonempty} is also mild as the identified set for $\theta$ is usually nonempty under the reference distribution $F_{0}$. Moreover, if the identified set for $\delta = +\infty$\footnote{In this case, the KL divergence constraint is replaced by the absolute continuity constraint, i.e., $F \ll F_{0}$.} is nonempty, then \Cref{assumption: Asymptotic Property nonempty} implicitly assumes the radius $\delta$ is large enough. The smallest radius such that the identified set is nonempty can be estimated (see \Cref{remark: delta star}). \Cref{assumption: Asymptotic Property continuous} implies that the identified set and its estimator are compact as $\mF$ is compact (see \Cref{lemma: perturbation set properties}) and thus $\mA$ is compact (see \Cref{lemma: identified and estimator are compact}).

Under \Cref{assumption: Asymptotic Properties1}, the estimator for $\mA_{I}$ is defined as:
\begin{equation*}
    \hat{\mA}_{I} := \left\{ \alpha \in \mA \mid \|\bE_{F} \left[m(U;\theta,v(\alpha))\right] - P_{n}\|_{\infty} \leq \epsilon_{n} \right\}
\end{equation*}
The analysis of the consistency and convergence rate uses the \textit{Hausdorff Distance}:
\begin{equation*}
    d_{H}(\mA_{1}, \mA_{2}) := \max \left\{\sup_{\alpha_{1} \in \mA_{1}} d(\alpha_{1},\mA_{2}), \sup_{\alpha_{2} \in \mA_{2}} d(\alpha_{2},\mA_{1})\right\}
\end{equation*}
where $d(\alpha_{1},\mA_{1}) := \inf_{\alpha_{2} \in \mA_{2}} d(\alpha_{1},\alpha_{2})$ and $d(\alpha_{1},\alpha_{2})$ is a metric on $\mA$.

\begin{assumption} \label{assumption: Asymptotic Properties2}
    Assume $P_{n}$ is a $\sqrt{n}$-consistent estimator for $P_{0}$, and there exists $c_{n}$ such that $\sqrt{n} \|P_{0} - P_{n}\|_{\infty} \leq c_{n}$ with probability approaching 1 where $c_{n}$ can be data-dependent. Let $\epsilon_{n} = \frac{c_{n}}{\sqrt{n}}$, and assume $\epsilon_{n} \xrightarrow{p} 0$.
\end{assumption}
\Cref{assumption: Asymptotic Properties2} is mild as we assumed the observable variable has discrete support, e.g., $P_{n}$ can be the frequency estimator. In practice, we can set $c_{n} \propto \log n$. Then, the convergence rate in \Cref{thm: Convergence Rate of the Identified Set} is $\sqrt{n}$-consistent up to a logarithmic factor. We show some properties of the identified set and its estimator:
\begin{lemma} \label{lemma: identified and estimator are compact}
    Under Assumptions \ref{assumption: Asymptotic Properties1} and \ref{assumption: Asymptotic Properties2}, $\mA_{I}, \hat{\mA_{I}}$ are closed and compact. Moreover, $\hat{\mA_{I}}$ is nonempty.
\end{lemma}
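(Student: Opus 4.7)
The plan is to show compactness of the ambient set $\mA$ first, then deduce closedness of both $\mA_I$ and $\hat{\mA}_I$ as preimages of closed sets under the continuous moment map, and finally obtain nonemptiness of $\hat{\mA}_I$ by exhibiting an element of $\mA_I$ as a feasible point with probability approaching one.

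For compactness of $\mA$, equip $\mF$ (respectively $\mF_{\text{relaxed}}$) with the topology of weak convergence and $\Theta$ with the Euclidean topology, and take $\mA = \Theta \times \mF$ (or $\Theta \times \mF_{\text{relaxed}}$) with the product topology. The factor $\Theta$ is compact by \Cref{assumption: compact}. The factor $\mF$ is compact by \Cref{lemma: perturbation set properties}: the KL ball around $F_0$ is tight (via the pinned marginals $\{\nu_i\}_{i=1}^k$ being tight) and is weakly closed because $F \mapsto D_{KL}(F\|F_0)$ is weakly lower semicontinuous and because the marginal constraints are preserved under weak limits. In the time-inhomogeneous case, \Cref{assumption: compact Support} makes $\mU$ compact, so the entire ambient space of probability measures on $\mU$ is weakly compact and $\mF_{\text{relaxed}}$ is closed inside it. Applying Tychonoff yields compactness of $\mA$.

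Next define the moment map $M:\mA \to \bR^{d_P}$ by $M(\alpha) := \bE_F[m(U;\theta,v(\alpha))]$, where $v(\alpha)$ is the unique solution to the structural constraint guaranteed by \Cref{assumption: Asymptotic Property unique solution}. By \Cref{assumption: Asymptotic Property continuous}, $M$ is continuous on $\mA$. Then
\begin{equation*}
\mA_I = M^{-1}(\{P_0\}), \qquad \hat{\mA}_I = M^{-1}\bigl(\{y \in \bR^{d_P} : \|y - P_n\|_\infty \leq \epsilon_n\}\bigr)
\end{equation*}
are preimages of closed sets under a continuous map, hence closed in $\mA$. Since $\mA$ is compact and both sets are closed subsets of it, both $\mA_I$ and $\hat{\mA}_I$ are compact.

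For the nonemptiness claim, pick any $\alpha^* \in \mA_I$, which exists by \Cref{assumption: Asymptotic Property nonempty}. Then $M(\alpha^*) = P_0$, so
\begin{equation*}
\|M(\alpha^*) - P_n\|_\infty = \|P_0 - P_n\|_\infty \leq \tfrac{c_n}{\sqrt{n}} = \epsilon_n
\end{equation*}
with probability approaching one by \Cref{assumption: Asymptotic Properties2}. Hence $\alpha^* \in \hat{\mA}_I$ with probability approaching one, proving nonemptiness. The main subtlety in the argument is aligning the topology used for the compactness of $\mF$ (weak convergence of probability measures, via \Cref{lemma: perturbation set properties}) with the topology in which \Cref{assumption: Asymptotic Property continuous} asserts continuity of $M$; everything else follows from the standard facts that preimages of closed sets under continuous maps are closed and closed subsets of compact sets are compact.
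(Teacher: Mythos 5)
Your proof is correct and takes essentially the same approach as the paper: show $\mA$ is compact (product of compact $\Theta$ and compact $\mF$ or $\mF_{\text{relaxed}}$), deduce closedness of $\mA_I$ and $\hat{\mA}_I$ from continuity of the moment map, invoke the fact that closed subsets of compact sets are compact, and obtain nonemptiness from $\mA_I \subseteq \hat{\mA}_I$ with probability approaching one via \Cref{assumption: Asymptotic Properties2}. Your write-up is a bit more explicit than the paper's (which simply cites Lemmas \ref{lemma: compact space}, \ref{lemma: relaxed compact space}, Rudin, and defers nonemptiness to the proof of Theorem \ref{thm: Consistency and Convergence Rate of the Identified Set}), but the substance is identical.
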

By the extreme value theorem, the infimum is achieved if the scalar parameter is continuous on $\mA$, i.e., \Cref{assumption: continuity} holds. Therefore, the optimization problem has a solution. Next, we impose the polynomial minorant condition as in \cite{chernozhukov2007estimation} for the convergence rate of the estimator:
\begin{assumption}[Polynomial Minorant Condition] \label{assumption: Polynomial Minorant}
    There exists positive constants $C_{1}$ and $C_{2}$ such that: $\|\bE_{F} \left[m(U;\theta,v(\alpha))\right] - P_{0}\|_{\infty} \geq C_{1} \min\left\{C_{2}, d(\alpha, \mA_{I})\right\}$.
\end{assumption}

\begin{theorem} \label{thm: Consistency and Convergence Rate of the Identified Set}
    Under Assumptions \ref{assumption: Asymptotic Properties1} and \ref{assumption: Asymptotic Properties2}, we have:
    \begin{theoremitems}
        \item \label{thm: Consistency} (Consistency) $d_{H}(\hat{\mA}_{I},\mA_{I}) = o_{p}(1)$.
        \item \label{thm: Convergence Rate of the Identified Set} (Convergence Rate) Under \Cref{assumption: Polynomial Minorant}, $d_{H}(\hat{\mA}_{I},\mA_{I}) = O_{p}(\frac{\max \left\{1, c_{n}\right\}}{\sqrt{n}})$.
    \end{theoremitems}
\end{theorem}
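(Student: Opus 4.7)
The plan is to bound the two terms of the Hausdorff distance separately, following the template of \cite{chernozhukov2007estimation}. For the easy direction $\sup_{\alpha \in \mA_{I}} d(\alpha, \hat{\mA}_{I})$, \Cref{assumption: Asymptotic Properties2} gives $\|P_{n} - P_{0}\|_{\infty} \leq c_{n}/\sqrt{n} = \epsilon_{n}$ with probability approaching one. Hence for any $\alpha \in \mA_{I}$, $\|\bE_{F}[m(U;\theta,v(\alpha))] - P_{n}\|_{\infty} = \|P_{0} - P_{n}\|_{\infty} \leq \epsilon_{n}$, so $\alpha \in \hat{\mA}_{I}$. This yields $\mA_{I} \subseteq \hat{\mA}_{I}$ w.p.a.\ 1, making the first supremum vanish on that event (in particular $O_{p}(c_{n}/\sqrt{n})$). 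The remaining work is to control $\sup_{\hat\alpha \in \hat{\mA}_{I}} d(\hat\alpha, \mA_{I})$.

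For any $\hat\alpha \in \hat{\mA}_{I}$, the triangle inequality gives, with probability approaching one,
\begin{equation*}
    \|\bE_{F}[m(U;\theta,v(\hat\alpha))] - P_{0}\|_{\infty} \leq \|\bE_{F}[m(U;\theta,v(\hat\alpha))] - P_{n}\|_{\infty} + \|P_{n} - P_{0}\|_{\infty} \leq 2\epsilon_{n}.
\end{equation*}
For the consistency claim (i), I would argue by contradiction: if $\sup_{\hat\alpha \in \hat{\mA}_{I}} d(\hat\alpha, \mA_{I}) \not\to_{p} 0$, extract a subsequence along which some $\hat\alpha_{n_{k}}$ satisfies $d(\hat\alpha_{n_{k}}, \mA_{I}) \geq \eta > 0$. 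Compactness of $\mA$ from \Cref{lemma: identified and estimator are compact} gives a further subsequence $\hat\alpha_{n_{k}} \to \alpha^{*}$, and continuity of $\alpha \mapsto \bE_{F}[m(U;\theta,v(\alpha))]$ from \Cref{assumption: Asymptotic Property continuous}, combined with the $2\epsilon_{n}$ bound and $\epsilon_{n} \to_{p} 0$, forces $\bE_{F^{*}}[m(U;\theta^{*},v(\alpha^{*}))] = P_{0}$. Hence $\alpha^{*} \in \mA_{I}$, contradicting $d(\alpha^{*}, \mA_{I}) \geq \eta$. For the rate (ii), \Cref{assumption: Polynomial Minorant} turns the $2\epsilon_{n}$ bound into a direct distance bound: $C_{1} \min\{C_{2}, d(\hat\alpha, \mA_{I})\} \leq 2\epsilon_{n}$, and on the event $\{2\epsilon_{n} < C_{1}C_{2}\}$ (which occurs w.p.a.\ 1 since $\epsilon_{n} \to_{p} 0$), the minimum is attained by $d(\hat\alpha, \mA_{I})$, yielding $d(\hat\alpha, \mA_{I}) \leq 2\epsilon_{n}/C_{1} = O_{p}(c_{n}/\sqrt{n})$ uniformly in $\hat\alpha \in \hat{\mA}_{I}$.

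The main obstacle is the continuity of $\alpha \mapsto \bE_{F}[m(U;\theta,v(\alpha))]$ on the compact domain $\mA = \Theta \times \mF$ (or $\Theta \times \mF_{\text{relaxed}}$), since the map depends on $F$ both through the outer expectation and through the implicit solution $v(\alpha)$ of the structural constraint (whose uniqueness is supplied by \Cref{assumption: Asymptotic Property unique solution}). This joint continuity is taken as a primitive in \Cref{assumption: Asymptotic Property continuous}, but in practice it requires equipping $\mF$ with a topology (weak convergence combined with the KL-ball compactness established in \Cref{lemma: perturbation set properties}) under which expectations are continuous along convergent sequences of joint distributions and the implicit map $\alpha \mapsto v(\alpha)$ is continuous in a Berge-type sense. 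Once continuity and compactness of $\mA$ are in hand, the rest of the argument is a routine adaptation of the standard template; the specific rate $\epsilon_{n} = c_{n}/\sqrt{n}$ is inherited directly from the $\sqrt{n}$-consistency of $P_{n}$ assumed in \Cref{assumption: Asymptotic Properties2}.
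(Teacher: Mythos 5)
Your proposal is correct and lands on the same decomposition as the paper (show $\mA_{I}\subseteq\hat{\mA}_{I}$ w.p.a.\ 1, then bound the one-sided distance from $\hat{\mA}_{I}$ to $\mA_{I}$), but the way you close the two parts differs slightly and is worth noting.

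For consistency, the paper argues directly: by continuity (\Cref{assumption: Asymptotic Property continuous}) and compactness of $\mA$, the continuous function $\alpha\mapsto\|\bE_{F}[m(U;\theta,v(\alpha))]-P_{0}\|_{\infty}$ attains a strictly positive minimum $\delta(\eps)>0$ on the compact set $\{d(\alpha,\mA_{I})\geq\eps\}$, and then compares this gap with $\sup_{\hat{\mA}_{I}}\|\bE_{F}[m]-P_{0}\|_{\infty}=o_{p}(1)$. Your subsequence-plus-contradiction argument establishes exactly the same positive-gap fact in a different dress, so the two are logically interchangeable; both rest on the compactness and continuity you flag at the end as the genuine substantive input, and both rely on \Cref{lemma: identified and estimator are compact}.

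For the rate, you take a genuinely shorter route than the paper. The paper first proves an auxiliary "random" polynomial minorant (\Cref{lemma: Moment Polynomial Minorant}): for each $\eps$ there exist $(\kappa_{\eps},n_{\eps})$ so that $\|\bE_{F}[m]-P_{n}\|_{\infty}\geq\tfrac{C_{1}}{2}\min\{C_{2},d(\alpha,\mA_{I})\}$ holds uniformly on $\{d(\alpha,\mA_{I})\geq\kappa_{\eps}/\sqrt{n}\}$ w.p.\ $1-\eps$, and then does bookkeeping with $\bar{c}=\max\{\tfrac{C_{1}}{2}\kappa_{\eps},c_{n}\}$ to extract the rate. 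You instead apply the deterministic minorant \Cref{assumption: Polynomial Minorant} to $\|\bE_{F}[m]-P_{0}\|_{\infty}$, after a single triangle inequality yields $\|\bE_{F}[m(U;\theta,v(\hat\alpha))]-P_{0}\|_{\infty}\leq 2\epsilon_{n}$ for $\hat\alpha\in\hat{\mA}_{I}$ w.p.a.\ 1, and then resolve the $\min$ on the w.p.a.\ 1 event $\{2\epsilon_{n}<C_{1}C_{2}\}$. This gives $d_{H}(\hat{\mA}_{I},\mA_{I})\leq 2c_{n}/(C_{1}\sqrt{n})$ w.p.a.\ 1, which implies (and is in fact at least as sharp as) the stated $O_{p}(\max\{1,c_{n}\}/\sqrt{n})$, since $2c_{n}/(C_{1}\sqrt{n})\leq(2/C_{1})\cdot\max\{1,c_{n}\}/\sqrt{n}$. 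The trade-off is that your route avoids the $\kappa_{\eps}$-threshold construction entirely, at the price of needing to keep track of the factor $2$ from the triangle inequality; nothing is lost, and the argument is cleaner.
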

\Cref{thm: Consistency and Convergence Rate of the Identified Set} establishes the $\sqrt{n}$-consistency up to a logarithmic factor (if $c_{n} \propto \log n$). Then, we impose the following continuity assumption on the scalar parameter of interest:
\begin{assumption}
    Let $s(\alpha) := \bE_{F}\left[s(U;\theta,v(\alpha))\right]$, assume one of the following:
    \begin{assumptionitems}
        \item $s(\alpha)$ is continuous in $\alpha \in \mA$. \label{assumption: continuity}
        \item \label{assumption: Lipschitz Continuity} $s(\alpha)$ is Lipschitz continuous in $\alpha \in \mA$.
    \end{assumptionitems}
\end{assumption}

\begin{theorem} \label{thm: Consistency and Convergence Rate of the Scalar of Interest}
    Under Assumptions \ref{assumption: Asymptotic Properties1} and \ref{assumption: Asymptotic Properties2}, we have:
    \begin{theoremitems}
        \item \label{thm: Consistency of the Scalar of Interest} (Consistency) Under \Cref{assumption: continuity}, $\kappa(\delta,P_{n},\epsilon_{n}) \xrightarrow{p} \kappa(\delta,P_{0})$, and $\tilde{\kappa}_{\text{TI}}(\delta,P_{n},\epsilon_{n}) \xrightarrow{p} \tilde{\kappa}_{\text{TI}}(\delta,P_{0})$.
        \item \label{thm: Convergence Rate of the Scalar of Interest} (Convergence Rate) Under \Cref{assumption: Lipschitz Continuity}, $|\kappa(\delta,P_{n},\epsilon_{n}) - \kappa(\delta,P_{0})| = O_{p}(\frac{\max \left\{1, c_{n}\right\}}{\sqrt{n}})$, and $|\tilde{\kappa}_{\text{TI}}(\delta,P_{n},\epsilon_{n}) - \tilde{\kappa}_{\text{TI}}(\delta,P_{0})| = O_{p}(\frac{\max \left\{1, c_{n}\right\}}{\sqrt{n}})$.
    \end{theoremitems}
\end{theorem}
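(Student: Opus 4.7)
The plan is to propagate the convergence of the identified set established in \Cref{thm: Consistency and Convergence Rate of the Identified Set} to the scalar bound via continuity (respectively, Lipschitz continuity) of $s(\alpha) := \bE_{F}[s(U;\theta,v(\alpha))]$ on the compact set $\mA$. I focus on $\kappa$; the argument for $\tilde{\kappa}_{\text{TI}}$ is identical once \Cref{assumption: compact Support} is invoked and $\mF$ is replaced by $\mF_{\text{relaxed}}$.

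First I would establish attainment. By \Cref{lemma: identified and estimator are compact}, $\mA_{I}$ is compact and nonempty, and $\hat{\mA}_{I}$ is compact and nonempty with probability approaching one (w.p.a.\ 1). Under \Cref{assumption: continuity}, $s$ is continuous on compact $\mA$, hence attains its infimum on each set. Let $\alpha^{*}_{0} \in \mA_{I}$ and $\hat{\alpha}_{n} \in \hat{\mA}_{I}$ denote minimizers, so $\kappa(\delta,P_{0}) = s(\alpha^{*}_{0})$ and $\kappa(\delta,P_{n},\epsilon_{n}) = s(\hat{\alpha}_{n})$ w.p.a.\ 1. Since compactness of $\mA_{I}$ also guarantees that the distance $d(\hat{\alpha}_{n},\mA_{I})$ is attained, there exists $\alpha_{n} \in \mA_{I}$ with $d(\hat{\alpha}_{n},\alpha_{n}) \leq d_{H}(\hat{\mA}_{I},\mA_{I})$.

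Next I would sandwich $\kappa(\delta,P_{n},\epsilon_{n})$ from both sides. For the upper side, for any $\alpha \in \mA_{I}$ we have $\|\bE_{F}[m(U;\theta,v(\alpha))] - P_{n}\|_{\infty} = \|P_{0} - P_{n}\|_{\infty} \leq c_{n}/\sqrt{n} = \epsilon_{n}$ w.p.a.\ 1 by \Cref{assumption: Asymptotic Properties2}, so $\mA_{I} \subseteq \hat{\mA}_{I}$ w.p.a.\ 1, yielding $\kappa(\delta,P_{n},\epsilon_{n}) \leq \kappa(\delta,P_{0})$ w.p.a.\ 1. For the lower side, by \Cref{thm: Consistency} we have $d_{H}(\hat{\mA}_{I},\mA_{I}) = o_{p}(1)$, and since a continuous function on a compact set is uniformly continuous, for any $\varepsilon > 0$ we have $|s(\hat{\alpha}_{n}) - s(\alpha_{n})| < \varepsilon$ w.p.a.\ 1. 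Combining,
\begin{equation*}
    \kappa(\delta,P_{n},\epsilon_{n}) = s(\hat{\alpha}_{n}) \geq s(\alpha_{n}) - \varepsilon \geq \kappa(\delta,P_{0}) - \varepsilon.
\end{equation*}
This gives consistency. Under the stronger \Cref{assumption: Lipschitz Continuity} with constant $L$, the same step sharpens to $|s(\hat{\alpha}_{n}) - s(\alpha_{n})| \leq L \cdot d_{H}(\hat{\mA}_{I},\mA_{I}) = O_{p}(\max\{1,c_{n}\}/\sqrt{n})$ by \Cref{thm: Convergence Rate of the Identified Set}, and combined with the upper-side inequality yields the claimed rate.

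The main obstacle I anticipate is simply the bookkeeping of the various w.p.a.\ 1 qualifiers: the minimizer $\hat{\alpha}_{n}$ only exists on the event that $\hat{\mA}_{I}$ is nonempty, the inclusion $\mA_{I} \subseteq \hat{\mA}_{I}$ only holds on the event that $\|P_{0} - P_{n}\|_{\infty} \leq \epsilon_{n}$, and the uniform-continuity step needs both events. Since each has probability tending to one, their intersection does as well and the convention $\kappa(\delta,P_{n},\epsilon_{n}) := +\infty$ on the null event $\hat{\mA}_{I} = \emptyset$ contributes no asymptotic mass. No other substantive obstacle arises because attainment of the infima, together with attainment of the distance $d(\hat{\alpha}_{n},\mA_{I})$, removes the need for any approximation slack in the Lipschitz step, so the rate is exactly inherited from \Cref{thm: Convergence Rate of the Identified Set}.
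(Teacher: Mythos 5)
Your proof is correct and follows essentially the same route as the paper, which simply invokes \Cref{lemma: consistency and convergence rate of the projection} together with \Cref{thm: Consistency and Convergence Rate of the Identified Set}. You unpack that projection lemma inline and replace one side of its symmetric Hausdorff sandwich with the sharper inclusion $\mA_{I} \subseteq \hat{\mA}_{I}$ w.p.a.\ 1, but the substance---Hausdorff convergence of the compact identified set plus uniform (resp.\ Lipschitz) continuity of $s$ on the compact ambient space---is identical.
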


\subsection{Asymptotic Distribution} \label{sec: Inference}

\noindent This section establishes the asymptotic distribution of $\kappa(\delta,P_{n})$ and $\tilde{\kappa}_{\text{TI}}(\delta,P_{n})$. To this end, we first show the Hadamard directional differentiability of $\kappa(\delta,P)$ and $\tilde{\kappa}_{\text{TI}}(\delta,P)$ with respect to $P$ at $P_{0}$ similar to \cite{christensen2023counterfactual}.  We begin with the definition of Hadamard directional differentiability:
\begin{definition}
    The map $f: \bR^{d_{P}} \to \bR$ is Hadamard directionally differentiable at $P \in \bR^{d_{P}}$, if there exists a continuous map $f' : \bR^{d_{P}} \to \bR$ such that for $h \in \bR^{d_{P}}$, we have:
    \begin{equation*}
        \lim_{i \to \infty} \frac{f(P + t_{i} h_{i}) - f(P)}{t_{i}} = f'(P;h)
    \end{equation*}
    for all sequences $\left\{h_{i}\right\} \subseteq \bR^{d_{P}}$, $t_{i} \downarrow 0$, and $h_{i} \to h \in \bR^{d_{P}}$ as $i \to \infty$.
\end{definition}

Under \Cref{assumption: Asymptotic Properties1}, we can restate the optimization problem as:
\begin{align*}
    \inf_{\alpha \in \mA} s(\alpha) \quad \text{s.t.} \quad P(\alpha) = P_{0}
\end{align*}
where $P(\alpha) := \bE_{F} \left[m(U;\theta,v(\alpha))\right]$. Moreover, the identified set $\mA_{I}$ is nonempty, which means the feasible set for the optimization problem is nonempty. By \Cref{lemma: identified and estimator are compact}, $\mA_{I}$ is compact. Under \Cref{assumption: continuity}, the Extreme Value Theorem (see \cite{rudin1976principles} Theorem 4.16.) implies that the infimum is attained. Denote by $\mA_{I,\text{TH}}^{*}$, $\mA_{I,\text{TI}}^{*}$ the nonempty sets of optimizers for the problems $\kappa(\delta,P_{0})$ and $\tilde{\kappa}_{\text{TI}}(\delta,P_{0})$, respectively.

To establish Hadamard directional differentiability of $\kappa(\delta,P)$ and $\tilde{\kappa}_{\text{TI}}(\delta,P)$ at $P_{0}$, we impose assumptions similar to those in \cite{bonnans2013perturbation} Theorem 4.25.\footnote{We work on the primal problem to show the Hadamard directional differentiability, while \cite{christensen2023counterfactual} works on the dual problem (see their Theorem 6.2).}
\begin{assumption} \label{assumption: continuous differentiability of s and v}
    Assume $s(\alpha)$ and $P(\alpha)$ are continuously differentiable on $\mA$. That is, they are G\^{a}teaux differentiable on $\mA$ and the corresponding derivatives $Ds(\alpha)$ and $DP(\alpha)$ are continuous on $\mA$ (in the operator norm topology).\footnote{For a given direction $\alpha_{1} \in \mA$, the G\^{a}teaux derivatives are understood as $Ds(\alpha)(\alpha_{1}-\alpha)$ and $DP(\alpha)(\alpha_{1}-\alpha)$. See \cite{bonnans2013perturbation} Page 35 for the definition of G\^{a}teaux derivative.}
\end{assumption}
\begin{assumption} \label{assumption: hadamard}
Let $\mA_{I}^{*}$ be either $\mA_{I,\text{TH}}^{*}$ or $\mA_{I,\text{TI}}^{*}$. Assume:
\begin{assumptionitems}
    \item $0 \in \text{int} \{ DP(\alpha)(\mA-\alpha) \}$ for $\forall \ \alpha \in \mA_{I}^{*}$. \label{assumption: constraint qualification}
    \item For $\forall \ h \in \bR^{d_{P}}$, it holds that for $\forall \ P_{t} := P_{0} + th + o(t)$ and $t > 0$ small enough, the problem $\kappa(\delta,P_{t})$ has an $o(t)$-optimal solution $\alpha(t)$ such that $d(\alpha(t),\mA_{I}^{*}) = O(t)$.
    \item For $\forall \ t_{n} \downarrow 0$ the sequence $\{\alpha(t_{n})\}$ has a limit point (in the norm topology) $\alpha_{0} \in \mA_{I}^{*}$. \label{assumption: convergence of optimizer}
\end{assumptionitems}
\end{assumption}

\begin{theorem} \label{thm: Hadamard Directional Differentiability}
    Under Assumptions \ref{assumption: Asymptotic Properties1}, \ref{assumption: continuous differentiability of s and v}, and \ref{assumption: hadamard}, the maps $\kappa(\delta,P)$ and $\tilde{\kappa}_{\text{TI}}(\delta,P)$ are Hadamard directionally differentiable at $P_{0}$ in any direction $h \in \bR^{d_{P}}$, and:
    \begin{equation*}
        \kappa'(\delta,P_{0};h) = \inf_{\alpha \in \mA_{I,\text{TH}}^{*}} \sup_{\lambda \in \Lambda(\alpha,P_{0})} - \lambda^{T} h, \quad \tilde{\kappa}_{\text{TI}}'(\delta,P_{0};h) = \inf_{\alpha \in \mA_{I,\text{TI}}^{*}} \sup_{\lambda \in \Lambda(\alpha,P_{0})} - \lambda^{T} h
    \end{equation*}
    where $\Lambda(\alpha,P_{0})$ is the nonempty set of Lagrange multipliers corresponding to $\alpha \in \mA_{I}^{*}$.\footnote{See \cite{bonnans2013perturbation} Definition 3.8 and Theorem 3.9. Robinson's constraint qualification is satisfied under Assumptions \ref{assumption: Asymptotic Properties1}, \ref{assumption: continuous differentiability of s and v}, and \ref{assumption: hadamard} (see \Cref{appendix: proof of Hadamard Directional Differentiability}). Therefore, $\Lambda(\alpha^{*},P_{0})$ is nonempty.}
    
    Moreover, if $\sqrt{n}(P_{n} - P_{0}) \xrightarrow{d} Z \sim \mathcal{N}(0,\Sigma)$, then $\sqrt{n}(\kappa(\delta,P_{n}) - \kappa(\delta,P_{0})) \xrightarrow{d} \kappa'(\delta,P_{0};Z)$ and $\sqrt{n}(\tilde{\kappa}_{\text{TI}}(\delta,P_{n}) - \tilde{\kappa}_{\text{TI}}(\delta,P_{0})) \xrightarrow{d} \tilde{\kappa}_{\text{TI}}'(\delta,P_{0};Z)$.
\end{theorem}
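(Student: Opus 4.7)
The plan is to apply Bonnans and Shapiro (2013) Theorem 4.25 on perturbation analysis of parametric optimization problems to obtain Hadamard directional differentiability with the stated min-max formula, and then invoke the functional delta method for Hadamard directionally differentiable maps to obtain the asymptotic distribution. Throughout I write both bounds uniformly as
$$\kappa(\delta,P) \;=\; \inf_{\alpha \in \mA} s(\alpha) \quad \text{s.t.} \quad P(\alpha) = P,$$
where $\alpha = (\theta,F)$ and $P(\alpha) := \bE_{F}[m(U;\theta,v(\alpha))]$; the argument for $\tilde{\kappa}_{\text{TI}}$ is identical, with $\mF$ replaced by $\mF_{\text{relaxed}}$ and compactness of the feasible set guaranteed instead by \Cref{assumption: Strong Duality1}.

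First I would verify Robinson's constraint qualification, which is the only non-obvious hypothesis of Bonnans-Shapiro. Writing the constraint as $P(\alpha) - P_{0} \in \{0\} \subseteq \bR^{d_{P}}$, Robinson's condition at an optimizer $\alpha \in \mA_{I}^{*}$ reduces to
$$0 \in \text{int}\bigl\{ DP(\alpha)(\mA - \alpha) - (\{0\} - (P(\alpha) - P_{0})) \bigr\} = \text{int}\bigl\{ DP(\alpha)(\mA - \alpha) \bigr\},$$
which is exactly \Cref{assumption: constraint qualification}; this guarantees that $\Lambda(\alpha,P_{0})$ is a nonempty, convex, compact subset of $\bR^{d_{P}}$. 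The continuous G\^{a}teaux differentiability of $s$ and $P$ on $\mA$ is \Cref{assumption: continuous differentiability of s and v}, and the optimizer sets $\mA_{I,\text{TH}}^{*}$, $\mA_{I,\text{TI}}^{*}$ are nonempty by the extreme value theorem applied to continuous $s$ (\Cref{assumption: continuity}) on the compact feasible sets from \Cref{lemma: identified and estimator are compact}.

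Second, the remaining hypotheses of Bonnans-Shapiro Theorem 4.25 are precisely the existence of an $o(t)$-optimal path $\alpha(t)$ along any perturbation direction $h$ satisfying $d(\alpha(t),\mA_{I}^{*}) = O(t)$, together with the accumulation of such paths at points of $\mA_{I}^{*}$; these are imposed as Assumptions \ref{assumption: hadamard}(ii)-(iii). The theorem then delivers Hadamard directional differentiability of the optimal value at $P_{0}$ with
$$\kappa'(\delta, P_{0}; h) \;=\; \inf_{\alpha \in \mA_{I,\text{TH}}^{*}} \sup_{\lambda \in \Lambda(\alpha, P_{0})} D_{P}L(\alpha, \lambda, P_{0})[h],$$
where $L(\alpha, \lambda, P) := s(\alpha) + \lambda^{T}(P(\alpha) - P)$ is the Lagrangian. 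Since $D_{P}L = -\lambda^{T}$, this recovers the stated formula, and the argument for $\tilde{\kappa}_{\text{TI}}'(\delta,P_{0};h)$ is verbatim the same on $\mA_{I,\text{TI}}^{*}$.

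Finally, for the distributional limit, since $\kappa(\delta, \cdot)$ and $\tilde{\kappa}_{\text{TI}}(\delta, \cdot)$ are Hadamard directionally differentiable at $P_{0}$ and $\sqrt{n}(P_{n} - P_{0}) \xrightarrow{d} Z$, the delta method for Hadamard directionally differentiable maps yields $\sqrt{n}(\kappa(\delta,P_{n}) - \kappa(\delta,P_{0})) \xrightarrow{d} \kappa'(\delta,P_{0};Z)$ and analogously for $\tilde{\kappa}_{\text{TI}}$. The main obstacle is ensuring that Bonnans-Shapiro applies cleanly in our partially infinite-dimensional setting where $F$ ranges over $\mF$: the finite-dimensionality of the range of $P$ (thanks to the discrete support of $X$) keeps $\Lambda(\alpha,P_{0}) \subseteq \bR^{d_{P}}$ compact whenever Robinson's qualification holds, and the compactness of $\mF$ provides the limit-point property required in \Cref{assumption: convergence of optimizer}; the other delicate point is confirming that the norm topology underlying \Cref{assumption: hadamard}(iii) is the same topology in which $Ds$ and $DP$ are continuous, which follows from \Cref{assumption: continuous differentiability of s and v}.
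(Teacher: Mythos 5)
Your proposal takes essentially the same route as the paper: both reduce the claim to Bonnans and Shapiro (2013) Theorem 4.25, verify Robinson's constraint qualification from \Cref{assumption: constraint qualification}, note that the remaining hypotheses are exactly \Cref{assumption: hadamard}(ii)--(iii), compute $D_{P}\mL = -\lambda^{T}$ to obtain the min--max formula, and then invoke the functional delta method for Hadamard directionally differentiable maps (the paper cites \cite{fang2019inference} Theorem 2.1 for this last step). One detail the paper addresses explicitly that your sketch only flags as a ``main obstacle'' without resolving: Theorem 4.25 requires the ambient decision space to be a Banach space, and the paper supplies this by embedding $\mA$ in $\bR^{d_{\theta}}$ times the space of finite signed measures with the total variation norm (citing Bogachev), and by checking that $\mF$ (resp.\ $\mF_{\text{relaxed}}$) and $\Theta$ are closed and convex so that $\mA$ is a closed convex subset of that Banach space; your appeal to compactness of $\mF$ and finite-dimensionality of the range of $P$ is relevant but does not substitute for this verification, which you should make explicit.
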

\Cref{thm: Hadamard Directional Differentiability} shows the asymptotic distribution of the bound's estimator. To conduct inference, we may follow the procedure in \cite{fang2019inference}. In addition, the numerical delta method \cite{hong2018numerical} combined with our practical implementation in \Cref{sec: Practical Implementation} can be used to overcome the computational challenge.

\section{Interpreting the Results} \label{sec: Derivative with respect to delta}

\noindent In practice, we can estimate an alternative (parametric) model and set the radius to be the KL divergence between the alternative and the reference distribution. In addition, this section considers three complementary sensitivity measures to interpret the results: global sensitivity, local sensitivity, and robustness metric.

\subsection{Global Sensitivity}

\noindent The global sensitivity\footnote{See \cite{christensen2023counterfactual} Theorem 2.1 for similar results. However, their results are silent about how large the radius should be so that the bounds are close to the nonparametric bounds.} approach progressively increases the radius until the bounds flatten. We show that it provides a computationally tractable approximation to the nonparametric bounds when the KL divergence constraint is removed. Moreover, we provide an explicit upper bound on the approximation error. We focus on the time-homogeneous case, for which the “nonparametric" perturbation set is:
\begin{equation*}
    \mF_{+\infty} := \Pi(\nu_{1},\cdots,\nu_{k})
\end{equation*}
After applying the minimax duality, we need to solve the following problem:
\begin{equation*}
    \inf_{(\theta,v) \in \Theta \times \mV} \sup_{\lambda \in \bR^{d_{P}}, g \in \mG} \inf_{F \in \Pi(\nu_{1},\ldots,\nu_{k})} \bE_{F} \left[c(U;\theta,v,g,\lambda)\right] - \lambda^{T}P
\end{equation*}
where the inner problem is an OT problem, which is computationally challenging in high-dimensional settings. The EOT is a computationally tractable approximation to the OT problem. Recall $\mC(\theta,v,g,\lambda,\lambda_{KL}) := \inf_{F \in \Pi(\nu_{1},\cdots,\nu_{k})} \bE_{F} \left[c(U;\theta,v,g,\lambda)\right] + \lambda_{KL} D_{KL}(F\|F_{0})$.
\begin{theorem}[Adapted from \cite{eckstein2024convergence} Theorem 3.1(i)] \label{thm: global sensitivity}
    Suppose \Cref{assumption: Strong Duality} holds. Assume the marginals $\{\nu_{i}\}_{i=1}^{k}$ have finite $p+\eta$-th moment for some $\eta > 0$ and integer $p \geq 1$, and $c(U;\theta,v,g,\lambda)$ satisfies the $A_{L,C}$ condition in \cite{eckstein2024convergence} where $L,C$ depend on $(\theta,v,g,\lambda)$. Let $d_{i}$ be the dimension of $U_{i}$. Then, for any $\lambda_{KL} \in (0,1]$,
    \begin{equation*}
        0 \leq \mC(\theta,v,g,\lambda,\lambda_{KL}) - \mC(\theta,v,g,\lambda,0) \leq \left(\sum_{i=2}^{k} d_{i}\right) \lambda_{KL} \log\left(\frac{1}{\lambda_{KL}}\right) + (k-1)^{\frac{1}{p}} L C \lambda_{KL}
    \end{equation*}
\end{theorem}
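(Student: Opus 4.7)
The proof breaks into two parts corresponding to the two inequalities. The lower bound $0 \leq \mC(\theta,v,g,\lambda,\lambda_{KL}) - \mC(\theta,v,g,\lambda,0)$ is immediate: every $F \in \Pi(\nu_1,\dots,\nu_k)$ satisfies $D_{KL}(F \| F_0) \geq 0$ by Gibbs' inequality, so the EOT objective dominates the OT objective pointwise in $F$, and hence the infima satisfy the same inequality. This part requires no appeal to external results.

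For the upper bound, the plan is to invoke \cite{eckstein2024convergence} Theorem 3.1(i) verbatim after verifying its hypotheses under the assumptions of our statement. That theorem bounds the gap between the EOT value with regularization $\lambda_{KL}$ and the OT value by precisely the right-hand side $\left(\sum_{i=2}^{k} d_i\right)\lambda_{KL}\log(1/\lambda_{KL}) + (k-1)^{1/p} L C \lambda_{KL}$, provided (a) the marginals $\{\nu_i\}_{i=1}^k$ admit a finite $(p+\eta)$-th moment for some $\eta > 0$ and integer $p \geq 1$, and (b) the cost function satisfies their growth-Lipschitz condition $A_{L,C}$. Both (a) and (b) are hypotheses of our theorem statement, so the verification is direct; the only additional wrinkle is that our reference measure $F_0$ is a general absolutely continuous joint distribution on $\mU$ rather than the product $\otimes_{i=1}^k \nu_i$. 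However, \Cref{assumption: reference} guarantees $F_0 \ll \otimes_{i=1}^k \nu_i$, and the standard EOT convergence results (see also the review in \cite{nutz2021introduction}) extend to general reference measures satisfying this dominance condition by rewriting the KL term using $\rho(U) = \log(d F_\otimes/dF_0)$, which is finite $F_0$-a.s. and absorbed into the cost. The remaining regularity from \Cref{assumption: Strong Duality} (lower semicontinuity and growth) ensures that both the EOT and OT problems have well-defined values.

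The two parts of the upper bound have distinct interpretations: the $\lambda_{KL}\log(1/\lambda_{KL})$ term comes from the entropy contribution of the optimal EOT coupling (which under regularization is fully supported rather than concentrated on an OT-optimal submanifold), with dimensional constant $\sum_{i=2}^{k} d_i$ reflecting the effective dimensionality of the transport plan after fixing one marginal; the linear term $(k-1)^{1/p} L C \lambda_{KL}$ is the cost-function contribution controlled by the $A_{L,C}$ condition. Both terms vanish as $\lambda_{KL} \downarrow 0$, which is the key takeaway: choosing a small regularization produces an EOT value close to the nonparametric OT bound, making the global sensitivity approach a controlled approximation.

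The main obstacle I anticipate is not in the proof itself, which is essentially an invocation, but in the careful verification of the $A_{L,C}$ condition for the cost function $c(U;\theta,v,g,\lambda) = s(U;\theta,v) + \lambda^T m(U;\theta,v) + \psi(U;\theta,v,g)$ arising in our framework. Since $c$ involves the scalar objective, moment function, Bellman residuals, and a multiplier function $g \in \mG$, the constants $L$ and $C$ inherently depend on $(\theta,v,g,\lambda)$. The statement already acknowledges this dependence, so no uniform bound across the parameter space is claimed; the theorem is stated pointwise. Consequently, the proof reduces to checking the pointwise condition and applying the cited result.
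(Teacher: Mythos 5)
Your proposal is correct and takes essentially the same route as the paper: the paper's own proof is a one-line invocation of \cite{eckstein2024convergence} Theorem 3.1(i) together with the dimensional constants from their Remark 2.1, which is exactly what you do after noting the trivial nonnegativity of the lower bound. The one wrinkle you flag — that $F_0$ need not be the product measure $F_\otimes$, so one must pass through \Cref{assumption: reference} and absorb $\lambda_{KL}\rho$ into the cost (the paper's \Cref{lemma: convert arbitrary reference to product measure}) — is a legitimate detail the paper's proof glosses over; if you wanted to tighten this, you would additionally need to note that replacing the cost $c$ by $c+\lambda_{KL}\rho$ shifts the OT limit by at most $\lambda_{KL}\sup_{F\in\Pi}|\bE_F[\rho]|$, an $O(\lambda_{KL})$ term absorbed into the linear part, which the growth condition on $\rho$ in \Cref{assumption: growth rate} makes finite.
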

\Cref{thm: global sensitivity} provides an explicit upper bound on the approximation error of $\mC(\theta,v,g,\lambda,\lambda_{KL})$ to $\mC(\theta,v,g,\lambda,0)$. For DDC models, the constants $L$ and $C$ can be explicitly characterized under additional conditions (see \cite{eckstein2022quantitative} Lemma 3.5, and \cite{eckstein2024convergence} Remark 2.1.) The upper bound strictly decreases to zero as $\lambda_{KL} \downarrow 0$. Therefore, we can choose a sufficiently small $\lambda_{KL}$ (or sufficiently large $\delta$) to achieve a desired accuracy for the approximation. Our framework thus approximates the nonparametric bounds in a computationally tractable way with an explicitly quantifiable approximation error.

\begin{remark}
    Under certain conditions, one can establish the convergence of the EOT worst-case distribution to the OT worst-case distribution as the regularization parameter $\lambda_{KL} \downarrow 0$. \cite{nutz2021introduction} Theorem 5.5 provides one sufficient condition: the existence of a solution $F^{*}$ to the OT problem such that $D_{KL}(F^{*}\|F_{0}) < +\infty$.
\end{remark}

\subsection{Local Sensitivity}

\noindent The local sensitivity\footnote{See \cite{bartl2021sensitivity} Theorem 2.2 and \cite{christensen2023counterfactual} Page 276 for similar results.} approach computes the right derivative of the bounds at $\delta = 0$, which measures the effect of a small perturbation of the reference distribution on the bounds. We show the right differentiability of the bounds with respect to $\delta$. Define:
\begin{align*}
    & \Pi_{\text{TH}} := \left\{ F \in \mP(\mU) \mid \Pi(\nu_{0},\cdots,\nu_{k}), C_{3} \leq dF \leq C_{4}, \|dF\|_{Lip} \leq L \right\} \\
    & \Pi_{\text{TI}} := \left\{ F \in \mP(\mU) \mid \Pi(\nu_{0},\nu_{T}), C_{3} \leq dF \leq C_{4}, \|dF\|_{Lip} \leq L \right\}
\end{align*}
where $\|\cdot\|_{Lip}$ is the Lipschitz constant, and $C_{3},C_{4},L$ are positive constants. We assume:
\begin{assumption} \label{assumption: hadamard for delta}
    Let $\mA_{I,Lip}^{\delta}$ be either $\mA_{I,\text{TH}}^{\delta,Lip}$ or $\mA_{I,\text{TI}}^{\delta,Lip}$ defined as:
    \begin{align*}
        & \mA_{I,\text{TH}}^{\delta,Lip} := \left\{ \alpha \in \Theta \times \Pi_{\text{TH}} \mid \bE_{F} \left[m(U;\theta,v(\alpha))\right] = P_{0}, D_{KL}(F\|F_{0}) \leq \delta \right\} \\
        & \mA_{I,\text{TI}}^{\delta,Lip} := \left\{ \alpha \in \Theta \times \Pi_{\text{TI}} \mid \bE_{F} \left[m(U;\theta,v(\alpha))\right] = P_{0}, D_{KL}(F\|F_{0}) \leq \delta \right\}
    \end{align*}
    and $\mA_{I,Lip}^{\delta,*}$ be either $\mA_{I,\text{TH}}^{\delta,Lip,*}$ or $\mA_{I,\text{TI}}^{\delta,Lip,*}$ that are the sets of solutions to the optimization problems $\kappa(\delta,P_{0})$ and $\tilde{\kappa}_{\text{TI}}(\delta,P_{0})$ over $\mA_{I,\text{TH}}^{\delta,Lip}$ and $\mA_{I,\text{TI}}^{\delta,Lip}$. Assume:
    \begin{assumptionitems}
        \item $\delta^{*} := \inf\{\delta \geq 0 \mid \mA_{I,Lip}^{\delta} \neq \emptyset\}$ is finite.\footnote{The smallest radius can be computed, see \Cref{remark: delta star}.}
        \item $\mU$ is compact. \label{assumption: compact Support for derivative}
        \item $\|dF_{0}\|_{Lip} \leq L$.
        \item $0 \in \text{int} \{ DP(\alpha)(\Theta \times \Pi -\alpha) \}$ for $\forall \ \alpha \in \mA_{I,Lip}^{\delta,*}$. \label{assumption: constraint qualification for delta}
        \item For $\forall \ \delta_{t} := \delta + t + o(t)$ and $t > 0$ small enough, the optimization problem corresponding to $\delta_{t}$ has an $o(t)$-optimal solution $\alpha(t)$ such that $d(\alpha(t),\mA_{I,Lip}^{\delta,*}) = O(t)$. \label{assumption: convergence of optimizer for delta}
        \item \label{assumption: convergence of optimizer for delta1} For $\forall \ t_{n} \downarrow 0$ the sequence $\{\alpha(t_{n})\}$ has a limit point (in the norm topology) $\alpha_{0} \in \mA_{I,Lip}^{\delta,*}$.
    \end{assumptionitems}
\end{assumption}

\begin{theorem} \label{thm: derivative delta}
    Under Assumptions \ref{assumption: Asymptotic Properties1}, \ref{assumption: continuous differentiability of s and v}, and \ref{assumption: hadamard for delta}, $\kappa(\delta,P_{0})$ and $\tilde{\kappa}_{\text{TI}}(\delta,P_{0})$ are right differentiable at $\delta \geq \delta^{*}$ and their right derivatives are given by:
    \begin{align*}
        & \lim_{\epsilon \downarrow 0} \frac{\kappa(\delta + \epsilon,P_{0}) - \kappa(\delta,P_{0})}{\epsilon} = \inf_{\alpha \in \mA_{I,\text{TH}}^{\delta,Lip,*}} \sup_{\lambda_{KL} \in \Lambda_{KL}(\alpha,\delta)} - \lambda_{KL} \\
        & \lim_{\epsilon \downarrow 0} \frac{\tilde{\kappa}_{\text{TI}}(\delta + \epsilon,P_{0}) - \tilde{\kappa}_{\text{TI}}(\delta,P_{0})}{\epsilon} = \inf_{\alpha \in \mA_{I,\text{TI}}^{\delta,Lip,*}} \sup_{\lambda_{KL} \in \Lambda_{KL}(\alpha,\delta)} - \lambda_{KL}
    \end{align*}
    where $ \Lambda_{KL}(\alpha,\delta)$ is the nonempty set of Lagrange multipliers corresponding to $(\alpha,\delta)$.
\end{theorem}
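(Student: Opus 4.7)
The plan is to mirror the proof of Theorem \ref{thm: Hadamard Directional Differentiability}, but this time perturbing the inequality constraint $D_{KL}(F\|F_{0}) \leq \delta$ rather than the moment target $P$. Specifically, I would recast both bounds as a parametric optimization problem
\begin{equation*}
V(\delta) = \inf\{s(\alpha) : \alpha \in \Theta \times \Pi, \ P(\alpha) = P_{0}, \ H(\alpha) \leq \delta\},
\end{equation*}
where $H(\alpha) := D_{KL}(F\|F_{0})$ and $\Pi$ is either $\Pi_{\text{TH}}$ or $\Pi_{\text{TI}}$, and then invoke Theorem 4.25 of Bonnans and Shapiro (2013) in the constant direction $h = +1$ along $\delta$. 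The negative sign in the formula will come from the standard envelope interpretation: relaxing an active inequality constraint by $+\epsilon$ decreases the infimum at the rate equal to the associated Lagrange multiplier.

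The verification proceeds in four steps. First, I would establish continuous differentiability of $H$ on $\Pi$: since densities in $\Pi$ are bounded in $[C_{3},C_{4}]$ with Lipschitz constant $L$ and $\|dF_{0}\|_{Lip} \leq L$, the log-density ratio $\log(dF/dF_{0})$ is bounded and Lipschitz, so $DH(\alpha)(\alpha_{1}-\alpha) = \int (1+\log(dF/dF_{0})) \, d(F_{1}-F)$ exists and is continuous in $\alpha$ in the operator norm topology. Second, I would invoke Arzelà–Ascoli together with \Cref{assumption: compact Support for derivative} to conclude that $\Pi$ is compact in the topology of uniform convergence; closedness of $\mA_{I,Lip}^{\delta}$ within the compact set $\Theta \times \Pi$ and the Extreme Value Theorem then give nonempty $\mA_{I,Lip}^{\delta,*}$. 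Third, I would verify Robinson's constraint qualification at each $\alpha^{*} \in \mA_{I,Lip}^{\delta,*}$: \Cref{assumption: constraint qualification for delta} handles the equality constraint $P(\alpha) = P_{0}$, while the inequality constraint is either inactive (in which case $\lambda_{KL} = 0$) or active (in which case the multiplier is nonnegative and bounded by Bonnans–Shapiro Proposition 3.16). This yields nonempty, bounded $\Lambda_{KL}(\alpha^{*},\delta)$.

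Finally, I would apply Bonnans–Shapiro Theorem 4.25 with the perturbation $\delta \mapsto \delta + \epsilon$. Assumptions \ref{assumption: convergence of optimizer for delta} and \ref{assumption: convergence of optimizer for delta1} provide the existence of $o(t)$-optimal solutions with controlled distance to $\mA_{I,Lip}^{\delta,*}$ and extraction of a limit point — these are the analogues of \Cref{assumption: hadamard} adapted to the $\delta$-perturbation. Theorem 4.25 then yields
\begin{equation*}
V'(\delta;+1) = \inf_{\alpha^{*} \in \mA_{I,Lip}^{\delta,*}} \sup_{(\lambda,\lambda_{KL}) \in \Lambda(\alpha^{*},\delta)} \bigl(-\lambda_{KL} \cdot 1 - \lambda^{T} \cdot 0\bigr) = \inf_{\alpha^{*} \in \mA_{I,Lip}^{\delta,*}} \sup_{\lambda_{KL} \in \Lambda_{KL}(\alpha^{*},\delta)} -\lambda_{KL},
\end{equation*}
where $\lambda$ drops out because $P_{0}$ is not perturbed. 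I expect the main obstacle to be verifying that the KL inequality constraint interacts properly with the abstract Bonnans–Shapiro framework, which is precisely why the Lipschitz/boundedness restrictions in $\Pi_{\text{TH}}, \Pi_{\text{TI}}$ are introduced: without them, $H$ is only lower semicontinuous on $\mP(\mU)$ and need not be Gâteaux differentiable along arbitrary density perturbations, and $\Pi$ would fail to be compact. A subsidiary concern is identifying the multiplier from the abstract theorem with the $\lambda_{KL}$ from the dual problems, which follows from uniqueness of the KKT multiplier for the KL constraint under the constraint qualification established in step three.
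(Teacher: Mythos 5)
Your proposal follows essentially the same route as the paper: recast the bound as a parametrically perturbed program in $\delta$, establish Gateaux differentiability of the KL map on the Lipschitz-restricted set $\Pi$, obtain compactness and a nonempty optimizer set, verify Robinson's constraint qualification, and invoke Bonnans–Shapiro Theorem 4.25 in the direction $+1$. Your formula for $DH(\alpha)(\alpha_1-\alpha)$ agrees with the paper's Lemma on the KL derivative once you use that directions $F_1-F$ integrate to zero. Two small points are worth tightening: first, your closing remark about ``uniqueness of the KKT multiplier'' is neither needed nor guaranteed — the theorem naturally produces a nonempty (generally non-singleton) set $\Lambda_{KL}(\alpha,\delta)$, and the result is stated as an $\inf$/$\sup$ precisely to avoid a uniqueness claim. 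Second, your constraint-qualification step handles the equality and inequality constraints ``separately,'' which is informal; the paper makes it precise by passing through directional regularity (Bonnans–Shapiro Theorem 4.9) and observing that the KL row contributes $-[0,+\infty)-(-\infty,0]=\mathbb{R}$ so that only the moment-constraint row requires the surjectivity assumption. Your compactness step via Arzelà–Ascoli on the uniformly bounded, uniformly Lipschitz densities is a valid substitute for the paper's Prokhorov argument and buys the same conclusion.
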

\Cref{thm: derivative delta} shows the right differentiability. We can also compute the derivative of the length of the bounds. In practice, we may need to compute it numerically due to the optimization over the set of optimizers and the Lagrange multipliers.

\subsection{The Robustness Metric} \label{sec: robust metrics}

\noindent The robustness metric is the smallest deviation from the reference distribution that can lead to sensitive results (\cite{spini2021robustness}). In practice, we begin by estimating a reference scalar parameter, $\hat{s}_{F_{0}}$, under the reference distribution. If the perturbed scalar parameter $s_{F}$ is below a certain threshold, e.g., $\bar{s} = 0.95 \cdot \hat{s}_{F_{0}}$, then we may be concerned about the robustness of the results. The robustness metric is defined as:
\begin{align} \label{eq: smallest delta}
\hspace{-1em}
\begin{aligned}
    \delta(\bar{s},P) := 
    & \inf_{(\theta,v,F) \in \Theta \times \mV \times \Pi(\nu_{1},\cdots,\nu_{k})} D_{KL}(F\|F_{0}) \\
    & \text{s.t.} \quad
    \bE_{F} \left[m(U;\theta,v)\right] = P \\
    & \phantom{\text{s.t.} \quad} \bE_{F} \left[s(U;\theta,v)\right] \leq \bar{s} \\
    & \phantom{\text{s.t.} \quad} \sup_{g \in \mG} \bE_{F} \left[\psi(U;\theta,v,g)\right] = 0
\end{aligned}
\quad
\begin{aligned}
    \tilde{\delta}_{\text{TI}}(\bar{s},P) := 
    & \inf_{(\theta,v,F) \in \Theta \times \mV \times \Pi(\nu_{1},\nu_{T})} D_{KL}(F\|F_{0}) \\
    & \text{s.t.} \quad
    \bE_{F} \left[m(U;\theta,v)\right] = P \\
    & \phantom{\text{s.t.} \quad} \bE_{F} \left[s(U;\theta,v)\right] \leq \bar{s} \\
    & \phantom{\text{s.t.} \quad} \sup_{g \in \mG} \bE_{F} \left[\psi(U;\theta,v,g)\right] = 0
\end{aligned}
\end{align}
where $\bar{s} \in \mathbb{R}$ is a user-specified threshold. The optimization problem searches for a distribution $F$ in the identified set that results in $\bE_{F} \left[s(U;\theta,v)\right] \leq \bar{s}$ and is the closest to the reference distribution $F_{0}$ in terms of KL divergence.

\begin{remark} \label{remark: delta star}
    The $\delta^{*}$ in \Cref{sec: Derivative with respect to delta} can be obtained by removing the constraint for $\bar{s}$ in \eqref{eq: smallest delta}. That is, we seek the smallest radius $\delta^{*}$ such that the identified set is nonempty. See \cite{schennach2014entropic} Page 356 and \cite{christensen2023counterfactual} Section 3.3 for similar definitions.
\end{remark}

We can plot the bounds against $\delta$ and then find the radius corresponding to $\bar{s}$. Alternatively, we can compute it directly by solving \eqref{eq: smallest delta} whose duality results are given by:
\begin{theorem} \label{thm: Strong Duality smallest delta}
    Let $c(U;\theta,v,g,\lambda,\lambda_{s}) := \lambda^{T}m(U;\theta,v) + \lambda_{s}s(U;\theta,v) + \psi(U;\theta,v,g)$ where $\lambda \in \bR^{d_{P}}$. Under \Cref{assumption: Strong Duality}, the following holds:
    \begin{enumerate}[label=(\roman*)]
        \item (Minimax Duality)
        \begin{equation*}
            \delta(\bar{s},P) = \inf_{(\theta,v) \in \Theta \times \mV} \sup_{\lambda \in \bR^{d_{P}}, \lambda_{s} \geq 0, g \in \mG} \mC(\theta,v,g,\lambda,\lambda_{s}) - \lambda^{T}P - \lambda_{s} \bar{s}
        \end{equation*}
        where $\mC(\theta,v,g,\lambda,\lambda_{s})$ is the EOT problem with regularization parameter 1:
        \begin{equation*}
            \mC(\theta,v,g,\lambda,\lambda_{s}) := \inf_{\substack{F \in \Pi(\nu_{1},\cdots,\nu_{k})}} \bE_{F} \left[c(U;\theta,v,g,\lambda,\lambda_{s})\right] + D_{KL}(F\|F_{0})
        \end{equation*}
        \item (Entropic Optimal Transport Duality) We have:
            \begin{equation*}
                \mC(\theta,v,g,\lambda,\lambda_{s}) = \sup_{\left\{\phi_{i} \in L^{1}(\nu_{i})\right\}_{i=1}^{k}} \sum_{i=1}^{k} \bE_{\nu_{i}} \phi_{i}(U_{i}) - \bE_{F_{0}} \exp\left(\sum_{i=1}^{k} \phi_{i}(U_{i}) - c(U;\theta,v,g,\lambda,\lambda_{s})\right) + 1
            \end{equation*}
            Moreover, there are unique maximizers $\left\{\phi_{i}^{*}\right\}_{i=1}^{k}$ up to additive constants $F_{0}$ almost surely, and the unique worst-case distribution $F^{*}$ has the density of the form:
            \begin{equation*}
                \frac{dF^{*}(U)}{dF_{0}(U)} = \exp(\sum_{i=1}^{k} \phi_{i}^{*}(U_{i}) - c(U;\theta,v,g,\lambda,\lambda_{s})) \quad F_{0}\text{-a.s.}
            \end{equation*}
    \end{enumerate}
\end{theorem}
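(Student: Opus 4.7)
The proof will mirror the structure used for \Cref{thm: Strong Duality}, with the key observation that the Kullback–Leibler divergence now appears in the objective rather than as a constraint. This means that when we form the Lagrangian, the coefficient on $D_{KL}(F\|F_{0})$ is fixed at $1$, so after the minimax step the inner problem becomes an EOT problem with regularization parameter $\lambda_{KL} = 1$, and part (ii) reduces to a direct invocation of \Cref{thm: Strong Duality}(ii) evaluated at that regularization parameter.

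The plan for (i) is as follows. First, I would write the Lagrangian
\begin{equation*}
    L(F,\theta,v,\lambda,\lambda_{s},g) := D_{KL}(F\|F_{0}) + \lambda^{T}\bigl(\bE_{F}[m(U;\theta,v)] - P\bigr) + \lambda_{s}\bigl(\bE_{F}[s(U;\theta,v)] - \bar{s}\bigr) + \bE_{F}[\psi(U;\theta,v,g)],
\end{equation*}
so that the primal value equals $\inf_{\theta,v,F \in \Pi(\nu_{1},\dots,\nu_{k})} \sup_{\lambda,\lambda_{s}\geq 0,g \in \mG} L$, where the symmetry of $\mG$ from \Cref{assumption: symmetric and convex} plays the same role as in \Cref{thm: Strong Duality} by encoding the equality form of the structural constraint. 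Second, for fixed $(\theta,v)$, I would invoke the minimax theorem of Fan/Ricceri (as the paper does for \Cref{thm: Strong Duality}) to swap $\inf_{F}$ and $\sup_{\lambda,\lambda_{s},g}$: convexity in $F$ is provided by the strict convexity of $D_{KL}(\cdot\|F_{0})$ together with linearity of the other terms, concavity (in fact linearity) in $(\lambda,\lambda_{s},g)$ is immediate, and the topological prerequisites (lower semicontinuity, compactness of the sublevel sets of $D_{KL}$ on $\Pi(\nu_{1},\dots,\nu_{k})$, growth) follow from \Cref{assumption: Strong Duality} exactly as in \Cref{thm: Strong Duality}. Third, after the swap the inner infimum is
\begin{equation*}
    \inf_{F \in \Pi(\nu_{1},\dots,\nu_{k})} \bE_{F}[\lambda^{T}m + \lambda_{s}s + \psi] + D_{KL}(F\|F_{0}) = \mC(\theta,v,g,\lambda,\lambda_{s}),
\end{equation*}
which is precisely the EOT problem defined in the statement, and subtracting the $-\lambda^{T}P - \lambda_{s}\bar{s}$ constant terms yields (i).

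For (ii), I would specialize \Cref{thm: Strong Duality}(ii) to regularization parameter $\lambda_{KL} = 1$ with cost function $c(U;\theta,v,g,\lambda,\lambda_{s}) = \lambda^{T}m + \lambda_{s}s + \psi$. This gives both the EOT dual representation (the $\sup$ over $\phi_{i} \in L^{1}(\nu_{i})$ with the exponential penalty) and the closed-form density of the unique worst-case distribution $F^{*}$, once we verify that the cost function still satisfies the required lower semicontinuity and growth rate conditions of \Cref{assumption: Strong Duality} uniformly in the Lagrange parameters (which it does, pointwise, by the same assumptions used for $c$ in \Cref{thm: Strong Duality}).

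The main obstacle will be the minimax step in (i), specifically handling the unboundedness of the Lagrange multipliers $\lambda \in \bR^{d_{P}}$ and $\lambda_{s} \geq 0$. Unlike the setting of \Cref{thm: Strong Duality}, where $\lambda_{KL}$ multiplies the KL divergence and provides automatic coercivity on the inner problem, here the KL divergence has fixed unit coefficient, so one must show that the $\sup$ over $(\lambda,\lambda_{s})$ admits a finite optimum whenever the primal problem is feasible with finite value. The standard remedy is to verify a Slater-type constraint qualification, namely that there exists an interior feasible $F$ with $\bE_{F}[m] = P$ and $\bE_{F}[s] < \bar{s}$ and $D_{KL}(F\|F_{0}) < +\infty$, which restricts the Lagrange multipliers to a bounded set and validates the minimax exchange via a truncation-and-limit argument. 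The remainder is a direct transcription of the EOT duality argument already carried out in \Cref{thm: Strong Duality}.
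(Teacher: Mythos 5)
Your overall strategy — write the Lagrangian, swap $\inf_F$ and $\sup_{\lambda,\lambda_s,g}$ via Fan's minimax theorem, identify the inner infimum as an EOT problem with regularization parameter $1$, and then read off part (ii) by applying \Cref{thm: Strong Duality}(ii) with $\lambda_{KL}=1$ — is exactly what the paper does (the paper explicitly states that the proof of \Cref{thm: Strong Duality smallest delta} is the same as that of \Cref{thm: Strong Duality}, carried out in \Cref{sec: proof of strong duality}).

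The one place where you deviate is the final paragraph, and it introduces a concern that is not actually present. The minimax tool the paper invokes is Fan's theorem as recorded in \Cref{lemma: Minimax Theory}: it requires $X$ (here, $\Pi(\nu_1,\dots,\nu_k)$, the set being \emph{infimized} over) to be compact Hausdorff, and $f(\cdot,y)$ to be convexlike and lower semicontinuous on $X$, but it imposes \emph{no} topology, compactness, or boundedness on $Y$ (the set of multipliers being \emph{supremized} over) — $Y$ is only required to be nonempty and $f(x,\cdot)$ concavelike on it. Consequently, the unboundedness of $(\lambda,\lambda_s,g)$ needs no treatment: no Slater-type qualification, no truncation-and-limit argument. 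Note also that the same unboundedness ($\lambda\in\bR^{d_P}$, $g\in\mG$) is already present in \Cref{thm: Strong Duality} and is handled there by the same direct application of Fan's theorem; the claim that $\lambda_{KL}$ ``provides automatic coercivity on the inner problem'' in \Cref{thm: Strong Duality} but not here misreads what the compactness requirement applies to. The inner $\inf_F$ is over a compact set in both theorems, irrespective of the KL coefficient, and that is all the lemma needs. So your proof would go through, but with an unnecessary and somewhat misleading extra step; the paper's argument is strictly simpler at that point.
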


\begin{theorem} \label{thm: Strong Duality smallest delta nonstationary}
    Let $c(U;\theta,v,g,\lambda,\lambda_{s}) := \lambda^{T}m(U;\theta,v) + \lambda_{s}s(U;\theta,v) + \psi(U;\theta,v,g)$ where $\lambda \in \bR^{d_{P}}$. Under Assumptions \ref{assumption: finite moment}, \ref{assumption: symmetric and convex}, \ref{assumption: lower semi-continuous}, and \ref{assumption: Strong Duality1}, the following holds:
    \begin{enumerate}[label=(\roman*)]
        \item (Minimax Duality)
        \begin{equation*}
            \delta_{\text{TI}}(\bar{s},P) = \inf_{(\theta,v) \in \Theta \times \mV} \sup_{\lambda \in \bR^{d_{P}}, \lambda_{s} \geq 0, g \in \mG} \mC_{\text{TI}}(\theta,v,g,\lambda,\lambda_{s}) - \lambda^{T}P - \lambda_{s} \bar{s}
        \end{equation*}
        where $\mC_{\text{TI}}(\theta,v,g,\lambda,\lambda_{s})$ is defined as follows:
        \begin{equation}
            \mC_{\text{TI}}(\theta,v,g,\lambda,\lambda_{s}) := \inf_{F \in \Pi(\nu_{1},\nu_{T})} \bE_{F} \left[c(U;\theta,v,g,\lambda,\lambda_{s})\right] + D_{KL}(F\|F_{0}) \label{eq: nonstationary EOT1}
        \end{equation}
        \item The unique worst-case distribution to \eqref{eq: nonstationary EOT1} has the form:
        \begin{equation*}
           \frac{dF^{*}(U)}{dF_{0}(U)} = \exp\left(\phi_{1}^{*}(\xi_{1}) + \phi_{T}^{*}(\xi_{T}) - c(U;\theta,v,g,\lambda,\lambda_{s})\right)
        \end{equation*}
        where $\phi_{1}^{*}(\xi_{1})$ and $\phi_{T}^{*}(\xi_{T})$ are the unique maximizers (up to an additive constant) to:
        \begin{equation*}
            \sup_{\phi_{1} \in L^{1}(\nu_{1}),\phi_{T} \in L^{1}(\nu_{T})}\bE_{\nu_{1}} \phi_{1}(\xi_{1}) + \bE_{\nu_{T}} \phi_{T}(\xi_{T}) - \bE_{R_{1,T}} \exp\left(\phi_{1}(\xi_{1})+\phi_{T}(\xi_{T})\right) + 1
        \end{equation*}
        where the auxiliary reference measure $R_{1,T}$ is defined as:
        \begin{equation*}
            dR_{1,T}(\xi_{1},\xi_{T}) := \int_{\xi_2, \dots, \xi_{T-1}} \exp\left(-c(U;\theta,v,g,\lambda,\lambda_{s})\right) \, dF_0(\xi_1, \dots, \xi_T)
        \end{equation*}
        Furthermore, the solution $F^{*}$ has the Markov property, i.e., $F^{*} \in \Pi_{\text{Markov}}(\nu_{1},\nu_{T})$.
        \item (Equivalence) $\tilde{\delta}_{\text{TI}}(\bar{s},P) = \delta_{\text{TI}}(\bar{s},P)$ where $\delta_{\text{TI}}(\bar{s},P)$ is the optimal value of the optimization problem in \eqref{eq: smallest delta} for the time-inhomogeneous case with the first-order Markov property constraint on $F$.
    \end{enumerate}
\end{theorem}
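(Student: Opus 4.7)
The plan is to mirror the proof strategy of \Cref{theorem: nonstationary duality}, swapping the roles of objective and constraint: here $D_{KL}(F\|F_0)$ is the objective, so its effective KL-regularization parameter is hardwired at $1$, while $\bE_F[s(U;\theta,v)] \leq \bar s$ becomes a Lagrange constraint with multiplier $\lambda_s \geq 0$. First, I would relax the Markov condition in \eqref{eq: smallest delta} (as in the \blueref{eq: relaxed nonstationary} problem), form the Lagrangian, and invoke the minimax theorem of \cite{fan1953minimax,ricceri1998minimax} to exchange the inner $\inf_F$ over $\Pi(\nu_1,\nu_T)$ with the outer $\sup$ over $(\lambda,\lambda_s,g)$. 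The minimax hypotheses are supplied by linearity of the Lagrangian in $(\lambda,\lambda_s,g)$, convexity of $F \mapsto \bE_F[\cdot] + D_{KL}(F\|F_0)$, weak compactness of $\Pi(\nu_1,\nu_T)$ (from \Cref{assumption: compact Support}), and the lower semicontinuity of \Cref{assumption: lower semi-continuous}. This rewrites $\tilde{\delta}_{\text{TI}}(\bar s,P)$ as $\inf_{(\theta,v)}\sup_{(\lambda,\lambda_s\ge 0,g)} \mC_{\text{TI}}(\theta,v,g,\lambda,\lambda_s) - \lambda^T P - \lambda_s \bar s$, giving part (i).

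Second, I would apply the decomposition developed for \Cref{thm: nonstationary worst-case}: split any $F \in \Pi(\nu_1,\nu_T)$ into its endpoint marginal $F^{1,T}$ and the conditional $F(\cdot \mid \xi_1,\xi_T)$. Because the cost $c(U;\theta,v,g,\lambda,\lambda_s)$ inherits the pairwise-additive structure from \Cref{assumption: nonstationary Markov}, the conditional subproblem is an unconstrained KL minimization with explicit minimizer, and its optimal value, up to terms depending only on the endpoints, reduces to a log-density of the auxiliary measure $R_{1,T}$ stated in part (ii). The remaining endpoint subproblem is a static Schrödinger bridge between $\nu_1$ and $\nu_T$ with reference $R_{1,T}$ and unit regularization, whose standard EOT duality (valid under \Cref{assumption: reference1}, which supplies the mutual absolute continuity) produces the unique potentials $\phi_1^*,\phi_T^*$ and the closed-form density $dF^*/dF_0 = \exp(\phi_1^*(\xi_1) + \phi_T^*(\xi_T) - c(U))$. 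Pairwise additivity of $c$ then makes this density factorize as a product of pairwise factors along the path, so $F^*$ is a pairwise Markov random field on a chain and hence a Markov chain, delivering $F^* \in \Pi_{\text{Markov}}(\nu_1,\nu_T)$.

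Equivalence in part (iii) follows in two directions: $\tilde{\delta}_{\text{TI}}(\bar s,P) \le \delta_{\text{TI}}(\bar s,P)$ because $\Pi_{\text{Markov}}(\nu_1,\nu_T) \subseteq \Pi(\nu_1,\nu_T)$ enlarges the feasible set, and the reverse inequality holds because the relaxed worst case $F^*$ from part (ii) is itself Markov and hence already feasible for the Markov-constrained problem. Unlike the analogous equivalence in \Cref{theorem: nonstationary duality}, no auxiliary condition $\lambda_{KL}^* > 0$ is needed, since the KL regularization is built into the objective at unit strength throughout. The main obstacle I anticipate is rigorously justifying the minimax swap over the unbounded multiplier spaces $\lambda \in \bR^{d_P}$ and $\lambda_s \in \bR_+$: one must show that the concave upper envelope over $(\lambda,\lambda_s,g)$ remains finite and that the $\sup$ can effectively be truncated to a compact set at every candidate $F$, which follows from inf-compactness of $D_{KL}(\cdot\|F_0)$ together with the uniform boundedness of $s$, $m$, $\psi$ from \Cref{assumption: bounded1}, but requires careful bookkeeping. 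Once that swap is secured, the remaining EOT-decomposition and Markov-factorization steps proceed essentially verbatim from the proof of \Cref{theorem: nonstationary duality}.
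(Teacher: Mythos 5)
Your proposal matches the paper's approach exactly; indeed, the paper states that the proof of this theorem is identical to the proof of \Cref{theorem: nonstationary duality}, and you have correctly reproduced that argument. You identify the relaxation to $\Pi(\nu_1,\nu_T)$, the Fan minimax swap, the L\'eonard decomposition of $D_{KL}(F\|R)$ into an endpoint part and a conditional part, the static Schr\"odinger-bridge duality for the endpoint marginal, and the Markov-chain factorization coming from pairwise additivity (\Cref{assumption: nonstationary Markov}). You also correctly spot the one genuine difference from \Cref{theorem: nonstationary duality}: because KL is now the objective rather than a constraint, the effective regularization is hardwired to $1 > 0$, so the equivalence in part (iii) holds unconditionally and needs no $\lambda_{KL}^*>0$ hypothesis.

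One small correction to your anticipated obstacle: Fan's minimax theorem as stated in \Cref{lemma: Minimax Theory} requires compactness only of the \emph{primal} variable space $X$ (here $\Pi(\nu_1,\nu_T)$, compact under \Cref{assumption: compact Support} by \Cref{lemma: compact space Pi2}), together with the concavelike condition on the multiplier side and lower semicontinuity in $F$; the multiplier set $Y$ need only be a nonempty set, not compact, so no truncation of $(\lambda,\lambda_s,g)$ to a compact region is required. The concavelike condition is automatic from linearity of the Lagrangian in $(\lambda,\lambda_s,g)$, and the boundedness in \Cref{assumption: bounded1} is used for lower semicontinuity of $F \mapsto \bE_F[c] + D_{KL}(F\|F_0)$ rather than for controlling the supremum over multipliers. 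Once that is noted, your argument is complete.
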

\Cref{thm: Strong Duality smallest delta nonstationary} provides the dual formulation for computing the smallest radius in the time-inhomogeneous case. The equivalence holds as the regularization parameter is 1.

\section{Practical Implementation} \label{sec: Practical Implementation}

\noindent This section presents the practical implementation of the proposed framework. \Cref{sec: EOT and Sinkhorn Algorithm} reviews the entropic optimal transport problem and the Sinkhorn algorithm. \Cref{sec: Practical Implementation of the Proposed Framework} proposes a computationally feasible algorithm.

\subsection{Entropic Optimal Transport and Sinkhorn Algorithm} \label{sec: EOT and Sinkhorn Algorithm}
\noindent This section reviews the Sinkhorn algorithm for the entropic optimal transport problem.\footnote{See \cite{sinkhorn1967concerning}, \cite{cuturi2013sinkhorn} and \cite{nutz2021introduction}.} Let $(\mU_{i}, \nu_{i})$ for $i=1, \dots, k$ be probability spaces, where $\mU_{i}$ is the support for the random variable $U_{i}$. For a cost function $c: \mU_{1} \times \dots \times \mU_{k} \to \mathbb{R}$, the entropic optimal transport problem\footnote{If $F_{0} \neq F_{\otimes}$, then \Cref{lemma: convert arbitrary reference to product measure} reformulates the problem as the \ref{EOT problem} problem.} with regularization parameter $\lambda_{KL} > 0$ is defined as:
\begin{equation}
    \mathcal{C}_{\lambda_{KL}} := \inf_{F \in \Pi(\nu_{1}, \dots, \nu_{k})} \bE_{F} \left[c(U_{1}, \dots, U_{k})\right] + \lambda_{KL} D_{KL}(F \| F_{\otimes}) \tag{\blue{EOT}} \label{EOT problem}
\end{equation}
whose dual is given by:
\begin{equation}
    \mathcal{C}_{\lambda_{KL}} = \sup_{\left\{\phi_{i} \in L^{1}(\nu_{i})\right\}_{i=1}^{k}} \sum_{i=1}^{k} \bE_{\nu_{i}} \phi_{i}(U_{i}) - \lambda_{KL} \bE_{F_{\otimes}} \exp\left(\frac{\sum_{i=1}^{k} \phi_{i}(U_{i}) - c(U_{1}, \dots, U_{k})}{\lambda_{KL}}\right) + \lambda_{KL} \ \label{Dual to EOT}
\end{equation}
where $\phi_{i}$ is the test function for the marginal distribution constraint $\nu_{i}$. The dual problem is a concave maximization problem over $\{\phi_{i}\}_{i=1}^{k}$. The worst-case distribution is given by:
\begin{equation*}
    \frac{dF^{*}(U)}{dF_{\otimes}(U)} = \exp\left(\frac{\sum_{i=1}^{k} \phi_{i}^{*}(U_{i}) - c(U_{1}, \dots, U_{k})}{\lambda_{KL}}\right)
\end{equation*}
where the optimizers $(\phi_{1}^{*}, \dots, \phi_{k}^{*})$ are known as the optimal EOT potentials (also called Schr\"odinger potentials), which are the solutions to the Schr\"odinger equation (SE):
\begin{align}
    \phi_{1}(U_{1}) &= -\lambda_{KL} \log \left(\mathbb{E}_{F_{\otimes, -1}} \exp\left(\frac{\sum_{i=2}^{k} \phi_{i}(U_{i}) - c(U_{1}, \dots, U_{k})}{\lambda_{KL}}\right) \right) \quad \nu_{1} \text{-a.s.} \tag{SE1} \label{eq:SE1} \\
    &\vdots \notag \\
    \phi_{k}(U_{k}) &= -\lambda_{KL} \log \left(\mathbb{E}_{F_{\otimes, -k}} \exp\left(\frac{\sum_{i=1}^{k-1} \phi_{i}(U_{i}) - c(U_{1}, \dots, U_{k})}{\lambda_{KL}}\right) \right) \quad \nu_{k} \text{-a.s.} \tag{SEk} \label{eq:SEk}
\end{align}
where $F_{\otimes, -i}$ is the product measure of all marginals except for the $i$-th marginal. The Schr\"odinger equations \eqref{eq:SE1} to \eqref{eq:SEk} can be interpreted as the variational first-order conditions for optimality (see \cite{nutz2021introduction} Remark 3.4). Moreover, they also characterize the marginal constraints. To see this, define:
\begin{equation*}
    dF(U) = \exp\left(\frac{\sum_{i=1}^{k} \phi_{i}(U_{i}) - c(U_{1}, \dots, U_{k})}{\lambda_{KL}}\right) dF_{\otimes}(U)
\end{equation*}
The marginal density can be obtained by integrating out the other marginals; therefore:
\begin{equation*}
    \text{(SEi)} \Leftrightarrow \text{the }i\text{-th marginal of } F \text{ is } \nu_{i}
\end{equation*}
The Sinkhorn algorithm can be interpreted as a coordinate ascent scheme for the optimization problem \eqref{Dual to EOT}. It is a computationally fast\footnote{For its convergence rate, see \cite{peyre2019computational}, \cite{carlier2022linear}, and \cite{eckstein2022quantitative}.}, iterative method for solving \eqref{eq:SE1}-\eqref{eq:SEk}.
\begin{myalgo}[Sinkhorn Algorithm] \label{alg:multimarginal_sinkhorn}
    Initialize $\phi_{i}^{(0)} := 0$ for $i=1, \dots, k$. For iteration $t$, sequentially update for $j=1, \dots, k$ by:
    \begin{align*}
        \phi_{j}^{(t+1)}(U_{j}) &:= -\lambda_{KL} \log \int \exp\left( \frac{\sum_{i \neq j} \phi_{i}^{(t)}(U_{i}) - c(U_{1}, \dots, U_{k})}{\lambda_{KL}} \right) dF_{\otimes, -j}.
    \end{align*}
    Stop if $\sup_{j} \|\phi_{j}^{(t+1)} - \phi_{j}^{(t)}\|_{2} < \epsilon$ for a tolerance $\epsilon > 0$.
\end{myalgo}

\subsection{Proposed Algorithm} \label{sec: Practical Implementation of the Proposed Framework}

\noindent For given $(\theta, v, g, \lambda, \lambda_{KL})$, the EOT problem provides the worst-case distribution, $F^{*}$. This allows us to update $v$ by solving the structural constraint with $F^{*}$. We therefore propose an iterative algorithm to solve the minimax problem. The algorithm proceeds by alternating between updating $(\theta, v)$ and the dual (Lagrange multiplier) variables, $(g, \lambda, \lambda_{KL})$. After initializing all parameters, each iteration $t$ involves the following steps:
\begin{myalgo} \label{alg: Iterative Algorithm for Distributional Robustness}
    Initialize $(\theta^{(0)},v^{(0)},g^{(0)},\lambda^{(0)},\lambda_{KL}^{(0)})$. At iteration $t$,
    \begin{enumerate}
        \item \textbf{Update Model Primitives $(\theta, v)$:} For $(\theta^{(t)}, v^{(t)}, g^{(t)}, \lambda^{(t)}, \lambda_{KL}^{(t)})$, update\footnote{If gradient-based methods are used, then we smooth the non-differentiable components (e.g., the indicator function in \Cref{ex: Dynamic Discrete Choice Models}) using a smooth approximation.} the model parameters $(\theta, v)$ by:
        \begin{enumerate}
            \item[(a)] Propose a new candidate $\theta^{(t+1)}$.
            \item[(b)] Solve the EOT problem with $(\theta^{(t+1)}, v^{(t)}, g^{(t)}, \lambda^{(t)}, \lambda_{KL}^{(t)})$ and obtain $F^{*}$.
            \item[(c)] Update $v^{(t+1)}$ by solving the structural constraint with $F^{*}$.
            \item[(d)] Accept/reject the proposed $(\theta^{(t+1)}, v^{(t+1)})$.
        \end{enumerate}
        \item \textbf{Update Dual Variables $(g, \lambda, \lambda_{KL})$:} For $(\theta^{(t+1)}, v^{(t+1)}, g^{(t)}, \lambda^{(t)}, \lambda_{KL}^{(t)})$, update $(g, \lambda, \lambda_{KL})$ following the same procedure as in the previous step.
        \item Iterate until convergence, or a pre-specified number of iterations is reached.
    \end{enumerate}
\end{myalgo}
The computational cost per iteration mainly comes from solving the EOT problem and the structural constraint, which are both computationally fast. However, the number of iterations required for convergence can be much larger, as our optimization problem is a minimax problem that is potentially non-differentiable.

\section{Empirical Application: Infinite Horizon DDC} \label{sec: Application}

\noindent This section applies our framework to an infinite-horizon dynamic demand for new cars in the UK, France, and Germany. Due to the unobserved product characteristics, the indirect utility of purchasing is the latent variable. To estimate the price elasticity and conduct welfare analysis of electric vehicles (EV) subsidy, we require a distributional assumption for the latent variable to solve the Bellman equation. Existing literature often uses an AR(1) process (e.g., \cite{schiraldi2011automobile,gowrisankaran2012dynamics}), which may be misspecified. For example, the indirect utility may exhibit nonlinear dynamics. Therefore, we conduct a sensitivity analysis with respect to this reference distribution.

\subsection{Data}

\noindent We use the trim-level\footnote{In the automobile industry, a trim-level refers to a specific version of a vehicle model that comes with a particular set of features, options, and styling elements.} data from IHS Markit during the period from 2014 to 2023. The monthly level dataset contains sales, list price, and characteristics of car models in the UK, France and Germany, which are treated as three independent markets in our analysis. To construct the final dataset, we first aggregate data from the trim-level to the model-level. Then, we aggregate fuel types into: petrol, diesel, electric, and hybrid. We remove models whose total sales during the data period are less than 20,000.\footnote{In addition, we exclude car-month observations with sales below 150 units in the German market and below 5 units in the French market.} Finally, we adjust list prices by subtracting EV subsidies. The initial market size for January 2014 is calculated by subtracting the number of registered cars from the total population of each country. The market size is then updated each subsequent month by subtracting the total number of cars sold in the preceding period.

Table \ref{tab:merged_stats} presents the summary statistics. The three markets offer around 141–215 products from around 23 to 30 brands per month. In terms of average sales per model, Germany has 81,401 units, closely followed by France with 81,357 units, and the UK with 70,682 units. The average price is around \$33,444 in the UK, \$27,830 in France, and \$36,117 in Germany.

\begin{table}[!htbp]
    \centering
    \footnotesize
    \caption{Summary Statistics by Country (Monthly, 2014-2023)}
    \label{tab:merged_stats}
    \begin{threeparttable}
    \begin{tabular}{l c c c cc cc cc}
        \toprule
        & \textbf{Avg \#} & \textbf{Avg \#} & \textbf{Avg \#} & \textbf{Price (USD)} & \textbf{Horsepower} & \textbf{Weight (kg)} \\
        \cmidrule(lr){2-2} \cmidrule(lr){3-3} \cmidrule(lr){4-4} \cmidrule(lr){5-5} \cmidrule(lr){6-6} \cmidrule(lr){7-7}
        \textbf{Country} & \textbf{Products} & \textbf{Brands} & \textbf{Sales} & \textbf{Mean} & \textbf{Mean} & \textbf{Mean} \\
        \midrule
        UK & 196 & 30 & 70,682 & 33,444 & 140 & 1,836 \\
        France & 141 & 23 & 81,357 & 27,830 & 112 & 1,702 \\
        Germany & 215 & 26 & 81,401 & 36,117 & 151 & 1,952 \\
        \bottomrule
    \end{tabular}
    \textbf{Note:} The price is adjusted for EV subsidies. First two columns are average number of products and brands per month. Average sales is the average number of cars sold per model. Mean price, horsepower, and weight are weighted by total sales.
    \end{threeparttable}
\end{table}

\subsection{The Model}

\noindent The model is infinite horizon. At each month $t$, a consumer $i$ chooses $j \in \mJ_{t} \bigcup \{0\}$ where $\mJ_{t}$ is the set of available cars at $t$, and 0 is the outside option of not purchasing. Each car $j \in \mJ_{t}$ is characterized by a vector of observable characteristics $x_{jt}$, price $p_{jt}$, and an unobserved characteristic $\xi_{jt}$. The period utility of choosing $j$ is given by:
\begin{align*}
    u(j,x_{t},p_{t},\xi_{t},\eps_{it}) = 
    \begin{cases}
        \alpha p_{jt} + x_{jt}^{T} \theta + \xi_{jt} + \eps_{ijt} & \text{if } j \in \mJ_{t} \\
        \eps_{0it} & \text{if } j = 0
    \end{cases}
\end{align*}
where $\eps_{it}$ is a vector of i.i.d. type I extreme value utility shocks, and $x_{t},p_{t},\xi_{t}$ are the vectors of observable characteristics, prices, and unobserved characteristics for all cars in $\mJ_{t}$.

We assume a purchase is a terminating decision, i.e., consumers exit the market after the purchase. The conditional value function of purchasing car $j$ can be written as the sum of the current period utility and the flow utility after purchase:
\begin{equation*}
    v_{j}(x_{t},p_{t},\xi_{t}) = \frac{x_{jt}^{T} \theta + \xi_{jt}}{1-\beta} + \alpha p_{jt}
\end{equation*}
where $\beta = 0.975$ is the discount factor. The inclusive value of purchasing is defined as:
\begin{equation*}
    \omega_{t} = \log \sum_{j \in \mJ_{t}} \exp\left(\frac{x_{jt}^{T} \theta + \xi_{jt}}{1-\beta} + \alpha p_{jt}\right)
\end{equation*}
Following \cite{schiraldi2011automobile} and \cite{gowrisankaran2012dynamics}, we assume:
\begin{assumption}[Inclusive Value Sufficiency\footnote{The IVS assumption has also been used in \cite{hendel2006measuring,melnikov2013demand,osborne2018approximating}.} (IVS)]
    $G(\omega_{t+1} | x_{t},p_{t},\xi_{t})$ can be summarized by $G(\omega_{t+1} | \omega_{t})$ where $G$ is the conditional distribution function.
\end{assumption}
Under the IVS assumption, $\omega_{t}$ is the only state variable, and the value function $V(\omega)$ is the solution to the smoothed Bellman equation:
\begin{equation}
    V(\omega) = \log \left(\exp\left(v_{0}(\omega)\right) + \exp\left(v_{1}(\omega)\right)\right) \label{eq: value function}
\end{equation}
where $v_{0}(\omega) = \beta \bE\left[V(\omega')|\omega\right]$ and $v_{1}(\omega) = \omega$ are the conditional value functions of not purchasing and purchasing, respectively. The market share of car $j$ at time $t$ is given by:
\begin{equation}
    s_{jt}(x_{t},p_{t},\xi_{t}) = \underbrace{\frac{\exp\left(\omega_{t}\right)}{\exp(V(\omega_{t}))}}_{\text{Probability of Purchasing a Car}} \times \underbrace{\frac{\exp\left(v_{j}(x_{t},p_{t},\xi_{t})\right)}{\exp(\omega_{t})}}_{\text{Conditional Probability of Purchasing Car } j} \label{eq: market share}
\end{equation}

\subsection{First-Stage Estimation}

\noindent In each market, the car with the highest total sales is set as the reference product, denoted by $r$.\footnote{The reference cars are Volkswagen Golf (Petrol) in Germany, Peugeot 208 (Petrol) in France, and Ford Fiesta (Petrol) in the UK. The reference car for each country is always available in that country's market.} Taking the log-odds ratio for cars $j$ and $r$ at time $t$ yields:
\begin{equation}
    \log \left(\frac{s_{jt}}{s_{rt}}\right) =  \alpha \Delta p_{jt} + \frac{\Delta x_{jt}^{T}\theta}{1-\beta} + \frac{\Delta \xi_{jt}}{1-\beta} \label{eq: log odds ratio}
\end{equation}
where $\Delta p_{jt} := p_{jt} - p_{rt}$, $\Delta x_{jt} := x_{jt} - x_{rt}$, and $\Delta \xi_{jt} := \xi_{jt} - \xi_{rt}$. The parameters $(\alpha,\theta)$ are identified\footnote{The intercept and reference fixed effects are not identified from \eqref{eq: log odds ratio}; they are absorbed into $\xi_{rt}$.} by the BLP instruments \cite{berry1993automobile}. Moreover, $\Delta \xi_{jt}$ can be recovered by fitting the relative market share $\frac{s_{jt}}{s_{rt}}$, while the unobserved characteristic of the reference car, $\xi_{rt}$, cannot.

The exogenous characteristics, $x_{jt}$, include vehicle log-weight, log-horsepower, brand fixed effects, SUV fixed effect, and fuel type fixed effects.\footnote{The instruments are the exogenous product characteristics, average log-weight and log-horsepower of competitors' products, the proportion of competitors' products, the proportion of hybrid cars squared, and the number of brands. A competitor product is defined as a car whose brand is not that of $r$ or $j$.} Table \ref{tab:blp_results} presents the regression results. Price coefficients are negative and significant in all markets, ranging from -0.158 in France to -0.192 in Germany. Relative to petrol vehicles, EVs are preferred in the UK (0.019) and France (0.004), but not in Germany (0.000). Hybrid vehicles are valued positively across all three markets, with coefficients ranging from 0.029 in Germany to 0.037 in France. In contrast, diesel has a negative coefficient in the UK (-0.015) but positive effects in Germany (0.008) and France (0.006).

\begin{table}[h!]
\centering
\footnotesize
\caption{Instrumental Variable Regression Results}
\label{tab:blp_results}
\begin{threeparttable}
\begin{tabular}{l cc cc cc}
\toprule
& \multicolumn{2}{c}{\textbf{UK}} & \multicolumn{2}{c}{\textbf{Germany}} & \multicolumn{2}{c}{\textbf{France}} \\
\cmidrule(lr){2-3} \cmidrule(lr){4-5} \cmidrule(lr){6-7}
& \textbf{Coef.} & \textbf{Std. Err.} & \textbf{Coef.} & \textbf{Std. Err.} & \textbf{Coef.} & \textbf{Std. Err.} \\
\midrule
\textbf{Price} & $-0.178$ & $(0.043)$ & $-0.192$ & $(0.004)$ & $-0.158$ & $(0.042)$ \\
\textbf{Log horsepower} & $\phantom{-}0.140$ & $(0.038)$ & $\phantom{-}0.165$ & $(0.003)$ & $\phantom{-}0.065$ & $(0.031)$ \\
\textbf{Log weight} & $-0.008$ & $(0.003)$ & $\phantom{-}0.046$ & $(0.003)$ & $\phantom{-}0.002$ & $(0.001)$ \\
\textbf{SUV} & $\phantom{-}0.018$ & $(0.002)$ & $-0.003$ & $(0.001)$ & $\phantom{-}0.013$ & $(0.002)$ \\
\textbf{Diesel} & $-0.015$ & $(0.003)$ & $\phantom{-}0.008$ & $(0.001)$ & $\phantom{-}0.006$ & $(0.004)$ \\
\textbf{Electric} & $\phantom{-}0.019$ & $(0.004)$ & $\phantom{-}0.000$ & $(0.001)$ & $\phantom{-}0.004$ & $(0.002)$ \\
\textbf{Hybrid} & $\phantom{-}0.035$ & $(0.010)$ & $\phantom{-}0.029$ & $(0.001)$ & $\phantom{-}0.037$ & $(0.009)$ \\[0.5em]
\midrule
Adjusted $R^{2}$ & \multicolumn{2}{c}{0.549} & \multicolumn{2}{c}{0.533} & \multicolumn{2}{c}{0.803} \\
\# of Month-Years & \multicolumn{2}{c}{120} & \multicolumn{2}{c}{120} & \multicolumn{2}{c}{120} \\
\# of Obs. & \multicolumn{2}{c}{23,451} & \multicolumn{2}{c}{25,715} & \multicolumn{2}{c}{16,862} \\
\bottomrule
\end{tabular}
\textbf{Note:} An observation is a pair of $(j,t)$ where $j$ is a car model other than the reference car $r$ and $t$ is the time period. Standard errors are in parentheses. Brand fixed effects are not reported.
\end{threeparttable}
\end{table}

\subsection{The Reference Distribution and Scalar Parameters of Interest}

\noindent We first define the reference distribution and then introduce the scalar parameters of interest. After the first stage estimation, we can calculate $\omega_{t}$ up to $\xi_{rt}$ as:
\begin{equation} \label{eq: inclusive value up to xi_rt}
    \omega_{t} = \log \sum_{j \in \mJ_{t}} \exp\left(\frac{x_{jt}^{T} \theta + \Delta \xi_{jt}}{1-\beta} + \alpha p_{jt}\right) + \frac{\xi_{rt}}{1-\beta}
\end{equation}
As we have identified the utility parameters, the potential sensitivity of the empirical results solely arises from the distributional assumption on $\omega_{t}$. 

The reference transition of $\omega_{t}$ to solve the Bellman equation is an AR(1) process:
\begin{equation}
    \omega_{t} = \gamma_{0} + \gamma_{1} \omega_{t-1} + \eta_{t} \label{eq: AR(1) process}
\end{equation}
where $\eta_{t}$ follows an i.i.d. normal distribution with mean 0 and variance $\sigma^{2}$.

The parameters $(\gamma_{0}, \gamma_{1}, \sigma^{2})$ are estimated using an iterative procedure. We begin with an initial guess of $(\gamma_{0}, \gamma_{1}, \sigma^{2})$ and circulate between: (i) solving the Bellman equation \eqref{eq: value function}, (ii) recovering $\{\omega_{t}\}_{t=1}^{T}$ from the market share of purchasing (the first part of \eqref{eq: market share}), (iii) updating $(\gamma_{0}, \gamma_{1}, \sigma^{2})$ by refitting an AR(1) process \eqref{eq: AR(1) process} until we find a fixed point. The reference distribution $F_{0}$ for $(\omega,\omega')$ is the product of the transition kernel of the estimated AR(1) process and its stationary distribution $\nu_{0}$. The perturbation set is defined as:
\begin{equation*}
    \mF := \left\{ F \in \mP(\mU) \mid F \in \Pi(\nu_{0},\nu_{0}), D_{KL}(F\|F_{0}) \leq \delta \right\}
\end{equation*}

We consider two scalar parameters: (i) the industrywide price elasticity of demand, (ii) the welfare analysis of an additional EV subsidy. For both cases, the transition of $\omega_{t}$ is unchanged, i.e., we assume consumers' beliefs about the transition of $\omega_{t}$ stay the same.

For industrywide price elasticity at period $t_{1}$, we consider a 1\% increase in the price of all cars. The future $\omega_{t_{1}+1}$ is conditional on:
\begin{equation*}
    \omega'_{t_{1}} = \log \sum_{j \in \mJ_{t_{1}}} \exp\left(\frac{x_{jt_{1}}^{T} \theta + \Delta \xi_{jt_{1}}}{1-\beta} + 1.01 \cdot \alpha p_{jt_{1}}\right) + \frac{\xi_{rt_{1}}}{1-\beta}
\end{equation*}
and the industrywide price elasticity at time $t_{1}$ is:
\begin{equation*}
    \frac{s_{0t_{1}} - s_{0}(\omega'_{t_{1}})}{1-s_{0t_{1}}} \times 100
\end{equation*}
where $s_{0}(\omega'_{t_{1}})$ is the model-implied market share of not purchasing.

For EV subsidy at $t_{1}$, we consider an additional 3,000 USD subsidy. Denote by $\mJ_{t_{1},\text{EV}}$ the set of EVs at $t_{1}$. The future $\omega_{t_{1}+1}$ is conditional on:
\begin{equation*}
    \omega^{\text{EV}}_{t_{1}} = \log \sum_{j \in \mJ_{t_{1}}} \exp\left(\frac{x_{jt_{1}}^{T} \theta + \Delta \xi_{jt_{1}}}{1-\beta} + \alpha \left(p_{jt_{1}} - \mathbbm{1}(j \text{ is an EV}) \cdot 3000\right)\right) + \frac{\xi_{rt_{1}}}{1-\beta}
\end{equation*}
and the consumer surplus from the subsidy is given by:
\begin{equation*}
    \text{Consumer Surplus} = \frac{V(\omega^{\text{EV}}_{t_{1}}) - V(\omega_{t_{1}})}{-\alpha} \times M_{t_{1}}
\end{equation*}
where $M_{t_{1}}$ is the market size at time $t_{1}$.\footnote{The cost of the subsidy is $\text{Cost} = 3000 \cdot M_{t_{1}} \cdot \sum_{j \in \mJ_{t_{1},\text{EV}}} \frac{\exp\left(v_{j}(x_{jt_{1}},p_{jt_{1}}-3000,\xi_{jt_{1}})\right)}{\exp(V(\omega^{\text{EV}}_{t_{1}}))}$.
}

\subsection{Sensitivity Analysis}

\noindent Our framework requires the constraints to be linear in $F$, while the Bellman equation \eqref{eq: value function} is not. We first reformulate it to fit into our framework. By the \textit{Hotz-Miller Inversion Lemma} (\cite{hotz1993conditional}), we have:
\begin{equation}
    V(\omega) = \omega - \log s_{1}(\omega) \label{eq: value function inversion}
\end{equation}
Taking the log-odds ratio of purchasing and not purchasing, and using \eqref{eq: value function inversion}, we have:

\begin{equation}
    \log \left(\frac{1-s_{0}(\omega)}{s_{0}(\omega)}\right) = \omega - \beta \bE \left[\omega' - \log (1-s_{0}(\omega')) |\omega\right] \label{eq: structural constraint}
\end{equation}
The above constraint is the fixed point problem on the market share space (see \cite{aguirregabiria2002swapping}). The right-hand side is linear in the conditional distribution. The following lemma establishes the relationship between the fixed point problems in \eqref{eq: value function} and \eqref{eq: structural constraint}.
\begin{assumption} \label{assumption: fixed point relationship}
    Assume the inclusive value $\omega$ has compact support $\Omega \subset \mathbb{R}$ with nonempty interior equipped with the sup-norm and $\bE\left[f(\omega')|\omega\right] \in C(\Omega)$ for any $f \in C(\Omega)$ where $C(\Omega)$ is the space of continuous functions on $\Omega$.
\end{assumption}
\begin{lemma} \label{lemma: fixed point relationship}
    The following holds:
    \begin{enumerate}[label=(\roman*)]
        \item Under \Cref{assumption: fixed point relationship}, the fixed point problem \eqref{eq: value function} has a unique solution on $C(\Omega)$.
        \item The fixed point problem \eqref{eq: structural constraint} has a unique solution if and only if the fixed point problem \eqref{eq: value function} has a unique solution.
        \item If \eqref{eq: value function} and \eqref{eq: structural constraint} both have unique solutions, then it holds that $1-s_{0}(\omega) = \exp\left(\omega-V(\omega)\right)$ where $s_{0}(\omega)$ and $V(\omega)$ are the solutions to \eqref{eq: value function} and \eqref{eq: structural constraint}, respectively.
    \end{enumerate}
\end{lemma}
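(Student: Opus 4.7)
The plan is to handle (i) by a direct application of the Banach fixed point theorem, and then obtain (ii) and (iii) by constructing an explicit bijection between the fixed-point sets of the two functional equations.

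For (i), I would define the Bellman operator $T: C(\Omega) \to C(\Omega)$ via
\[
(TV)(\omega) := \log\bigl(\exp(\beta\, \bE[V(\omega')|\omega]) + \exp(\omega)\bigr).
\]
First I would verify that $T$ maps $C(\Omega)$ into itself: by \Cref{assumption: fixed point relationship}, the inner conditional expectation is continuous in $\omega$, and the log-sum-exp of continuous functions is continuous. Second, I would show $T$ is a $\beta$-contraction in the sup-norm via the elementary inequality $|\log(e^{a}+e^{c}) - \log(e^{b}+e^{c})| \leq |a-b|$ (since the derivative $e^{a}/(e^{a}+e^{c})$ lies in $[0,1]$), which yields $\|TV_{1}-TV_{2}\|_{\infty} \leq \beta \|V_{1}-V_{2}\|_{\infty}$. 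Because $C(\Omega)$ equipped with the sup-norm is complete for compact $\Omega$ and $\beta \in (0,1)$, the Banach fixed point theorem delivers a unique fixed point.

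For (ii) and (iii), I would build the bijection between fixed points via the Hotz--Miller inversion. Define
\[
\Phi(V)(\omega) := \frac{\exp(\beta\, \bE[V(\omega')|\omega])}{\exp(\beta\, \bE[V(\omega')|\omega]) + \exp(\omega)}, \qquad \Psi(s_{0})(\omega) := \omega - \log(1 - s_{0}(\omega)).
\]
A direct computation shows that whenever $V = TV$, one has $1 - \Phi(V)(\omega) = \exp(\omega - V(\omega))$; substituting this into the right-hand side of \eqref{eq: structural constraint} reproduces the Bellman equation, so $\Phi$ sends fixed points of \eqref{eq: value function} to fixed points of \eqref{eq: structural constraint}. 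Conversely, given an $s_{0}$ solving \eqref{eq: structural constraint} and setting $V := \Psi(s_{0})$, rearranging the structural constraint yields $\beta\, \bE[V(\omega')|\omega] = \omega + \log(s_{0}/(1-s_{0}))$, hence $\exp(\beta\, \bE[V(\omega')|\omega]) + \exp(\omega) = \exp(\omega)/(1-s_{0}(\omega))$, and taking the log gives back $V(\omega)$, so $V$ is a fixed point of \eqref{eq: value function}. Verifying $\Psi \circ \Phi = \mathrm{id}$ on Bellman fixed points and $\Phi \circ \Psi = \mathrm{id}$ on structural-constraint fixed points then yields a bijection between the two sets, so uniqueness transfers in both directions, establishing (ii). Part (iii) is the by-product identity $1 - \Phi(V)(\omega) = \exp(\omega - V(\omega))$, which is the Hotz--Miller inversion specialized to a binary choice.

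The main delicacy is bookkeeping the function spaces so that the bijection is legitimate: I would need $\Phi(V)$ to lie in $C(\Omega)$ with values in $(0,1)$, and $\Psi(s_{0})$ to lie in $C(\Omega)$, which requires $s_{0}$ to be bounded away from $0$ and $1$. On fixed points these bounds follow from the closed-form $1 - s_{0} = \exp(\omega - V)$ together with the compactness of $\Omega$ and continuity of $V$ from part (i). Beyond this, the remaining work is a careful but entirely algebraic verification.
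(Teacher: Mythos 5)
Your proof is correct, and the underlying algebra is the same as the paper's, but the packaging differs in two ways. For part (i), you prove the contraction property of the Bellman operator directly via the elementary Lipschitz bound on $a \mapsto \log(e^{a}+e^{c})$ and invoke Banach's fixed point theorem, whereas the paper simply cites \cite{rust2002there} Theorem 1 for this fact; your version is more self-contained, though it is essentially the same argument that the cited reference supplies. For parts (ii) and (iii), you organize the equivalence as an explicit bijection $\Phi, \Psi$ between the two fixed-point sets and transfer cardinality, whereas the paper proves each direction of the equivalence separately by constructing a candidate fixed point and then arguing uniqueness by contradiction. These two routes rest on the same two identities---$V = \omega - \log(1-s_{0})$ and $1 - s_{0} = \exp(\omega - V)$---and buy the same theorem, but the bijection framing is arguably cleaner because it handles existence and uniqueness in one stroke and makes (iii) appear automatically as the inversion formula.

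One small infelicity in your final paragraph: to show $\Psi(s_{0}) \in C(\Omega)$ you invoke ``the closed-form $1-s_{0} = \exp(\omega - V)$ together with \dots continuity of $V$ from part (i),'' but this is circular if the bijection is to prove (ii) without presupposing a $V$ from part (i). The cleaner observation is that it is not needed at all: any $s_{0} \in C(\Omega)$ that satisfies \eqref{eq: structural constraint} must have $0 < s_{0}(\omega) < 1$ pointwise just for the logarithms in the equation to be defined, and since $\Omega$ is compact and $s_{0}$ is continuous, $s_{0}$ is automatically bounded away from $0$ and $1$, so $\Psi(s_{0}) \in C(\Omega)$ without any appeal to part (i).
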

\Cref{lemma: fixed point relationship} shows that solving the Bellman equation \eqref{eq: value function} is equivalent to solving \eqref{eq: structural constraint}. We further convert \eqref{eq: structural constraint} into an unconditional moment constraint by assuming that $s_{0}(\omega)$ is the solution to \eqref{eq: structural constraint} if and only if:
\begin{equation}
    \sup_{g \in C(\Omega)} \bE_{F} \left[g(\omega)\left(\log(\frac{1 - s_{0}(\omega)}{s_{0}(\omega)}) - \omega + \beta \omega' - \beta \log(1-s_{0}(\omega'))\right)\right] = 0 \label{eq: unconditional moment constraint}
\end{equation}
\begin{assumption} \label{assumption: invertible}
    For $\forall \ F \in \mF$, the solution $s_{0}(\omega)$ corresponding to \eqref{eq: structural constraint} satisfies the following: for all $t=1,\ldots,T$, there exists a unique $\omega_{t} \in \Omega$ such that $s_{0}(\omega_{t}) = s_{0t}$.
\end{assumption}
\Cref{assumption: invertible} allows us to profile out $\{\omega_{t}\}_{t=1}^{T}$, which is useful for the implementation. A sufficient condition is that $s_{0}(\omega)$ is continuous and strictly decreasing, and its smallest and largest values are small and large enough. Under \Cref{assumption: fixed point relationship}, we have $s_{0}(\omega) \in C(\Omega)$. Moreover, we can expect that a higher inclusive value $\omega$ corresponds to a lower market share of not purchasing, i.e., $s_{0}(\omega)$ is decreasing in $\omega$. Therefore, \Cref{assumption: invertible} is mild.

The last condition is the fixed point constraint similar to the procedure to estimate the AR(1) process. Suppose the distribution $F$ is used in \eqref{eq: unconditional moment constraint}, and $\{\omega_{t}\}_{t=1}^{T}$ is the sequence of recovered inclusive values. Denote by $\hat{F}$ the estimator of the joint distribution for the pairs $\{(\omega_{t},\omega_{t+1})\}_{t=1}^{T-1}$. Then, our fixed point constraint is:
\begin{equation*}
    D_{KL}(F\|\hat{F}) \leq \epsilon_{T}
\end{equation*}
where $\epsilon_{T}$ is the tolerance level. To choose $\epsilon_{T}$, we estimate the joint distribution of inclusive values recovered from the AR(1) process by the kernel density estimator with Gaussian kernel and bandwidth selected by the 5-fold cross-validation. Then, we set $\epsilon_{T}$ to be the KL divergence between the kernel density estimator and the reference distribution.

For EV subsidy, by \eqref{eq: value function inversion}, the consumer surplus (CS) is given by:
\begin{equation*}
    V(\omega^{\text{EV}}_{t_{1}}) - V(\omega_{t_{1}}) = \omega^{\text{EV}}_{t_{1}} - \omega_{t_{1}} + \log (1 + \frac{s_{1}(\omega^{\text{EV}}_{t_{1}}) - s_{1t_{1}}}{s_{1t_{1}}})
\end{equation*}
where $\omega^{\text{EV}}_{t_{1}} - \omega_{t_{1}}$ does not depend on $F$. Therefore, to bound CS, it is equivalent to bounding the change in the market share of purchase.

Putting everything together, the lower bound on the elasticity at $t_{1}$ is given by:
\begin{align*}
    \inf_{s_{0}(\omega) \in C(\Omega)} \inf_{F \in \mF} & \quad \frac{s_{0t_{1}} - s_{0}(\omega'_{t_{1}})}{1-s_{0t_{1}}} \times 100 \\
    \text{s.t.} \quad
    & s_{0}(\omega_{t}) = s_{0t} \text{ for } t=1,\ldots,T  \\
    & \sup_{g \in C(\Omega)} \bE_{F} \left[g(\omega)\left(\log(\frac{1 - s_{0}(\omega)}{s_{0}(\omega)}) - \omega + \beta \omega' - \beta \log(1-s_{0}(\omega'))\right)\right] = 0 \\
    & D_{KL}(F\|\hat{F}) \leq \epsilon_{T}
\end{align*}
where $\omega'_{t_{1}}$ is replaced by $\omega^{\text{EV}}_{t_{1}}$ for the EV subsidy case. For $\delta = 0$, the reference distribution is the unique solution to the above problem. The corresponding elasticity is the reference industrywide elasticity.

\subsection{Implementation} \label{sec: Implementation infinite horizon DDC}

\noindent We adapt \Cref{alg: Iterative Algorithm for Distributional Robustness} proposed in \Cref{sec: Practical Implementation of the Proposed Framework}. We will use numerical integration to discretize the support of the AR(1) process. Therefore, the recovered inclusive values $\{\omega_{t}\}_{t=1}^{T}$ are not differentiable with respect to the discretized market share function $s_{0}(\omega)$. To address this issue, we employ the MCMC optimization method. We alternate between solving the EOT problem to obtain the worst-case distribution, solving the Bellman equation to update $s_{0}(\omega)$, recovering the inclusive values, and checking the fixed point constraint. To derive a tractable dual formulation, we handle the fixed point constraint in a specific way. Because this constraint depends on an estimator, $\hat{F}$, that changes during optimization—potentially causing numerical instability—we first derive the dual formulation without it.\footnote{In principle, we can choose other constraints like the integral probability metrics and adapt the minimax theorem in \Cref{thm: Strong Duality}. However, it can lead to more optimization parameters.} Then, the fixed point constraint determines the acceptance/rejection of the candidate parameters in the Metropolis-Hastings step. Consider the following optimization problem:
\begin{align*}
    \inf_{s_{0}(\omega) \in C(\Omega)} \inf_{F \in \mF} & \quad \frac{s_{0t_{1}} - s_{0}(\omega'_{t_{1}})}{1-s_{0t_{1}}} \times 100 \\
    \text{s.t.} \quad
    & s_{0}(\omega_{t}) = s_{0t} \text{ for } t=1,\ldots,T \\
    & \sup_{g \in C(\Omega)} \bE_{F} \left[g(\omega)\left(\log(\frac{1 - s_{0}(\omega)}{s_{0}(\omega)}) - \omega + \beta \omega' - \beta \log(1-s_{0}(\omega'))\right)\right] = 0
\end{align*}
Applying \Cref{thm: Strong Duality}, its dual is:
\begin{align*}
    \inf_{s_{0}(\omega)  \in C(\Omega)} \sup_{g \in C(\Omega), \lambda_{KL} \geq 0}
    & \frac{s_{0t_{1}} - s_{0}(\omega'_{t_{1}})}{1-s_{0t_{1}}} \times 100 + \mC(s_{0},g,\lambda_{KL}) - \lambda_{KL} \delta \\
    \text{s.t.} \quad
    & s_{0}(\omega_{t}) = s_{0t} \text{ for } t=1,\ldots,T
\end{align*}
where $\mC(s_{0},g,\lambda_{KL})$ is the EOT problem: $\mC(s_{0},g,\lambda_{KL}) := \sup_{F \in \Pi(\nu_{0},\nu_{0})} \bE_{F} \left[c(\omega,\omega';s_{0},g)\right] + \lambda_{KL} D_{KL}(F\|F_{0})$ whose cost function is $c(\omega,\omega';s_{0},g) = g(\omega)\left(\log(\frac{1 - s_{0}(\omega)}{s_{0}(\omega)}) - \omega + \beta \omega' - \beta \log(1-s_{0}(\omega'))\right)$ and the worst-case conditional distribution $F^{*}$ is given by:
\begin{equation*}
    dF^{*}(\omega'|\omega) = \exp(\frac{\phi_{1}^{*}(\omega) + \phi_{2}^{*}(\omega') - c(\omega,\omega';s_{0},g)}{\lambda_{KL}}) dF_{0}(\omega'|\omega) \quad F_{0}\text{-a.s.}
\end{equation*}
where $\phi_{1}^{*}(\omega)$ and $\phi_{2}^{*}(\omega')$ are the optimal EOT potentials. During the optimization process, $dF^{*}(\omega'|\omega)$ is used to update $s_{0}(\omega)$ by solving \eqref{eq: structural constraint} using fixed point iteration.

As shown in \Cref{thm: Strong Duality}, the expectation in the dual is taken with respect to the reference distribution. Therefore, we discretize the estimated AR(1) process, which results in three approximation errors: (i) the Bellman equation is solved on the discretized support, (ii) $\{\omega_{t}\}_{t=1}^{T}$ are recovered approximately, and (iii) the elasticity is computed approximately. That is, we approximate $s_{0}(\omega)$ by the market share of the nearest grid point to $\omega$. Therefore, there is a trade-off between approximation error and computational cost. A finer discretization reduces the approximation error, while increasing the number of optimization parameters.

However, our dual formulation significantly improves computational efficiency. Suppose we discretize the AR(1) process into $N$ grid points. If we directly solve \eqref{eq: unconditional moment constraint}, the number of optimization parameters is $O(N^{2})$ due to the transition matrix. In contrast, the number of optimization parameters is $O(N)$ in the dual formulation.

Algorithm \ref{alg:Initial MCMC Optimization} summarizes the simulated annealing MCMC optimization algorithm (\cite{kirkpatrick1983optimization}). It starts with the reference market share $s^{(0)}$, and alternate between proposing new parameters $(g', \lambda'_{KL})$, solving the EOT problem, solving the Bellman equation using the worst-case distribution, and accepting or rejecting the proposed parameters using the Metropolis-Hastings step based on the change in the elasticity penalized by the violation of the market share and fixed-point constraints. At each improvement step, it pools previous results across all radii for initialization. We choose 51 grid points, 5,000 MCMC steps, 5 optimization steps, 14 radii (the last is $10^{10}$), and 100 as the simulated annealing multiplier.

\setcounter{algocf}{2}

\begin{algorithm}[h!]
    \caption{Simulated Annealing MCMC Optimization Algorithm}
    \label{alg:Initial MCMC Optimization}
    \footnotesize

    \SetKwInput{Parameters}{Parameters}
    \SetKwComment{Comment}{\tcc*[r]}{ }

    \Parameters{
                $N$: Number of grid points;
                $T$: MCMC steps per optimization run;
                $J$: Optimization steps;
                $m$: Simulated annealing multiplier;
                $l$: Number of radii;
                }
    \DontPrintSemicolon

    \For{$j=1$ \KwTo $J$ \tcc*[r]{Optimization Step $j$}}{
    \For{$i=0$ \KwTo $l-1$, set $\delta_{i} = 10^{-3 + i \cdot 0.25}$ \;}{
        \uIf{$i=0$}{
            \textbf{If $j=0$:} Set $s^{(0)}$ as the reference market share. Initialize $g^{(0)}$, $\lambda_{KL}^{(0)}$.

            \textbf{If $j>0$:} Set $(s^{(0)}, g^{(0)}, \lambda_{KL}^{(0)})$ to the optimal solution from the previous step's stored results with upper bound $10^{-3}$ on the KL divergence to the reference distribution.\;
        }
        \Else{
            \textbf{If $j=0$:} Set $(g^{(0)}, \lambda_{KL}^{(0)}, s^{(0)})$ to the optimal solution from the previous step. \\

            \textbf{If $j>0$:} Set $(s^{(0)}, g^{(0)}, \lambda_{KL}^{(0)})$ to the optimal solution from the previous stored results with upper bound $10^{-3 + i \cdot 0.25}$ on the KL divergence to the reference distribution.
        }

        \For{$t=1, \dots, T$ \tcc*[r]{Simulated Annealing MCMC optimization}}{
            \tcp{1. Propose New Parameters}
            Propose $(g', \lambda'_{KL})$ from the random walk, solve the EOT problem $\mC(s_{0}^{t},g',\lambda'_{KL})$ and obtain $F^{*}$, solve the Bellman equation \eqref{eq: structural constraint} with $F^{*}$, recover $\{\omega_{t}\}_{t=1}^{T}$, estimate the distribution of $\{(\omega_{t},\omega_{t+1})\}_{t=1}^{T}$ by kernel density estimator, and compute the elasticity.\;
            \tcp{2. Check Constraints and Apply Penalty}
            Calculate the sum of violations from the market share and fixed-point constraints. The market share violation is defined as $\max \{0, \text{vio}_{F} - \text{vio}_{ref}\}$ where $\text{vio}_{ref}$ is the violation of the reference model. If the total violation exceeds 0.005, add a large penalty (100). If $D_{KL}(F^{*}\|F_{0}) > \delta_{i}$, add a large penalty (100).\;

            \tcp{3. Accept/Reject (Metropolis-Hastings)}
            Apply a Metropolis-Hastings step based on the change in the (penalized) elasticity multiplied by $\frac{10*(1 + (s - 1) * (m - 1))}{(T-1)}$ where the prior is $\mathcal{N}(0,100)$.\;

            \tcp{4. Adapt (\cite{andrieu2008tutorial} Algorithm 4)}
            Update the random walk via vanishing adaptation scheme.\;
        }
    }
    }
\end{algorithm}

\subsection{Results}

\noindent We estimate two alternative transition densities for each market. The first assumes that the inclusive values are i.i.d. normally distributed. The second estimates a nonlinear AR(1) process using a cubic spline\footnote{The nonlinear AR(1) process is specified as:
$g(\omega) = \sum_{k=1}^{N+3} \rho_k \Phi\left(\frac{\omega - a}{h} - (k-2)\right)$ where $a$ is the minimum of the discretized support of $\omega$, $h$ is the distance between two adjacent grid points, $\rho_{k}$ are parameters to be estimated, and
$
\Phi(t) = \begin{cases} 
4 - 6t^2 + 3|t|^3 & \text{if } |t| \le 1 \\
(2-|t|)^3 & \text{if } 1 < |t| \le 2 \\
0 & \text{otherwise}
\end{cases}
$. For this model, we set $N=4$ to avoid overfitting. Then, we discretize the reference AR(1) process into 4 grid points, and compute the KL divergence between the nonlinear AR(1) process and the reference AR(1) process.}. In the following figures, we plot the KL divergence between the alternative models and the reference model. The independent model is closer to the reference model with KL divergence between 0.04 to 0.67, while the nonlinear AR(1) process is farther away, with KL divergence between 3.94 to 10.09.

Figures \ref{fig:elasticity_CS_bounds_UK}-\ref{fig:elasticity_CS_bounds_Germany} plot the bounds on the industrywide elasticities\footnote{\cite{schiraldi2011automobile} finds a average long-run price elasticities ranging from -3.54 to -4.34 across different car segments for the Italian market. \cite{d2019automobile} reports average elasticities between -3.94 and -6.40 across consumer groups for the French market. \cite{reynaert2021benefits} finds a mean own-price elasticity of -5.45 for the European market. \cite{grieco2024evolution} estimates an average elasticity of -5.36 for the U.S. market in 2015. \cite{Remmy2025} reports a mean price elasticity of -4.043 for the German market.} for the UK, Germany, and France in December 2023. The French market is the least elastic (reference elasticity: -4.048), while the Germany market the most elastic (reference elasticity: -6.073). The UK market's reference elasticity is -5.336. Based on our three sensitivity measures, we define the local (global) sensitivity as the ratio of the local (global) interval length to the reference value. For local sensitivity, we set $\delta=0.001$, while for global sensitivity, we set $\delta=10^{10}$. The robustness metric is defined as the smallest deviation from the reference distribution such that the elasticity can deviate by, for example, 2.5\% from the reference elasticity. 

The French market is the least sensitive in terms of local and global sensitivity, with 1.16\% local deviation and 6.20\% global deviation from the reference elasticity. The UK market is less sensitive locally (1.66\%) than the German market (3.52\%). They are both more sensitive globally (15.16\% for the UK vs. 15.24\% for Germany) than the French market. The bounds of UK market flatten around 0.178, while the French and German market flatten around 0.056. For the robustness metric, we consider 2.5\% deviation from the reference elasticity. The UK market's robustness metric is around 0.018 for the upper bound and 0.008 for the lower bound. The French market's robustness metric is around 0.025 for the upper bound and 0.018 for the lower bound. The German market's robustness metric is around 0.002 for the lower bound and 0.003 for the upper bound. Therefore, in terms of robustness metric, the French market is also the most robust, while the German market is the least robust.

Figures \ref{fig:elasticity_CS_bounds_UK}-\ref{fig:elasticity_CS_bounds_Germany} also plot the bounds on the consumer surplus from an additional \$3,000 EV subsidy. The subsidy is implemented between July and December 2023, when the reference consumer surplus is maximized. They are November for the UK (reference CS: \$2,880 million), October for France (reference CS: \$1,432 million), and September for Germany (reference CS: \$856 million). Overall, the EV subsidy is beneficial as the lower bounds are \$2,584 million for the UK, \$1,243 million for France, and \$309 million for Germany. The corresponding costs for subsidy are around \$12 million for the UK, \$23 million for France, and \$41 million for Germany. The costs are insensitive to the misspecification, as they only depend on the absolute change in the market share of purchases, and the conditional market share of EVs, instead of percentage change used for consumer surplus.

In terms of local sensitivity, the UK market is the least sensitive (2.52 \% local deviation), the French market exhibits similar local sensitivity (4.72\% local deviation), while the German market is the most sensitive (24.79\% local deviation). In terms of global sensitivity, the UK market and French markets share similar sensitivity (25.17 \% global deviation for the UK, and 24.73 \% for France), while the German market is the most sensitive (102.75 \% global deviation). For the robustness metric, if we consider 10\% deviation from the reference CS, the UK and French markets' robustness metric are more than 0.018, while the German market's robustness metric is around 0.001. Therefore, the German market is also the least robust in terms of robustness metric.

Figure \ref{fig:elasticity_bounds_time_series} plots the time series of bounds on the industrywide elasticities. We set $\delta=0.001$ for the local deviation and $\delta = 1$ for the global deviation, as the bounds flatten at a maximum of 0.178 in December 2023. Large points in the figure indicate that the KL divergence constraint is binding—specifically, when the KL divergence between the worst-case distribution and the reference distribution exceeds $0.95 \cdot \delta$. When two consecutive points align horizontally, this indicates that increasing the radius does not affect the bounds, as exemplified by the UK market in November 2023. In terms of both local and global sensitivity, the UK and French markets are less sensitive than the German market. All three markets exhibit some sensitive periods. For the UK market, the upper bound's global deviation in February 2022 is around 30\% away from the reference elasticity. For the French market, the upper bounds' global deviations in April 2021, and July 2022 are around 50\% away from the reference elasticity. For the German market, the lower bound's global deviation in March 2022 and April 2023 is around 50\% away from the reference elasticity. In terms of local sensitivity, the lower bounds' of German market in April 2021 is around 50\% away from the reference elasticity.

\begin{figure}[H]
    \centering
    \begin{minipage}{0.48\textwidth}
        \centering
        \includegraphics[width=\linewidth]{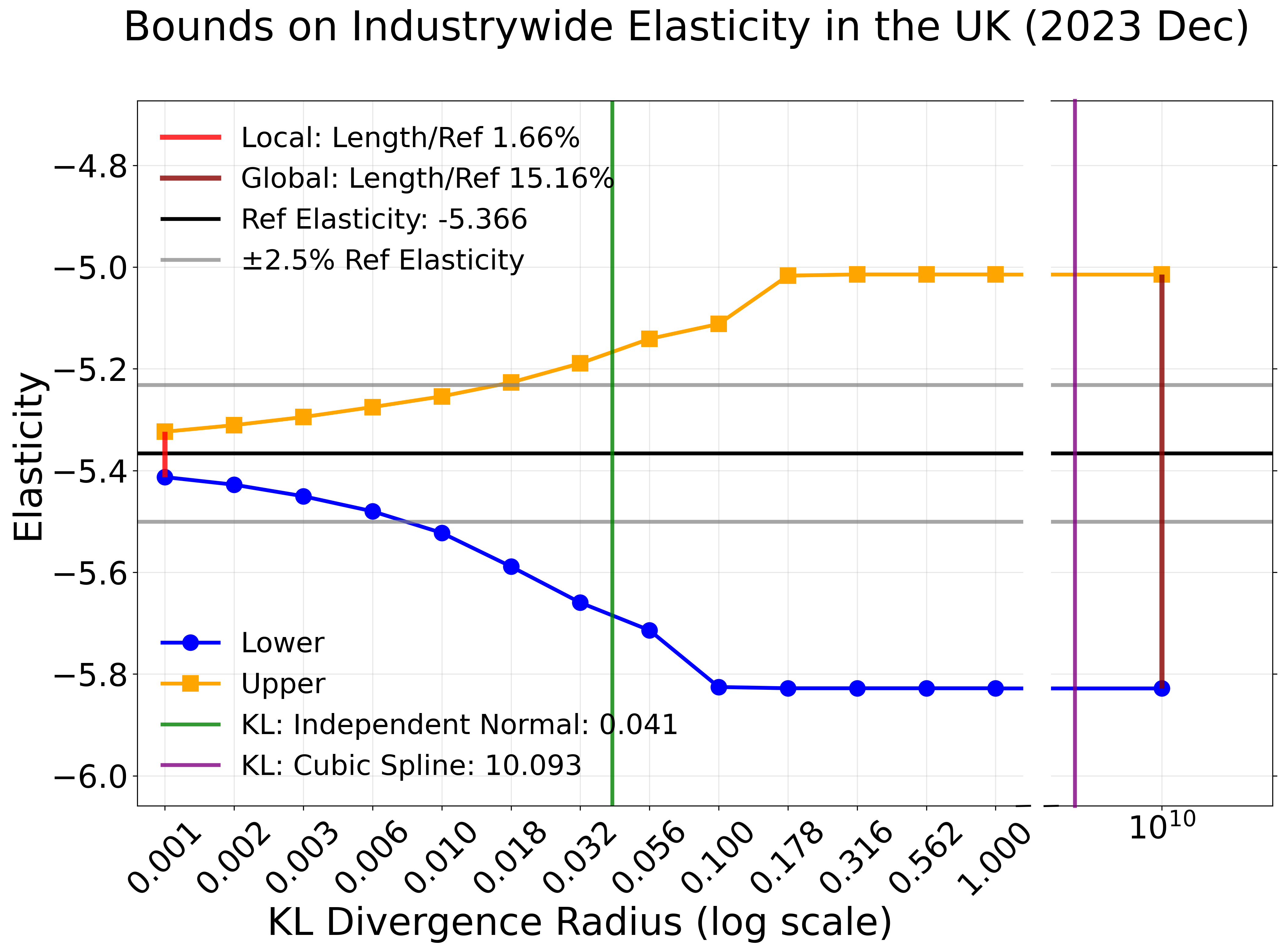}
    \end{minipage}
    \hfill
    \begin{minipage}{0.48\textwidth}
        \centering
        \includegraphics[width=\linewidth]{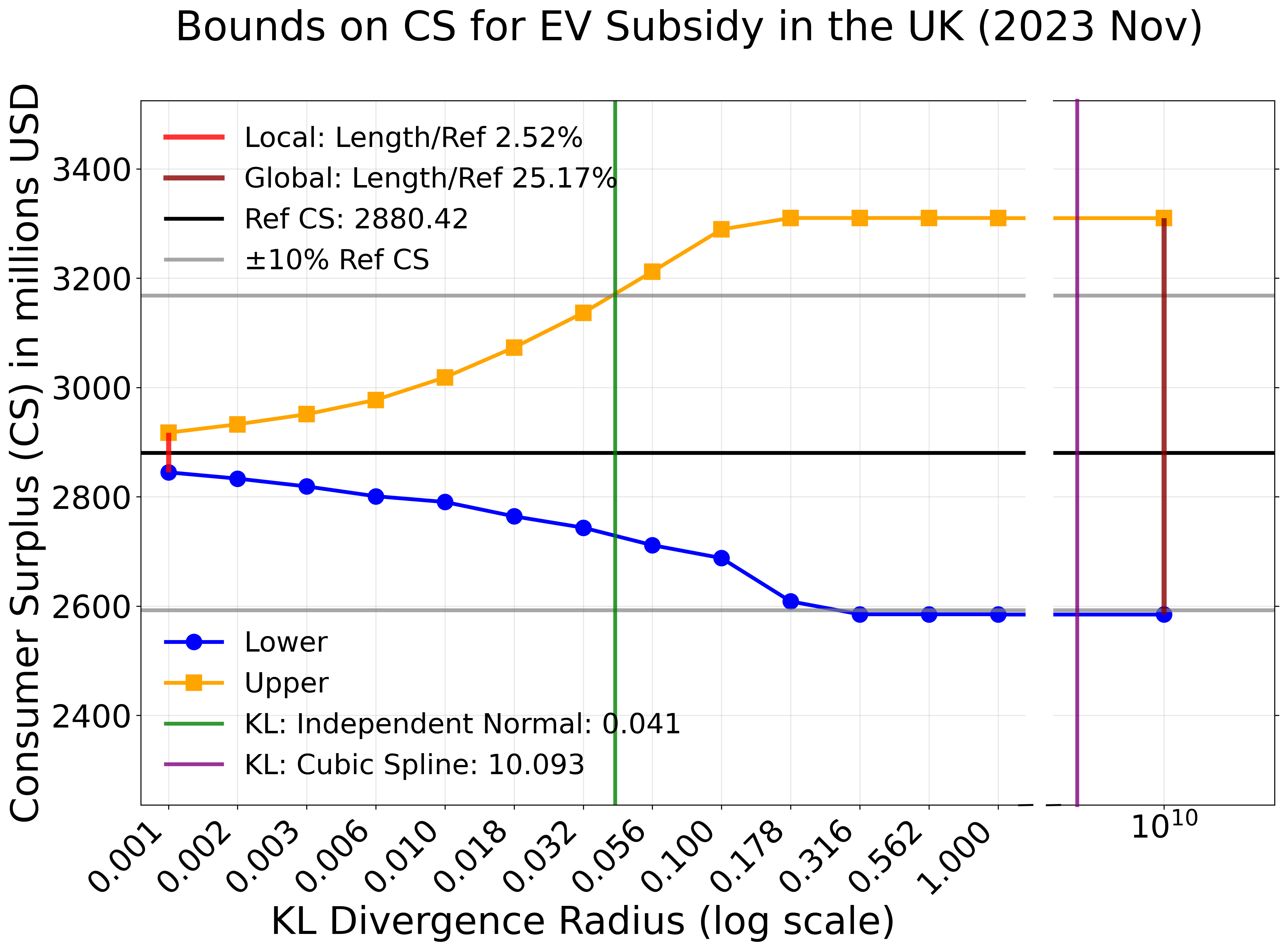}
    \end{minipage}
    \caption{Bounds on Industrywide Elasticity and Consumer Surplus for the UK}
    \label{fig:elasticity_CS_bounds_UK}
\end{figure}

\begin{figure}[H]
    \centering
    \begin{minipage}{0.48\textwidth}
        \centering
        \includegraphics[width=\linewidth]{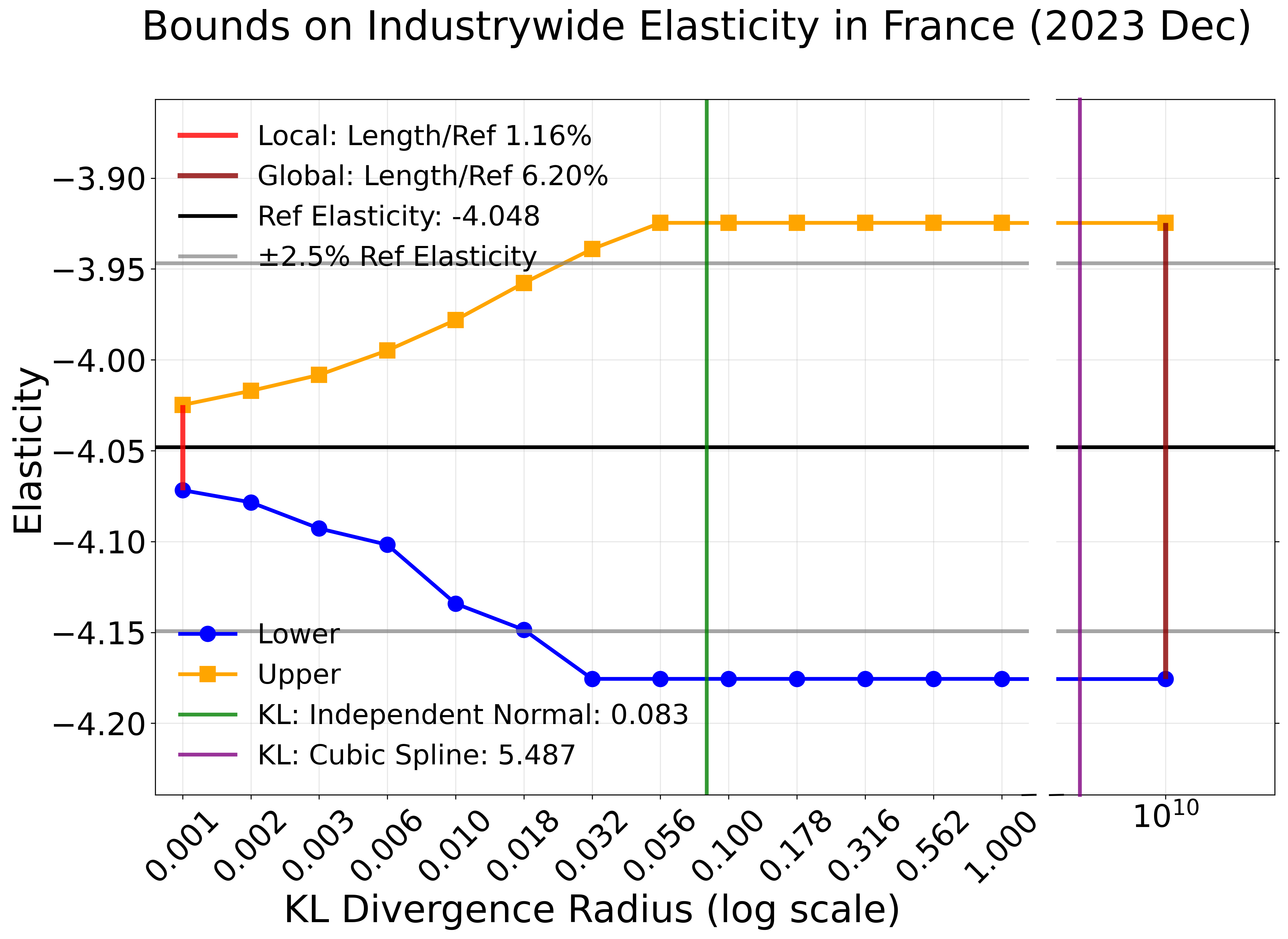}
    \end{minipage}
    \hfill
    \begin{minipage}{0.48\textwidth}
        \centering
        \includegraphics[width=\linewidth]{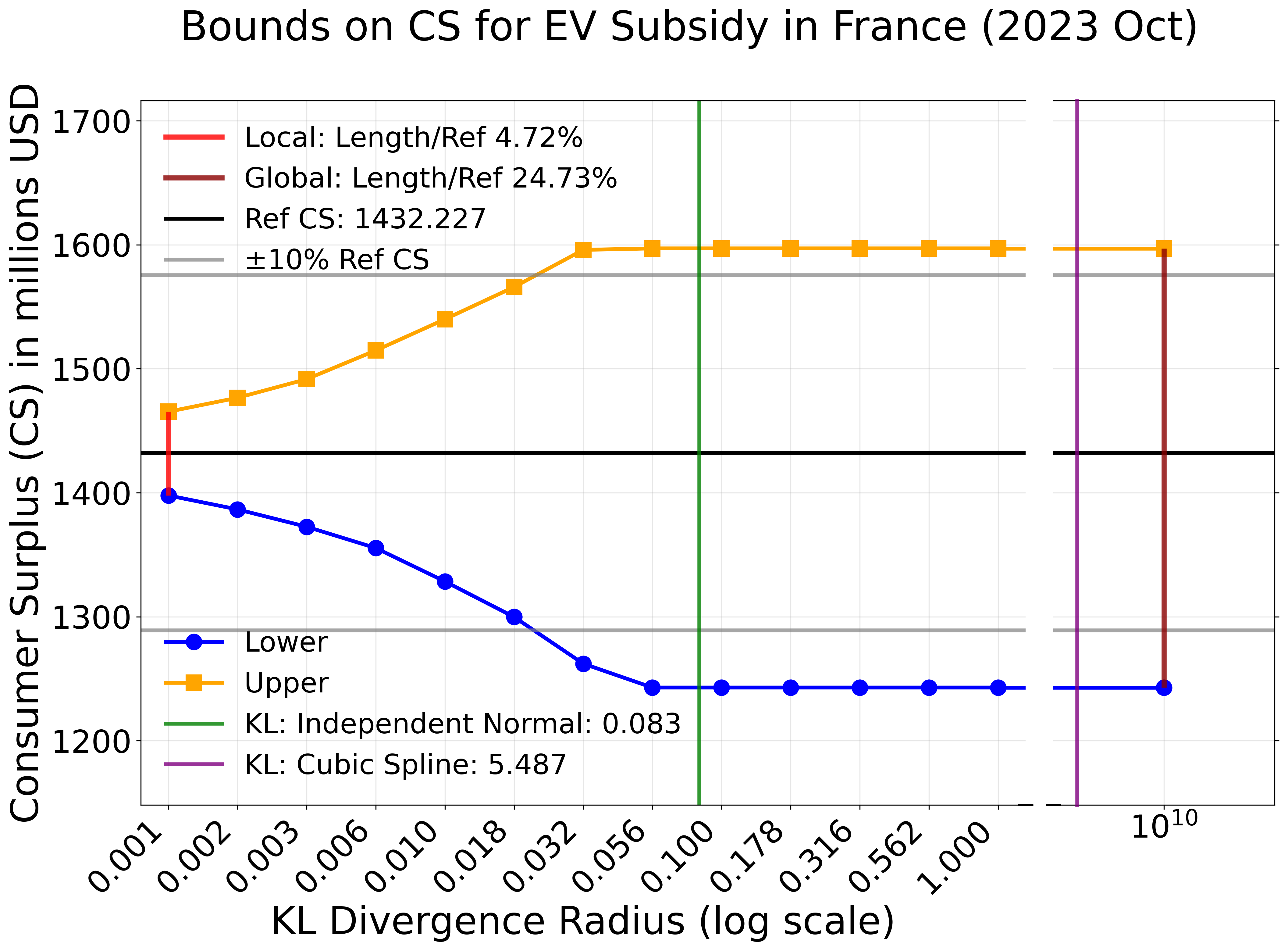}
    \end{minipage}
    \caption{Bounds on Industrywide Elasticity and Consumer Surplus for France}
    \label{fig:elasticity_CS_bounds_France}
\end{figure}

\begin{figure}[H]
    \centering
    \begin{minipage}{0.48\textwidth}
        \centering
        \includegraphics[width=\linewidth]{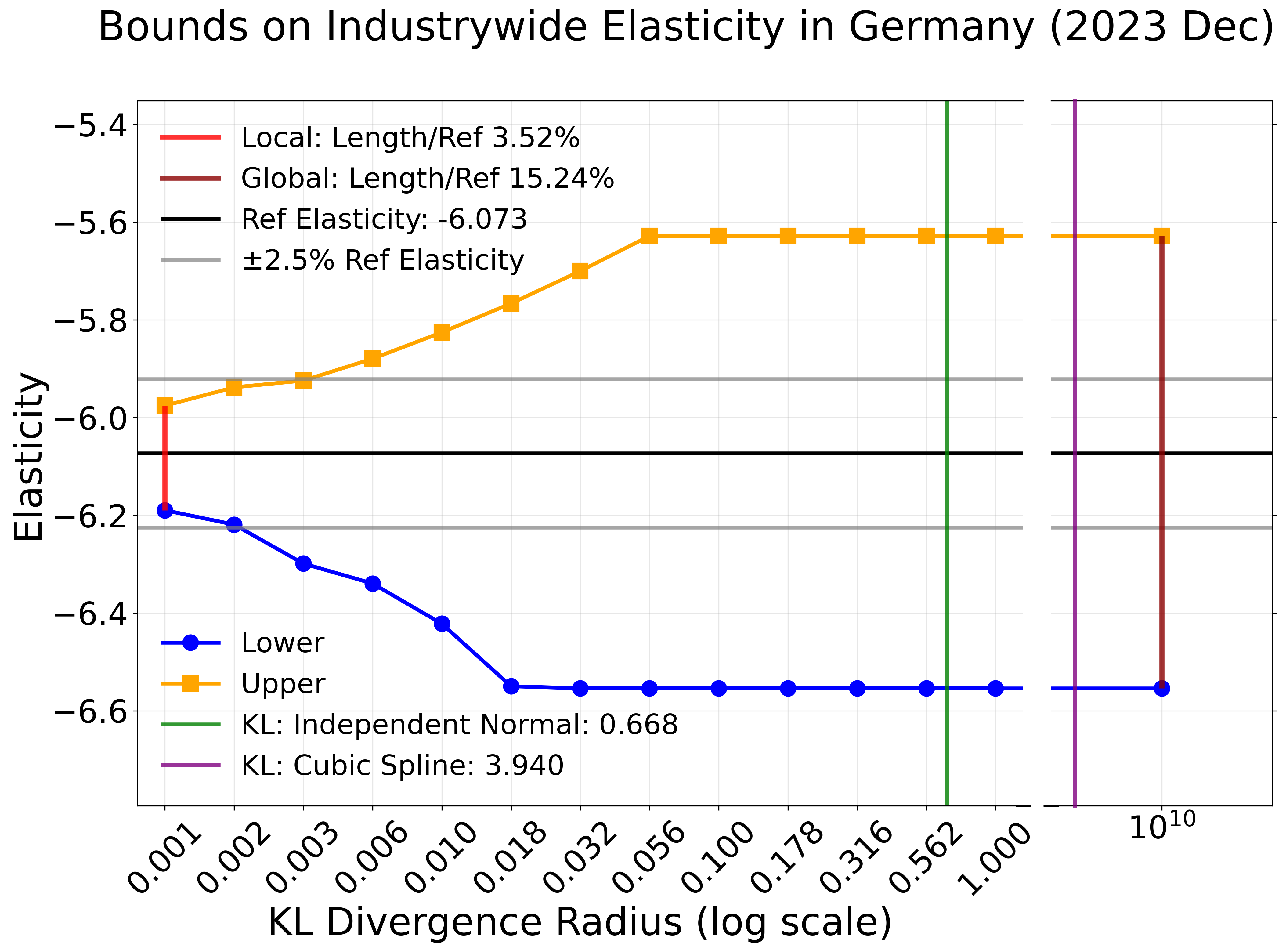}
    \end{minipage}
    \hfill
    \begin{minipage}{0.48\textwidth}
        \centering
        \includegraphics[width=\linewidth]{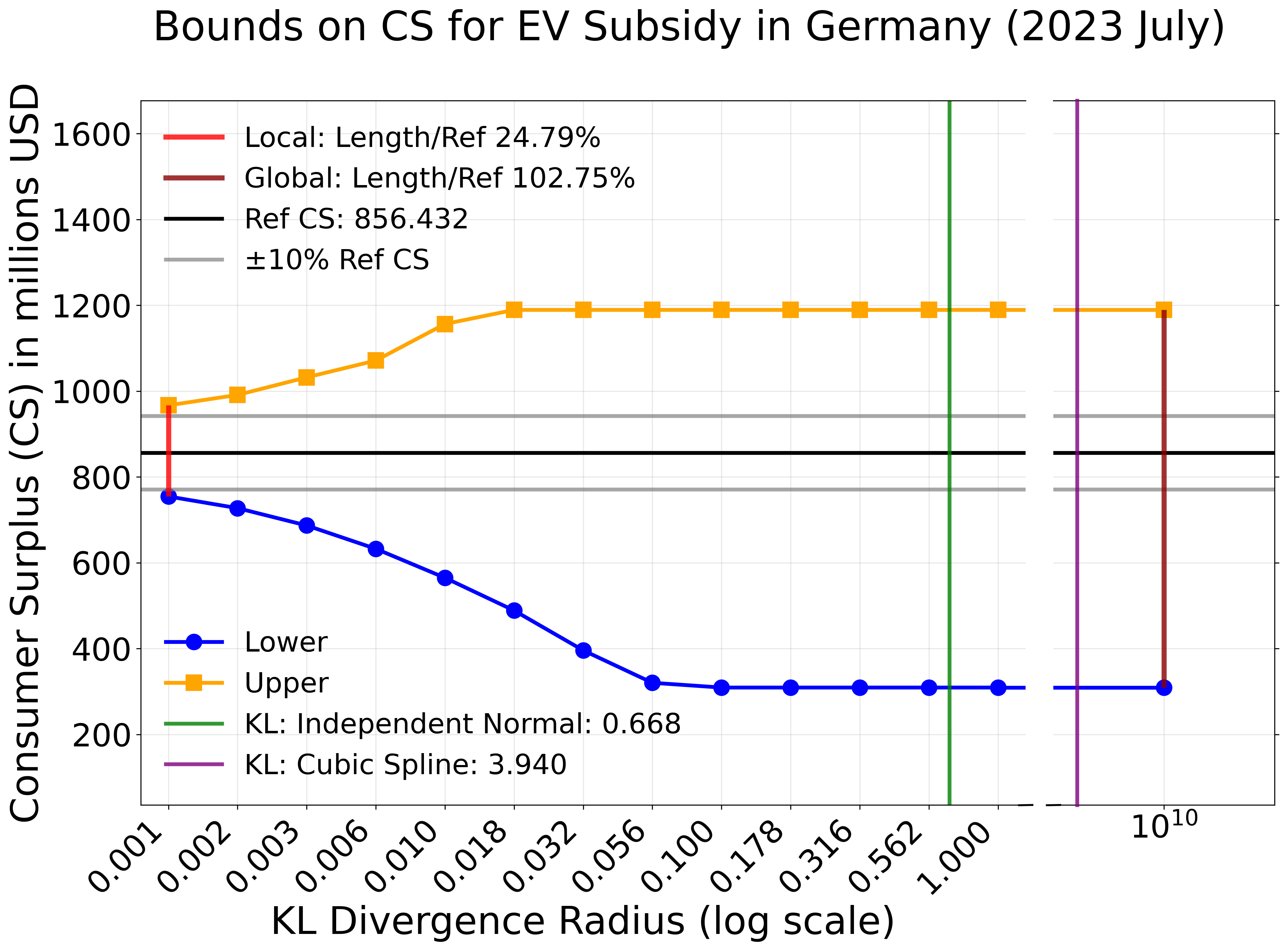}
    \end{minipage}
    \caption{Bounds on Industrywide Elasticity and Consumer Surplus for Germany}
    \label{fig:elasticity_CS_bounds_Germany}
\end{figure}

\begin{figure}[H]
    \centering
    \begin{minipage}{0.8\textwidth}
        \centering
        \includegraphics[width=\linewidth]{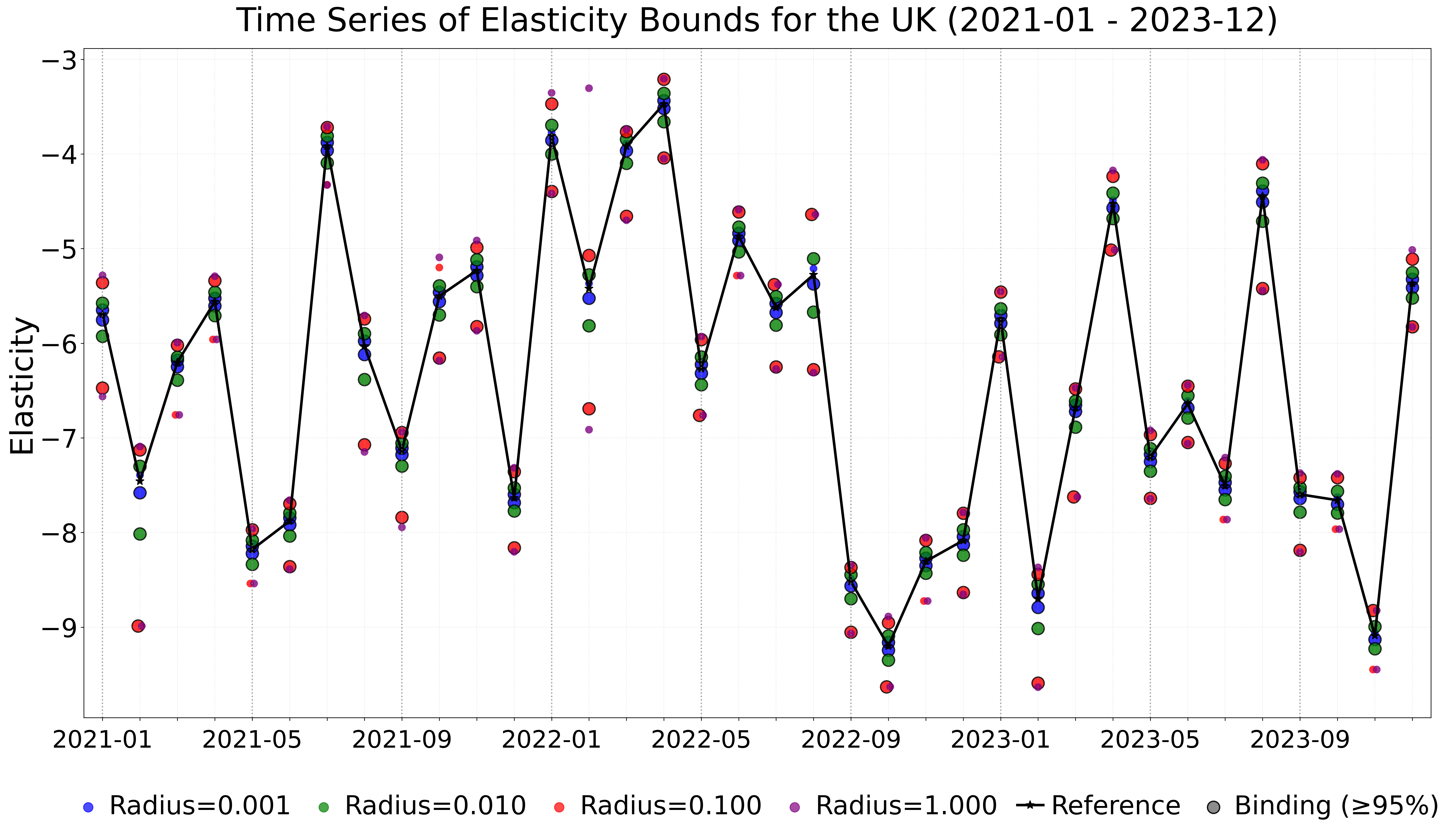}
    \end{minipage}
    
    
    \begin{minipage}{0.8\textwidth}
        \centering
        \includegraphics[width=\linewidth]{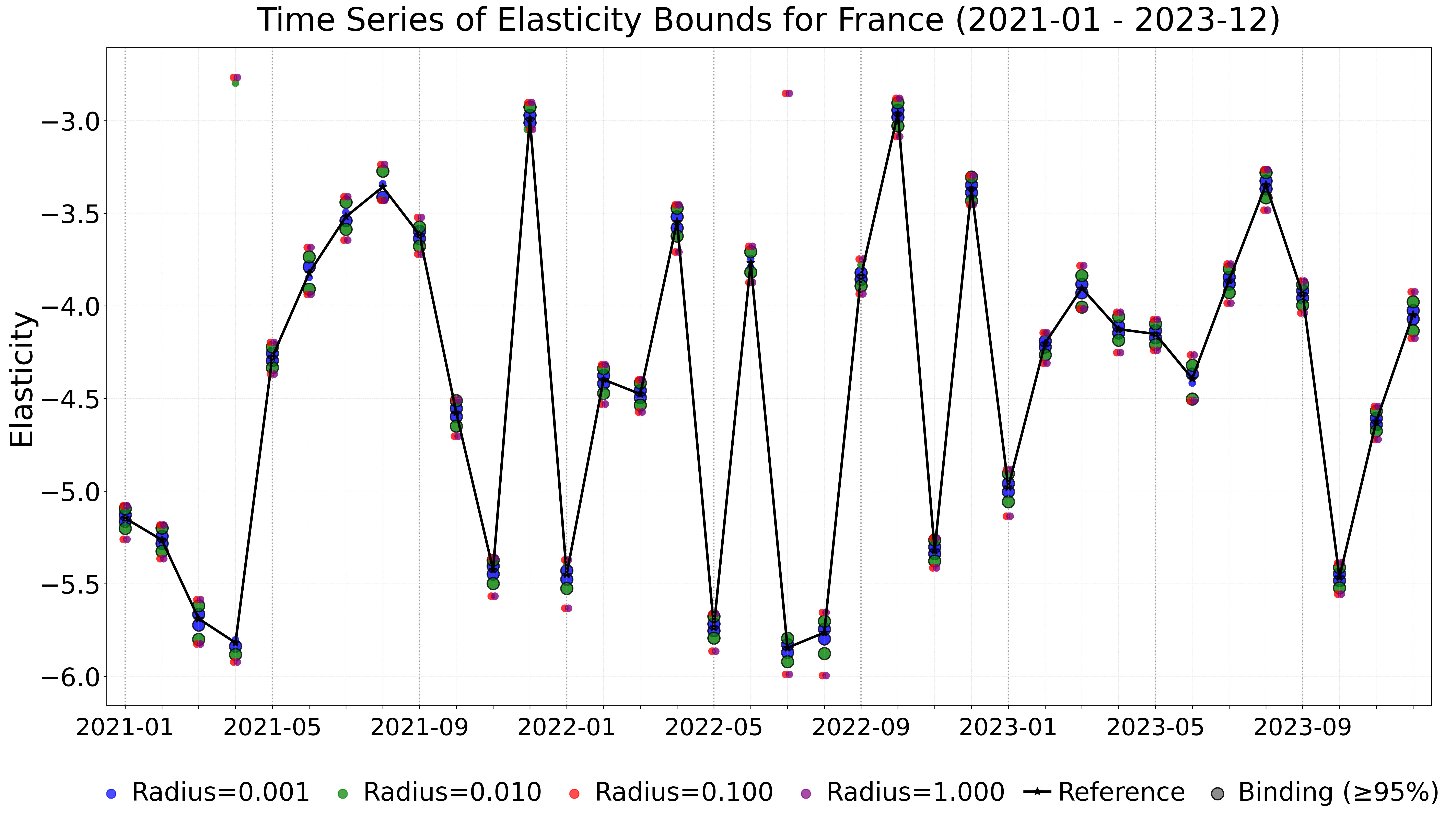}
    \end{minipage}
    
    
    \begin{minipage}{0.8\textwidth}
        \centering
        \includegraphics[width=\linewidth]{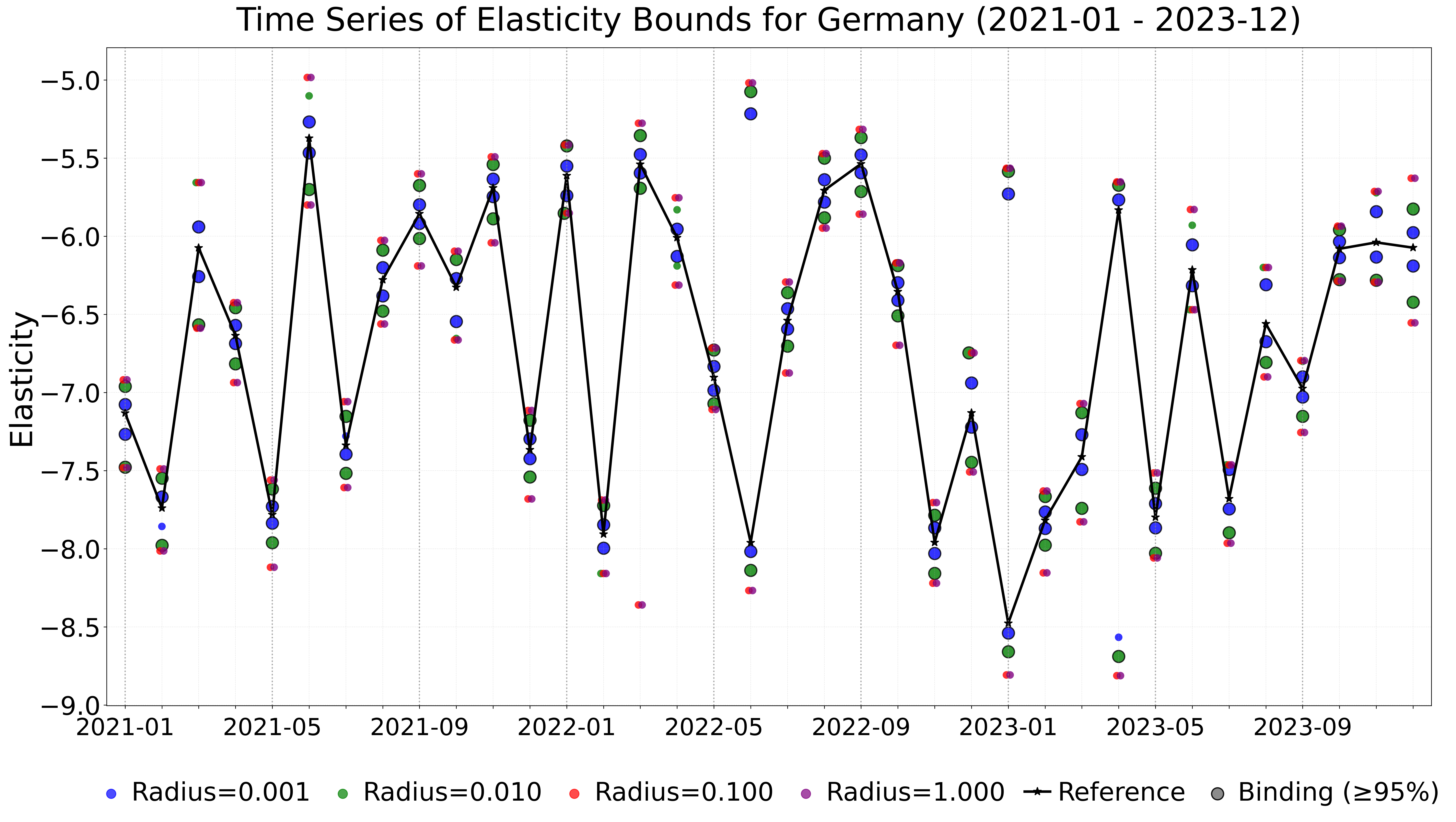}
    \end{minipage}

    \caption{Time Series of Bounds on Industrywide Elasticity for UK, Germany, and France}
    \label{fig:elasticity_bounds_time_series}
\end{figure}

\section{Empirical Application: Finite Horizon DDC} \label{sec: Application finite horizon}

\noindent This section applies our framework to a finite-horizon dynamic labor supply model for taxi drivers in New York City (NYC). In the model, the market-level supply shock is the latent variable. Our reference dynamic process is an AR(1) process. We consider the sensitivity analysis of the labor supply elasticity with respect to such distributional assumption.

\subsection{The Data}

\noindent We use data from New York City’s Taxi and Limousine Commission's (TLC) Taxi Passenger Enhancement Project (TPEP). The TPEP data contain a complete record of all trips operated by licensed drivers. The day shift starts at 5 AM and ends at 5 PM, and the night shift starts at 5 PM and ends at 5 AM. We choose a sample of 10,500 drivers that were active in 2013 as in \cite{kalouptsidi2021linear}. We restrict the sample to day shift drivers who were only working during the day shifts. We aggregate the transaction-level data to the driver-hour level. In addition, we create 10 uniformly divided bins for weekdays (Monday-Thursday) and 4 bins\footnote{The number of bins is chosen so that a Gaussian distribution approximates the stationary distribution of the market-level supply shock recovered from the last period (see \Cref{fig:last_omega}).} for weekends (Friday-Sunday) between the lowest and highest hourly earnings and calculate the average hourly earnings in each bin. Then, we remove important days (i.e., Memorial Day, the Fourth of July, and New Year's Eve). Finally, we restrict the sample to shifts that started between 5 AM and 8 AM, which accounts for 86.84\% shifts for weekdays, and 71.84\% for weekends. The final sample contains 3,562 drivers and 206 days for weekdays, and 3,322 drivers and 156 days for weekends.

\Cref{tab:summary_stats} presents the hourly summary statistics. The average hourly earnings range from \$24.22 at 4:00 PM to \$37.00 at 8:00 AM. The share of drivers who continue working is high in the early morning, with 100\% of drivers working at 6:00 AM and 7:00 AM. This share starts to decline after 2 PM, and drops substantially to 52.24\% and 53.59\% by 4:00 PM. Therefore, we assume that drivers can only choose to stop working between 8 AM and 4 PM.

\begin{table}[!htbp]
\centering
\footnotesize
\caption{Summary statistics.}
\label{tab:summary_stats}
\begin{threeparttable}
\begin{tabular}{lcccccc}
\toprule
\multirow{3}{*}{\textbf{Hour}} & \multicolumn{2}{c}{\textbf{Share of Drivers that}} & \multicolumn{4}{c}{\textbf{Hourly Earnings (\$)}} \\
\cmidrule(lr){2-3} \cmidrule(lr){4-7}
 & \multicolumn{2}{c}{\textbf{Continue Working (\%)}} & \multicolumn{2}{c}{\textbf{Weekday}} & \multicolumn{2}{c}{\textbf{Weekend}} \\
\cmidrule(lr){2-3} \cmidrule(lr){4-5} \cmidrule(lr){6-7}
 & \textbf{Weekday} & \textbf{Weekend} & \textbf{Mean} & \textbf{Std Dev} & \textbf{Mean} & \textbf{Std Dev} \\
\midrule
6:00 AM  & 100.00 & 100.00 & 32.66 & 3.00 & 31.47 & 3.86 \\
7:00 AM  & 100.00 & 100.00 & 35.04 & 2.88 & 30.11 & 4.49 \\
8:00 AM  & 98.18  & 96.21  & 37.00 & 2.84 & 31.21 & 5.68 \\
9:00 AM  & 97.64  & 96.34  & 34.73 & 2.37 & 30.56 & 4.39 \\
10:00 AM & 96.57  & 96.58  & 30.02 & 2.19 & 30.72 & 3.33 \\
11:00 AM & 95.36  & 95.80  & 29.12 & 2.30 & 31.48 & 3.11 \\
12:00 PM & 95.53  & 93.98  & 30.86 & 2.26 & 32.88 & 2.70 \\
1:00 PM  & 95.13  & 94.80  & 30.62 & 2.34 & 33.55 & 2.65 \\
2:00 PM  & 92.60  & 92.84  & 33.80 & 2.32 & 35.02 & 2.62 \\
3:00 PM  & 80.61  & 82.09  & 34.67 & 2.02 & 35.45 & 2.64 \\
4:00 PM  & 52.24  & 53.59  & 24.22 & 2.08 & 25.61 & 2.40 \\
5:00 PM  & 0.00   & 0.00   & --    & --   & --    & --   \\[0.5em]
\midrule
\# of Drivers & 3,562 & 3,322 & \multicolumn{4}{c}{} \\
\# of Days    & 206 & 156 & \multicolumn{4}{c}{} \\
\bottomrule
\end{tabular}
\textbf{Note:} The table uses TPEP Data from January 1, 2013 to December 31, 2013. An observation is defined by a driver-hour.
\end{threeparttable}
\end{table}

\subsection{The Model} \label{subsec: Model Application finite horizon model}

\noindent At the beginning of hour $t$ of day $m$, a taxi driver $i$ decides whether to continue working ($a=1$) or not ($a=0$). The decision to stop working is a terminating action, meaning the driver exits the market upon stopping. The period utility of working is given by:
\begin{align*}
    u(a_{imt},k_{imt},w_{mt},\xi_{mt},\eps_{imt};\theta) = 
    \begin{cases}
         \theta_{0} + \theta_{1} k_{imt} + \theta_{2} k_{imt}^{2} + \theta_{3} w_{mt} + \xi_{mt} + \eps_{i1mt} & \text{if } a_{imt} = 1 \\
        \eps_{i0mt} & \text{if } a_{imt} = 0
    \end{cases}
\end{align*}
where $k_{imt}$ is the number of hours worked, $w_{mt}$ is the average hourly earnings, $\eps_{imt} := (\eps_{i1mt},\eps_{i0mt})$ is i.i.d. type I extreme value utility shocks, and $\xi_{mt}$ is an exogenously evolved stationary unobserved market-level supply shock. It captures the market-level time-variant unobserved heterogeneity such as weather, congestion, or city events.

We assume markets are i.i.d, and suppress the subscript $(i,m)$ for brevity. Let $u(k_{t},w_{t};\theta)$ be the deterministic part of period of utility of working up to $(\xi_{t},\eps_{t})$. Let $\beta := 0.999999$ be the discount factor. The smoothed Bellman equation at time $t$ is given by:
\begin{equation*}
    V_{t}(k_{t},w_{t},\xi_{t}) = \log\left(\exp\left(v_{0t}(k_{t},w_{t},\xi_{t})\right) + \exp\left(v_{1t}(k_{t},w_{t},\xi_{t})\right)\right)
\end{equation*}
where the conditional value functions of working and not working are given by:
\begin{align}
    & v_{1t}(k_{t},w_{t},\xi_{t}) = u(k_{t},w_{t};\theta) + \xi_{t} + \beta \bE_{\xi_{t+1}|\xi_{t}} \bE_{w_{t+1}|w_{t}}[V_{t+1}(k_{t}+1,w_{t+1},\xi_{t+1})] \label{eq: v1} \\
    & v_{0t}(k_{t},w_{t},\xi_{t}) = 0 \nonumber
\end{align}

\subsection{First-Stage Estimation}

\noindent With only one terminating action, the utility parameters cannot be identified using \eqref{eq: log odds ratio}. Therefore, we estimate the utility parameters using the Euler Equations in Conditional Choice Probabilities (ECCP) estimator introduced in \cite{kalouptsidi2021linear}. By the \textit{Hotz-Miller Inversion Lemma} (\cite{hotz1993conditional}), we have:
\begin{equation} \label{eq: CCP finite horizon}
    V_{t}(k_{t},w_{t},\xi_{t}) = - \log p_{t}(k_{t},w_{t},\xi_{t}) = v_{1t}(k_{t},w_{t},\xi_{t}) - \log(1-p_{t}(k_{t},w_{t},\xi_{t}))
\end{equation}
where $p_{t}(k_{t},w_{t})$ is the CCP of not working. Combining \eqref{eq: v1} and \eqref{eq: CCP finite horizon} gives:
\begin{equation}
    \log\left(\frac{1-p_{t}(k_{t},w_{t},\xi_{t})}{p_{t}(k_{t},w_{t},\xi_{t})}\right) = u(k_{t},w_{t};\theta) + \xi_{t} - \beta \bE_{\xi_{t+1}|\xi_{t}} \bE_{w_{t+1}|w_{t}} \left[\log p_{t+1}(k_{t}+1,w_{t+1},\xi_{t+1})\right] \label{eq: ECCP finite horizon}
\end{equation}

Without a distributional assumption for $\xi_{t}$, we cannot calculate the conditional expectation in \eqref{eq: ECCP finite horizon}. However, the cross-sectional data allows us to estimate CCPs, denoted as $\hat{p}_{t}(k,w)$. We estimate $\hat{p}_{t}(k,w)$ by a flexible logit for each $t$. Let the expectational error be:
\begin{equation*}
    \hat{e}(k_{t},w_{t},k_{t+1},w_{t+1},\xi_{t}) := \beta \log \hat{p}_{t+1}(k_{t+1},w_{t+1}) -\beta \bE_{\xi_{t+1}|\xi_{t}} \bE_{w_{t+1}|w_{t}} \left[\log p_{t+1}(k_{t+1},w_{t+1},\xi_{t+1})\right]
\end{equation*}
where $k_{t+1} = k_{t} + 1$. Then, we can rewrite \eqref{eq: ECCP finite horizon} as:
\begin{equation*} \label{eq: ECCP finite horizon 2}
    \log\left(\frac{1-\hat{p}_{t}(k_{t},w_{t})}{\hat{p}_{t}(k_{t},w_{t})}\right) + \beta \log \hat{p}_{t+1}(k_{t+1},w_{t+1}) = u(k_{t},w_{t};\theta) + \xi_{t} + \hat{e}(k_{t},w_{t},k_{t+1},w_{t+1},\xi_{t})
\end{equation*}
Therefore $\theta$ can be identified using an instrument for $\xi_{t} + \hat{e}(k_{t},w_{t},k_{t+1},w_{t+1},\xi_{t})$. Denote by $\mK_{t}$ the set of possible hours worked at $t$.\footnote{Note that $|\mK_{t}| = 4$ for $t \geq 9$ AM and $|\mK_{t}| = 3$ for $t = 8$ AM.} The ECCP estimator stacks all $k \in \mK_{t}$. A unit of observation is defined by day-hour.

Following \cite{kalouptsidi2021linear}, we use the previous day's average hourly earnings for the same hour as the IV. \Cref{tab:2sls_results_styled} shows the estimation results. The implied marginal value of time defined by $-\frac{\theta_{1} + 2\theta_{2}k}{\theta_{3}}$ ranges from \$0 at around $k=7$ hours to \$5.35 at $k=11$ hours for weekdays, and from \$0 at around $k=8$ hours to \$9.95 at $k=11$ hours for weekends.

\begin{table}[!htbp]
\centering
\footnotesize
\caption{ECCP Estimation Results}
\label{tab:2sls_results_styled}
\begin{threeparttable}
\begin{tabular}{l cccc}
\toprule
& \multicolumn{4}{c}{\textbf{Model Estimates}} \\
\cmidrule(lr){2-5}
& \multicolumn{2}{c}{\textbf{Weekday}} & \multicolumn{2}{c}{\textbf{Weekend}} \\
\cmidrule(lr){2-3} \cmidrule(lr){4-5}
\textbf{Variable} & \textbf{Coef.} & \textbf{Std. Err.} & \textbf{Coef.} & \textbf{Std. Err.} \\
\midrule
\textbf{Constant} & $-2.1720$ & $(0.0304)$ & $-0.5631$ & $(0.0724)$ \\
\textbf{Hours worked} & $\phantom{-}0.4035$ & $(0.0025)$ & $\phantom{-}0.1866$ & $(0.0032)$ \\
\textbf{Hours worked (squared)} & $-0.0274$ & $(0.0002)$ & $-0.0116$ & $(0.0003)$ \\
\textbf{Average hourly earnings} & $\phantom{-}0.0373$ & $(0.0010)$ & $\phantom{-}0.0069$ & $(0.0023)$ \\[0.5em]
\midrule
\# of Drivers & \multicolumn{2}{c}{3,562} & \multicolumn{2}{c}{3,322} \\
\# of Days & \multicolumn{2}{c}{206} & \multicolumn{2}{c}{156} \\
\bottomrule
\end{tabular}
\textbf{Notes:} A unit of observation is defined by day-hour. Each observation stacks all $k_{t} \in \mK_{t}$. The model is estimated using 2SLS with the previous day's average hourly earnings as the IV. The standard errors are clustered at the day-$k$ level.
\end{threeparttable}
\end{table}

\subsection{The Reference Distribution and Scalar Parameters of Interest}

\noindent We define the reference distribution and introduce the scalar parameters of interest. Let $N_{tk}$ be the number of drivers who has worked $k$ hours at hour $t$. The market-level $\xi_{t}$ equates the model-implied weighted average CCP of not working with the observed weighted average:
\begin{equation} \label{eq: xi_t}
    \sum_{k \in \mK_{t}} \frac{N_{tk}}{\sum_{k \in \mK_{t}} N_{tk}} p_{t}(k,w_{t},\xi_{t}) = \sum_{k \in \mK_{t}} \frac{N_{tk}}{\sum_{k \in \mK_{t}} N_{tk}} \hat{p}_{t}(k,w_{t})
\end{equation}

As we assume $\xi_{t}$ is stationary, its marginal distribution at $T$ is its stationary distribution. At hour $T$, drivers solve a static problem, and the CCP of not working is:
\begin{equation*}
    p_{T}(k_{T},w_{T},\xi_{T}) = \frac{1}{1+\exp\left(u(k_{T},w_{T},\xi_{T})\right)}
\end{equation*} 
Therefore, the stationary distribution is identified by recovering $\xi_{T}$ to satisfy \eqref{eq: xi_t} at $T$. \Cref{fig:last_omega} plots the kernel density estimator of $\xi_{T}$. We fit a Gaussian distribution to the last period $\xi_{T}$. Denote its mean and standard deviation as $\mu_{\xi}$ and $\sigma_{\xi}$, respectively.
\begin{figure}[!htbp]
    \centering
    \begin{subfigure}[b]{0.48\textwidth}
        \centering
        \includegraphics[width=\textwidth]{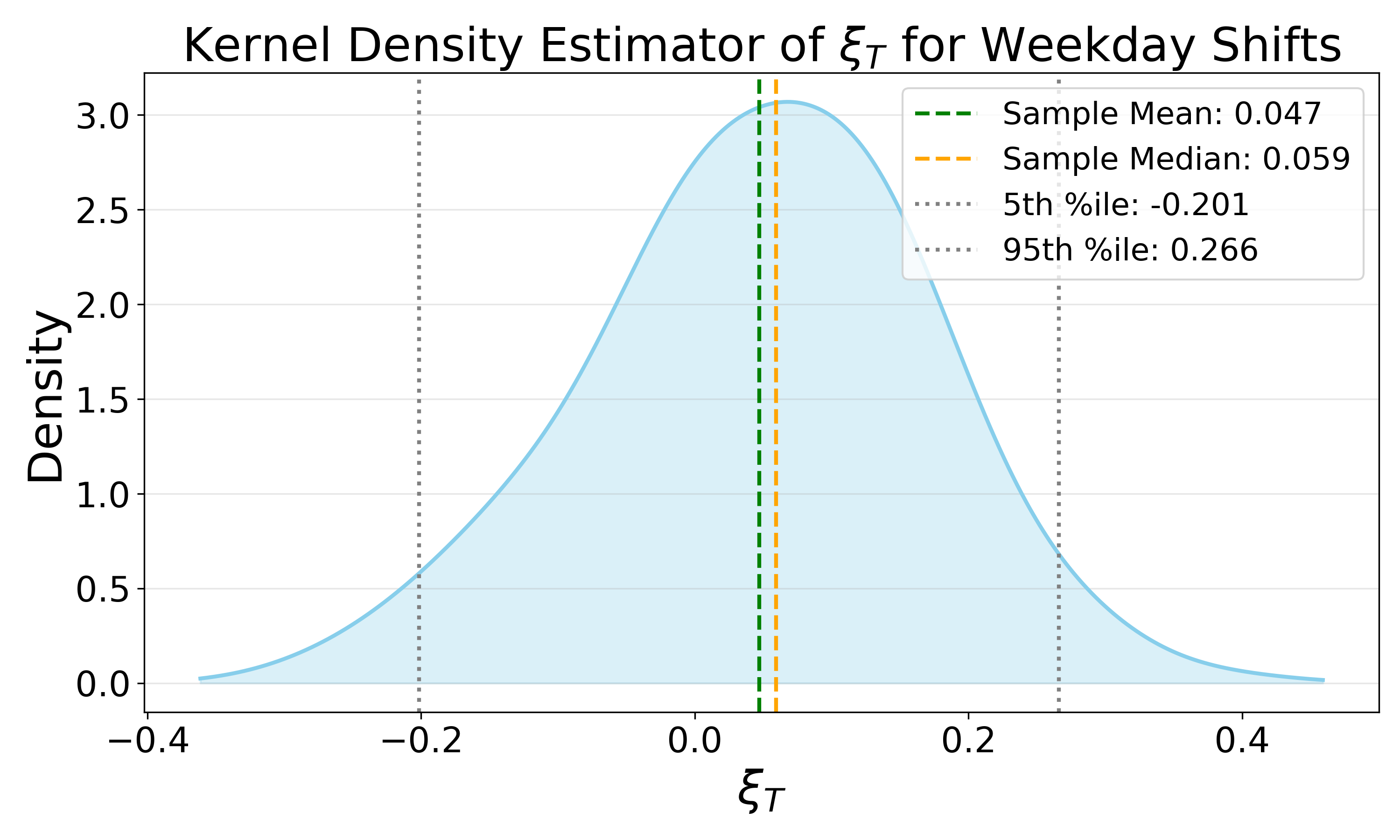}
    \end{subfigure}
    \hfill
    \begin{subfigure}[b]{0.48\textwidth}
        \centering
        \includegraphics[width=\textwidth]{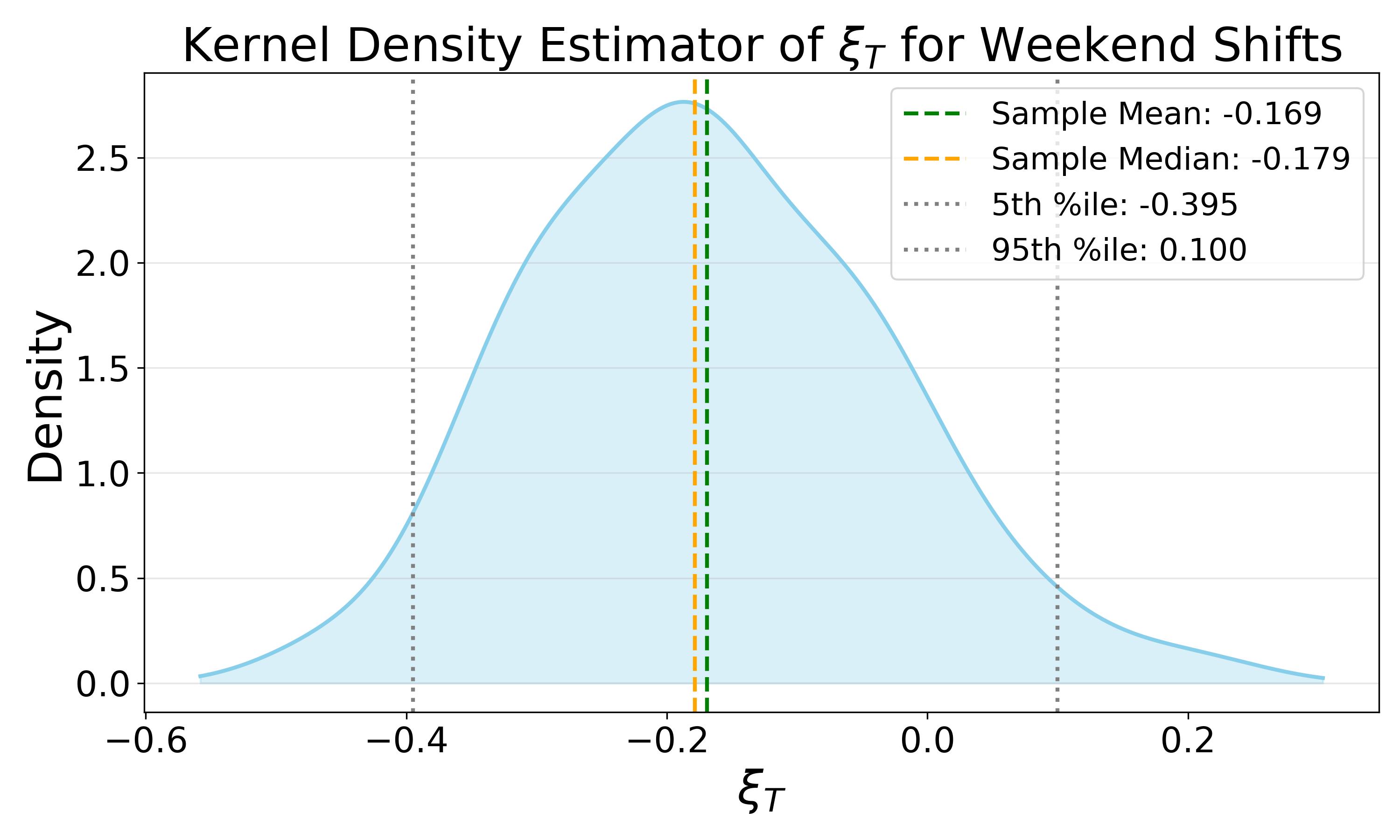}
    \end{subfigure}
    \caption{Kernel Density Estimator of the Last Period $\xi_{T}$}
    \label{fig:last_omega}
\end{figure}

The reference model for $\xi_{t}$ is an AR(1) process:
\begin{equation} \label{eq: AR(1) xi_t}
    \xi_{t} = \mu + \rho \xi_{t-1} + \eta_{t}
\end{equation}
where $\eta_{t}$ follows i.i.d normal distribution with mean 0 and variance $\sigma^{2}$. As we have identified its stationary distribution, we only need to solve the fixed point problem for $\rho$: we begin with an initial guess of $\rho$ and circulate between: (i) setting $\mu = \mu_{\xi}\cdot(1-\rho)$ and $\sigma = \sigma_{\xi} \cdot \sqrt{1-\rho^{2}}$, (ii) solving the Bellman equation using backward induction, (iii) recovering $\xi_{t}$ using \eqref{eq: xi_t}, and (iv) updating $\rho$ by refitting the AR(1) process to the recovered $\xi_{t}$ until convergence. The reference distribution $F_{0}$ for $(\xi,\xi')$ is the product of the transition kernel of the estimated AR(1) process and its stationary distribution $\nu_{0}$. The perturbation set is defined as:
\begin{equation*}
    \mF := \left\{ F \in \mP(\mU) \mid F \in \Pi(\nu_{0},\nu_{0}), D_{KL}(F\|F_{0}) \leq \delta \right\}
\end{equation*}

We consider two scalar parameters of interest: the elasticity of stopping working, and the Frisch elasticity of labor supply. Both elasticities are at the individual level, meaning the demand side remains unchanged. Therefore, we keep the transitions of $w_{t}$ and $\xi_{t}$ fixed.

For the elasticity of stopping working, we increase the average hourly earnings from the current bin, $w_{mt}$, to the next bin, $w'_{mt}$. The weighted average of the elasticity at hour $t$ is:
\begin{equation*}
    \sum_{m}\sum_{k \in \mK_{mt}} \frac{N_{mtk}}{\sum_{m,k \in \mK_{mt}}N_{mtk}} \frac{p_{t}(k,w'_{mt},\xi_{mt})-\hat{p}_{kmt}}{\hat{p}_{kmt}} \times \frac{w'_{mt} - w_{mt}}{w_{mt}}
\end{equation*}

As shown in \Cref{tab:summary_stats}, the share of drivers who continue working begins to decline around 11 AM.Moreover, our model does not endogenize the initial entry decision (i.e., the choice of when to start a shift). To compute the Frisch elasticity, we assume 11 AM is the first hour drivers can choose to stop working, and consider a 1\% increase in average hourly earnings beginning at 11 AM. The total expected hours worked at day $m$ is:
\begin{equation*}
    H(m,\bm{\xi}_{m},\bm{w}_{m},p_{}) := \sum_{k \in \{3,4,5,6\}} N_{mk} \sum_{t=12}^{16} (t-11)p_{mt}(k,w_{mt},\xi_{mt})\Pi_{t_{1}=11}^{t-1} (1-p_{t_{1}}(k,w_{mt_{1}},\xi_{mt_{1}}))
\end{equation*}
where $N_{mk}$ is the number of drivers whose hours worked is $k$ at 11 AM of day $m$, and $\bm{\xi}_{m},\bm{w}_{m}$ are the vectors of $\xi_{mt},w_{mt}$ for $t=11,\ldots,16$. Then, the Frisch elasticity is:
\begin{equation*}
    \frac{\sum_{m} H(m,\bm{\xi}_{m},\bm{w}'_{m},p'_{}) - \sum_{m} H(m,\bm{\xi}_{m},\bm{w}_{m},p_{})}{\sum_{m} H(m,\bm{\xi}_{m},\bm{w}_{m},p_{})} \times 100
\end{equation*}
where $\bm{w}'_{m} = 1.01 \cdot \bm{w}_{m}$, and $p'_{}$ is derived from the Bellman equation with $\bm{w}'_{m}$.

\subsection{Sensitivity Analysis}

\noindent We convert \eqref{eq: ECCP finite horizon} into an unconditional moment constraint by assuming that $p_{t}(k,w,\xi)$ is the solution to \eqref{eq: ECCP finite horizon} for given $p_{t+1}(k+1,w,\xi)$ if and only if:
\begin{equation} \label{eq: structural constraint1}
    \sup_{g_{tk} \in C(\mW \times \Xi)} \bE_{F} \bE_{w_{t+1},w_{t}} \left[g_{tk}(w_{t},\xi_{t})\left(\log\left(\frac{1-p_{t}(k,w_{t},\xi_{t})}{p_{t}(k,w_{t},\xi_{t})}\right) - u(k,w_{t}) - \xi_{t} + \beta \log p_{t+1}(k+1,w_{t+1},\xi_{t+1})\right)\right]
\end{equation}
Let the term inside the expectation in \eqref{eq: structural constraint1} be $\psi_{t}(k,w_{t},w_{t+1},\xi_{t},\xi_{t+1};u,p_{t},p_{t+1},g_{tk})$.

To profile out $\xi_{mt}$ from recovering the probability of stop working, we assume:
\begin{assumption} \label{assumption: invertible1}
    For $\forall \ F \in \mF$, the solution $p_{tk}(w,\xi)$ corresponding to \eqref{eq: structural constraint1} satisfies the following: for all $m,t$, there exists a unique $\xi_{mt} \in \Xi$ that satisfies \eqref{eq: xi_t}.
\end{assumption}

The final constraint is a fixed point constraint similar to the procedure used to estimate the AR(1) process. Suppose $F$ is used in \eqref{eq: structural constraint1}. Denote by $\hat{F}$ an estimator of the distribution of the pair $(\xi,\xi')$ using the recovered $\{\xi_{mt}\}_{m=1,t=1}^{M,T-1}$ from \eqref{eq: xi_t}. Then, our fixed point constraint is:
\begin{equation*}
    D_{KL}(F\|\hat{F}) \leq \epsilon_{M}
\end{equation*}
where $\epsilon_{M}$ is the tolerance level. As the sample size is large, we use Scott's Rule to initialize the bandwidth and then use 5-fold cross-validation to select bandwidth candidates around Scott's estimate that maximizes the log-likelihood. To choose $\epsilon_{M}$, we estimate the joint distribution of supply shocks recovered from the AR(1) process by the kernel density estimator with Gaussian kernel. Then, we set $\epsilon_{M}$ to be the KL divergence between the kernel density estimator and the reference distribution.

Then, the lower bound on the elasticity of stopping working at $t$ is:
\begin{align}
    \inf_{\{p_{t}\}_{t=1}^{T-1}} \inf_{F \in \mF} & \quad \sum_{m}\sum_{k \in \mK_{mt}} \frac{N_{mtk}}{\sum_{m,k \in \mK_{mt}}N_{mtk}} \frac{p_{t}(k,w'_{mt},\xi_{mt})-\hat{p}_{kmt}}{\hat{p}_{kmt}} \times \frac{w'_{mt} - w_{mt}}{w_{mt}} \nonumber \\
    \text{s.t.} \quad
    & \sum_{k \in \mK_{mt}} \frac{N_{mtk}}{\sum_{k \in \mK_{mt}} N_{mtk}} p_{t}(k,w_{mt},\xi_{mt}) = \sum_{k \in \mK_{mt}} \frac{N_{mtk}}{\sum_{k \in \mK_{mt}} N_{mtk}} \hat{p}_{t}(k,w_{mt}) \text{ for } \forall \ m, t \nonumber \\
    & \sup_{g_{tk}} \bE_{F} \bE_{w_{t+1},w_{t}} \left[\psi_{t}(k,\xi_{t},\xi_{t+1},w_{t},w_{t+1};u,p_{t},p_{t+1},g_{tk})\right] \text{ for } \forall \ t \leq T-1,k  \label{eq: finite horizon}\\
    & D_{KL}(F\|\hat{F}) \leq \epsilon_{M} \nonumber
\end{align}
The last period $T$ is a static problem, therefore $p_{T}$ is not an optimization parameter. For $\delta = 0$, the reference distribution is the unique solution to the above problem. The corresponding elasticity is the reference elasticity.

For the Frisch elasticity, we increase the earnings coefficient by 1\% from 11 AM. This allows us to leave the discretized state space and its transition unchanged at the cost of solving additional Bellman equations. The lower bound on the Frisch elasticity is:
\begin{align*}
    \inf_{\{p_{t},p'_{t}\}_{t=1}^{T-1}} \inf_{F \in \mF} & \quad \frac{\sum_{m} H(m,\bm{\xi}_{m},\bm{w}'_{m},p'_{}) - \sum_{m} H(m,\bm{\xi}_{m},\bm{w}_{m},p_{})}{\sum_{m} H(m,\bm{\xi}_{m},\bm{w}_{m},p_{})} \times 100  \\
    \text{s.t.} \quad
    & \sum_{k \in \mK_{mt}} \frac{N_{mtk}}{\sum_{k \in \mK_{mt}} N_{mtk}} p_{t}(k,w_{mt},\xi_{mt}) = \sum_{k \in \mK_{mt}} \frac{N_{mtk}}{\sum_{k \in \mK_{mt}} N_{mtk}} \hat{p}_{t}(k,w_{mt}) \text{ for } \forall \ m, t \\
    & \sup_{g^{1}_{tk}} \bE_{F} \bE_{w_{t+1},w_{t}} \left[\psi_{t}(k,\xi_{t},\xi_{t+1},w_{t},w_{t+1};u,p_{t},p_{t+1},g^{1}_{tk})\right] \text{ for } \forall \ t \leq T-1,k \\
    & \sup_{g^{2}_{tk}} \bE_{F} \bE_{w_{t+1},w_{t}} \left[\psi_{t}(k,\xi_{t},\xi_{t+1},w_{t},w_{t+1};u',p_{t}',p_{t+1}',g^{2}_{tk})\right] \text{ for } \forall \ t \leq T-1,k \\
    & D_{KL}(F\|\hat{F}) \leq \epsilon_{M}
\end{align*}
where $u'$ is the utility function with the earning coefficient increased by 1\% from 11 AM.

\subsection{Implementation}

\noindent We adjust the procedure in \Cref{sec: Implementation infinite horizon DDC} to account for the finite horizon model. The main difference is that we solve the Bellman equation \eqref{eq: ECCP finite horizon} by backward induction. The number of Bellman equations \eqref{eq: finite horizon} is 31.\footnote{Note that $\mK_{8} = \{1,2,3\}, \mK_{t} = \{t-8,t-7,t-6,t-5\}$ for $t=9,\cdots,15$.} The number of optimization parameters is $31 \cdot N_{w} N_{\xi} + 1$ for the elasticity of stopping working, and $62 \cdot N_{w} N_{\xi} + 1$ for the Frisch elasticity, where $N_{w}$ is the number of bins for average hourly earnings, $N_{\xi}$ is the number of grid points for $\xi$, and 1 is for the KL divergence constraint. We choose $N_{\xi} = 99$, 5,000 MCMC steps for Frisch elasticities, 2,500 MCMC steps for elasticities of stopping working, 5 optimization steps, 14 radii (the last is $10^{10}$), and 100 as the simulated annealing multiplier. The covariance matrix for the random walk in \Cref{alg:Initial MCMC Optimization} is restricted to be diagonal.

\subsection{Results}

\noindent The alternative model is an independent model where the supply shocks are i.i.d. and follow the stationary distribution identified at $T$. The independent model is closer to the reference model for weekends than for weekdays. The KL divergence between the independent model and the reference model is 0.115 for weekends and 3.673 for weekdays.

\Cref{fig:elasticity_bounds_NYC} plots the elasticity of stopping working from 9 AM to 3 PM for weekdays and weekends. We set $\delta=0.001$ for local deviation and $\delta = 1$ for global deviation. Large points indicate that the KL divergence constraint is binding. When two consecutive points align horizontally, this indicates that increasing the radius does not affect the bounds. The elasticity of stopping working is negative and decreases over time. In particular, the labor supply is inelastic before 11 AM on weekdays and 12 PM on weekends, and elastic after that. At 3 PM, the elasticity of stopping working is around -2 for weekdays and -2.5 for weekends. Overall, both weekday and weekend elasticities are not sensitive to the distributional assumption. For weekdays, elasticity in the morning is more sensitive than in the afternoon in terms of both local and global sensitivity, while it is the opposite for weekends.

\Cref{fig:frisch_elasticity_bounds_NYC} plots the Frisch elasticity bounds for weekdays and weekends. The reference Frisch elasticity is 0.472 for weekends and 0.698 for weekdays. Our reference estimates are consistent with labor supply literature. For example, \cite{buchholz2023rethinking} reports a Frisch elasticity ranging from 0.47-0.54 for NYC taxi drivers. For both weekdays and weekends, the bounds flatten around 0.032-0.056. In terms of local sensitivity, the results appear sensitive to the distributional assumption, with a deviation of 28.37\% for weekdays and 21.76\% for weekends. In terms of global sensitivity, the deviation is larger, with 76.83\% for weekdays and 42.84\% for weekends. For the robustness metric approach, we consider a 15\% deviation from the reference Frisch elasticity. The weekday's robustness metric is around 0.001 for the upper bound, while the lower bound never reaches the 15\% deviation. The weekend's robustness metric is around 0.008 for the upper bound, and 0.01 for the lower bound. Therefore, the weekday's lower bound is more robust than the weekend's lower bound, while the weekday's upper bound is less robust than the weekend's upper bound.

\begin{figure}[H]
    \centering
    \begin{subfigure}[b]{0.48\textwidth}
        \centering
        \includegraphics[width=\textwidth]{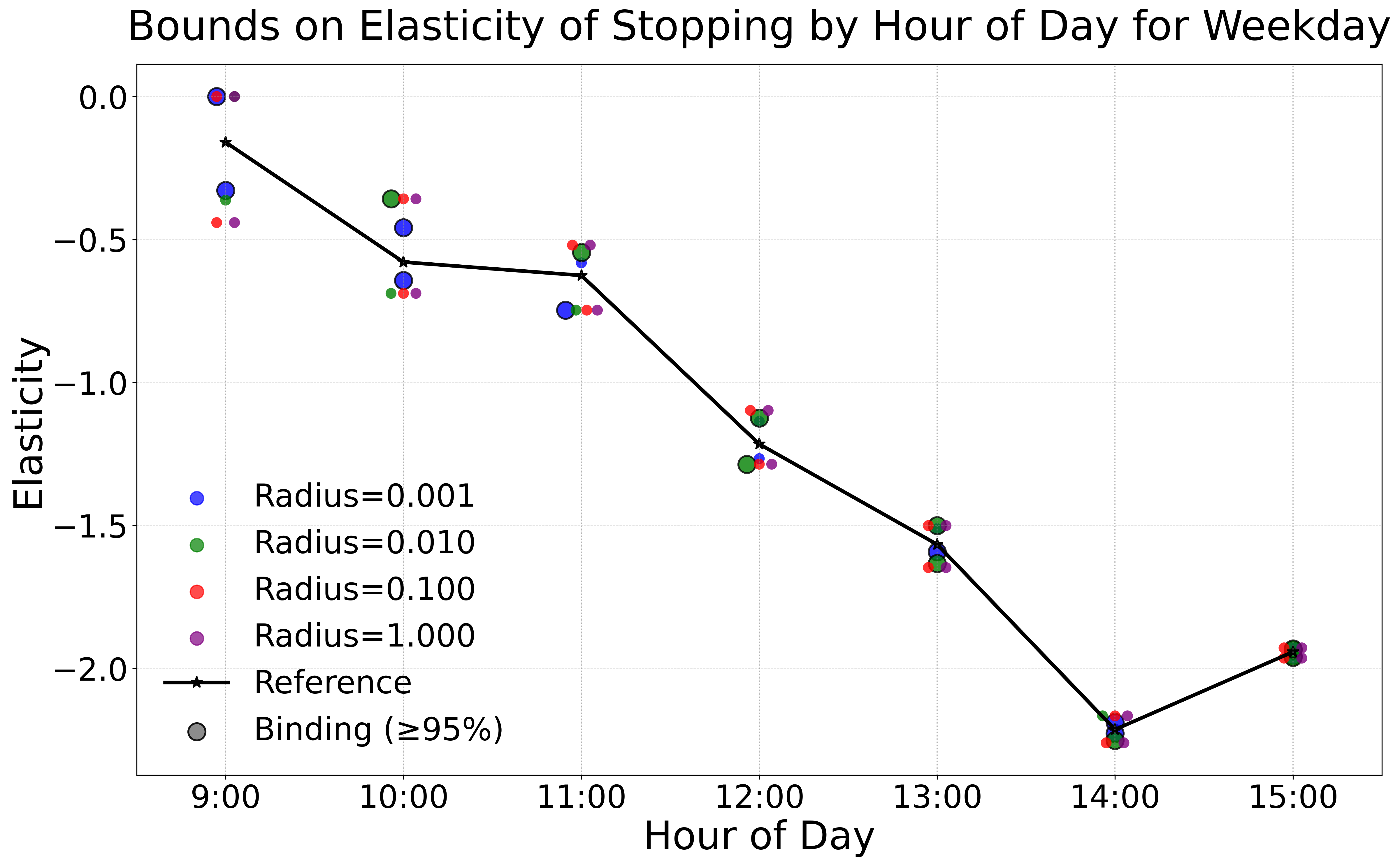}
    \end{subfigure}
    \hfill
    \begin{subfigure}[b]{0.48\textwidth}
        \centering
        \includegraphics[width=\textwidth]{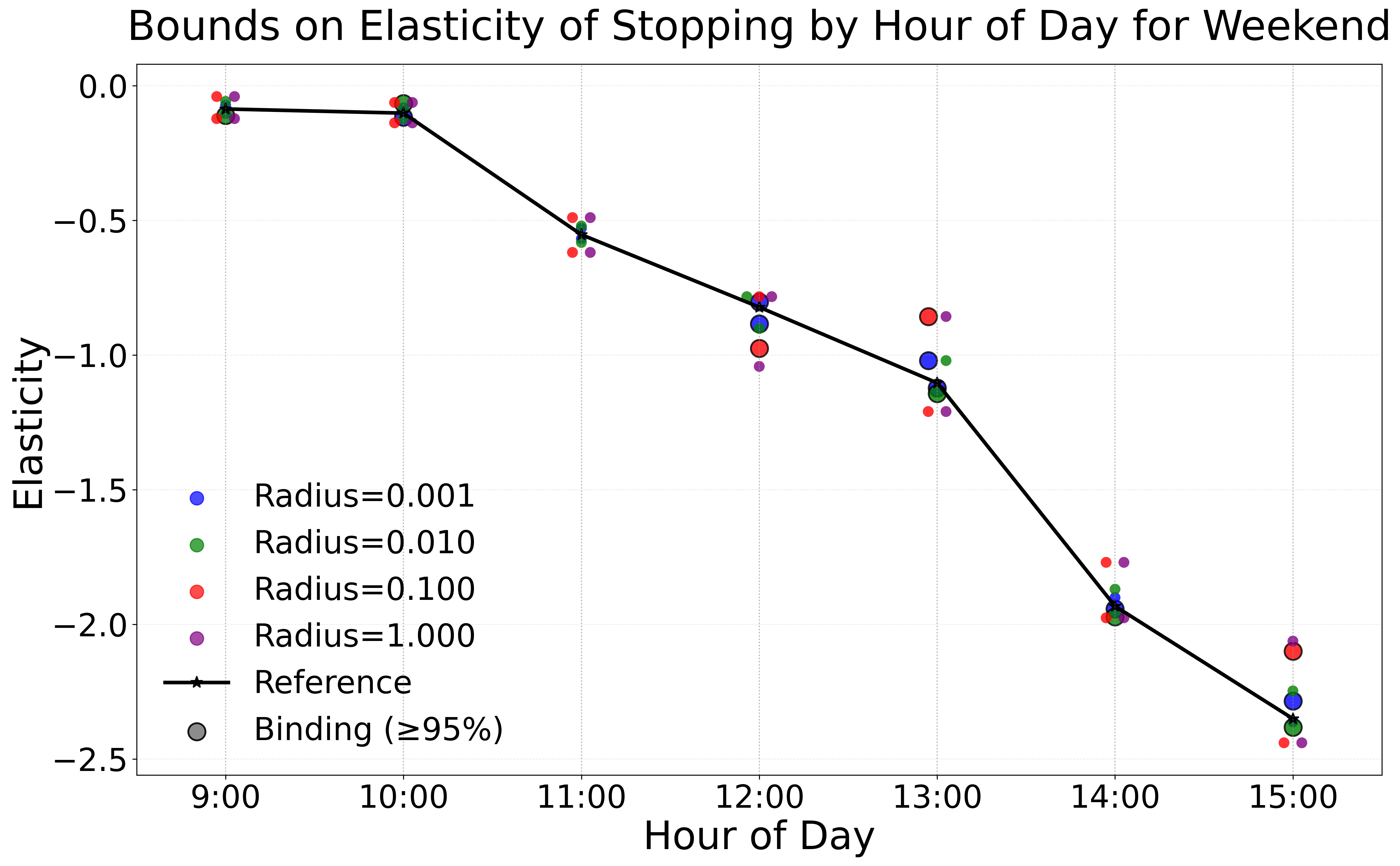}
    \end{subfigure}
    \caption{Elasticity of Stopping Working by Hour of Day for NYC Taxi Drivers}
    \label{fig:elasticity_bounds_NYC}
\end{figure}

\begin{figure}[H]
    \centering
    \begin{subfigure}[b]{0.48\textwidth}
        \centering
        \includegraphics[width=\textwidth]{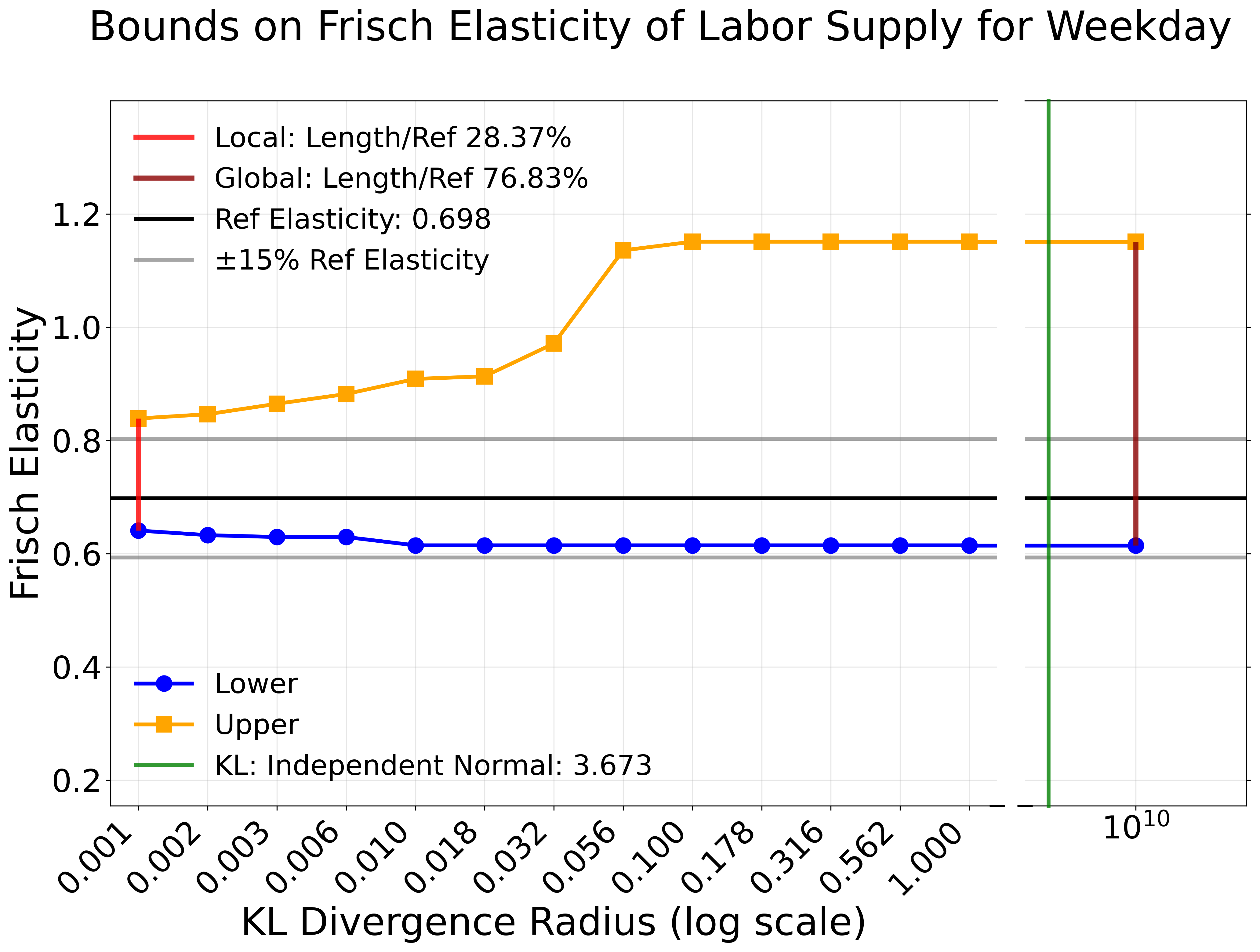}
    \end{subfigure}
    \hfill
    \begin{subfigure}[b]{0.48\textwidth}
        \centering
        \includegraphics[width=\textwidth]{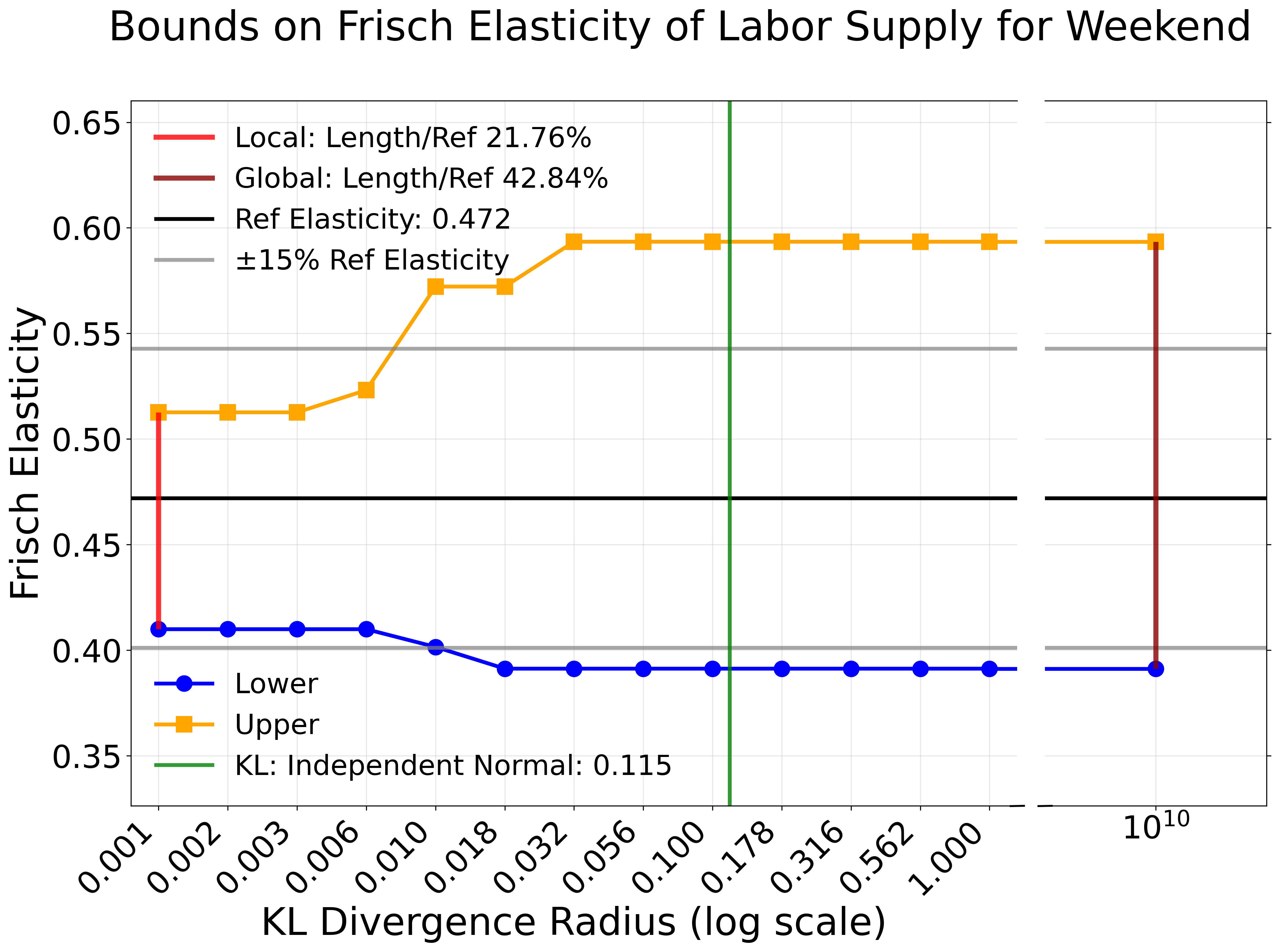}
    \end{subfigure}
    \caption{Frisch Elasticity Bounds by Hour of Day for NYC Taxi Drivers}
    \label{fig:frisch_elasticity_bounds_NYC}
\end{figure}

\section{Conclusion} \label{sec: Conclusion}

\noindent We propose a computationally tractable framework to quantify the sensitivity of outcomes of interest to misspecified latent-state dynamics in structural models. We derive bounds on a scalar parameter of interest by perturbing a reference dynamic process, while imposing a stationarity condition for time-homogeneous models or a Markovian condition for time-inhomogeneous models. We apply the approach to an infinite-horizon dynamic demand model for new cars in the UK, Germany, and France, and a finite-horizon dynamic labor supply model for taxi drivers in New York City.

\begin{singlespace}
    \setlength{\bibsep}{0pt}
    \setcitestyle{authoryear,round}
    \bibliographystyle{apalike}
    \bibliography{reference}
\end{singlespace}

\appendix
\section{Additional Examples} \label{sec: Additional Examples}

\begin{example}[Infinite Horizon Dynamic Discrete Choice Models] \label{ex: Dynamic Discrete Choice Models}
    This example considers the serial independence assumption on utility shocks in a single-agent DDC model, as in \cite{rust1987optimal}. Agents solve the Bellman equation for the conditional value function $v \in \mV$:
    \begin{equation} \label{eq: ex2 Bellman Equation}
        v_{j}(x,\eps) = u_{j}(x,\eps;\theta) + \beta \bE_{\eps'|\eps}\bE_{x'|x,j} \max_{j' \in \mJ} v_{j'}(x',\eps')
    \end{equation}
    where $\eps \in \bR^{J}$ is a vector of utility shocks for each action $j \in \mJ$, $x \in \mX$ is the observable state variable, $\beta \in (0,1)$ is the discount factor, $u_{j}(x,\eps;\theta)$ is the period utility parameterized by $\theta \in \Theta$, and $\mV$ is the class of conditional value functions.

    Let $U:=(\eps,\eps')$ be a vector of current and future utility shocks. The serial independence is often imposed on utility shocks, which implies a reference distribution:
    \begin{equation*}
        dF_{0}(U) := \nu_{0}(d\eps) \nu_{0}(d\eps')
    \end{equation*}
    whose perturbation set is defined as:
    \begin{equation*}
        \mF := \left\{ F \in \mP(\mU) \mid F \in \Pi(\nu_{0},\nu_{0}), D_{KL}(F\|F_{0}) \leq \delta \right\}
    \end{equation*}

    Suppose the scalar parameter of interest is the social welfare, defined as:
    \begin{equation*}
        \bE_{\nu_{0}} \bE_{x} \max_{j \in \mJ} v_{j}(x,\eps)
    \end{equation*}

    We convert the Bellman equation \eqref{eq: ex2 Bellman Equation} into a restriction that depends on the joint distribution $F \in \mF$. We assume there exists a class of Lagrange multiplier functions $\mG$ such that $v$ solves the Bellman equation \eqref{eq: ex2 Bellman Equation} if and only if:
    \begin{equation*}
        \sup_{g \in \mG} \bE_{F} \bE_{x,j,x'} \left[g_{j}(x,\eps)\left(v_{j}(x,\eps) - u_{j}(x,\eps;\theta) - \beta \max_{j' \in \mJ} v_{j'}(x',\eps')\right)\right] = 0
    \end{equation*}
    We can rewrite the structural constraint as:
    \begin{equation*}
        \sup_{g \in \mG} \bE_{F} \left[\psi(U;\theta,v,g)\right] = 0
    \end{equation*}
    where $\psi(U;\theta,v,g) := \bE_{x,j,x'} \left(g_{j}(x,\eps)\left(v_{j}(x,\eps) - u_{j}(x,\eps;\theta) - \beta \max_{j' \in \mJ} v_{j'}(x',\eps')\right)\right)$.

    We consider the following moment conditions for estimation:
    \begin{equation*}
        \bE_{F}\mathbbm{1}(v_{j}(x,\eps) = \max_{j' \in \mJ} v_{j'}(x,\eps)) = P_{0}(j|x) \quad \forall \ (j,x) \in \mJ \times \mX
    \end{equation*}
    where $\mathbbm{1}$ is the indicator function, and $P_{0}(j|x)$ is the population CCP. We assume $\mX$ has discrete support, and rewrite the moment conditions as:
    \begin{equation*}
        \bE_{F} \left[m(U;v)\right] = P_{0}
    \end{equation*}
    where $m(U;v)$ stacks the indicator functions for each $(x,j)$ given $v$, and $P_{0}$ stacks the CCPs.

    Then, the lower bound on the social welfare is given by:
    \begin{equation*}
        \begin{aligned}
            \inf_{(\theta,v,F) \in \Theta \times \mV \times \mF} 
            & \bE_{\nu_{0}} \bE_{x} \max_{j \in \mJ} v_{j}(x,\eps) \\
            \text{s.t.} \quad
            & \bE_{F} \left[m(U;v)\right] = P_{0} \\
            & \sup_{g \in \mG} \bE_{F} \left[\psi(U;\theta,v,g)\right] = 0
        \end{aligned}
    \end{equation*}

\end{example}

\begin{example}[Infinite Horizon Dynamic Discrete-Continuous Choice Models] \label{ex: Dynamic Discrete-Continuous Choice Models}
    This example considers the serial independence assumption on the consumption shock in a single-agent dynamic discrete-continuous choice model, as in \cite{iskhakov2017endogenous} and \cite{levy2022identification}. At each period, individuals make a discrete choice $j \in \mJ$, and a continuous choice $q \in \bR$. The value function solves the Bellman equation:
    \begin{equation*}
        V(x,I,\xi,\eps) = \max_{j,q}\left\{u_{j}(q,x,I,\xi;\theta) + \eps_{j} + \beta \bE_{\xi'|\xi}\bE_{x'|x,j} V(x',I',\xi',\eps') \right\}
    \end{equation*}
    where $I$ is the resource constraint (e.g., inventory) evolving deterministically over time according to $I' = L(x,I,q,j)$, $x \in \mX$ is the observable state variable, $\eps \in \bR^{J}$ is a vector of i.i.d. Extreme Value Type I utility shocks, $\xi \in \Xi$ is the consumption shock, $u_{j}(q,x,I,\xi;\theta)$ is the period utility parameterized by $\theta \in \Theta$, and $\beta \in (0,1)$ is the discount factor.

    Let $q^{*}_{j} := q^{*}_{j}(x,I,\xi)$ be the conditional optimal continuous choice for $(j,x,I,\xi)$. Given $q^{*}_{j}$, the conditional value function $v_{j} \in \mV$ solves:
    \begin{equation}
        v_{j}(x,I,\xi) = u_{j}(q^{*}_{j},x,I,\xi;\theta) + \beta \bE_{\xi'|\xi} \bE_{x'|x,j}\left[\log\left(\sum_{j' \in \mJ} \exp\left(v_{j'}(x',I',\xi')\right)\right)\right] + \beta \gamma \label{eq: ex3 Bellman Equation}
    \end{equation}
    The model-implied conditional choice probability is $p(j|x,I,\xi) = \frac{\exp\left(v_{j}(x,I,\xi)\right)}{\sum_{j' \in \mJ} \exp\left(v_{j'}(x,I,\xi)\right)}$.
    
    We assume that the optimal continuous choice is attained at an interior point and the dominated convergence theorem holds. Therefore, the first-order condition for \eqref{eq: ex3 Bellman Equation} holds:
    \begin{equation*}
        \frac{\partial u_{j}(q^{*}_{j},x,I,\xi;\theta)}{\partial q}+ \beta \bE_{\xi'|\xi} \bE_{x'|x,j}\left[\sum_{j' \in \mJ} p(j'|x',I',\xi') \frac{\partial v_{j'}(x',I',\xi')}{\partial I'}\right] \frac{\partial L(x,I,q^{*}_{j},j)}{\partial q} = 0
    \end{equation*}
    By the envelope theorem, we have:
    \begin{equation*}
        \frac{\partial v_{j'}(x',I',\xi')}{\partial I'} = \frac{\partial u_{j'}(q^{*}_{j'},x',I',\xi';\theta)}{\partial q}
    \end{equation*}
    Therefore, the Euler equation for the optimal continuous choice is:
    \begin{equation} \label{eq: ex2 Euler Equation}
        \frac{\partial u_{j}(q^{*}_{j},x,I,\xi;\theta)}{\partial q} + \beta \bE_{\xi'|\xi} \bE_{x'|x,j}\left[\sum_{j' \in \mJ} p(j'|x',I',\xi') \frac{\partial u_{j'}(q^{*}_{j'},x',I',\xi';\theta)}{\partial q}\right] \frac{\partial L(x,I,q^{*}_{j},j)}{\partial q} = 0
    \end{equation}

    Let $U:=(\xi,\xi')$ be a vector of current and future consumption shocks. The serial independence is often imposed on consumption shocks, which implies a reference distribution:
    \begin{equation*}
        dF_{0}(U) := \nu_{0}(d\xi) \nu_{0}(d\xi')
    \end{equation*}
    whose perturbation set is defined as:
    \begin{equation*}
        \mF := \left\{ F \in \mP(\mU) \mid F \in \Pi(\nu_{0},\nu_{0}), D_{KL}(F\|F_{0}) \leq \delta \right\}
    \end{equation*}
    Suppose the scalar parameter of interest is the social welfare, defined as:
    \begin{equation*}
        \bE_{\nu_{0}} \bE_{x,I}\left[\log\left(\sum_{j \in \mJ} \exp\left(v_{j}(x,I,\xi)\right)\right)\right]
    \end{equation*}

    We convert the Bellman equation \eqref{eq: ex3 Bellman Equation} and the Euler equation \eqref{eq: ex2 Euler Equation} into restrictions that depend on the joint distribution $F \in \mF$. We assume there exists a class of Lagrange multiplier functions $\mG$ such that $v := (v_{1},\cdots,v_{J},q^{*}_{1},\cdots,q^{*}_{J})$ solves the Bellman equation \eqref{eq: ex3 Bellman Equation}, and the Euler equation \eqref{eq: ex2 Euler Equation} if and only if:
    \begin{align*}
        & \sup_{g^{1} \in \mG} \bE_{F} \bE_{x,I,j,x'} \Biggl[g^{1}_{j}(x,I,\xi) \Biggl(v_{j}(x,I,\xi) - u_{j}(q^{*}_{j},x,I,\xi;\theta) - \beta \log\left(\sum_{j' \in \mJ} \exp\left(v_{j'}(x',I',\xi')\right)\right) - \beta \gamma\Biggr)\Biggr] = 0 \\
        & \sup_{g^{2} \in \mG} \bE_{F} \bE_{x,I,j,x'} \Biggl[g^{2}_{j}(x,I,\xi) \Biggl(\frac{\partial u_{j}(q^{*}_{j},x,I,\xi;\theta)}{\partial q}+ \beta\sum_{j' \in \mJ} p(j'|x',I',\xi') \frac{\partial u_{j'}(q^{*}_{j'},x',I',\xi';\theta)}{\partial q} \frac{\partial I'(x,I,q^{*}_{j},j)}{\partial q} \Biggr) \Biggr] = 0
    \end{align*}
    We rewrite the structural constraints as $\sup_{g \in \mG} \bE_{F} \left[\psi(U;\theta,v,g)\right] = 0$ where $\psi$ is the sum of the two expressions inside the expectations and $g := (g^{1},g^{2})$.

    We consider the following moment conditions for estimation: $\forall \ (j,x,I) \in \mJ \times \mX \times \mI$
    \begin{align*}
        \bE_{F} \left[p(j|x,I,\xi)\right] = P_{0}(j|x,I), \quad \bE_{F} \left[q^{*}_{j}(x,I,\xi)\right] = q_{0}(j,x,I)
    \end{align*}
    where $P_{0}(j|x,I)$ is the population CCP and $q_{0}(j,x,I)$ is the population continuous choice function. We discretize $\mX \times \mI$ for estimation, and rewrite the moment conditions as $\bE_{F} \left[m(U;v)\right] = P_{0}\bE_{F} \left[m(U;v)\right] = P_{0}$ where $m$ ($P_{0}$) stacks the model-implied (population) CCPs and continuous choice functions.

    Then, the lower bound on the social welfare is given by:
    \begin{align*}
        \inf_{(\theta,v,F) \in \Theta \times \mV \times \mF} \
        & \bE_{\nu_{0}} \bE_{x,I} \log\left(\sum_{j \in \mJ} \exp\left(v_{j}(x,I,\xi)\right)\right) \\
        \text{s.t.} \quad
        & \bE_{F} \left[m(U;v)\right] = P_{0} \\
        & \sup_{g \in \mG} \bE_{F} \left[\psi(U;\theta,v,g)\right] = 0
    \end{align*}
\end{example}

\section{Proofs} \label{sec:proofs}

\subsection{Supporting Lemmas}

\begin{lemma}[Fan's Minimax Theory] \label{lemma: Minimax Theory}
    Suppose the following conditions hold:
    \begin{enumerate}
        \item $X$ be a compact Hausdorff space and $Y$ a nonempty set (not necessarily topologized).
        \item Let $f: X \times Y \rightarrow \bR$ be a real-valued function.
        \item For $\forall \ y \in Y$, $f(\cdot,y)$ is convexlike on $X$, i.e., for all $x_{1}, x_{2} \in X$ and $\lambda \in [0,1]$, there exists $x_{0} \in X$ such that $f(x_{0},y) \leq \lambda f(x_{1},y) + (1-\lambda)f(x_{2},y)$.
        \item For $\forall \ x \in X$, $f(x,\cdot)$ is concavelike on $Y$, i.e., for all $y_{1}, y_{2} \in Y$ and $\lambda \in [0,1]$, there exists $y_{0} \in Y$ such that $f(x,y_{0}) \geq \lambda f(x,y_{1}) + (1-\lambda)f(x,y_{2})$.
        \item For $\forall \ y \in Y$, $f(\cdot,y)$ is lower semicontinuous on $X$.
    \end{enumerate}
    Then, we have:
    \begin{equation*}
        \sup_{Y} \inf_{X} f(x,y) = \inf_{X} \sup_{Y} f(x,y)
    \end{equation*}
\end{lemma}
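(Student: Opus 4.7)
The direction $\sup_{y \in Y} \inf_{x \in X} f(x,y) \le \inf_{x \in X} \sup_{y \in Y} f(x,y)$ is automatic, so I would focus on the reverse inequality. Setting $v := \sup_{y \in Y} \inf_{x \in X} f(x,y)$ and fixing an arbitrary $c > v$, the plan is to produce some $x^\ast \in X$ with $f(x^\ast, y) \le c$ for every $y \in Y$, which gives $\inf_X \sup_Y f \le c$; letting $c \downarrow v$ then closes the argument.

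For each $y \in Y$, define $A_y := \{x \in X : f(x,y) \le c\}$. By the lower semicontinuity hypothesis (5), each $A_y$ is closed in the compact Hausdorff space $X$, so by compactness it suffices to establish the finite intersection property: for any finite subset $\{y_1,\ldots,y_n\} \subseteq Y$ there is some $x^\ast \in \bigcap_{i=1}^n A_{y_i}$.

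To prove finite intersection, I would exploit concavelikeness to reduce to a scalar inequality on the simplex. Iterating hypothesis (4) by induction on $n$, for each $\lambda = (\lambda_1,\ldots,\lambda_n)$ in the simplex $\Delta_n$ there exists $y_\lambda \in Y$ such that $f(x, y_\lambda) \ge \sum_i \lambda_i f(x, y_i)$ for all $x \in X$. Hence $\inf_{x \in X} \sum_i \lambda_i f(x, y_i) \le \inf_{x \in X} f(x, y_\lambda) \le v < c$, uniformly in $\lambda \in \Delta_n$. The task then reduces to deducing $\inf_{x \in X} \max_i f(x, y_i) \le c$, which immediately yields the desired $x^\ast$.

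The crux — and the step I expect to be the main obstacle — is the final swap between $\sup_{\lambda \in \Delta_n} \inf_{x \in X}$ and $\inf_{x \in X} \sup_{\lambda \in \Delta_n} = \inf_{x \in X} \max_i$ for $(x,\lambda) \mapsto \sum_i \lambda_i f(x, y_i)$, because convexlikeness of each slice $f(\cdot, y_i)$ need not transfer to nonnegative linear combinations, so a direct finite-dimensional minimax theorem cannot be invoked off the shelf. I would resolve this by induction on $n$ along the classical Fan 1953 telescoping lines: the base case $n=1$ follows from $\inf_X f(\cdot, y_1) \le v < c$; in the inductive step, one merges two of the $y_i$'s through concavelikeness while using the pointwise convexlike hypothesis on $X$ to absorb the resulting constraint, reducing to $n-1$ points. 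An alternative and cleaner route is to apply the KKM lemma directly to the closed sets $A_{y_i}$ over the vertices of $\Delta_n$, since the KKM covering condition there reduces precisely to the pointwise convexlike hypothesis and sidesteps the additivity issue entirely — this is essentially the path followed in Ricceri (1998) Theorem~1.3, which the excerpt cites.
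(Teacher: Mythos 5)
The paper does not actually prove this lemma: its entire ``proof'' is a one-line pointer to Ricceri (2013), Theorem 1.3, so there is no paper route to compare against, only the question of whether your blind reconstruction is sound. The skeleton you give — the trivial inequality for free; compactness plus lower semicontinuity to reduce to the finite intersection property of the closed sublevel sets $A_y = \{x : f(x,y)\le c\}$; iterating concavelikeness over the simplex $\Delta_n$; then a final swap — is exactly Ky Fan's 1953 scheme, so the plan is the right one.

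There are, however, two genuine gaps. First, when you produce a single $y_\lambda$ with $f(x,y_\lambda)\ge\sum_i\lambda_i f(x,y_i)$ for \emph{all} $x$, you are invoking the \emph{uniform} concavelike condition (one $y_0$ serving every $x$), not condition (4) as written, where $y_0$ is allowed to depend on $x$. This is not cosmetic: under the pointwise reading the lemma is simply false. Take $X=Y=\{0,1\}$ and $f(x,y)=\mathbbm{1}\{x\neq y\}$; letting $x_0$ and $y_0$ depend on the other coordinate makes (3)–(5) hold trivially, yet $\sup_Y\inf_X f = 0 < 1 = \inf_X\sup_Y f$. The lemma statement is a quantifier misstatement of Fan's and Ricceri's actual hypothesis — and the paper's own verification of the hypotheses (linearity in $(\lambda,\lambda_{KL},g)$, convexity of $\Pi(\nu_1,\dots,\nu_k)$ plus joint convexity of $D_{KL}$) yields the uniform version anyway — so your reading is the intended one, but you should flag that you are strengthening (4) and (3) to the uniform versions.

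Second, the crux swap is left open, and of your two proposed escapes the KKM one does not work as written: the sets $A_{y_i}$ live in the abstract compact Hausdorff space $X$, which carries no simplex or convex-hull structure, so there is nothing for the KKM covering lemma to bite on. KKM-based minimax theorems (Sion's, and most of the results Ricceri surveys) assume $X$ is a convex subset of a topological vector space; Fan's original proof avoids this assumption precisely by passing to $\mathbb{R}^n$. Once you commit to the uniform reading, your stated worry that ``convexlikeness of each slice need not transfer to nonnegative linear combinations'' also evaporates — with uniform convexlike, $\sum_i\lambda_i f(\cdot,y_i)$ is automatically convexlike for every $\lambda\in\Delta_n$ — and the cleanest way to close is the $\mathbb{R}^n$-separation device: set $\phi(x):=(f(x,y_1),\dots,f(x,y_n))$, note that $\phi(X)+\mathbb{R}^n_+$ is convex (uniform convexlike) and closed (compactness plus l.s.c.), and if it missed the orthant $(-\infty,c)^n$ a separating hyperplane would give some $\lambda\in\Delta_n$ with $\inf_x\sum_i\lambda_i f(x,y_i)\ge c$, contradicting your step-5 bound $\inf_x\sum_i\lambda_i f(x,y_i)\le v<c$. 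That, rather than KKM, is the step you need to nail down.
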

\begin{proof}
    See \cite{ricceri2013minimax} Theorem 1.3.
\end{proof}

\begin{lemma} \label{lemma: consistency and convergence rate of the projection}
    Let $\{\mA_{n}\}$ be a sequence of compact sets such that $d_{H}(\mA_{n}, \mA) = o_{p}(1)$ where $\mA$ is compact. Then, the following holds:
    \begin{itemize}
        \item (Consistency) If $f: \mA \to \bR$ is continuous, then: $|\inf_{\mA_{n}} f - \inf_{\mA} f| = o_{p}(1)$.
        \item (Convergence Rate) If $d_{H}(\mA_{n}, \mA) = O_{p}(c_{n})$ for some $c_{n} \to 0$, and $f$ is Lipschitz continuous, then: $|\inf_{\mA_{n}} f - \inf_{\mA} f| = O_{p}(c_{n})$.
    \end{itemize}
\end{lemma}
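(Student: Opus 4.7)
The plan is to prove both parts by the standard “$\epsilon$–$\delta$ transfer” argument between a point and its near-neighbor in the other set, combined with the Extreme Value Theorem so that the infima are attained.

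First, because $\mA$ is compact and $f$ is continuous on $\mA$ (or, more precisely, continuous on an ambient metric space containing all the $\mA_n$, as is the case in our application where every $\mA_n$ and $\mA$ sit inside the compact set $\Theta\times\mF$), $f$ is uniformly continuous and both $\inf_{\mA} f$ and $\inf_{\mA_n} f$ are attained. Pick minimizers $\alpha^\ast \in \argmin_{\mA} f$ and, for each $n$, $\alpha_n^\ast \in \argmin_{\mA_n} f$. By definition of the Hausdorff distance, there exist $\beta_n \in \mA_n$ with $d(\beta_n,\alpha^\ast)\leq d_H(\mA_n,\mA)$ and $\gamma_n \in \mA$ with $d(\gamma_n,\alpha_n^\ast)\leq d_H(\mA_n,\mA)$.

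For part (i), fix $\epsilon>0$. Uniform continuity of $f$ on the ambient compact set yields $\delta>0$ such that $d(\alpha,\beta)<\delta$ implies $|f(\alpha)-f(\beta)|<\epsilon$. Since $d_H(\mA_n,\mA)=o_p(1)$, the event $\{d_H(\mA_n,\mA)<\delta\}$ has probability tending to one, and on this event
\begin{equation*}
\inf_{\mA_n} f \;\leq\; f(\beta_n) \;<\; f(\alpha^\ast)+\epsilon \;=\; \inf_{\mA} f + \epsilon,
\end{equation*}
and symmetrically $\inf_{\mA} f \leq f(\gamma_n) < \inf_{\mA_n} f + \epsilon$. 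Combining gives $|\inf_{\mA_n} f - \inf_{\mA} f|<\epsilon$ with probability approaching one, which is the desired $o_p(1)$ statement.

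For part (ii), Lipschitz continuity of $f$ with constant $L$ turns the $\epsilon$–$\delta$ bound into the quantitative inequality
\begin{equation*}
|\inf_{\mA_n} f - \inf_{\mA} f| \;\leq\; L\cdot d_H(\mA_n,\mA) \;=\; O_p(c_n),
\end{equation*}
by applying the same comparison argument with $\beta_n$ and $\gamma_n$: each of $f(\beta_n)-f(\alpha^\ast)$ and $f(\gamma_n)-f(\alpha_n^\ast)$ is bounded in absolute value by $L\cdot d_H(\mA_n,\mA)$. The only non-routine point is ensuring that the minimizers $\alpha_n^\ast$ actually exist (so the argument is not about infima that are only approached), which is immediate from compactness of $\mA_n$ and continuity of $f$; this is the mildest of obstacles rather than a genuine technical difficulty.
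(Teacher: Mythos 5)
Your proof follows essentially the same route as the paper's: pick attained minimizers, pass through near-neighbors furnished by the Hausdorff distance, and transfer the bound via (uniform) continuity or the Lipschitz constant. You are in fact slightly more careful than the paper on two minor points—you explicitly invoke uniform continuity (which is what is actually needed when both minimizers are random and only the distance between them vanishes, whereas the paper just says ``continuity''), and you observe that $f$ must be defined on an ambient set containing all the $\mA_n$, not merely on $\mA$; your Lipschitz bound also carries constant $L$ rather than the paper's $2L$, though neither affects the $O_p(c_n)$ conclusion.
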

\begin{proof}
    As $\mA_{n}$ and $\mA$ are compact and $f$ is continuous, the infimum is achieved by the extreme value theorem. Denote minimizers as $a_{n}^{*} \in \mA_{n}$ and $\alpha^{*} \in \mA$.

\begin{enumerate}
    \item As $d_{H}(\mA_{n}, \mA) = o_{p}(1)$, there exists a sequence $a_{n} \in \mA$ such that $d(a_{n}^{*},a_{n})=o_{p}(1)$. By the continuity of $f$, we have: $|f(a_{n}^{*}) - f(a_{n})| = o_{p}(1)$, which implies: $f(a_{n}^{*}) = f(a_{n}) + o_{p}(1) \geq f(a^{*}) + o_{p}(1)$. Similarly, there exists a sequence $b_{n} \in \mA_{n}$ such that $d(b_{n}, a^{*}) = o_{p}(1)$. By the continuity of $f$, we have: $|f(b_{n}) - f(a^{*})| = o_{p}(1)$, which implies: $f(a^{*}) = f(b_{n}) + o_{p}(1) \geq f(a_{n}^{*}) + o_{p}(1)$. Combining both inequalities, we have $f(a_{n}^{*}) \geq f(a^{*}) + o_{p}(1) \geq f(a_{n}^{*}) + o_{p}(1) + o_{p}(1)$, which implies: $|\inf_{\mA_{n}} f - \inf_{\mA} f| = |f(a_{n}^{*}) - f(a^{*})| = o_{p}(1)$.
    \item For the second part, let $L$ be the Lipschitz constant. There exists a sequence $a_{n} \in \mA$ such that $d(a_{n}^{*},a_{n})=O_{p}(c_{n})$, by the Lipschitz continuity of $f$, we have: $|f(a_{n}^{*}) - f(a_{n})| \leq L d(a_{n}^{*}, a_{n}) = O_{p}(c_{n})$, which implies: $f(a_{n}^{*}) \geq f(a_{n}) - L d(a_{n}^{*}, a_{n}) \geq f(a^{*}) - L d(a_{n}^{*}, a_{n})$. Similarly, there exists a sequence $b_{n} \in \mA_{n}$ such that $d(b_{n}, a^{*}) = O_{p}(c_{n})$. By the Lipschitz continuity of $f$, we have: $|f(b_{n}) - f(a^{*})| \leq L d(b_{n}, a^{*}) = O_{p}(c_{n})$, which implies: $f(a^{*}) \geq f(b_{n}) - L d(b_{n}, a^{*}) \geq f(a_{n}^{*}) - L d(b_{n}, a^{*})$. Combining both inequalities, we have: $f(a_{n}^{*}) \geq f(a^{*}) - L d(b_{n}, a^{*}) \geq f(a_{n}^{*}) - 2 L d(b_{n}, a^{*})$, which implies: $|f(a_{n}^{*}) - f(a^{*})| \leq 2L d(b_{n}, a^{*}) = O_{p}(c_{n})$.
\end{enumerate}
\end{proof}

\begin{lemma} \label{Lemma: compact space KL}
    Let $\delta > 0$ and $F_{0} \in \mP(\mU)$. Let $\mF_{KL} := \{F \in \mP(\mU) \mid D_{KL}(F\|F_{0}) \leq \delta \}$.
    \begin{lemmaitems}
        \item $\mF_{KL}$ is compact in the topology of weak convergence. \label{lemma: compact space KL}
        \item $\mF_{KL}$ is closed in the topology of weak convergence. \label{lemma: closed space}
    \end{lemmaitems}
\end{lemma}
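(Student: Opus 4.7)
\textbf{Proof plan for \Cref{lemma: compact space KL}.} The plan is to handle part (ii) first by invoking a standard lower semicontinuity property of the KL divergence, and then deduce part (i) from (ii) together with tightness via Prokhorov's theorem. Since $\mU$ is Polish, $\mP(\mU)$ endowed with the weak convergence topology is also Polish, so characterizing relatively compact sets reduces to tightness, which is the genuine work.

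For part (ii), I would take an arbitrary sequence $\{F_n\} \subset \mF_{KL}$ with $F_n \Rightarrow F$ in $\mP(\mU)$ and show $F \in \mF_{KL}$. The key fact is that the map $F \mapsto D_{KL}(F \| F_0)$ is lower semicontinuous with respect to weak convergence (this is a standard consequence of the Donsker–Varadhan variational representation $D_{KL}(F\|F_0) = \sup_{h \in C_b(\mU)} \{\bE_F h - \log \bE_{F_0} e^h\}$, which exhibits $D_{KL}(\cdot\|F_0)$ as a supremum of weakly continuous functionals). Hence $D_{KL}(F\|F_0) \leq \liminf_n D_{KL}(F_n\|F_0) \leq \delta$, giving closedness.

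For part (i), given (ii), it suffices by Prokhorov's theorem to establish tightness of $\mF_{KL}$, i.e.\ for every $\eps > 0$ there is a compact $K \subset \mU$ with $\sup_{F \in \mF_{KL}} F(K^c) \leq \eps$. The key tool is the Donsker–Varadhan inequality: for any measurable $A$ and any $t > 0$,
\begin{equation*}
    F(A) \;\leq\; \frac{\log\bigl(1 + (e^{t}-1) F_0(A)\bigr) + D_{KL}(F\|F_0)}{t},
\end{equation*}
obtained by plugging $h = t \mathbbm{1}_A$ into the variational formula. Since $F_0$ is a Borel probability measure on a Polish space, it is tight, so for any $\eta > 0$ I can choose a compact $K$ with $F_0(K^c) \leq \eta$. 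Applied to $A = K^c$ this yields $\sup_{F \in \mF_{KL}} F(K^c) \leq t^{-1}[\log(1 + (e^t-1)\eta) + \delta]$; first fixing $t$ large so that $\delta/t < \eps/2$, then choosing $\eta$ small enough that $t^{-1}\log(1+(e^t-1)\eta) < \eps/2$, gives tightness. Combining tightness (relative compactness) with closedness from (ii) yields compactness.

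The only mildly delicate point is the Donsker–Varadhan step, but it is a standard identity and the $h = t \mathbbm{1}_A$ choice is routine; the rest is bookkeeping. No new assumptions beyond $\mU$ Polish and $F_0 \in \mP(\mU)$ are needed.
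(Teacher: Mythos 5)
Your proof is correct and follows essentially the same route the paper takes, only it spells out what the paper outsources to citations: closedness via lower semicontinuity of $D_{KL}(\cdot\|F_0)$ (the paper cites Nutz 2021, Lemma 1.3; you derive it from the Donsker--Varadhan representation) and compactness via tightness plus Prokhorov (the paper cites Pinski et al.\ 2015, Proposition 2.1; you prove tightness directly using the Donsker--Varadhan inequality with $h = t\mathbbm{1}_A$ and the tightness of $F_0$ on the Polish space $\mU$). The self-contained tightness estimate is a nice addition and the bookkeeping with $t$ and $\eta$ checks out.
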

\begin{proof}
    \begin{enumerate}
        \item See \cite{pinski2015kullback} Proposition 2.1.
        \item By \cite{nutz2021introduction} Lemma 1.3, $D_{KL}(F\|F_{0})$ is lower-semicontinuous in the topology of weak convergence, i.e., for $F_{n} \to F$ weakly, we have $\liminf_{n \rightarrow \infty} D_{KL}(F_{n}\|F_{0}) \geq D_{KL}(F\|F_{0})$. Since $F_{n} \in \mF_{KL}$, we have $D_{KL}(F\|F_{0}) \leq \delta$.
    \end{enumerate}
\end{proof}

\begin{lemma} \label{lemma: closed compact convex space}
    Let $\nu_{i} \in \mP(\mU_{i})$ for $i \in \{1,\cdots,k\}$, then:
    \begin{lemmaitems}
        \item $\Pi(\nu_{1},\cdots,\nu_{k})$ is closed in the topology of weak convergence. \label{lemma: closed space Pi}
        \item $\Pi(\nu_{1},\cdots,\nu_{k})$ is compact and convex in the topology of weak convergence. \label{lemma: compact convex space Pi}
        \item $\Pi(\nu_{1},\cdots,\nu_{k})$ is a Hausdorff topological space. \label{lemma: Hausdorff space Pi}
        \item Under \Cref{assumption: compact Support}, $\Pi(\nu_{1},\nu_{k})$ is closed in the topology of weak convergence. \label{lemma: closed space Pi2}
        \item $\Pi(\nu_{1},\nu_{k})$ is convex in the topology of weak convergence. \label{lemma: convex space Pi2}
        \item Under \Cref{assumption: compact Support}, $\Pi(\nu_{1},\nu_{k})$ is compact in the topology of weak convergence. \label{lemma: compact space Pi2}
        \item $\Pi(\nu_{1},\nu_{k})$ is a Hausdorff topological space. \label{lemma: Hausdorff space Pi2}
        \item Suppose $\mN$ is convex and closed and \Cref{assumption: compact Support} holds, $\left\{ F \in \mP(\mU) \mid F \in \Pi(\nu,\nu_{k}), \nu \in \mN\right\}$ is closed in the topology of weak convergence. \label{lemma: closed space Pi3}
        \item Suppose $\mN$ is convex and closed and \Cref{assumption: compact Support} holds, then $\left\{ F \in \mP(\mU) \mid F \in \Pi(\nu,\nu_{k}), \nu \in \mN\right\}$ is convex in the topology of weak convergence. \label{lemma: convex space Pi3}
        \item Suppose $\mN$ is convex and closed and \Cref{assumption: compact Support} holds, $\left\{ F \in \mP(\mU) \mid F \in \Pi(\nu,\nu_{k}), \nu \in \mN\right\}$ is compact in the topology of weak convergence. \label{lemma: compact space Pi3}
        \item Suppose $\mN$ is convex and closed and \Cref{assumption: compact Support} holds, $\left\{ F \in \mP(\mU) \mid F \in \Pi(\nu,\nu_{k}), \nu \in \mN\right\}$ is a Hausdorff topological space. \label{lemma: Hausdorff space Pi3}
        \item Lemmas \ref{lemma: closed space Pi3}-\ref{lemma: Hausdorff space Pi3} also hold for $\left\{ F \in \mP(\Xi^{2}) \mid F \in \Pi(\nu,\nu), \nu \in \mN\right\}$. \label{lemma: stationary perturbation set}
    \end{lemmaitems}
\end{lemma}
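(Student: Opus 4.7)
The plan is to prove the items in a layered way, moving from the fully-marginal-constrained case (items (i)--(iii)) to the initial/terminal-constrained case (items (iv)--(vii)) and finally to the perturbed versions (items (viii)--(xii)). For items (i)--(iii), the key observation is that each marginal projection $\pi_i : \mP(\mU) \to \mP(\mU_i)$, $F \mapsto (\pi_i)_\# F$, is continuous in the topology of weak convergence. Hence $\Pi(\nu_1,\dots,\nu_k) = \bigcap_{i=1}^k \pi_i^{-1}(\{\nu_i\})$ is an intersection of closed sets, proving (i). Convexity is immediate: $(\pi_i)_\# (\eta F_1 + (1-\eta) F_2) = \eta \nu_i + (1-\eta) \nu_i = \nu_i$. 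For compactness (part of (ii)), I will verify tightness: since each $\mU_i$ is Polish, each $\nu_i$ is tight; given $\eps > 0$, pick compact $K_i \subseteq \mU_i$ with $\nu_i(K_i^c) < \eps/k$, so $K := \prod_i K_i$ is compact in $\mU$ and $F(K^c) \leq \sum_i \nu_i(K_i^c) < \eps$ uniformly over $F \in \Pi(\nu_1,\dots,\nu_k)$. Prokhorov's theorem then gives relative compactness, and combined with closedness yields compactness. The Hausdorff property (iii) follows because $\mU$ being Polish makes $\mP(\mU)$ (and hence any subspace) Hausdorff in the weak topology.

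For items (iv)--(vii), which only fix the initial and terminal marginals, the same closedness argument via projections $\pi_1$ and $\pi_k$ gives (iv), and convexity (v) is again immediate. The new subtlety is compactness (vi): intermediate marginals are unconstrained, so tightness is no longer automatic from the fixed marginals alone. This is exactly where \Cref{assumption: compact Support} enters: when $\mU$ is compact, $\mP(\mU)$ is itself compact in the weak topology (Riesz representation together with Banach--Alaoglu, or equivalently trivially tight), and closed subsets of a compact space are compact. Hausdorffness (vii) holds as before.

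For items (viii)--(xi), closedness needs the closedness of $\mN$: if $F_n \to F$ weakly with $F_n \in \Pi(\nu_n, \nu_k)$ and $\nu_n \in \mN$, then continuity of $\pi_1$ gives $\nu_n \to \pi_1 F$ weakly, and closedness of $\mN$ yields $\pi_1 F \in \mN$; meanwhile $\pi_k F = \nu_k$ by continuity of $\pi_k$. Convexity uses the convexity of $\mN$: for $F_1 \in \Pi(\nu^{(1)}, \nu_k)$ and $F_2 \in \Pi(\nu^{(2)}, \nu_k)$ with $\nu^{(1)}, \nu^{(2)} \in \mN$, the mixture $\eta F_1 + (1-\eta) F_2$ has first marginal $\eta \nu^{(1)} + (1-\eta) \nu^{(2)} \in \mN$ and terminal marginal $\nu_k$. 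Compactness again follows from \Cref{assumption: compact Support} and closedness, and Hausdorffness is inherited from $\mP(\mU)$. Item (xii) is essentially the same proof with both projections $\pi_1$ and $\pi_2$ mapping into $\mN$, noting that the diagonal constraint $\pi_1 F = \pi_2 F = \nu$ is preserved under weak limits because both projections are continuous.

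The main obstacle I anticipate is the compactness in items (vi), (x), and the stationary analogue in (xii): without fixing intermediate marginals, one loses the automatic tightness argument available in (ii), so \Cref{assumption: compact Support} is doing real work. A secondary subtlety in (xii) is ensuring that the diagonal condition $\pi_1 F = \pi_2 F$ is closed in $F$; this holds because the map $F \mapsto \pi_1 F - \pi_2 F$ into the space of signed measures (with weak-$*$ topology) is continuous, so its zero-set is closed. Aside from these points, the remaining verifications are routine applications of continuity of marginal projections and standard facts about weak convergence on Polish spaces.
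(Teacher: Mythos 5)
Your proof is correct and follows essentially the same route as the paper's: continuity of marginal projections (equivalently, "passing to the limit in the equation for marginals") for closedness, tightness plus Prokhorov for compactness of $\Pi(\nu_1,\dots,\nu_k)$, compactness of $\mU$ (via compactness of $\mP(\mU)$) for the two-marginal and $\mN$-perturbed cases, convexity by linearity of marginals, and metrizability of the weak topology for the Hausdorff property. If anything you are slightly more careful than the paper in item (viii): the paper's phrase "passing to the limit in the equation for marginals" glosses over the step where closedness of $\mN$ is needed to conclude $\pi_1 F \in \mN$, whereas you make this use of the hypothesis explicit.
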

\begin{proof}
    \begin{enumerate}
        \item \textbf{Closedness and compactness of $\Pi(\nu_{1},\cdots,\nu_{k})$:} see the proof of \cite{villani2009optimal} Theorem 4.1.
        \item \textbf{Closedness and compactness of $\Pi(\nu_{1},\nu_{k})$:} By \Cref{assumption: compact Support}, $\Pi(\nu_{1},\nu_{k})$ is tight, and by Prokhorov's theorem it has a compact closure. By passing to the limit in the equation for marginals, $\Pi(\nu_{1},\nu_{k})$ is closed. Therefore, it is compact.
        \item \textbf{Closedness and compactness of $\left\{ F \in \mP(\mU) \mid F \in \Pi(\nu,\nu_{k}), \nu \in \mN\right\}$:} By \Cref{assumption: compact Support}, the set $\left\{ F \in \mP(\mU) \mid F \in \Pi(\nu,\nu_{k}), \nu \in \mN\right\}$ is tight, and by Prokhorov's theorem it has a compact closure. By passing to the limit in the equation for marginals, it is closed. Therefore, it is compact.
        \item \textbf{Closedness and compactness of $\left\{ F \in \mP(\mU) \mid F \in \Pi(\nu,\nu), \nu \in \mN\right\}$:} The proof is similar to the previous part.
        \item \textbf{Convexity:} It is straightforward.
        \item \textbf{Hausdorff:} By \cite{billingsley2013convergence} Page 72(i), the Prohorov distance is a metric on the space of probability measures. Metrizable topological spaces are Hausdorff.
    \end{enumerate}
\end{proof}

\begin{lemma} \label{lemma: perturbation set properties}
    Let $\delta \geq 0$, $F_{0} \in \mP(\mU)$, and $\nu_{i} \in \mP(\mU_{i})$ for $i \in \{1,\cdots,k\}$. Define:
    \begin{equation*}
        \mF := \{F \in \mP(\mU) \mid F \in \Pi(\nu_{1},\cdots,\nu_{k}), D_{KL}(F\|F_{0}) \leq \delta \}
    \end{equation*}
    \begin{lemmaitems}
        \item $\mF$ is closed and compact in the topology of weak convergence. \label{lemma: compact space}
        \item $\mF$ is convex in the topology of weak convergence. \label{lemma: convex space}
        \item $\mF$ is a Hausdorff topological space. \label{lemma: Hausdorff space}
    \end{lemmaitems}
\end{lemma}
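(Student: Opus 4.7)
The plan is to prove \Cref{lemma: perturbation set properties} by expressing $\mF$ as the intersection of the two sets whose properties have already been established in \Cref{Lemma: compact space KL} and \Cref{lemma: closed compact convex space}, namely
\[
\mF = \Pi(\nu_{1},\cdots,\nu_{k}) \cap \mF_{KL}, \qquad \mF_{KL} := \{F \in \mP(\mU) \mid D_{KL}(F\|F_{0}) \leq \delta\}.
\]
All three claims then follow from standard stability of the corresponding property under intersection (or under passing to a subspace).

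For \Cref{lemma: compact space}, I would first note that $\mF_{KL}$ is closed in the topology of weak convergence by \Cref{lemma: closed space}, and $\Pi(\nu_{1},\cdots,\nu_{k})$ is closed by \Cref{lemma: closed space Pi}; hence their intersection $\mF$ is closed. For compactness, I would invoke \Cref{lemma: compact convex space Pi} to get that $\Pi(\nu_{1},\cdots,\nu_{k})$ is compact, and observe that a closed subset of a compact space is compact. (Alternatively, \Cref{lemma: compact space KL} gives compactness of $\mF_{KL}$, and then $\mF$ is a closed subset of a compact set.)

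For \Cref{lemma: convex space}, convexity of $\Pi(\nu_{1},\cdots,\nu_{k})$ is immediate from its definition in terms of marginal constraints (the marginals of a convex combination equal the convex combination of marginals). Convexity of $\mF_{KL}$ follows from the joint convexity of $(F,F_{0}) \mapsto D_{KL}(F\|F_{0})$ in its first argument, which is a standard property of the relative entropy. The intersection of two convex sets is convex.

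For \Cref{lemma: Hausdorff space}, the space $\mP(\mU)$ endowed with the topology of weak convergence is metrizable (e.g.\ by the Prohorov metric), hence Hausdorff, and any subspace of a Hausdorff space is Hausdorff; therefore $\mF$ is Hausdorff. I do not expect any real obstacle here: every step is a direct appeal to already-stated lemmas or textbook properties of the KL divergence and of weak convergence on Polish spaces. The only subtlety worth double-checking is that the convexity of $D_{KL}(\cdot\|F_{0})$ (not merely its lower semicontinuity, which is used in \Cref{lemma: closed space}) is being invoked for part (ii); this is standard and holds without additional assumptions.
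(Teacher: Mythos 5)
Your proposal is correct and follows essentially the same route as the paper's proof: write $\mF = \Pi(\nu_1,\dots,\nu_k)\cap\mF_{KL}$, then derive closedness from the closedness of both pieces (Lemmas \ref{lemma: closed space} and \ref{lemma: closed space Pi}), compactness as a closed subset of a compact set, convexity from the joint convexity of KL divergence together with convexity of $\Pi(\nu_1,\dots,\nu_k)$, and the Hausdorff property from metrizability of $\mP(\mU)$ under the Prohorov metric. No substantive difference from the paper's argument.
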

\begin{proof}
\begin{enumerate}
    \item By Lemmas \ref{lemma: closed space} and \ref{lemma: closed space Pi}, $\mF$ is the intersection of two closed sets. Therefore, it is closed. By \cite{rudin1976principles} Theorem 2.35, $\mF$ is compact.
    \item Since KL divergence is jointly convex (see \cite{nutz2021introduction} Lemma 1.3), we have $D_{KL}(\lambda F_{1} + (1-\lambda)F_{2}\|F_{0}) \leq \lambda D_{KL}(F_{1}\|F_{0}) + (1-\lambda)D_{KL}(F_{2}\|F_{0}) \leq \delta$ for $\lambda \in [0,1]$ and $F_{1},F_{2} \in \mF$. Moreover, $\lambda F_{1} + (1-\lambda)F_{2} \in \Pi(\nu_{1},\cdots,\nu_{k})$. Therefore, $\mF$ is convex.
    \item By \cite{billingsley2013convergence} Page 72(i), the Prohorov distance is a metric on the space of probability measures. Metrizable topological spaces are Hausdorff.
\end{enumerate}
\end{proof}

\begin{lemma}
    Suppose \Cref{assumption: compact Support} holds. Let $\delta \geq 0$, $F_{0} \in \mP(\mU)$, and $\nu_{i} \in \mP(\mU_{i})$ for $i \in \{1,k\}$. Let $\mF_{\text{relaxed}} := \{F \in \mP(\mU) \mid F \in \Pi(\nu_{1},\nu_{k}), D_{KL}(F\|F_{0}) \leq \delta \}$.
    \begin{lemmaitems}
        \item $\mF_{\text{relaxed}}$ is closed and compact in the topology of weak convergence. \label{lemma: relaxed compact space}
        \item $\mF_{\text{relaxed}}$ is convex in the topology of weak convergence. \label{lemma: relaxed convex space}
        \item $\mF_{\text{relaxed}}$ is a Hausdorff topological space. \label{lemma: relaxed Hausdorff space}
    \end{lemmaitems}
\end{lemma}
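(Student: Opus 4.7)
The plan is to prove the three claims by reducing each to results already established in the preceding lemmas, following exactly the template used in \Cref{lemma: perturbation set properties} but with $\Pi(\nu_{1},\nu_{k})$ replacing $\Pi(\nu_{1},\ldots,\nu_{k})$. The key observation is that $\mF_{\text{relaxed}}$ is the intersection of the KL ball $\{F : D_{KL}(F\|F_{0}) \leq \delta\}$ with the two-marginal constraint set $\Pi(\nu_{1},\nu_{k})$, and both have been analyzed separately.

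First, for closedness and compactness, I would invoke \Cref{lemma: closed space} to note that the KL ball is closed in the weak topology (by lower semicontinuity of the KL divergence), and \Cref{lemma: closed space Pi2} to note that under \Cref{assumption: compact Support}, $\Pi(\nu_{1},\nu_{k})$ is closed. The intersection is therefore closed. For compactness, \Cref{lemma: compact space Pi2} shows $\Pi(\nu_{1},\nu_{k})$ is compact under \Cref{assumption: compact Support}, so $\mF_{\text{relaxed}}$ is a closed subset of a compact space and hence compact by \cite{rudin1976principles} Theorem 2.35.

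Second, for convexity, I would use the joint convexity of the KL divergence (\cite{nutz2021introduction} Lemma 1.3): for $F_{1},F_{2} \in \mF_{\text{relaxed}}$ and $\lambda \in [0,1]$,
\begin{equation*}
    D_{KL}(\lambda F_{1} + (1-\lambda)F_{2}\|F_{0}) \leq \lambda D_{KL}(F_{1}\|F_{0}) + (1-\lambda) D_{KL}(F_{2}\|F_{0}) \leq \delta.
\end{equation*}
Combined with \Cref{lemma: convex space Pi2}, which gives convexity of $\Pi(\nu_{1},\nu_{k})$, the convex combination remains in $\mF_{\text{relaxed}}$.

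Third, for the Hausdorff property, I would cite the fact that the weak convergence topology on $\mP(\mU)$ is metrizable by the Prohorov distance (\cite{billingsley2013convergence} Page 72(i)), so every subset inherits a metrizable (hence Hausdorff) subspace topology; equivalently, one can directly invoke \Cref{lemma: Hausdorff space Pi2}. None of these steps presents a real obstacle, since the compact-support assumption is the only new ingredient needed to upgrade from the general setting of \Cref{lemma: perturbation set properties} (which did not require compactness) to the two-marginal relaxed setting, and it enters only through Prokhorov's theorem in the argument already recorded in the proof of \Cref{lemma: closed space Pi2} and \Cref{lemma: compact space Pi2}. The proof is therefore essentially a verbatim adaptation of the proof of \Cref{lemma: perturbation set properties}, with the citations updated.
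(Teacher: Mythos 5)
Your proposal is correct and follows essentially the same route as the paper: closedness via the intersection of the two closed sets from Lemmas \ref{lemma: closed space} and \ref{lemma: closed space Pi2}, compactness as a closed subset of the compact $\Pi(\nu_1,\nu_k)$ via Rudin Theorem 2.35, convexity from joint convexity of $D_{KL}$ together with convexity of $\Pi(\nu_1,\nu_k)$, and Hausdorff via metrizability by the Prohorov distance. Your remark that the compact-support assumption is what makes $\Pi(\nu_1,\nu_k)$ tight (unlike the all-marginals case of Lemma \ref{lemma: perturbation set properties}) is exactly the right diagnosis of why \Cref{assumption: compact Support} is needed here.
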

\begin{proof}
\begin{enumerate}
    \item By Lemmas \ref{lemma: closed space} and \ref{lemma: closed space Pi2}, $\mF_{\text{relaxed}}$ is the intersection of two closed sets. Therefore, it is closed. By \cite{rudin1976principles} Theorem 2.35, $\mF_{\text{relaxed}}$ is compact.
    \item The proof is identical to the proof of \Cref{lemma: convex space}.
    \item The proof is identical to the proof of \Cref{lemma: Hausdorff space}.
\end{enumerate}
\end{proof}

\begin{lemma}
    Suppose \Cref{assumption: compact Support} holds. Let $\delta \geq 0$, $F_{0} \in \mP(\mU)$ and $\nu_{k} \in \mP(\mU_{k})$. Suppose $\mN \subseteq \mP(\mU_{1})$ is convex. Let $\mF_{\mN,\text{Relaxed}} := \left\{ F \in \mP(\mU) \mid F \in \Pi(\nu,\nu_{k}), \nu \in \mN, D_{KL}(F \| F_{0}) \leq \delta \right\}$.
    \begin{lemmaitems}
        \item $\mF_{\mN,\text{Relaxed}}$ is closed and compact in the topology of weak convergence. \label{lemma: relaxed initial compact space}
        \item $\mF_{\mN,\text{Relaxed}}$ is convex in the topology of weak convergence. \label{lemma: relaxed initial convex space}
        \item $\mF_{\mN,\text{Relaxed}}$ is a Hausdorff topological space. \label{lemma: relaxed initial Hausdorff space}
    \end{lemmaitems}
\end{lemma}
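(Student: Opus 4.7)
The plan is to mirror the proof strategy used for the analogous results on $\mF$ and $\mF_{\text{relaxed}}$ (see Lemmas \ref{lemma: perturbation set properties} and \ref{lemma: relaxed compact space}--\ref{lemma: relaxed Hausdorff space}), combining them with the projection argument from \Cref{lemma: closed space Pi3}. The key observation is that $\mF_{\mN,\text{Relaxed}}$ can be written as the intersection
\begin{equation*}
    \mF_{\mN,\text{Relaxed}} = \left\{ F \in \mP(\mU) \mid F \in \Pi(\nu,\nu_{k}), \nu \in \mN \right\} \cap \left\{ F \in \mP(\mU) \mid D_{KL}(F\|F_{0}) \leq \delta \right\},
\end{equation*}
so most of the work reduces to properties of these two pieces that are already established.

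For part (i), I would invoke \Cref{lemma: closed space Pi3} (treating the closedness of $\mN$ implicit in the hypothesis, as in the parallel statement for $\mF_{\text{relaxed}}$) to conclude that the first set in the intersection is closed, and \Cref{lemma: closed space} to conclude that the KL sublevel set is closed. The intersection of two closed sets is closed. Under \Cref{assumption: compact Support}, $\mP(\mU)$ is tight, hence its closed subsets are compact by Prokhorov's theorem (equivalently, apply \cite{rudin1976principles} Theorem 2.35 to a closed subset of the compact set $\mP(\mU)$). For part (ii), I would take $F_1,F_2 \in \mF_{\mN,\text{Relaxed}}$ with marginals $\nu_1,\nu_2 \in \mN$, respectively, and $\lambda \in [0,1]$. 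The convex combination $\lambda F_1 + (1-\lambda)F_2$ has first marginal $\lambda \nu_1 + (1-\lambda)\nu_2 \in \mN$ by convexity of $\mN$ and terminal marginal $\nu_k$, so it lies in the set $\{F \in \Pi(\nu,\nu_k), \nu \in \mN\}$; joint convexity of the KL divergence (\cite{nutz2021introduction} Lemma 1.3) then gives $D_{KL}(\lambda F_1 + (1-\lambda)F_2 \| F_0) \leq \lambda \delta + (1-\lambda)\delta = \delta$, exactly as in \Cref{lemma: convex space}. Part (iii) is immediate because the weak topology on $\mP(\mU)$ is metrizable by the Prohorov distance (\cite{billingsley2013convergence} Page 72), and metric spaces and their subspaces are Hausdorff.

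I do not anticipate a substantial obstacle; this lemma is purely a topological/convex-analytic bookkeeping step, and all three conclusions follow by transcribing the arguments already deployed for $\mF$ and $\mF_{\text{relaxed}}$. The only mild subtlety is the closedness of the first set in the intersection, which relies on passing to the limit in the marginal constraint $\nu \in \mN$; this requires that $\mN$ be closed in the weak topology, which I would either take as part of the hypothesis (consistent with the neighboring lemma statements) or verify directly for the KL-ball choice $\mN = \{\nu : D_{KL}(\nu\|\nu_1) \leq \delta_1\}$ used in \Cref{sec: perturbation of initial distribution} via \Cref{lemma: closed space}.
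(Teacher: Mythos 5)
Your proposal is correct and mirrors the paper's own proof almost step for step: closedness and compactness via \Cref{lemma: closed space Pi3} together with \cite{rudin1976principles} Theorem 2.35, convexity via convexity of $\mN$ plus joint convexity of the KL divergence, and Hausdorff via metrizability by the Prohorov distance. Your observation that the lemma's stated hypothesis omits closedness of $\mN$ (which \Cref{lemma: closed space Pi3} requires and which \Cref{thm: nonstationary duality with perturbation} does explicitly assume) is a genuine omission in the paper's statement, not a gap in your argument, and is worth flagging.
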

\begin{proof}
\begin{enumerate}
    \item By \Cref{lemma: closed space Pi3}, and \cite{rudin1976principles} Theorem 2.35, $\mF_{\mN,\text{Relaxed}}$ is compact.
    \item Note that $\mN$ is convex. For any $F_{1}, F_{2} \in \mF_{\mN,\text{Relaxed}}$, we have $\nu_{1}, \nu_{2} \in \mN$. By the convexity of $\mN$, we have $\lambda \nu_{1} + (1-\lambda) \nu_{2} \in \mN$ for $\lambda \in [0,1]$. Therefore, $\lambda F_{1} + (1-\lambda) F_{2} \in \Pi(\lambda \nu_{1} + (1-\lambda) \nu_{2},\nu_{k})$ and $D_{KL}(\lambda F_{1} + (1-\lambda) F_{2}\|F_{0}) \leq \delta$. Thus, $\mF_{\mN,\text{Relaxed}}$ is convex.
    \item The proof is identical to the proof of \Cref{lemma: Hausdorff space}.
\end{enumerate}
\end{proof}

\subsection{Proofs in \Cref{sec: Methodology}}

\subsubsection{Proof of Theorems \ref{thm: Strong Duality} and \ref{thm: Strong Duality smallest delta}} \label{sec: proof of strong duality}

\noindent We only prove \Cref{thm: Strong Duality}. The proofs for Theorem \ref{thm: Strong Duality smallest delta} is similar. First, we prove the minimax part. The Lagrangian of the \blueref{Primal} problem is:
\begin{equation*}
    \kappa(\delta,P) = \inf_{\substack{(\theta,v) \in \Theta \times \mathcal{V} \\ F \in \Pi(\nu_{1},\ldots,\nu_{k})}} \sup_{\substack{\lambda \in \bR^{d_{P}} \\ \lambda_{KL} \geq 0, g \in \mG}} \bE_{F} \left[c(U;\theta,v,g,\lambda)\right] + \lambda_{KL}D_{KL}(F\|F_{0}) - \lambda_{KL} \delta - \lambda^{T}P
\end{equation*}

\begin{lemma}[Minimax] \label{theorem: Minimax Theory}
    Under \Cref{assumption: Strong Duality}, we have:
    \begin{equation*}
       \kappa(\delta,P) = \inf_{(\theta,v) \in \Theta \times \mV} \sup_{\substack{\lambda \in \bR^{d_{P}} \\ \lambda_{KL} \geq 0, g \in \mG}} \inf_{F \in \Pi(\nu_{1},\ldots,\nu_{k})} \bE_{F} \left[c(U;\theta,v,g,\lambda)\right] + \lambda_{KL}D_{KL}(F\|F_{0}) - \lambda_{KL} \delta - \lambda^{T}P
    \end{equation*}
\end{lemma}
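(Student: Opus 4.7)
For each fixed $(\theta,v) \in \Theta \times \mV$, I plan to apply Fan's minimax theorem (\Cref{lemma: Minimax Theory}, the Ricceri version already cited in the paper) to exchange $\inf_F$ and $\sup_{(\lambda,\lambda_{KL},g)}$; taking $\inf_{(\theta,v)}$ of both sides then yields the claim, since the outer $\inf_{(\theta,v)}$ in the starting expression commutes trivially with the inner $\inf_F$. Concretely, I set
\begin{equation*}
f(F,(\lambda,\lambda_{KL},g)) := \bE_F[c(U;\theta,v,g,\lambda)] + \lambda_{KL}(D_{KL}(F\|F_0) - \delta) - \lambda^T P,
\end{equation*}
with $X := \Pi(\nu_1,\ldots,\nu_k)$ equipped with the weak topology and $Y := \bR^{d_P} \times \bR_+ \times \mG$. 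By \Cref{lemma: compact convex space Pi} and \Cref{lemma: Hausdorff space Pi}, $X$ is a compact Hausdorff convex space, and $Y$ is nonempty (the zero element belongs to $\mG$ by \Cref{assumption: symmetric and convex}).

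\textbf{Algebraic conditions.} The map $y \mapsto f(F,y)$ is affine in $\lambda$ and $\lambda_{KL}$, and affine in $g$ because $\psi$ is linear in $g$ by hypothesis. Since $\mG$ is convex (\Cref{assumption: symmetric and convex}) and $\bR^{d_P} \times \bR_+$ is convex, the convex combination $\eta y_1 + (1-\eta)y_2$ lies in $Y$ for any $\eta \in [0,1]$, and by affinity $f(F,\eta y_1 + (1-\eta)y_2) = \eta f(F,y_1) + (1-\eta) f(F,y_2)$, verifying the concave-like condition with equality. The map $F \mapsto f(F,y)$ is linear in $F$ through $\bE_F[c]$ plus a non-negative multiple ($\lambda_{KL} \geq 0$) of the jointly convex KL divergence; convexity of $F \mapsto f(F,y)$ on the convex set $X$ then yields the convex-like condition.

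\textbf{Lower semicontinuity in $F$ --- the main obstacle.} Weak lower semicontinuity of $D_{KL}(\cdot\|F_0)$ is standard. The delicate step is weak l.s.c.\ of $F \mapsto \bE_F[c(U;\theta,v,g,\lambda)]$ when $c$ is only l.s.c.\ in $U$ (\Cref{assumption: lower semi-continuous}) and possibly unbounded. The growth condition (\Cref{assumption: growth rate}) yields
\begin{equation*}
|c(U;\theta,v,g,\lambda)| \leq (1 + \|\lambda\|_\infty)\,C_{\theta,v,g}\bigl(1 + d(U,\hat U)\bigr),
\end{equation*}
and since every $F \in X$ has the prescribed marginals of finite $p$-th moment (\Cref{assumption: finite moment}), the quantity $\int d(U,\hat U)\,dF = \sum_{i=1}^k \int d_i(U_i,\hat U_i)^p\,d\nu_i$ is a finite constant that does not depend on $F \in X$. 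Choosing $M$ so large that $\tilde c := c + M(1 + d(U,\hat U)) \geq 0$ is a non-negative l.s.c.\ integrand, the generalized Portmanteau theorem gives $\liminf_n \bE_{F_n}[\tilde c] \geq \bE_F[\tilde c]$ along any sequence $F_n \to F$ weakly in $X$ (and sequences suffice since the weak topology on $X$ is metrizable by the Prohorov distance); subtracting the constant $M\bE_F[1 + d(U,\hat U)]$ delivers the required l.s.c.\ of $F \mapsto \bE_F[c]$. With all hypotheses of Fan's minimax theorem verified, the inner swap holds, and passing $\inf_{(\theta,v)}$ through completes the proof.
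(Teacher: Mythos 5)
Your proof is correct and takes essentially the same route as the paper: fix $(\theta,v)$, verify the hypotheses of Fan's minimax theorem (compactness and Hausdorffness of $\Pi(\nu_1,\ldots,\nu_k)$, concavelikeness via linearity in $(\lambda,\lambda_{KL},g)$, convexlikeness via linearity of the expectation plus joint convexity of $D_{KL}$), then pull $\inf_{(\theta,v)}$ back out. The only difference is cosmetic, in how weak lower semicontinuity of $F \mapsto \bE_F[c]$ is established: the paper cites Villani's Lemma 4.3 with the integrable minorant $h(U)=-(1+\|\lambda\|_1)C_{\theta,v,g}(1+d(U,\hat U))$, whereas you reprove it directly by Portmanteau after observing that $\int d(U,\hat U)\,dF$ is a constant over $\Pi(\nu_1,\ldots,\nu_k)$ because the marginals are pinned — a valid and slightly more self-contained variant of the same argument.
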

\begin{proof}
    In the proof, we show that the minimax theorem holds for given $(\theta,v)$ by verifying the conditions in \Cref{lemma: Minimax Theory}. For notational simplicity, define:
    \begin{equation*}
        \mL(F,g,\lambda,\lambda_{KL}) := \bE_{F} \left[c(U;\theta,v,g,\lambda)\right] + \lambda_{KL}D_{KL}(F\|F_{0}) - \lambda_{KL} \delta - \lambda^{T}P
    \end{equation*}
    \begin{itemize}
        \item \textbf{Compactness:} By \Cref{lemma: compact convex space Pi}, $\Pi(\nu_{1},\cdots,\nu_{k})$ is compact.
        \item \textbf{Hausdorff:} By \Cref{lemma: Hausdorff space Pi}, $\Pi(\nu_{1},\cdots,\nu_{k})$ is Hausdorff.
        \item \textbf{Concavelike:} Note that $\mL(F,g,\lambda,\lambda_{KL})$ is linear in $(\lambda,\lambda_{KL},g)$. Therefore, the concavelike condition is satisfied.
        \item \textbf{Convexlike:} By \Cref{lemma: compact convex space Pi}, $\Pi(\nu_{1},\cdots,\nu_{k})$ is a convex space. Therefore, convex combinations of elements in $\Pi(\nu_{1},\cdots,\nu_{k})$ are also in $\Pi(\nu_{1},\cdots,\nu_{k})$. Since $D_{KL}(F\|F_{0})$ is jointly convex (see \cite{nutz2021introduction} Lemma 1.3), and the expectation is linear in $F$, the convexlike condition also holds. 
        \item \textbf{Lower-semicontinuity:} Let $ h(U):= -(1+\|\lambda\|_{1}) C_{\theta,v,g}(1 + d(U,\hat{U}))$. Under \Cref{assumption: growth rate}, we have $h \in L^{1}(F)$ for all $F \in \Pi(\nu_{1},\cdots,\nu_{k})$. Therefore, for given $(\lambda,\lambda_{KL},g)$, $\bE_{F}\left[c(U;\theta,v,g,\lambda)\right]$ is lower-semicontinuous in $F$ by \cite{villani2009optimal} Lemma 4.3. By \cite{nutz2021introduction} Lemma 1.3, $D_{KL}(F\|F_{0})$ is lower-semicontinuous in $F$. By the superadditivity of $\liminf$, $\mL(F,g,\lambda,\lambda_{KL})$ is also lower-semicontinuous in $F$ for given $(\lambda,\lambda_{KL},g)$.
    \end{itemize}
\end{proof}

\begin{lemma} \label{lemma: convert arbitrary reference to product measure}
    Define:
    \begin{equation*}
        \mL(\delta,\theta,v,g,\lambda) := \sup_{\lambda_{KL} \geq 0} \inf_{F \in \Pi(\nu_{1},\ldots,\nu_{k})} \bE_{F} \left[c(U;\theta,v,g,\lambda)\right] + \lambda_{KL}D_{KL}(F\|F_{0}) - \lambda_{KL} \delta - \lambda^{T}P
    \end{equation*}
    Under \Cref{assumption: Strong Duality}, we have:
    \begin{equation*}
        \mL(\delta,\theta,v,g,\lambda) = \sup_{\lambda_{KL} \geq 0} \inf_{F \in \Pi(\nu_{1},\cdots,\nu_{k})} \bE_{F} \left[c(U;\theta,v,g,\lambda) + \lambda_{KL} \rho(U)\right] + \lambda_{KL} (D_{KL}(F\|F_{\otimes}) - \delta) - \lambda^{T}P
    \end{equation*}
\end{lemma}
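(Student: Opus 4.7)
The plan is to show that the two expressions agree term-by-term inside the infimum/supremum by applying the chain rule for the Radon-Nikodym derivative. The key identity is that for any $F \in \Pi(\nu_{1},\ldots,\nu_{k})$ with $F \ll F_{0}$, \Cref{assumption: reference} gives $F \ll F_{0} \ll F_{\otimes}$, so $F \ll F_{\otimes}$ and the densities can be chained:
\begin{equation*}
    \log \frac{dF(U)}{dF_{0}(U)} = \log \frac{dF(U)}{dF_{\otimes}(U)} + \log \frac{dF_{\otimes}(U)}{dF_{0}(U)} = \log \frac{dF(U)}{dF_{\otimes}(U)} + \rho(U) \quad F\text{-a.s.}
\end{equation*}

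Integrating against $F$ yields the chain rule for KL divergence:
\begin{equation*}
    D_{KL}(F\|F_{0}) = D_{KL}(F\|F_{\otimes}) + \bE_{F}[\rho(U)],
\end{equation*}
where the expectation $\bE_{F}[\rho(U)]$ is well-defined (possibly $+\infty$) because $\rho$ is bounded below on $\operatorname{supp}(F_{0})$ up to a constant determined by \Cref{assumption: growth rate}. Substituting into the inner objective of the left-hand side gives
\begin{equation*}
    \bE_{F}[c(U;\theta,v,g,\lambda)] + \lambda_{KL} D_{KL}(F\|F_{0}) - \lambda_{KL}\delta - \lambda^{T}P = \bE_{F}[c(U;\theta,v,g,\lambda) + \lambda_{KL}\rho(U)] + \lambda_{KL}(D_{KL}(F\|F_{\otimes}) - \delta) - \lambda^{T}P,
\end{equation*}
which is exactly the integrand of the right-hand side. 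Taking $\inf_{F}$ and $\sup_{\lambda_{KL} \geq 0}$ on both sides then yields the claimed equality.

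The last step I would carry out is to rule out feasible $F$ not absolutely continuous with respect to $F_{0}$: for such $F$, $D_{KL}(F\|F_{0}) = +\infty$, so for any $\lambda_{KL} > 0$ the original objective is $+\infty$ and these measures cannot attain the infimum (and at $\lambda_{KL}=0$ the equality is trivial since both sides coincide with $\inf_{F}\bE_{F}[c] - \lambda^{T}P$). Hence the chain rule identity can be applied inside the infimum without loss. The only subtlety, and the step I expect to require the most care, is confirming that $\bE_{F}[\rho(U)]$ is well-defined for all relevant $F$ and that the splitting of the integral is valid when either piece could in principle be infinite; this is handled by the growth-rate bound in \Cref{assumption: growth rate}, which ensures $\rho \in L^{1}(F)$ for any $F \in \Pi(\nu_{1},\ldots,\nu_{k})$ (all marginals have finite $p$-th moments by \Cref{assumption: finite moment}), so the chain rule holds as an equality of finite quantities whenever $D_{KL}(F\|F_{\otimes}) < \infty$, and trivially as an equality in $[0,+\infty]$ otherwise.
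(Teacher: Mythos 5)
Your proof is correct and follows the same KL chain-rule decomposition the paper uses: write $\log\frac{dF}{dF_0}=\log\frac{dF}{dF_\otimes}+\rho$ (valid under \Cref{assumption: reference}), integrate against $F$, and invoke \Cref{assumption: growth rate} together with \Cref{assumption: finite moment} to ensure $\rho\in L^1(F)$ for every $F\in\Pi(\nu_1,\ldots,\nu_k)$, so the decomposition $D_{KL}(F\|F_0)=D_{KL}(F\|F_\otimes)+\bE_F[\rho]$ is an identity in $[0,\infty]$ with the last term always finite. The additional bookkeeping you do around $F\not\ll F_0$ and $\lambda_{KL}=0$ is sound but inessential given that observation, and your interim parenthetical that $\bE_F[\rho]$ could be $+\infty$ is superseded by the $L^1$ bound you correctly state at the end.
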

\begin{proof}
By \Cref{assumption: reference}, we have:
\begin{equation*}
    D_{KL}(F\|F_{0}) = \int \log(\frac{dF}{dF_{\otimes}} \frac{dF_{\otimes}}{dF_{0}})dF = \int \log(\frac{dF}{dF_{\otimes}})dF + \int \log(\frac{dF_{\otimes}}{dF_{0}})dF = D_{KL}(F\|F_{\otimes}) + \int \rho(U)dF
\end{equation*}
Moreover, by \Cref{assumption: growth rate}, we have $c(U;\theta,v,g,\lambda), \rho \in L^{1}(F)$ for $\forall \ F \in \Pi(\nu_{1},\cdots,\nu_{k})$. Therefore, the sum of two expectations is also well-defined.
\end{proof}

\begin{lemma}
    Suppose \Cref{assumption: Strong Duality} holds. If $c(U;\theta,v,g,\lambda)$ and $\rho(U)$ are continuous in $U$, then optimizing over $\lambda_{KL} > 0$ gives the same value as optimizing over $\lambda_{KL} \geq 0$, i.e.,
    \begin{equation*}
        \mL(\delta,\theta,v,g,\lambda) = \sup_{\lambda_{KL} > 0} \inf_{F \in \Pi(\nu_{1},\cdots,\nu_{k})} \bE_{F} \left[c(U;\theta,v,g,\lambda) + \lambda_{KL} \rho(U)\right] + \lambda_{KL} (D_{KL}(F\|F_{\otimes}) - \delta) - \lambda^{T}P
    \end{equation*}
\end{lemma}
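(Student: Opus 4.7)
Denote by $H(\lambda_{KL})$ the inner value appearing in the statement. Using the decomposition $D_{KL}(F\|F_{0}) = D_{KL}(F\|F_{\otimes}) + \int \rho\,dF$ from \Cref{assumption: reference}, the same manipulation used in \Cref{lemma: convert arbitrary reference to product measure} yields, for every $\lambda_{KL}\geq 0$,
\begin{equation*}
    H(\lambda_{KL}) \;=\; \mC(\theta,v,g,\lambda,\lambda_{KL}) - \lambda_{KL}\delta - \lambda^{T}P,
\end{equation*}
where at $\lambda_{KL}=0$ the symbol $\mC(\theta,v,g,\lambda,0)$ is read as the pure OT value $\inf_{F\in\Pi(\nu_{1},\ldots,\nu_{k})}\bE_{F}[c(U;\theta,v,g,\lambda)]$. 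The inequality $\sup_{\lambda_{KL}>0}H \leq \sup_{\lambda_{KL}\geq 0}H$ is automatic, so the task reduces to showing $\sup_{\lambda_{KL}>0}H\geq H(0)$; and for that it suffices to exhibit a sequence $\lambda_{KL,n}\downarrow 0$ along which $H(\lambda_{KL,n})\to H(0)$. Since $\lambda_{KL,n}\delta\to 0$ trivially, the problem collapses to proving right-continuity of the EOT value at zero:
\begin{equation*}
    \lim_{\lambda_{KL}\downarrow 0}\mC(\theta,v,g,\lambda,\lambda_{KL}) \;=\; \mC(\theta,v,g,\lambda,0).
\end{equation*}

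This convergence has two halves. The easy direction, $\liminf_{\lambda_{KL}\downarrow 0}\mC(\theta,v,g,\lambda,\lambda_{KL}) \geq \mC(\theta,v,g,\lambda,0)$, is immediate because the KL penalty is nonnegative on $\Pi(\nu_{1},\ldots,\nu_{k})$, whence $\mC(\theta,v,g,\lambda,\lambda_{KL})\geq \mC(\theta,v,g,\lambda,0)$ for all $\lambda_{KL}\geq 0$. The reverse direction, $\limsup_{\lambda_{KL}\downarrow 0}\mC(\theta,v,g,\lambda,\lambda_{KL})\leq \mC(\theta,v,g,\lambda,0)$, is where the continuity of $c$ and $\rho$ is used and constitutes the main work. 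The footnote after \Cref{assumption: growth rate} already flags why lower semicontinuity alone is insufficient: under l.s.c.\ only, pathological couplings with infinite KL divergence against $F_{0}$ can be admissible in the OT limit but unreachable by EOT iterates.

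The $\limsup$ step is therefore the single obstacle, and the plan is to import the entropic-to-OT convergence of \cite{eckstein2022quantitative,eckstein2024convergence} (the setting $\lambda_{KL}\in (0,1]$ used in \Cref{thm: global sensitivity} is precisely the quantitative version of the same result). Concretely, given $\eta>0$, pick an $\eta$-optimal coupling $F^{\star}$ for the OT problem and construct a mollified family $\{F^{\star}_{\epsilon}\}\subset \Pi(\nu_{1},\ldots,\nu_{k})$ (for example by the block-approximation scheme of \cite{eckstein2024convergence}) with $D_{KL}(F^{\star}_{\epsilon}\|F_{0})<\infty$ controlled quantitatively in $\epsilon$; continuity of $c$ together with the uniform integrability supplied by \Cref{assumption: growth rate} then gives $\bE_{F^{\star}_{\epsilon}}[c]\to \bE_{F^{\star}}[c]$ as $\epsilon\downarrow 0$, where continuity of $\rho$ additionally ensures the decomposition to $F_{\otimes}$ is stable under the approximation. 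A diagonal choice $\lambda_{KL,n}\downarrow 0$ with $\lambda_{KL,n}D_{KL}(F^{\star}_{\epsilon_{n}}\|F_{0})\to 0$ delivers $\mC(\theta,v,g,\lambda,\lambda_{KL,n})\leq \mC(\theta,v,g,\lambda,0) + 2\eta$ for large $n$, and sending $\eta\downarrow 0$ completes the proof of right-continuity and hence of the lemma.
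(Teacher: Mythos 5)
Your proof is correct and takes essentially the same route as the paper: both reduce the claim to right-continuity of the entropic OT value at $\lambda_{KL}=0$ and rely on the entropic-to-OT convergence results of Eckstein et al. The only difference is that the paper simply cites \cite{eckstein2023convergence} Proposition 3.1 for the key convergence step, whereas you sketch the block-approximation argument behind it; the paper's proof also leaves the sup-comparison reduction implicit, which you spell out.
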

\begin{proof}
    By \cite{eckstein2023convergence} Proposition 3.1, we have:
    \begin{equation*}
        \lim_{\lambda_{KL} \downarrow 0} \inf_{F \in \Pi(\nu_{1},\cdots,\nu_{k})} \bE_{F} \left[c(U;\theta,v,g,\lambda) + \lambda_{KL} \rho(U)\right] + \lambda_{KL} (D_{KL}(F\|F_{\otimes}) - \delta) = \inf_{F \in \Pi(\nu_{1},\cdots,\nu_{k})}\bE_{F}\left[c(U;\theta,v,g,\lambda)\right]
    \end{equation*}
    Therefore, for any sequence $\lambda_{KL}^{i} \to 0$ as $i \to \infty$, we have:
    \begin{equation*}
        \lim_{i \to \infty} \inf_{F \in \Pi(\nu_{1},\cdots,\nu_{k})} \bE_{F} \left[c(U;\theta,v,g,\lambda) + \lambda_{KL} \rho(U)\right] + \lambda_{KL}^{i} (D_{KL}(F\|F_{\otimes}) - \delta) = \inf_{F \in \Pi(\nu_{1},\cdots,\nu_{k})}\bE_{F}\left[c(U;\theta,v,g,\lambda)\right]
    \end{equation*}
\end{proof}
Then, we show the EOT duality part. For $\lambda_{KL} > 0$, consider the following EOT problem:
\begin{equation*}
    \mC(\theta,v,g,\lambda,\lambda_{KL}) := \inf_{F \in \Pi(\nu_{1},\cdots,\nu_{k})} \bE_{F} \left[c(U;\theta,v,g,\lambda) + \lambda_{KL}\rho(U)\right] + \lambda_{KL} D_{KL}(F\|F_{\otimes}) 
\end{equation*}

\begin{theorem}[EOT Duality] \label{thm: Strong Duality for EOT}
    Under \Cref{assumption: Strong Duality}, for $\lambda_{KL} > 0$, the following holds:
    \begin{align*}
        \mC(\theta,v,g,\lambda,\lambda_{KL}) = \sup_{\left\{\phi_{i} \in L^{1}(\nu_{i})\right\}_{i=1}^{k}} \sum_{i=1}^{k} \bE_{\nu_{i}} \phi_{i}(U_{i}) - \lambda_{KL} \bE_{F_{0}} \exp(\frac{\sum_{i=1}^{k} \phi_{i}(U_{i}) - c(U;\theta,v,g,\lambda)}{\lambda_{KL}}) + \lambda_{KL}
    \end{align*}
    Moreover, the worst-case distribution is given by:
    \begin{equation*}
        \frac{dF^{*}(U)}{dF_{0}(U)} = \exp(\frac{\sum_{i=1}^{k} \phi_{i}^{*}(U_{i}) - c(U;\theta,v,g,\lambda)}{\lambda_{KL}}) \quad F_{0}\text{-a.s.}
    \end{equation*}
    where $\{\phi^{*}_{i}\}_{i=1}^{k}$ are unique maximizers up to additive constants $F_{0}$-almost surely.
\end{theorem}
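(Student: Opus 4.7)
The plan is to reduce the claim to the standard multi-marginal entropic optimal transport duality with respect to the product reference measure $F_\otimes$, and then translate the result back to $F_0$ via a change of measure. First, I apply \Cref{lemma: convert arbitrary reference to product measure} to rewrite
\begin{equation*}
    \mC(\theta,v,g,\lambda,\lambda_{KL}) = \inf_{F \in \Pi(\nu_1,\ldots,\nu_k)} \bE_F[\tilde c(U)] + \lambda_{KL} D_{KL}(F \| F_\otimes),
\end{equation*}
where $\tilde c(U) := c(U;\theta,v,g,\lambda) + \lambda_{KL}\rho(U)$. I then verify that $\tilde c$ inherits lower semicontinuity from \Cref{assumption: lower semi-continuous} and the $p$-growth bound from \Cref{assumption: growth rate}, so that $\tilde c \in L^1(F)$ for every $F \in \Pi(\nu_1,\ldots,\nu_k)$ and the primal value is finite. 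Existence of a primal minimizer follows from lower semicontinuity of the objective (via \cite{villani2009optimal} Lemma 4.3 and \cite{nutz2021introduction} Lemma 1.3) combined with the compactness of $\Pi(\nu_1,\ldots,\nu_k)$ from \Cref{lemma: compact convex space Pi}.

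Second, I invoke the multi-marginal EOT duality for product reference measures (as in \cite{nutz2021introduction} for $k=2$ and its direct multi-marginal extensions) to obtain
\begin{equation*}
    \mC(\theta,v,g,\lambda,\lambda_{KL}) = \sup_{\{\phi_i \in L^1(\nu_i)\}_{i=1}^{k}} \sum_{i=1}^{k} \bE_{\nu_i}\phi_i(U_i) - \lambda_{KL} \bE_{F_\otimes}\left[\exp\left(\frac{\sum_{i=1}^{k} \phi_i(U_i) - \tilde c(U)}{\lambda_{KL}}\right)\right] + \lambda_{KL}.
\end{equation*}
Since $dF_\otimes/dF_0 = \exp(\rho)$, the $-\lambda_{KL}\rho$ term inside the exponential cancels exactly against the Radon--Nikodym factor, producing the dual objective stated in the theorem (with $F_0$ replacing $F_\otimes$ and $c$ replacing $\tilde c$). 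The closed form of the worst-case distribution then follows from the first-order optimality conditions for the EOT problem: any primal optimizer $F^*$ satisfies $dF^*/dF_\otimes = \exp((\sum_i \phi_i^* - \tilde c)/\lambda_{KL})$ $F_\otimes$-a.s., which after the same change of measure becomes $dF^*/dF_0 = \exp((\sum_i \phi_i^*(U_i) - c(U;\theta,v,g,\lambda))/\lambda_{KL})$ $F_0$-a.s.

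Finally, uniqueness of $F^*$ is immediate from strict convexity of $F \mapsto D_{KL}(F\|F_0)$ on its effective domain together with linearity of $F \mapsto \bE_F[\tilde c]$. Uniqueness of the potentials up to the shift subspace $\{(c_1,\ldots,c_k) : \sum_i c_i = 0\}$ then follows from the Schr\"odinger-system characterization \eqref{eq:SE1}--\eqref{eq:SEk}: the marginals of $F^*$ pin down each $\phi_i^*$ modulo a constant, and the only redundancy preserving the density $dF^*/dF_0$ is a shift with zero total sum. The main technical hurdle is verifying that the multi-marginal EOT duality applies under \Cref{assumption: Strong Duality} when $\mU$ is only Polish (not necessarily compact) and $c$ is only lower semicontinuous; the $p$-growth condition in \Cref{assumption: growth rate} is precisely what secures the required integrability of $\tilde c$ against every admissible $F$, and the existence of dual maximizers in $L^1(\nu_i)$ can be established via either a Sinkhorn-type fixed-point argument (see \Cref{sec: EOT and Sinkhorn Algorithm}) or a direct weak-compactness argument on the dual side once the dual functional is shown to be coercive modulo the shift subspace.
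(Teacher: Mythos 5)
Your proof is correct and takes essentially the same route as the paper's: apply \Cref{lemma: convert arbitrary reference to product measure} to pass to the product reference measure $F_\otimes$, invoke the (multi-marginal) EOT duality of \cite{nutz2021introduction} and \cite{nutz2022entropic}, and undo the change of measure $dF_\otimes/dF_0 = \exp(\rho)$ to recover the stated dual and the optimal density with respect to $F_0$. The extra detail you supply on integrability of $\tilde c$, primal existence, uniqueness of $F^*$ via strict convexity of the KL term, and uniqueness of the potentials modulo the shift subspace is consistent with, and already subsumed by, the results the paper cites.
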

\begin{proof}
    Under \Cref{assumption: growth rate}, $c(U;\theta,v,g,\lambda), \rho(U) \in L^{1}(F_{\otimes})$. For two marginals ($k=2$), see \cite{nutz2021introduction} Theorem 4.7 and Remark 4.8(a). The results generalize directly to the multi-marginal case (see \cite{nutz2022entropic} Section 6). Their results hold $F_{\otimes}$-a.s. Since $F_{0} \ll F_{\otimes}$, the results also hold $F_{0}$-a.s. Moreover, note that:
    \begin{equation*}
        \bE_{F_{\otimes}} \exp(\frac{\sum_{i=1}^{k} \phi_{i}(U_{i}) - c(U;\theta,v,g,\lambda)-\lambda_{KL}\rho(U)}{\lambda_{KL}}) = \bE_{F_{0}} \exp(\frac{\sum_{i=1}^{k} \phi_{i}(U_{i}) - c(U;\theta,v,g,\lambda)}{\lambda_{KL}})
    \end{equation*}
\end{proof}

\subsubsection{Proof of \Cref{thm: Properties of the optimal test function}}

\begin{proof}[\textbf{Proof of \Cref{thm: Properties of the optimal test function}}]
    As showed in \cite{nutz2021introduction} equation 4.11, the fact that $F^{*}$ in \Cref{thm: Strong Duality for EOT} is a probability measure with marginals $(\nu_{1},\cdots,\nu_{k})$ implies the Schr\"odinger equations: for $\forall \ j \leq k$:
    \begin{align*}
        & \phi_{i}^{*}(U_{j}) = - \lambda_{KL} \log \bE_{F_{\otimes,-j}}\exp(\frac{\sum_{i \neq j} \phi_{i}^{*}(U_{i}) - c(U;\theta,v,g,\lambda)}{\lambda_{KL}}) \quad \nu_{j}\text{-a.s.}
    \end{align*}
    Since $c$ is $k$-times continuously differentiable in $U$, the right-hand side is also $k$-times continuously differentiable in $U_{j}$. Therefore, $\phi_{j}^{*}(U_{j})$ is $k$-times continuously differentiable in $U_{j}$ for $\forall \ j \leq k$.
\end{proof}

\subsubsection{Proof of \Cref{thm: stationary perturbation duality}}

\begin{proof}[\textbf{Proof of Theorem \ref{thm: stationary perturbation duality}}]
    In the proof, we first swap the order of infimum over $F$ and the supremum over $(\lambda,\lambda_{KL},g)$ by the minimax theorem. Then, we swap the order of infimum over $\nu$ and the supremum over $(\phi_{1},\phi_{2})$ by the minimax theorem. Define:
    \begin{equation*}
        \mL(F,g,\lambda,\lambda_{KL}) := \bE_{F} \left[c(\xi,\xi';\theta,v,g,\lambda)\right] + \lambda_{KL}D_{KL}(F\|F_{0}) - \lambda_{KL} \delta - \lambda^{T}P
    \end{equation*}
    \begin{itemize}
        \item \textbf{Compactness and Hausdorff:} By \Cref{lemma: stationary perturbation set}, $\left\{ F \in \mP(\Xi^{2}) \mid F \in \Pi(\nu,\nu), \nu \in \mN\right\}$ is compact and Hausdorff.
        \item \textbf{Concavelike:} Note that $\mL(F,g,\lambda,\lambda_{KL})$ is linear in $(\lambda,\lambda_{KL},g)$. Therefore, the concavelike condition is satisfied.
        \item \textbf{Convexlike:} By \Cref{lemma: stationary perturbation set}, $\left\{ F \in \mP(\Xi^{2}) \mid F \in \Pi(\nu,\nu), \nu \in \mN\right\}$ is convex. Since $D_{KL}(F\|F_{0})$ is jointly convex (see \cite{nutz2021introduction} Lemma 1.3), and the expectation is linear in $F$, the convexlike condition also holds.
        \item \textbf{Lower-semicontinuity:} Let $ h(\xi,\xi'):= -(1+\|\lambda\|_{1}) C_{\theta,v,g}(1 + d(\xi,\xi',\hat{\xi},\hat{\xi}'))$. Under \Cref{assumption: growth rate} and as $\Xi$ is compact, we have $h \in L^{\infty}(\Xi^{2})$. Therefore, for given $(\lambda,\lambda_{KL},g)$, $\bE_{F}\left[c(\xi,\xi';\theta,v,g,\lambda)\right]$ is lower-semicontinuous in $F$ by \cite{villani2009optimal} Lemma 4.3. By \cite{nutz2021introduction} Lemma 1.3, $D_{KL}(F\|F_{0})$ is lower-semicontinuous in $F$. By the superadditivity of $\liminf$, $\mL(F,g,\lambda,\lambda_{KL})$ is also lower-semicontinuous in $F$ for given $(\lambda,\lambda_{KL},g)$.
    \end{itemize}
    Therefore, by \Cref{lemma: Minimax Theory} and the EOT duality in \Cref{thm: Strong Duality}, we have:
    \begin{align*}
        & \kappa_{stationary}(\delta_{1},\delta,P) = \inf_{(\theta,v) \in \Theta \times \mV} \sup_{\lambda \in \bR^{d_{P}}, \lambda_{KL} \geq 0, g \in \mG} \inf_{\nu \in \mN} \ \mC(\theta,v,g,\lambda,\lambda_{KL},\nu) - \lambda_{KL} \delta - \lambda^{T}P \\
        & \mC(\theta,v,g,\lambda,\lambda_{KL},\nu) = \sup_{\phi_{1},\phi_{2} \in L^{1}(\nu)} \bE_{\nu} \left[\phi_{1}(\xi) + \phi_{2}(\xi')\right] - \lambda_{KL} \bE_{F_{0}} \exp\left(\frac{\phi_{1}(\xi) + \phi_{2}(\xi') - c(\xi,\xi';\theta,v,g,\lambda)}{\lambda_{KL}}\right) + \lambda_{KL}
    \end{align*}
    Furthermore, as $\Xi$ is compact, we can replace $L^{1}(\nu)$ with $L^{\infty}(\Xi)$. Moreover, $\phi_{1}^{*}$ and $\phi_{2}^{*}$ are bounded (see \cite{nutz2021introduction} Lemma 4.9) Moreover, by the lower semicontinuity of $c(\xi,\xi';\theta,v,g,\lambda)$, the Fatou's lemma, and the proof of \Cref{thm: Strong Duality for EOT}, $\phi_{1}^{*}$ and $\phi_{2}^{*}$ are lower semicontinuous.
    
    Next, we swap the order of infimum over $\nu$ and the supremum over $(\phi_{1},\phi_{2})$:
    \begin{itemize}
        \item \textbf{Compactness and Hausdorff:} By \Cref{Lemma: compact space KL}, $\mN$ is compact. Note that metrizable spaces are Hausdorff.
        \item \textbf{Convexlike:} As the expectation is linear in $\nu$ and $\mN$ is convex, the convexlike condition is satisfied.
        \item \textbf{Concavelike:} It is straightforward to see that the objective function is concave in $(\phi_{1},\phi_{2})$ as $L^{\infty}(\Xi)$ is convex.
        \item \textbf{Lower-semicontinuity:} For given $(\phi_{1},\phi_{2})$, $\bE_{\nu} \left[\phi_{1}(\xi) + \phi_{2}(\xi')\right]$ is lower-semicontinuous in $\nu$ by \cite{villani2009optimal} Lemma 4.3 as $\phi_{1}$ and $\phi_{2}$ are bounded.
    \end{itemize}
    The KL divergence DRO duality holds by \cite{hu2013kullback} Theorem 1.
\end{proof}

\subsection{Proofs in \Cref{sec:extension nonstationary}}

\noindent We only prove \Cref{theorem: nonstationary duality}. The proof for \Cref{thm: Strong Duality smallest delta nonstationary} is the same.

\subsubsection{Proof of Theorem \ref{theorem: nonstationary duality}}

\begin{proof}[\textbf{Proof of Theorem \ref{theorem: nonstationary duality}}]
    \begin{enumerate}
        \item \textbf{Minimax Part:} The proof of minimax part is similar to \Cref{theorem: Minimax Theory}. Convexity and compactness of $\Pi(\nu_{1},\nu_{T})$ are given by Lemmas \ref{lemma: convex space Pi2} and \ref{lemma: compact space Pi2}. Moreover, the bounds on the cost function is replaced by the finite constant in \Cref{assumption: bounded1}. Finally, \Cref{lemma: Hausdorff space Pi} is replaced by \Cref{lemma: Hausdorff space Pi2}.
        \item \textbf{Duality Part:} For notational simplicity, let $c(U) := c(U;\theta,v,g,\lambda,\lambda_{s})$. Then, \blueref{eq: nonstationary EOT} can be rewritten as:
        \begin{equation} \label{eq: nonstationary EOT2}
            \inf_{F \in \Pi(\nu_{1},\nu_{T})} \bE_{F} \left[c(U)\right] + \lambda_{KL}D_{KL}(F\|F_{0})
        \end{equation}

        By \Cref{assumption: bounded1}, $\exp(\frac{-c(U)}{\lambda_{KL}})$ is bounded and in particular, $\exp(\frac{-c(U)}{\lambda_{KL}}) \in L^{1}(F_{0})$. Therefore, we can define the auxiliary reference measure $R$ as follows:
        \begin{equation*}
            dR(U) := \exp(\frac{-c(U)}{\lambda_{KL}}) dF_{0}(U)
        \end{equation*}
        Note that $R \sim F_{0}$. Then, \eqref{eq: nonstationary EOT2} is equivalent to:\footnote{This is the Schr{\"o}dinger bridge problem, see \cite{leonard2013survey} for continuous time setting and \cite{de2021diffusion} for discrete time setting.}
        \begin{equation}
            \inf_{F \in \Pi(\nu_{1},\nu_{T})} \lambda_{KL} D_{KL}(F\|R) \label{eq: nonstationary EOT3}
        \end{equation}
        By \cite{leonard2014some} Theorem 2.4, we have:
        \begin{equation*}
            D_{KL}(F\|R) = D_{KL}(F_{1,T}\|R_{1,T}) + \bE_{F_{1,T}} \left[D_{KL}(F_{|1,T}\|R_{|1,T})\right]
        \end{equation*}
        where $F_{1,T}$ and $R_{1,T}$ are the two-period marginals of $F$ and $R$, respectively. $F_{|1,T},R_{|1,T}$ are the conditional distribution given $(\xi_{1},\xi_{T})$. In particular, we have:
        \begin{equation*}
            dR_{1,T}(\xi_{1},\xi_{T}) := \int_{ \xi_{2} \cdots \xi_{T-1}} dR(\xi_{1},\cdots,\xi_{T}) = \int_{\xi_2, \dots, \xi_{T-1}} \exp\left(\frac{-c(\xi_1, \dots, \xi_T)}{\lambda_{KL}}\right) \, dF_{0}(\xi_1, \dots, \xi_T)
        \end{equation*}
        The second term is minimized at $F_{|1,T}^{*} = R_{|1,T}$. Therefore, \eqref{eq: nonstationary EOT3} can be reduced to the static Schr{\"o}dinger bridge problem:
        \begin{equation}
            \inf_{F_{1,T} \in \tilde{\Pi}(\nu_{1},\nu_{T})} D_{KL}(F_{1,T}\|R_{1,T}) \label{eq: static SB problem}
        \end{equation}     
        where $\tilde{\Pi}(\nu_{1},\nu_{T})$ is the set of all joint distributions of $(\xi_{1},\xi_{T})$ with marginals $\nu_{1}$ and $\nu_{T}$.

        Note that:
        \begin{equation*}
            dR_{1,T}(\xi_{1},\xi_{T}) = \left(\int_{\xi_2, \dots, \xi_{T-1}} \exp\left(\frac{-c(\xi_1, \dots, \xi_T)}{\lambda_{KL}}\right) \, dF_0(\xi_2, \dots, \xi_{T-1}|\xi_{1},\xi_{T})\right) dF_{0}^{1,T}(\xi_{1},\xi_{T})
        \end{equation*}
        By \Cref{assumption: bounded1}, we have $R_{1,T} \sim F_{0}^{1,T}$. By \Cref{assumption: reference1}, we have $R_{1,T} \sim \nu_{1} \otimes \nu_{T}$. Therefore, $\Pi_{\text{fin}}(\nu_{1},\nu_{T}):= \{F \in \tilde{\Pi}(\nu_{1},\nu_{T}) \mid D_{KL}(F\|R_{1,T}) < + \infty\} \neq \emptyset$. By \Cref{assumption: reference1} and \cite{nutz2021introduction} Theorem 2.1, the unique solution to \eqref{eq: static SB problem} has the form:
        \begin{equation*}
            \frac{dF_{1,T}^{*}(\xi_{1},\xi_{T})}{dR_{1,T}(\xi_{1},\xi_{T})} = \exp\left(\phi_{1}^{*}(\xi_{1}) + \phi_{T}^{*}(\xi_{T})\right) \quad R_{1,T} \text{-a.s.}
        \end{equation*}
        and $\phi_{1}^{*} \in L^{1}(\nu_{1})$, $\phi_{T}^{*} \in L^{1}(\nu_{T})$. By \cite{nutz2021introduction} Theorem 3.2, we have:
        \begin{equation*}
            \inf_{F_{1,T} \in \tilde{\Pi}(\nu_{1},\nu_{T})} D_{KL}(F_{1,T}\|R_{1,T}) = \sup_{\phi_{1} \in L^{1}(\nu_{1}),\phi_{T} \in L^{1}(\nu_{T})}\bE_{\nu_{1}} \phi_{1} + \bE_{\nu_{T}} \phi_{T} - \int \exp\left(\phi_{1}+\phi_{T}\right)dR_{1,T} + 1
        \end{equation*}
        Multiplying both sides by $\lambda_{KL}$ and letting $\phi_{1} := \lambda_{KL} \phi_{1}$ and $\phi_{T} := \lambda_{KL} \phi_{T}$, we have:
        \begin{equation*}
           \inf_{F_{1,T} \in \tilde{\Pi}(\nu_{1},\nu_{T})} \lambda_{KL}D_{KL}(F_{1,T}\|R_{1,T}) = \sup_{\phi_{1} \in L^{1}(\nu_{1}),\phi_{T} \in L^{1}(\nu_{T})}\bE_{\nu_{1}} \phi_{1} + \bE_{\nu_{T}} \phi_{T} - \lambda_{KL} \bE_{R_{1,T}} \exp\left(\frac{\phi_{1}+\phi_{T}}{\lambda_{KL}}\right) + \lambda_{KL}
        \end{equation*}
        Then, we have:
        \begin{align*}
            dF^{*}(U) = dF_{1,T}^{*} dF_{|1,T}^{*}  = \exp\left(\frac{\phi_{1}^{*}(\xi_{1}) + \phi_{T}^{*}(\xi_{T}) - c(U)}{\lambda_{KL}}\right) dF_{0}(U)
        \end{align*}
        where $\phi_{1}^{*}$ and $\phi_{T}^{*}$ are the unique maximizers up to an additive constant.
        \item \textbf{Markov Property:}
        \Cref{assumption: nonstationary Markov} implies that $c(U)$ is also pairwise additive, i.e., $c(U) = \sum_{t=1}^{T-1} c_{t}(\xi_{t},\xi_{t+1})$ for some $c_{t}(\xi_{t},\xi_{t+1})$. Therefore, we can write:
        \begin{equation*}
            dF^{*}(U) = \exp\left(\sum_{t=1}^{T-1} \frac{-c_{t}(\xi_{t},\xi_{t+1})}{\lambda_{KL}} + \frac{\phi_{1}^{*}(\xi_{1}) + \phi_{T}^{*}(\xi_{T})}{\lambda_{KL}}\right) dF_{0}(U)
        \end{equation*}
        which has Markov property as $F_{0}$ has Markov property.\footnote{The left part is the (unnormalized) pairwise Markov random field \cite{wainwright2008graphical}.} 

        \item Since $\kappa_{\text{TI}}(\delta,P) \geq \tilde{\kappa}_{\text{TI}}(\delta,P)$ and the solution to $\tilde{\kappa}_{\text{TI}}(\delta,P)$ corresponding to $\lambda_{KL}^{*} > 0$ has the Markov property, we have: $\kappa_{\text{TI}}(\delta,P) = \tilde{\kappa}_{\text{TI}}(\delta,P)$.
    \end{enumerate}
\end{proof}

\subsubsection{Proof of \Cref{thm: nonstationary duality with perturbation}}

\begin{proof}[\textbf{Proof of \Cref{thm: nonstationary duality with perturbation}}]
    The proof of minimax part is similar to \Cref{theorem: Minimax Theory}. \Cref{lemma: compact convex space Pi} is replaced by Lemmas \ref{lemma: compact space Pi3} and \ref{lemma: convex space Pi3}. \Cref{lemma: Hausdorff space Pi} is replaced by \Cref{lemma: Hausdorff space Pi3}. Finally, the bounds on the cost function is replaced by the finite constant in \Cref{assumption: bounded1}.
\end{proof}

\subsubsection{Proof of \Cref{lemma: convexity of EOT}}

\begin{proof}[\textbf{Proof of \Cref{lemma: convexity of EOT}}]
    Let $\pi_{1}, \pi_{2}$ be the unique solutions to the EOT problem with respect to $\nu_{1}, \nu_{2}$, respectively. Then, $\pi := \lambda \pi_{1} + (1-\lambda) \pi_{2} \in \tilde{\Pi}(\lambda \nu_{1} + (1-\lambda) \nu_{2},\nu_{T})$ for all $\lambda \in [0,1]$. 
    By \cite{nutz2021introduction} Lemma 1.3, the KL divergence is jointly convex. Therefore, we have:
    \begin{equation*}
        D_{KL}(\pi\|(\lambda \nu_{1} + (1-\lambda) \nu_{2})\otimes \nu_{T}) \leq \lambda D_{KL}(\pi_{1}\|\nu_{1} \otimes \nu_{T}) + (1-\lambda) D_{KL}(\pi_{2}\|\nu_{2} \otimes \nu_{T})
    \end{equation*}
    Therefore, we have:
    \begin{align*}
        \text{EOT}(\lambda \nu_{1} + (1-\lambda) \nu_{2},\nu_{T}) & \leq \int \tilde{c}(\xi_{1},\xi_{T}) d\pi(\xi_{1},\xi_{T}) + \lambda_{KL} D_{KL}(\pi\|(\lambda \nu_{1} + (1-\lambda) \nu_{2}) \otimes \nu_{T}) \\
        & \leq \lambda \int \tilde{c}(\xi_{1},\xi_{T}) d\pi_{1}(\xi_{1},\xi_{T}) + (1-\lambda) \int \tilde{c}(\xi_{1},\xi_{T}) d\pi_{2}(\xi_{1},\xi_{T}) \\
        & \hspace{2cm} + \lambda_{KL} \left(\lambda D_{KL}(\pi_{1}\|\nu_{1} \otimes \nu_{T}) + (1-\lambda) D_{KL}(\pi_{2}\|\nu_{2} \otimes \nu_{T})\right) \\
        & = \lambda \text{EOT}(\nu_{1},\nu_{T}) + (1-\lambda) \text{EOT}(\nu_{2},\nu_{T})
    \end{align*}
    
    See \cite{goldfeld2024statistical} Lemma E.23 for the directional derivative.
\end{proof}

\subsection{Proofs in \Cref{sec: Large Sample Properties and Inference}}

\subsubsection{Proof of \Cref{lemma: identified and estimator are compact}}

\begin{proof}[\textbf{Proof of \Cref{lemma: identified and estimator are compact}}]
    By the continuity of $\bE_{F} \left[m(U;\theta,v(\alpha))\right]$, $\mA_{I}, \hat{\mA_{I}}$ are closed. By \cite{rudin1976principles} Theorem 2.35, a closed subset of a compact set is compact. Therefore, $\mA_{I}, \hat{\mA_{I}}$ are compact as $\mA$ is compact (Lemmas \ref{lemma: compact space} and \ref{lemma: relaxed compact space}). For the nonempty part, see the proof of \Cref{thm: Consistency and Convergence Rate of the Identified Set}.
\end{proof}

\subsubsection{Proof of \Cref{thm: Consistency and Convergence Rate of the Identified Set}}

\begin{proof}[\textbf{Proof of \Cref{thm: Consistency}}]
    \begin{enumerate}
        \item Since $\epsilon_{n} \geq \|P_{0} - P_{n}\|_{\infty}$, we have $\mA_{I} \subseteq \hat{\mA_{I}}$.
        \item If $\mA_{I} = \mA$, the result is trivial.
        \item Suppose $\mA_{I} \neq \mA$. By \Cref{assumption: Asymptotic Property continuous}, we have for some $\delta(\eps) > 0$:
        \begin{equation*}
            \inf_{d(\alpha,\mA_{I}) > \eps} \|\bE_{F} \left[m(U;\theta,v(\alpha))\right] - P_{n}\|_{\infty} \geq \inf_{d(\alpha,\mA_{I}) > \eps} \|\bE_{F} \left[m(U;\theta,v(\alpha))\right] - P_{0} + o_{p}(1)\|_{\infty} \geq \delta(\eps) + o_{p}(1)
        \end{equation*}
        
        Similarly, we have:
        \begin{equation*}
            \sup_{\hat{\mA_{I}}} \|\bE_{F} \left[m(U;\theta,v(\alpha))\right] - P_{0}\|_{\infty} \leq \sup_{\hat{\mA_{I}}} \|\bE_{F} \left[m(U;\theta,v(\alpha))\right] - P_{n} + o_{p}(1)\|_{\infty} \leq \frac{c_{n}}{\sqrt{n}} + o_{p}(1) = o_{p}(1)
        \end{equation*}

        Therefore, with probability approaching 1:
        \begin{equation*}
            \sup_{\hat{\mA_{I}}} \|\bE_{F} \left[m(U;\theta,v(\alpha))\right] - P_{0}\|_{\infty} < \delta(\eps) \leq \inf_{d(\alpha,\mA_{I}) > \eps} \|\bE_{F} \left[m(U;\theta,v(\alpha))\right] - P_{n}\|_{\infty}
        \end{equation*}
        which implies that: with probability approaching 1, $\hat{\mA_{I}} \bigcap \{\alpha : d(\alpha,\mA_{I}) > \eps\} = \emptyset$. Thus, $\hat{\mA_{I}} \subseteq \{\alpha : d(\alpha,\mA_{I}) \leq \eps\}$ with probability approaching 1. Therefore, we have with probability approaching 1, $d_{H}(\hat{\mA}_{I},\mA_{I}) \leq \eps$. As $\eps$ is arbitrary, $d_{H}(\hat{\mA}_{I},\mA_{I}) = o_{p}(1)$.
\end{enumerate}
\end{proof}

\begin{lemma}[Existence of a Polynomial Minorant] \label{lemma: Moment Polynomial Minorant}
    Under Assumptions \ref{assumption: Asymptotic Properties1}, \ref{assumption: Asymptotic Properties2}, and \ref{assumption: Polynomial Minorant}, we have for $\forall \ \eps \in (0,1)$ there exists $(\kappa_{\eps},n_{\eps})$ such that for all $n \geq n_{\eps}$, we have:
    \begin{equation*}
        \|\bE_{F} \left[m(U;\theta,v(\alpha))\right] - P_{n}\|_{\infty} \geq \frac{C_{1}}{2} \min \{C_{2}, d(\alpha, \mA_{I})\}
    \end{equation*}
    uniformly on $\{\alpha | d(\alpha, \mA_{I}) \geq \frac{\kappa_{\eps}}{\sqrt{n}}\}$ with probability at least $1-\eps$.
\end{lemma}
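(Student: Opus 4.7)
The plan is to reduce the claimed bound to the polynomial minorant condition (\Cref{assumption: Polynomial Minorant}) by absorbing the sampling discrepancy between $P_n$ and $P_0$ via a triangle inequality argument. First I would write, for any $\alpha \in \mA$,
\begin{equation*}
    \|\bE_F [m(U;\theta,v(\alpha))] - P_n\|_\infty \;\geq\; \|\bE_F [m(U;\theta,v(\alpha))] - P_0\|_\infty \;-\; \|P_n - P_0\|_\infty.
\end{equation*}
By \Cref{assumption: Polynomial Minorant}, the first term on the right is bounded below by $C_1 \min\{C_2, d(\alpha, \mA_I)\}$, so the only remaining task is to control the sampling error $\|P_n - P_0\|_\infty$ uniformly on the event in question.

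Next, I would invoke the $\sqrt{n}$-consistency of $P_n$ from \Cref{assumption: Asymptotic Properties2}, which gives $\sqrt{n}\|P_n - P_0\|_\infty = O_p(1)$. Thus for each $\eps \in (0,1)$ there exist a finite constant $M_\eps$ and an index $n_\eps^{(1)}$ such that $\Pr(\sqrt{n}\|P_n - P_0\|_\infty \leq M_\eps) \geq 1 - \eps$ whenever $n \geq n_\eps^{(1)}$. I would then set $\kappa_\eps := 2 M_\eps / C_1$ and choose $n_\eps := \max\{n_\eps^{(1)}, \lceil (\kappa_\eps / C_2)^2 \rceil\}$, so that $\kappa_\eps / \sqrt{n} \leq C_2$ for all $n \geq n_\eps$. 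For any $\alpha$ with $d(\alpha, \mA_I) \geq \kappa_\eps / \sqrt{n}$, this threshold guarantees $\min\{C_2, d(\alpha, \mA_I)\} \geq \kappa_\eps / \sqrt{n}$, and on the event of probability at least $1-\eps$ the chain
\begin{equation*}
    \|\bE_F [m(U;\theta,v(\alpha))] - P_n\|_\infty \;\geq\; C_1 \min\{C_2, d(\alpha, \mA_I)\} - \frac{M_\eps}{\sqrt{n}} \;=\; C_1 \min\{C_2, d(\alpha, \mA_I)\} - \frac{C_1 \kappa_\eps}{2\sqrt{n}} \;\geq\; \frac{C_1}{2} \min\{C_2, d(\alpha, \mA_I)\}
\end{equation*}
holds uniformly on $\{\alpha : d(\alpha, \mA_I) \geq \kappa_\eps/\sqrt{n}\}$, which is exactly the claim.

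There is no substantial obstacle here: the argument is a standard triangle-inequality-plus-stochastic-envelope technique. The only subtle point is ensuring that the resulting $\kappa_\eps$ does not depend on $n$ or on $\alpha$. This is delivered by treating the $\sqrt{n}$-consistency as an $O_p(1)$ envelope (yielding the constant $M_\eps$) rather than working with the possibly growing data-dependent sequence $c_n$ from \Cref{assumption: Asymptotic Properties2}; the looser $c_n$ would only be needed if one wanted to convert the $1-\eps$ high-probability statement into an ``asymptotically with probability one'' statement uniformly in $\eps$, which is not what this lemma claims.
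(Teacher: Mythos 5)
Your proof is correct and follows essentially the same route as the paper's own argument: apply the reverse triangle inequality to separate the sampling error $\|P_n - P_0\|_\infty$ from the population moment discrepancy, invoke the $O_p(1)$ envelope implied by $\sqrt{n}$-consistency to obtain a constant $M_\eps$ on a high-probability event, and then choose $\kappa_\eps$ proportional to $M_\eps / C_1$ and $n_\eps$ large enough that $\kappa_\eps/\sqrt{n} \leq C_2$, so that the subtracted term is dominated by half of $C_1 \min\{C_2, d(\alpha, \mA_I)\}$ on the region $\{\alpha : d(\alpha, \mA_I) \geq \kappa_\eps/\sqrt{n}\}$. The paper carries the computation at the $\sqrt{n}$-scale and phrases the two requirements on $n_\eps$ as $C_1\kappa_\eps \geq 2M_\eps$ and $\sqrt{n}C_1C_2 \geq 2M_\eps$, but these are algebraically identical to your choices.
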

\begin{proof}
    Note that:
    \begin{align*}
        \sqrt{n}\|\bE_{F} \left[m(U;\theta,v(\alpha))\right] - P_{n}\|_{\infty} 
        & = \| \sqrt{n}(\bE_{F} \left[m(U;\theta,v(\alpha))\right] - P_{0}) + \sqrt{n}(P_{0} - P_{n})\|_{\infty} \\
        & \geq \left|\|\sqrt{n}(\bE_{F} \left[m(U;\theta,v(\alpha))\right] - P_{0})\|_{\infty} - \|\sqrt{n}(P_{0} - P_{n})\|_{\infty} \right| \\
        & = \left|\|\sqrt{n}(\bE_{F} \left[m(U;\theta,v(\alpha))\right] - P_{0})\|_{\infty} - O_{p}(1) \right|
    \end{align*}
    where we used $\|x + y\| \geq |\|x\| - \|y\||$ and $\sqrt{n}(P_{0} - P_{n}) = O_{p}(1)$. Therefore, for $\forall \ \eps$ there exists $M_{\eps} > 0$ and $n_{\eps,1}$ such that for all $n \geq n_{\eps,1}$, with probability at least $1-\eps$: $|O_{p}(1)| \leq M_{\eps}$. Choose $(\kappa_{\eps},n_{\eps})$ such that $n_{\eps} \geq n_{\eps,1}$, $C_{1} \kappa_{\eps} \geq 2M_{\eps}$, and $\sqrt{n} C_{1} C_{2} \geq 2M_{\eps}$ for all $n \geq n_{\eps}$. Then, with probability at least $1-\eps$: uniformly on $\{ \alpha: d(\alpha, \mA_{I}) \geq \frac{\kappa_{\eps}}{\sqrt{n}}\}$:
    \begin{align*}
        \|\sqrt{n}(\bE_{F} \left[m(U;\theta,v(\alpha))\right] - P_{0})\|_{\infty} - O_{p}(1)
        & \geq \|\sqrt{n}(\bE_{F} \left[m(U;\theta,v(\alpha))\right] - P_{0})\|_{\infty} - M_{\eps} \\
        & \geq \sqrt{n} C_{1} \min\{C_{2}, d(\alpha, \mA_{I})\} - M_{\eps} \\
        & \geq \frac{1}{2} \sqrt{n} C_{1} \min\{C_{2}, d(\alpha, \mA_{I})\} + \frac{1}{2} \sqrt{n} C_{1} \min \{C_{2}, \frac{\kappa_{\eps}}{\sqrt{n}}\} - M_{\eps} \\
        & \geq \frac{1}{2} \sqrt{n} C_{1} \min\{C_{2}, d(\alpha, \mA_{I})\}
    \end{align*}
    Therefore, with probability at least $1-\eps$: uniformly on $\{ \alpha: d(\alpha, \mA_{I}) \geq \frac{\kappa_{\eps}}{\sqrt{n}}\}$:
    \begin{equation*}
        \|\bE_{F} \left[m(U;\theta,v(\alpha))\right] - P_{n}\|_{\infty} \geq \frac{1}{2} C_{1} \min\{C_{2}, d(\alpha, \mA_{I})\}
    \end{equation*}
\end{proof}

\begin{proof}[\textbf{Proof of \Cref{thm: Convergence Rate of the Identified Set}}]
    For $\forall \ \eps > 0$, let the positive constants $(\kappa_{\eps},n_{\eps})$ be as specified in \Cref{lemma: Moment Polynomial Minorant}. Let $\bar{c} := \max \{\frac{C_{1}}{2} \kappa_{\eps}, c_{n}\}$. Furthermore, there exists $n_{\eps'} \geq n_{\eps}$ such that for all $n \geq n_{\eps'}$, with probability at least $1-\eps$: $\eps_{n} := \frac{\bar{c}}{\frac{C_{1}}{2}\sqrt{n}} \leq C_{2}$ as $\frac{c_{n}}{\sqrt{n}} = o_{p}(1)$, and $\eps_{n} \geq \frac{\kappa_{\eps}}{\sqrt{n}}$. Therefore, with probability at least $1-\eps$, for all $n \geq n_{\eps'}$:
    \begin{align*}
        \inf_{\alpha, d(\alpha, \mA_{I}) \geq \eps_{n}} \|\bE_{F} \left[m(U;\theta,v(\alpha))\right] - P_{n}\|_{\infty}
        & \geq \frac{C_{1}}{2} \min\{C_{2}, d(\alpha, \mA_{I})\} \geq \frac{C_{1}}{2} \min\{C_{2}, \eps_{n}\} \geq \frac{C_{1}}{2} \eps_{n} = \frac{\bar{c}}{\sqrt{n}} \geq \frac{c_{n}}{\sqrt{n}}
    \end{align*}
    Therefore, combining the first part in \Cref{thm: Consistency} we have: with probability at least $1-\eps$: $\mA_{I} \subseteq \hat{\mA}_{I} \subseteq \{\alpha | d(\alpha, \mA_{I}) \leq \eps_{n}\}$. Thus, $d_{H}(\hat{\mA}_{I},\mA_{I}) \leq \eps_{n}$. Therefore, we have: for $\forall \ \eps > 0$, there exists $n_{\eps'}$ such that for all $n \geq n_{\eps'}$, we have: with probability at least $1-\eps$: $d_{H}(\hat{\mA}_{I},\mA_{I}) \leq \eps_{n}$. As $\eps$ is arbitrary, we have $|\frac{d_{H}(\hat{\mA}_{I},\mA_{I})}{\frac{\max \{1, c_{n}\}}{\sqrt{n}}}| \leq \frac{\max \{\frac{C_{1}}{2} \kappa_{\eps}, c_{n}\}}{\max \{1, c_{n}\} \frac{C_{1}}{2}} := M_{1,\eps}$ with probability at least $1-\eps$. Therefore, we have $d_{H}(\hat{\mA}_{I},\mA_{I}) = O_{p}(\frac{\max \{1, c_{n}\}}{\sqrt{n}})$.
\end{proof}

\subsubsection{Proof of Theorem \ref{thm: Consistency and Convergence Rate of the Scalar of Interest}}

\begin{proof}
    It is a direct consequence of \Cref{lemma: consistency and convergence rate of the projection} and \Cref{thm: Consistency and Convergence Rate of the Identified Set}
\end{proof}

\subsubsection{Proof of \Cref{thm: Hadamard Directional Differentiability}} \label{appendix: proof of Hadamard Directional Differentiability}

\begin{proof}[\textbf{Proof of \Cref{thm: Hadamard Directional Differentiability}}]
    We verify conditions in \cite{bonnans2013perturbation} Theorem 4.25.
    
    We first verify the setting in \cite{bonnans2013perturbation} Page 260. By \cite{bogachev2007measure} Theorem 4.6.1, all real countably additive measures on $(\mU,\mB(\mU))$ with the variation norm is a Banach space. Therefore, the product of $\bR^{d_{\theta}}$ and all real countably additive measures with the variation norm plus the Euclidean norm is also a Banach space. Therefore, the setting is satisfied.
    
    Moreover, by Lemmas \ref{lemma: compact space}, \ref{lemma: convex space}, \ref{lemma: relaxed compact space}, and \ref{lemma: relaxed convex space}, the sets $\mF$ and $\mF_{\text{relaxed}}$ are closed and convex. By \Cref{assumption: compact}, $\Theta$ is also closed (as it is compact) and convex. Therefore, $\mA$ is closed and convex. According to \cite{bonnans2013perturbation} equation (2.194), the Robinson's constraint qualification (defined in their equation (2.163)) is equivalent to \Cref{assumption: constraint qualification}. In their notation, $G(\alpha,P) := P(\alpha) - P$ and $f(\alpha,P) = s(\alpha)$. Thus, the Robinson's constraint qualification holds.

    By \cite{bonnans2013perturbation} Theorem 4.9. The Robinson's constraint qualification implies the directional regularity condition for any direction.

    Therefore, by \cite{bonnans2013perturbation} Theorem 4.25, the maps $\kappa(\delta,P)$ and $\tilde{\kappa}_{\text{TI}}(\delta,P)$ are Hadamard directionally differentiable at $P_{0}$, and note that $D_{P}\mL(\alpha,\lambda,P_{0})h = -\lambda^{T}h$ where $\mL(\alpha,\lambda,P) := s(\alpha) + \lambda^{T} (P(\alpha) - P)$.

    The asymptotic distribution follows \cite{fang2019inference} Theorem 2.1.
\end{proof}

\subsection{Proofs in \Cref{sec: Derivative with respect to delta}}

\subsubsection{Proof of \Cref{thm: global sensitivity}}

\begin{proof}[\textbf{Proof of \Cref{thm: global sensitivity}}]
    The quantization rate is in \cite{eckstein2024convergence} Remark 2.1. Other parts follow directly from \cite{eckstein2024convergence} Theorem 3.1(i).
\end{proof}

\subsubsection{Proof of \Cref{thm: derivative delta}}

\begin{lemma} \label{lemma: compact identified set Lip}
    Under Assumptions \ref{assumption: Asymptotic Properties1} and \ref{assumption: compact Support for derivative}, $\mA_{I,Lip}^{\delta}$ is compact.
\end{lemma}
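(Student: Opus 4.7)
The plan is to show that $\mA_{I,Lip}^{\delta}$ sits as a closed subset inside a compact product space, and then invoke the fact that a closed subset of a compact set is compact (\cite{rudin1976principles} Theorem 2.35). By \Cref{assumption: compact}, $\Theta$ is already compact, so the entire effort concentrates on the measure component. The main obstacle will be establishing compactness of $\Pi_{\text{TH}}$ (respectively $\Pi_{\text{TI}}$) in the topology of weak convergence, which requires using the Lipschitz and boundedness conditions on the density in a non-trivial way.

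First I would show that $\Pi_{\text{TH}}$ is sequentially compact by combining Arzelà--Ascoli with Lemmas \ref{lemma: closed space Pi} and \ref{lemma: closed space Pi2}. Under \Cref{assumption: compact Support for derivative}, $\mU$ is compact, so the family of densities $\{dF : F \in \Pi_{\text{TH}}\}$ is uniformly bounded by $C_{4}$ and equicontinuous (since the Lipschitz constant is uniformly bounded by $L$). Thus, for any sequence $\{F_{n}\} \subseteq \Pi_{\text{TH}}$, by Arzelà--Ascoli there exists a subsequence whose densities converge uniformly to some continuous $f^{*}$ on $\mU$. Uniform convergence on a compact set implies $L^{1}$ convergence, hence total variation convergence of $F_{n}$ to the measure $F^{*}$ with density $f^{*}$, and in particular weak convergence. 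Passing to the limit preserves the density bounds $C_{3} \leq f^{*} \leq C_{4}$ pointwise and the Lipschitz bound $\|f^{*}\|_{Lip} \leq L$. Finally, the marginal constraints pass to the limit by the closedness results in Lemmas \ref{lemma: closed space Pi} and \ref{lemma: closed space Pi2}. The argument for $\Pi_{\text{TI}}$ is identical.

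Next I would intersect with two additional closed sets: the KL divergence ball $\{F : D_{KL}(F\|F_{0}) \leq \delta\}$, which is closed in the weak topology by \Cref{lemma: closed space}, and the moment constraint set $\{\alpha \in \Theta \times \Pi : \bE_{F}[m(U;\theta,v(\alpha))] = P_{0}\}$, which is closed by \Cref{assumption: Asymptotic Property continuous} (the preimage of the closed set $\{P_{0}\}$ under a continuous map). A finite intersection of closed subsets of a compact set remains closed, and the product $\Theta \times \Pi_{\text{TH}}$ (respectively $\Theta \times \Pi_{\text{TI}}$) is compact as a product of compact spaces. Therefore $\mA_{I,Lip}^{\delta}$, being a closed subset of this compact product, is itself compact.

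The only subtle point I expect is verifying that uniform convergence of densities yields a measure still in $\Pi(\nu_{0},\ldots,\nu_{k})$ (or $\Pi(\nu_{0},\nu_{T})$); but this is immediate because integration of a continuous bounded test function against $dF_{n}$ converges to integration against $dF^{*}$ by uniform convergence plus compactness of $\mU$, and the projection to marginals is continuous in the weak topology. All other steps are routine applications of results already proven earlier in the paper.
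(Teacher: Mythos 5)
Your proof is correct and takes a genuinely different route from the paper's. The paper argues via tightness and Prokhorov's theorem to get compact closure, then asserts closedness ``by passing to the limit in the equation for marginals.'' You instead deploy Arzelà--Ascoli directly on the family of densities, using the uniform bound $C_{4}$ and uniform Lipschitz constant $L$ to extract a uniformly convergent subsequence, then upgrade to $L^{1}$ (hence total variation, hence weak) convergence to identify the limiting measure. The two approaches buy different things. The paper's Prokhorov route is slicker for compactness of the closure, but the accompanying closedness claim is underspecified: passing to the limit in the marginal equations handles $F \in \Pi(\nu_{0},\ldots,\nu_{k})$, but says nothing about why the constraints $C_{3} \le dF \le C_{4}$ and $\|dF\|_{Lip} \le L$ survive a weak limit, which is not automatic since weak convergence of measures does not give pointwise convergence of densities. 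Your Arzelà--Ascoli argument closes precisely this gap, because the uniform limit of the densities inherits the pointwise bounds and the Lipschitz modulus directly. So your version is not merely an alternative but is in fact more careful on the one point the paper glosses over; everything downstream (intersecting with the closed KL ball via \Cref{lemma: closed space}, with the moment constraint via \Cref{assumption: Asymptotic Property continuous}, taking the product with compact $\Theta$, and applying \cite{rudin1976principles} Theorem 2.35) matches the paper's structure.
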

\begin{proof}
    By \Cref{assumption: compact Support for derivative}, $\Pi_{\text{TH}}$ and $\Pi_{\text{TI}}$ are both tight, and by Prokhorov's theorem they have compact closure. By passing to the limit in the equation for marginals, $\Pi_{\text{TH}}$ and $\Pi_{\text{TI}}$ are closed, which implies $\mA_{Lip}^{\delta} = \{F \in \mP(\mU) \mid F \in \Pi, D_{KL}(F\|F_{0}) \leq \delta\}$ is closed where $\Pi$ is either $\Pi_{\text{TH}}$ or $\Pi_{\text{TI}}$. By \Cref{lemma: compact space KL} and \cite{rudin1976principles} Theorem 2.35, $\mA_{Lip}^{\delta}$ is compact. By \Cref{assumption: Asymptotic Property continuous}, $\mA_{I,Lip}^{\delta}$ is closed. Therefore, it is compact.
\end{proof}

\begin{lemma} \label{lemma: derivative delta}
    Let $\Pi$ be either $\Pi_{\text{TH}}$ or $\Pi_{\text{TI}}$. Under \Cref{assumption: hadamard for delta}, $D_{KL}(F\|F_{0})$ is continuously differentiable on $\{F \in \Pi | D_{KL}(F\|F_{0}) \leq \delta\}$.
\end{lemma}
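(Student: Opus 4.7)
The plan is to identify the Gâteaux derivative of $D_{KL}(\,\cdot\,\|F_{0})$ at each $F$ in $\mS := \{F \in \Pi : D_{KL}(F\|F_{0}) \leq \delta\}$ and then verify that the resulting linear functional varies continuously with $F$ in the operator-norm topology of the ambient Banach space of signed measures of bounded variation on $(\mU,\mB(\mU))$ used in \Cref{appendix: proof of Hadamard Directional Differentiability}.

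First, I would exploit the structural bounds built into $\Pi$. Every $F \in \mS$ has density $dF \in [C_{3},C_{4}]$ with Lipschitz constant at most $L$; since $F_{0} \in \mS$ (because $D_{KL}(F_{0}\|F_{0}) = 0$ and $\|dF_{0}\|_{Lip} \leq L$ by \Cref{assumption: hadamard for delta}), $dF_{0}$ satisfies the same bounds. Consequently, on the compact support $\mU$ (\Cref{assumption: compact Support for derivative}), the log-density ratio $\log(dF/dF_{0})$ is uniformly bounded in sup-norm by $\log(C_{4}/C_{3})$ and is Lipschitz with a constant depending only on $(C_{3},C_{4},L)$. For $F \in \mS$ and a signed measure $H$ in the tangent cone at $F$ (so that $F + tH \in \mS$ for small $t > 0$ and $\int dH = 0$), a pointwise Taylor expansion of $(dF + t\,dH)\log\bigl((dF + t\,dH)/dF_{0}\bigr)$ combined with dominated convergence---for which the uniform density bounds supply an integrable envelope---yields
\begin{equation*}
\frac{D_{KL}(F + tH\|F_{0}) - D_{KL}(F\|F_{0})}{t} \longrightarrow \int \log(dF/dF_{0})\,dH \quad \text{as } t \downarrow 0.
\end{equation*}
Hence the Gâteaux derivative is the bounded linear functional $\mathcal{D}_{F} : H \mapsto \int \log(dF/dF_{0})\,dH$, which extends to all signed measures of bounded variation with operator norm equal to $\|\log(dF/dF_{0})\|_{\infty}$.

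For continuity of $F \mapsto \mathcal{D}_{F}$, suppose $F_{n} \to F$ in total variation. The densities $\{dF_{n}\}$ are uniformly bounded and equicontinuous on the compact $\mU$, so by Arzelà--Ascoli every subsequence has a sup-norm-convergent sub-subsequence whose limit must coincide with $dF$; hence $dF_{n} \to dF$ uniformly on $\mU$. Uniform continuity of the logarithm on the closed interval $[C_{3}/C_{4},\,C_{4}/C_{3}]$ then delivers $\|\log(dF_{n}/dF_{0}) - \log(dF/dF_{0})\|_{\infty} \to 0$, which is precisely $\|\mathcal{D}_{F_{n}} - \mathcal{D}_{F}\|_{op} \to 0$. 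The main obstacle is treating the admissible directions carefully: $\mS$ is cut out by both the marginal equalities defining $\Pi$ and the inequality constraints on densities and Lipschitz constants, so the tangent cone at $F$ is strictly smaller than the space of all zero-mass signed measures. This is handled by noting that $\mathcal{D}_{F}$ admits a canonical bounded linear extension (via the same integral formula) to the whole ambient Banach space with uniformly controlled operator norm, which is the sense in which \Cref{assumption: continuous differentiability of s and v} is used in the framework of \cite{bonnans2013perturbation}.
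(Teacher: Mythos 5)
Your proposal is correct and identifies the same Gâteaux derivative as the paper, namely $\mathcal{D}_{F}: H \mapsto \int \log(dF/dF_{0})\,dH$, obtained from the uniform density bounds $C_{3}\le dF,dF_{0}\le C_{4}$ built into $\Pi$ and \Cref{assumption: hadamard for delta}. Where you diverge from the paper is in the continuity step. The paper argues quantitatively: it uses the same density bounds and Lipschitz control to derive an explicit estimate $\|D_{F_{1}}D_{KL}(F_{1}\|F_{0}) - D_{F_{2}}D_{KL}(F_{2}\|F_{0})\|_{op} \le \bigl(\tfrac{4C_{4}}{C_{3}} + 4|\log C_{4}|\bigr)\,\|F_{1}-F_{2}\|_{TV}$, i.e.\ a Lipschitz bound of the derivative map in the total-variation norm. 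You instead give a qualitative sequential argument: TV convergence $F_{n}\to F$, together with the uniform bound $[C_{3},C_{4}]$ and the common Lipschitz constant $L$, makes $\{dF_{n}\}$ precompact in $C(\mU)$ by Arzel\`a--Ascoli, forcing $dF_{n}\to dF$ in sup-norm, and then uniform continuity of $\log$ on $[C_{3}/C_{4},C_{4}/C_{3}]$ gives $\|\log(dF_{n}/dF_{0})-\log(dF/dF_{0})\|_{\infty}\to 0$, which is exactly the operator norm of $\mathcal{D}_{F_{n}}-\mathcal{D}_{F}$ against the variation norm. Your route makes the passage from TV convergence of measures to sup-norm convergence of log-density ratios fully explicit, which is the quantity that actually controls $\|\cdot\|_{op}$; the paper's estimates bound the action of the difference operator on measures in $\Pi$ directly, trading this transparency for an explicit modulus. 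Your remark at the end---that $\mathcal{D}_{F}$ extends boundedly to the whole ambient Banach space of signed measures, which is the sense in which \cite{bonnans2013perturbation}'s framework requires continuous differentiability---is a useful clarification the paper leaves implicit.
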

\begin{proof}
    The directional derivative\footnote{See \cite{nutz2021introduction} equation 1.10.} of $D_{KL}(F\|F_{0})$ in the direction $F_{1}$ is:
    \begin{equation*}
        D_{F}D_{KL}(F\|F_{0})(F_{1} - F) = - D_{KL}(F\|F_{0}) + \int \log \frac{dF}{dF_{0}} dF_{1}
    \end{equation*}
    which is linear in $F_{1}$. Under \Cref{assumption: hadamard for delta}, we have:
    \begin{equation*}
        |\log \frac{dF(U)}{dF_{0}(U)} -\log \frac{dF(U')}{dF_{0}(U')}| \leq \frac{1}{C_{3}} \left(|dF(U') - dF(U)| + |dF_{0}(U') - dF_{0}(U)|\right) \leq \frac{2L}{C_{3}} \|U' - U\|
    \end{equation*}
    Therefore, $\log \frac{dF(U)}{dF_{0}(U)}$ is Lipschitz continuous in $U$. Moreover, as $\log \frac{dF(U)}{dF_{0}(U)}$ is bounded from above, we have $\log \frac{dF}{dF_{0}} \in L^{1}(F_{2})$ for any $F_{2} \in \Pi$. By the Kantorovich-Rubinstein duality\footnote{See \cite{villani2021topics}. The cost function is $c(U,U') = \|U' - U\|$.}, it holds that:
    \begin{equation*}
        \left|\int \log \frac{dF}{dF_{0}} (dF_{2}-dF_{3})\right| \leq \frac{2L}{C_{3}} W_{1}(F_{2},F_{3})
    \end{equation*}
    which implies the directional derivative is continuous in $F$ as $W_{1}$ is a metric on $\mP(\mU)$, (see \cite{villani2021topics} Remark 7.13(iii)), and thus G\^{a}teaux differentiable. 
    
    Moreover, for any $F_{1}, F_{2},F \in \Pi$, let $\|\cdot\|_{TV}$ be the total variation norm, we have:
    \begin{equation*}
        \left|\int \log \frac{dF_{1}}{dF_{2}} dF\right| \leq \frac{C_{4}}{C_{3}} \int |dF_{1}(U) - dF_{2}(U)|dU = \frac{2C_{4}}{C_{3}} \|F_{1} - F_{2}\|_{TV}
    \end{equation*}
    
    Furthermore, note that:
    \begin{align*}
        & |D_{KL}(F_{1}\|F_{0}) - D_{KL}(F_{2}\|F_{0})| \\
        & \leq \left|\int dF_{1}\log dF_{1} -\int dF_{2} \log dF_{2}\right| + \left|\int (dF_{1} - dF_{2}) \log dF_{0}\right| \\
        & = \left|\int dF_{1}(\log dF_{1}-\log dF_{2}) + \int (dF_{1} - dF_{2}) \log dF_{2}\right| + \left|\int (dF_{1} - dF_{2})\log dF_{0}\right| \\
        & \leq \frac{2C_{4}}{C_{3}} \|F_{1} - F_{2}\|_{TV} + 4| \log C_{4}| \|F_{1} - F_{2}\|_{TV} = \left(\frac{2C_{4}}{C_{3}} + 4| \log C_{4}| \right) \|F_{1} - F_{2}\|_{TV}
    \end{align*}

    Therefore, we have:
    \begin{equation*}
        \|D_{F_{1}}D_{KL}(F_{1}\|F_{0}) - D_{F_{2}}D_{KL}(F_{2}\|F_{0})\|_{op} \leq \left(\frac{4C_{4}}{C_{3}} + 4| \log C_{4}| \right) \|F_{1} - F_{2}\|_{TV}
    \end{equation*}
    which implies its G\^{a}teaux derivative is also continuous in $F$ in the operator norm topology. Therefore, $D_{KL}(F\|F_{0})$ is continuously differentiable on $\{F \in \Pi | D_{KL}(F\|F_{0}) \leq \delta\}$.
\end{proof}

\begin{proof}[\textbf{Proof of \Cref{thm: derivative delta}}]
    Consider the following optimization problem:
    \begin{align*}
        \kappa(\delta, P_{0}) = \inf_{\alpha \in \Theta \times \Pi_{\text{TH}}} s(\alpha) \quad \text{s.t.} \quad P(\alpha) = P_{0}, \quad D_{KL}(F\|F_{0}) \leq \delta
    \end{align*}
    Note that $\Pi_{\text{TH}}$ is convex and closed, $\mA_{I,Lip}^{\delta}$ is compact by \Cref{lemma: compact identified set Lip}, and $s(\alpha)$ is continuous. By the extreme value theorem, the infimum is achieved. Therefore, $\mA_{I,Lip}^{\delta,*}$ is nonempty.

    We aim to show the Hadamard directionally differentiability of $\kappa(\delta, P_{0})$ in the directional $d:=(1,0^{d_{P}})$. We will verify the conditions in \cite{bonnans2013perturbation} Theorem 4.25. In particular, we verify the directional regularity condition in the direction $d:=(1,0^{d_{P}})$ for all $\alpha^{*} \in \mA_{I,Lip}^{\delta,*}$. By \Cref{lemma: derivative delta}, $D_{KL}(F\|F_{0})$ is continuously differentiable on $\Pi_{\text{TH}}$. Therefore, the setting in \cite{bonnans2013perturbation} Page 260 is satisfied.

    For $\alpha^{*} = (\theta^{*},F^{*}) \in \mA_{I,Lip}^{\delta,*}$, by \cite{bonnans2013perturbation} Theorem 4.9, the directional regularity condition in the direction $d$ is equivalent to:
    \begin{equation*}
        0 \in \text{int} \left\{
        \left( 
        \begin{array}{c}
            D_{KL}(F^{*}\|F_{0}) - \delta \\
            P(\alpha^{*}) - P_{0} \\
        \end{array} \right)
        +
        \left( 
        \begin{array}{c}
                D_{F^{*}}D_{KL}(F^{*}\|F_{0})(\Pi - F^{*}) - [0,+\infty) \\
                DP(\alpha^{*})(\Theta \times \Pi-\alpha^{*}) - 0^{d_{P}} \\
        \end{array} \right)
        -
        \left( 
        \begin{array}{c}
            (-\infty,0] \\
            0^{d_{P}} \\
        \end{array} \right) \right\}
    \end{equation*}

    By \Cref{assumption: constraint qualification for delta}, the second part is satisfied. The first part is straightforward as $- [0,+\infty)-(-\infty,0] = \bR$.

    \Cref{assumption: convergence of optimizer for delta} is the same as Assumption (iii) of Theorem 4.25 in \cite{bonnans2013perturbation}. Moreover, \Cref{assumption: convergence of optimizer for delta1} is the same as Assumption (iv) of Theorem 4.25 in \cite{bonnans2013perturbation}. Therefore, \cite{bonnans2013perturbation} Theorem 4.25 shows the right differentiability of $\kappa(\delta,P_{0})$.

    The proof for $\tilde{\kappa}_{\text{TI}}(\delta,P_{0})$ is the same.
\end{proof}

\subsection{Proof in \Cref{sec: Application}}

\subsubsection{Proof of \Cref{lemma: fixed point relationship}}

\begin{proof}[\textbf{Proof of \Cref{lemma: fixed point relationship}}]
    \begin{enumerate}[label=(\roman*)]
        \item See \cite{rust2002there} Theorem 1.
        \item 
        \begin{itemize}
            \item ($\Leftarrow$) Suppose $V^{*}(\omega)$ is the unique fixed point of \eqref{eq: value function}.
        
            \textbf{(Existence)} Note that $s_{0}(\omega) := 1-\exp\left(\omega - V^{*}(\omega)\right)$ is a fixed point of \eqref{eq: structural constraint}.
            
            \textbf{(Uniqueness)} Suppose $s_{0}(\omega)$ is a fixed point of \eqref{eq: structural constraint}. Let $V(\omega) := \omega - \log (1-s_{0}(\omega))$. Then, we show that $V(\omega)$ is a fixed point of \eqref{eq: value function}. It suffices to show that:
            \begin{equation*}
                \beta \bE \left[V(\omega') | \omega\right] = \log \left(\exp(V(\omega)) - \exp(\omega)\right)
            \end{equation*}
            where the right-hand Side equals to $\omega + \log\left(\frac{s_{0}(\omega)}{1-s_{0}(\omega)}\right)$,
            and the left-hand side equals to $\beta \bE \left[\omega' - \log(1-s_{0}(\omega')) | \omega\right]$. Since $s_{0}(\omega)$ is a fixed point of \eqref{eq: structural constraint}, we have RHS = LHS. Therefore, $V(\omega)$ is a fixed point of \eqref{eq: value function}.

            Prove by contradiction. Suppose there are two fixed points $s_{0}(\omega)$ and $\tilde{s}_{0}(\omega)$ to \eqref{eq: structural constraint}. Then, we have: $V(\omega) = \omega - \log(1-s_{0}(\omega))$, $\tilde{V}(\omega) = \omega - \log(1-\tilde{s}_{0}(\omega))$ are both the fixed points to \eqref{eq: value function} as shown above. Since \eqref{eq: value function} has a unique fixed point, we have $V(\omega) = \tilde{V}(\omega)$ for all $\omega$. Therefore, we have $s_{0}(\omega) = \tilde{s}_{0}(\omega)$ for all $\omega$. Contradiction.
            \item ($\Rightarrow$) Suppose \eqref{eq: structural constraint} has a unique fixed point $s^{*}_{0}(\omega)$.

            \textbf{(Existence)} Note that $V(\omega) := \omega - \log(1-s^{*}_{0}(\omega))$ is a fixed point of \eqref{eq: value function}.

            \textbf{(Uniqueness)} Suppose $V(\omega)$ is a fixed point. Then, we show that $s_{0}(\omega) := 1 - \exp\left(\omega - V(\omega)\right)$ is a fixed point of \eqref{eq: structural constraint}. As $V(\omega) = \omega - \log(1-s_{0}(\omega))$ is a fixed point of \eqref{eq: value function}, we have: $\exp\left(\omega - \log(1-s_{0}(\omega))\right) = \exp\left(\omega\right) + \exp\left(\beta \bE\left[\omega' - \log(1-s_{0}(\omega')) | \omega\right]\right)$. It implies $\frac{1}{1 - s_{0}(\omega)} = 1 + \exp\left(\beta \bE\left[\omega' - \log(1-s_{0}(\omega')) | \omega\right] - \omega\right)$. Rearranging it shows that $s_{0}(\omega)$ is a fixed point of \eqref{eq: structural constraint}.

            Prove by contradiction. Suppose there are two fixed points $V_{1}(\omega)$ and $V_{2}(\omega)$ to \eqref{eq: value function}. Then, we have: $s_{0}(\omega) = 1 - \exp\left(\omega - V_{1}(\omega)\right)$, $\tilde{s}_{0}(\omega) = 1 - \exp\left(\omega - V_{2}(\omega)\right)$ are both the fixed points to \eqref{eq: structural constraint} as shown above. Since \eqref{eq: structural constraint} has a unique fixed point, we have $s_{0}(\omega) = \tilde{s}_{0}(\omega)$ for all $\omega$. Therefore, we have $V_{1}(\omega) = V_{2}(\omega)$ for all $\omega$. Contradiction.
        \end{itemize}
        \item This is a direct consequence of the above argument.
    \end{enumerate}
\end{proof}

\end{document}